\newtheorem{theorem}{Theorem}%
\newtheorem{lemma}[theorem]{Lemma}
\newtheorem{corollary}[theorem]{Corollary}
\newtheorem{claim}[theorem]{Claim}
\newtheorem{definition}[theorem]{Definition}
\newtheorem{example}[theorem]{Example}
\newtheorem{proposition}[theorem]{Proposition}
\newtheorem{fact}{Fact}
\newenvironment{claimproof}[1]{\par\noindent\underline{Proof:}\space#1}{\hfill $\blacksquare$}
\newcommand{\dom}{\mathsf{dom}}
\newcommand{\quant}{\mathsf{quant}}
\newcommand{\points}{\mathsf{points}}
\newcommand{\act}{\mathsf{active}}
\newcommand{\QN}{\mathsf{QN}}
\newcommand{\QD}{\mathsf{QD}}
\newcommand{\type}{\mathsf{type}}
\newcommand{\LHS}{\mathsf{LHS}}
\newcommand{\RHS}{\mathsf{RHS}}
\newcommand{\good}{\mathsf{good}}
\newcommand{\struc}[1]{\mathcal{#1}}
\newcommand{\strucA}{\mathcal A}
\newcommand{\strucB}{\mathcal B}
\newcommand{\strucC}{\mathcal C}
\newcommand{\strucD}{\mathcal D}
\newcommand{\strucX}{\mathcal X}
\newcommand{\strucY}{\mathcal Y}
\newcommand{\strucSA}{S(\mathcal A)}
\newcommand{\strucSB}{S(\mathcal B)}
\newcommand{\strucSX}{S(\mathcal X)}
\newcommand{\assA}{\alpha}
\newcommand{\assB}{\beta}
\newcommand{\assC}{\gamma}
\newcommand{\assD}{\delta}
\newcommand{\lstrucAa}{\strucA^{\assA}}
\newcommand{\lstrucAb}{\strucA^{\assB}}
\newcommand{\lstrucAc}{\strucA^{\assC}}
\newcommand{\lstrucBb}{\strucB^{\assB}}
\newcommand{\lstrucBc}{\strucB^{\assC}}
\newcommand{\lstrucBd}{\strucB^{\assD}}
\newcommand{\lstrucCa}{\strucC^{\assA}}
\newcommand{\lstrucCc}{\strucC^{\assC}}
\newcommand{\lstrucDb}{\strucD^{\assB}}
\newcommand{\lstrucDd}{\strucD^{\assD}}
\newcommand{\lstrucXa}{\strucX^{\assA}}
\newcommand{\lstrucYb}{\strucY^{\assB}}
\newcommand{\lstrucSAa}{S(\strucA)^{\assA}}
\newcommand{\lstrucSBb}{S(\strucB)^{\assB}}
\newcommand{\lstrucSBc}{S(\strucB)^{\assC}}
\newcommand{\lstruc}[2]{\struc{#1}^{#2}}
\newcommand{\classA}{\mathfrak A}
\newcommand{\classB}{\mathfrak B}
\newcommand{\classC}{\mathfrak C}
\newcommand{\classD}{\mathfrak D}
\newcommand{\classX}{\mathfrak X}
\newcommand{\classY}{\mathfrak Y}
\newcommand{\logic}{\mathcal L}
\newcommand{\klogic}{\mathcal{L}^k}
\newcommand{\exklogic}{\exists^{+}\klogic}
\newcommand{\pexists}{\exists^{+}}
\newtcolorbox{mybox}{
nobeforeafter,
colframe=black,
colback=gray!10, 
boxrule=1.5pt,
left =0pt,
right =0pt,
top = 0pt,
bottom =0pt,
boxsep=0.1mm,
enhanced,
}
 \newcommand{\confORfull}[2]{#2} 
\title{From Quantifier Depth to Quantifier Number: Separating Structures 
with \texorpdfstring{$k$}{k} variables}
\date{}
\author{
  Harry Vinall-Smeeth\\
  Technische Universit\"at Ilmenau, Germany \\
  \texttt{harry.vinall-smeeth@tu-ilmenau.de}
}
\begin{document}

\maketitle

\begin{abstract}
Given two $n$-element structures, $\strucA$ and $\strucB$, which can be distinguished by a sentence of $k$-variable first-order logic ($\klogic$), what is the minimum $f(n)$ such that there is guaranteed to be a sentence $\phi \in \klogic$ with at most $f(n)$ quantifiers, such that $\strucA \models \phi$ but $\strucB \not \models \phi$? We will present various results related to this question obtained by using the recently introduced QVT games \cite{carmosino2023finer}. In particular, we show that when we limit the number of variables, there can be an exponential gap between the quantifier depth and the quantifier number needed to separate two structures. \confORfull{}{Through the lens of this question, we will highlight some difficulties that arise in analysing the QVT game and some techniques which can help to overcome them.} As a consequence, we show that $\mathcal{L}^{k+1}$ is exponentially more succinct than $\klogic$. We also show, in the  setting of the existential-positive fragment, how to lift quantifier depth lower bounds to quantifier number lower bounds. This leads to almost tight bounds.
\end{abstract}

\section{Introduction}

The classic combinatorial game in finite model theory is the Ehrenfeucht–Fraïssé (EF) game \cite{ehrenfeucht1961application, fraisse1954some}. This is played on a pair of structures $[\strucA, \strucB]$ by two players, Spoiler and Duplicator. Spoiler aims to expose the differences between the two structures while Duplicator tries to hide them. The EF game captures the  quantifier depth ($\QD$) needed to separate $\strucA$ and $\strucB$ in the sense that Spoiler can win the game in $r$-rounds iff there is a first-order (FO) sentence $\phi$ with quantifier depth $r$ such that $\strucA \models \phi$ and $\strucB \not \models \phi$. The $\QD$ needed to separate $\strucA$ and $\strucB$ is a measure of how different the structures are. 

Recently another measure, quantifier number ($\QN$), has received substantial attention \cite{carmosino2023finer,fagin2021, fagin2022number}. The original motivation for studying $\QN$ comes from Immerman \cite{immerman1980, immerman1981number}, who connected $\QN$ to complexity theory \confORfull{and}{. The idea is that if we understand how many quantifiers are needed to express certain properties, this could enable us to separate complexity classes. Moreover, Immerman} provided a combinatorial game which captures the $\QN$ needed to separate two sets of structures.\confORfull{}{\footnote{Originally called the \emph{separability game}.}} 

\confORfull{}{The idea of using combinatorial games to separate complexity classes has, so far, failed to live up to its early promise. By far the most studied is the aforementioned EF game and in this domain the outlook doesn't look good \cite{DBLP:journals/eccc/ChenF13, DBLP:conf/birthday/ChenF15}. But \confORfull{a}{studying games concerning $\QN$ is relatively untrodden ground. A}s Immerman put it: `Little is known about how to play the [multi-structural] game.' ... `We urge others to study it, hoping that the
[multi-structural] game may become a viable tool for ascertaining some of the lower bounds which are ``well believed'' but have so far escaped proof.'}

Another prominent object in finite model theory is $\klogic$, the $k$-variable fragment of FO, see \confORfull{\cite{grohe1998finite} for a survey}{\cite{dawar1999finite} and \cite{grohe1998finite} for surveys.} \confORfull{}{This consists of those formulas of FO which use at most $k$ distinct variables.} These logics have some nice properties. For example, while the model checking problem is $\mathsf{PSPACE}$-complete for FO \cite{vardi1982complexity}, it is $\mathsf{PTIME}$-complete for $\klogic$ \cite{immerman1982upper, vardi1995complexity}; given $\phi \in \klogic$ and an $n$-element structure $\strucA$, checking whether $\strucA \models \phi$ can be done in time $O(|\phi| \cdot k \cdot n^k)$\confORfull{}{, by a simple bottom-up algorithm \cite{vardi1995complexity}}.  

This work studies $\QN$ in $\klogic$\confORfull{}{; we should note that this is closely related to formula size, see Section~\ref{ss: formula}}. To be more precise we ask: given two $n$-element structures, $\strucA$ and $\strucB$, that can be distinguished by $\klogic$ what is the minimum $f(n)$ such that there is guaranteed to be a sentence $\phi \in \klogic$ with\confORfull{}{ at most} $f(n)$ quantifiers, such that $\strucA \models \phi$ but $\strucB \not \models \phi$? Simultaneously restricting the number of variables and the number of quantifiers has a natural connection to complexity theory \cite{immerman1980}, we outline this in \confORfull{Appendix~\ref{a: complexity}}{Section~\ref{s: prelims}}. 

The aforementioned question is trivial in FO where we can characterise isomorphism by a sentence with $n$ quantifiers. But if we remain in $\klogic$ and replace the role of $\QN$  with $\QD$ then we get a question which has received substantial attention \cite{berkholz2016, cai1992optimal,furer2001weisfeiler,grohe1999equivalence, grohe2023iteration}. \confORfull{}{Recently an almost tight $n^{\Omega(k)}$ lower bound was obtained by Grohe, Lichter and Neuen in this setting \cite{grohe2023iteration}.} In this line of work pebble games---a variant of the EF game which simultaneously captures $\QD$ and number of variables---are crucial. An analogue to such games also exists in our setting: the quantifier-variable tree (QVT) game, recently introduced in \cite{carmosino2023finer}. To the best of our knowledge, the following is the first lower bound obtained using the QVT game. 

\begin{theorem} \label{thm: genlb}
For every integer $k \ge 3$ there exists $\varepsilon >0$ and $n_0 \in \mathbb{N}$, such that for all $n>n_0$ there exists a pair of $k$-ary structures $\strucA, \strucB$ with $|A| = |B| = n$ that can be distinguished in $\klogic$, but which agree on every sentence of $\klogic$ with less than $2^{\varepsilon n}$ quantifiers.
\end{theorem}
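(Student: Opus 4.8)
The plan is to argue entirely through the QVT game. By the game characterisation of quantifier number in $\klogic$ from \cite{carmosino2023finer} it suffices, for each $k\ge 3$, to produce a family of pairs $\strucA=\strucA_n$, $\strucB=\strucB_n$ of $n$-element $k$-ary structures such that: (i) Spoiler wins the QVT game on $[\strucA,\strucB]$ with no bound on the tree --- equivalently, $\strucA$ and $\strucB$ are separated by \emph{some} $\klogic$-sentence; and (ii) Duplicator has a strategy that survives against every Spoiler tree with fewer than $2^{\varepsilon n}$ nodes, for a suitable $\varepsilon=\varepsilon(k)>0$. Part (i) will be the easy half; essentially all the work is in (ii). Note also that once $\strucA,\strucB$ are $\klogic$-separable they are separated by a sentence of only polynomial quantifier depth (the $k$-pebble game on $n$-element structures has polynomially many positions, so Spoiler wins within polynomially many rounds), so a bound as in (ii) automatically yields the promised exponential $\QD$/$\QN$ gap.

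For the construction I would encode, compactly on $n$ elements, an object whose only ``visible'' discrepancy between $\strucA$ and $\strucB$ is buried at depth $2^{\Theta(n)}$ of a recursive process --- concretely, a reachability/pointer-chasing configuration in which $\strucA$ and $\strucB$ look identical locally everywhere but disagree about the outcome after $2^{\Theta(n)}$ ``doubling'' steps. The $k$-ary relation is what both lets a single atom inspect an entire $k$-tuple of pebbles and lets the encoding fit inside $n$ elements; the reason $k\ge 3$ enters is that the natural separating sentence uses the transitive-closure-style doubling idiom $\psi_{2j}(x,y):=\exists z\,(\psi_j(x,z)\wedge\psi_j(z,y))$, which needs at least three variables. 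Unwinding this idiom $\Theta(n)$ times produces a $\klogic$-sentence that separates $\strucA$ from $\strucB$, has quantifier \emph{depth} only $O(n)$, but contains $2^{\Theta(n)}$ quantifiers --- exactly the gap advertised. With the construction in hand, establishing (i) is then a routine check: one writes down this doubling sentence (or, equivalently, the obvious unbounded-length Spoiler strategy that simply navigates the encoded process) and verifies it holds in $\strucA$ and fails in $\strucB$.

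The heart of the proof --- and what I expect to be the main obstacle --- is (ii): a Duplicator strategy in the QVT game that withstands $2^{\varepsilon n}$ quantifiers. The approach is to maintain an invariant attaching to each node of Spoiler's partially built tree a large family of surviving candidates --- in the multi-structural reading of the game, a large collection of pebbled copies of $\strucB$ still locally consistent with the pebbled $\strucA$ --- together with a potential measuring how much of the recursion Spoiler has been able to ``resolve''. The key lemma to prove is that each new tree node, i.e.\ each quantifier, raises this potential by at most a bounded amount (dually: Duplicator can keep $2^{\Omega(n)}$ genuinely distinct alternatives alive), so that Spoiler cannot eliminate them all using fewer than $2^{\varepsilon n}$ nodes. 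Several features of the QVT game make this delicate, and are where most of the effort will go. First, unlike in ordinary pebble games, the quantifier budget and the reuse of the $k$ variables interact through the \emph{shape} of Spoiler's tree, so pebble-game Duplicator strategies do not transfer directly and one must reason about the tree as a whole. Second, in the multi-structural formulation Duplicator may split a structure into many copies, so the object to track is a family and one has to control its ``coverage'' rather than a single partial isomorphism. Third, one must preclude Spoiler short-cutting the recursion by re-binding variables cleverly across different branches of the tree --- this is precisely where the local symmetry engineered into the gadget, and the fact that only $k$ variables are available, are used. Choosing $\varepsilon$ small relative to $k$ and the parameters of the gadget then makes every Spoiler tree of size below $2^{\varepsilon n}$ losing, which by the characterisation of \cite{carmosino2023finer} is Theorem~\ref{thm: genlb}.
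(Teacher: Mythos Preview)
Your outline identifies the right two halves and correctly isolates (ii) as the real content, but what you have written for (ii) is not a proof sketch --- it is a description of what a proof would have to look like. ``Maintain an invariant'', ``a large family of surviving candidates'', ``each quantifier raises this potential by at most a bounded amount'' are not steps; they are placeholders for steps. You have not named the invariant, the potential, or the Duplicator responses. The paper's proof shows that this is where all the difficulty lives and that the obvious ideas fail: in particular, naive XOR-style constructions (which are close relatives of your pointer-chasing idea) are defeated by a genuinely surprising Spoiler resource --- playing on the right-hand side to \emph{freeze} boards --- that lets Spoiler win the QVT game without any splits at all, even though the same structures require splits in the one-sided game. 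Your proposal gives no mechanism for ruling out this kind of shortcut, and ``the local symmetry engineered into the gadget'' is not enough of an answer, because the XOR gadget is already perfectly locally symmetric and still collapses.

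The paper takes a substantially different route from yours. Rather than pointer-chasing, it builds structures from systems of linear constraints over $(\mathbb{Z}_2^{\,k},+)$; the point of moving beyond $\mathbb{Z}_2$ is precisely that fixing all but one variable in a clause no longer pins down the last one, which neutralises the freezing trick. Rather than a potential argument on the full QVT tree, it introduces an intermediate \emph{$k$-LB game} in which Spoiler, after a split, continues on a single Duplicator-chosen branch but Duplicator's score is multiplied by the split degree; a short lemma shows this game lower-bounds the QVT round count. The Duplicator strategy is then completely explicit: maintain that the two sides are \emph{good} (a concrete algebraic condition) until Spoiler covers all of $s_1,\dots,s_k$, at which point Duplicator produces a \emph{dual set} of $k-1$ boards forcing any progress to require a degree-$(k-1)$ split. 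Chaining $n$ copies of this gadget and proving a careful isomorphism lemma (to block ``going backwards'') yields $n$ forced splits and hence $(k-1)^n$ points. None of these ingredients --- the group choice, the LB game, the good/dual-set invariant, the chaining lemma --- appear in your outline, and I do not see how the doubling/reachability construction would supply substitutes for them.
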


The $\QN$ needed to separate structures is closely related to the \emph{succinctness} of logics. In particular, we are able to leverage the machinery built to prove Theorem~\ref{thm: genlb}, to show an exponential succinctness gap between $\klogic$ and $\mathcal{L}^{k+1}$ answering an open question mentioned in \cite{berkholz2019compiling}.

\begin{theorem} \label{thm: suc}
For every integer $k \ge 3$ there exists a constant $\varepsilon >0$ such that for all $n \in \mathbb{N}$ there is a sentence $\phi \in \mathcal{L}^{k+1}$,  with $|\phi| \ge n$ such that $\phi$ has an equivalent sentence in $\klogic$ but such that any such sentence has size at least $2^{\varepsilon \cdot |\phi|}$.
\end{theorem}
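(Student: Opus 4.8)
The plan is to derive Theorem~\ref{thm: suc} as a corollary of Theorem~\ref{thm: genlb} by exhibiting, for each $n$, a single $\mathcal{L}^{k+1}$-sentence that encodes the distinguishing task for the hard pair $\strucA_n, \strucB_n$ supplied by Theorem~\ref{thm: genlb}. First I would recall the standard fact that for finite structures of size $n$ one can write, using $k+1$ variables, a "canonical" $\mathcal{L}^{k+1}$-sentence $\chi_{\strucA_n}$ that isolates $\strucA_n$ from $\strucB_n$: indeed, since $\strucA_n$ is $k$-ary and has $n$ elements, the full atomic type of a tuple of $n$ elements can be spelled out with a formula of size $O(n^k)$ using only $k+1$ variables (the extra variable lets us "cycle" through the $n$ elements one at a time while remembering the relevant adjacencies via the $k$-ary relations). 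The key point is that $|\chi_{\strucA_n}| \ge n$ and $\chi_{\strucA_n}$ distinguishes $\strucA_n$ from $\strucB_n$; by construction of the pair in Theorem~\ref{thm: genlb}, $\strucA_n$ and $\strucB_n$ are already $\klogic$-distinguishable, so $\chi_{\strucA_n}$ is equivalent over the relevant class to some $\klogic$-sentence.

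The second step is the translation between formula size and quantifier number within $\klogic$. I would invoke (or prove, if not already available) the elementary observation that a $\klogic$-sentence with $q$ quantifiers has an equivalent $\klogic$-sentence of size polynomial in $q$ and $n$ — more importantly for the lower-bound direction, that a $\klogic$-sentence of size $s$ has at most $s$ quantifiers. Combining this with Theorem~\ref{thm: genlb}: any $\klogic$-sentence equivalent to $\chi_{\strucA_n}$ must separate $\strucA_n$ from $\strucB_n$, hence must have at least $2^{\varepsilon n}$ quantifiers, hence size at least $2^{\varepsilon n}$. Since we can arrange $|\chi_{\strucA_n}|$ to be $\Theta(n^k)$, or at least in the interval $[n, \mathrm{poly}(n)]$, we get that the equivalent $\klogic$-sentence has size at least $2^{\varepsilon n} \ge 2^{\varepsilon' |\chi_{\strucA_n}|}$ for a suitable $\varepsilon' > 0$ (after possibly shrinking the constant to absorb the polynomial blow-up in the exponent — here one should be slightly careful, since if $|\phi| = n^k$ then $2^{\varepsilon n} = 2^{\varepsilon |\phi|^{1/k}}$, which is not of the form $2^{\varepsilon' |\phi|}$; so I would instead pad $\chi_{\strucA_n}$ or rather choose its size to be $\Theta(n)$ rather than $\Theta(n^k)$, which is possible if the distinguishing type can be written with $O(n)$ symbols, or alternatively state the hard pair with $n' = n^k$ elements absorbed into the parameter).

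Let me restructure to avoid that subtlety: the cleanest route is to set $m := |\chi_{\strucA_n}|$, note $m = \mathrm{poly}(n)$, and re-index. We want the bound $2^{\varepsilon |\phi|}$ with $|\phi| = m$; since the lower bound on the size of any equivalent $\klogic$-sentence is $2^{\varepsilon n}$ and $m \le n^c$ for a constant $c$ depending only on $k$, we have $n \ge m^{1/c}$, giving only $2^{\varepsilon m^{1/c}}$. To genuinely get $2^{\varepsilon m}$ one needs $|\chi_{\strucA_n}|$ to be \emph{linear} in $n$. The fix is to build $\chi_{\strucA_n}$ more economically: rather than writing the whole atomic diagram, exploit that $\strucA_n$ and $\strucB_n$ are $\klogic$-distinguishable and use a sentence of size $O(n)$ — for instance, if the separating $\klogic$-sentence has quantifier depth $d = O(n)$ then it has size $2^{O(d)}$ in general, which is too big, so instead one uses the $(k{+}1)$-variable sentence that directly simulates the winning Spoiler strategy in the $(k{+}1)$-pebble game on $[\strucA_n, \strucB_n]$, whose length is governed by the relevant game length; this is exactly the kind of "canonical sentence from a game" construction, and its size is linear in the number of rounds, which the construction behind Theorem~\ref{thm: genlb} can be arranged to make $\Theta(n)$.

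The main obstacle, then, is controlling the size of the witnessing $\mathcal{L}^{k+1}$-sentence so that it is genuinely linear (or at least subexponential with the right exponent base) in $n$ while remaining $\klogic$-expressible — the $\klogic$-expressibility is free from Theorem~\ref{thm: genlb}, and the lower bound $2^{\varepsilon n}$ on the size of any $\klogic$-equivalent is immediate from the QN lower bound plus the trivial "size $\ge$ quantifier number" inequality, but matching $n$ to $|\phi|$ up to a constant factor in the exponent requires a careful, size-aware construction of $\phi$. I expect this bookkeeping, together with verifying that $\phi$ really does have a $\klogic$-equivalent (which follows because $\strucA_n, \strucB_n$ agree on no $\klogic$-sentence is false — they \emph{are} $\klogic$-distinguishable — so the $\klogic$-type of $\strucA_n$ within the finite ambient class pins down a $\klogic$-sentence equivalent to $\chi_{\strucA_n}$ relative to that class, which suffices for a succinctness statement), to be the only real work; everything else is a direct appeal to Theorem~\ref{thm: genlb}.
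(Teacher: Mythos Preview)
Your overall framework matches the paper's: exploit the hard pair from Theorem~\ref{thm: genlb}, exhibit a short $\mathcal{L}^{k+1}$-sentence $\phi$ separating them, and conclude that any $\klogic$-equivalent of $\phi$ must be exponentially long since it too separates the pair. You also correctly isolate the size-control issue (getting $|\phi|$ linear in $n$). However, there is a genuine gap in your treatment of the clause ``$\phi$ has an equivalent sentence in $\klogic$''. Here ``equivalent'' means logically equivalent over all structures, not merely agreeing on $\{\strucA_n, \strucB_n\}$. Your argument --- that $\strucA_n, \strucB_n$ are $\klogic$-distinguishable, so ``the $\klogic$-type of $\strucA_n$ within the finite ambient class pins down a $\klogic$-sentence equivalent to $\chi_{\strucA_n}$ relative to that class'' --- only yields a $\klogic$-sentence agreeing with $\chi_{\strucA_n}$ on those two structures, not over all structures. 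Neither an atomic-diagram sentence nor a generic ``sentence encoding a $(k{+}1)$-pebble Spoiler strategy'' has any reason to be $\klogic$-expressible as a property; the mere existence of some $\klogic$-sentence separating $\strucA_n$ from $\strucB_n$ does not transfer expressibility to your particular $\phi$.

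The paper closes this gap not by an abstract argument but by explicit construction tied to the specific hard instances. It writes down a concrete $\phi \in \mathcal{L}^{k+1}$ of size $O(n)$ encoding the intended Spoiler strategy on the chained formula $\mathcal{F}$, and then shows by a direct syntactic rewriting --- distributing the single $\forall x_{k+1}$ over the conjunction it governs, then renaming the freed variable separately in each conjunct --- that $\phi$ is logically equivalent to a $\klogic$-sentence. That rewriting is the missing idea; it works only because the shape of $\phi$ (one universal quantifier at each level of the recursion, guarding a conjunction indexed by $[k]$ whose $i$th conjunct does not mention $x_i$) was engineered to permit it. A generic canonical sentence will not admit such a variable-saving transformation.
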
 

One motivation for studying the trade-off between number of 
variables and succinctness goes as follows. Since model checking is in $\mathsf{PTIME}$ for $\klogic$ one way to evaluate FO formulas is to first rewrite them to an equivalent formula using fewer variables and then evaluate this formula. Theorem~\ref{thm: suc} shows that we need to be careful applying this strategy: even if we only reduce the number of variables by  one we may get an exponential formula size blow-up.

The rest of the paper is structured as follows. We start in Section~\ref{s: game}, by formally introducing the QVT game. As a first application, we show an upper bound for our main question and also make explicit the connection between $\QN$ and formula size. In Section~\ref{ss: exgame}, we introduce two variations of the game: one  corresponding to the  existential-positive fragment of $\klogic$ and a simplified version which does not capture $\QN$, but which can be used to prove lower bounds. In Section~\ref{s: lb}, we prove Theorems~\ref{thm: genlb} and \ref{thm: suc}. Our main idea is as follows. In the $\QD$ setting studying structures constructed via XOR formulas has proved fruitful \cite{berkholz2016}. We give a generalisation of this framework by considering `formulas' consisting of linear constraints over elements of an abelian group $G$.\confORfull{}{ This allows us to circumvent a surprising technical hurdle that emerges when trying to use structures constructed from XOR formulas to prove lower bounds.} Finally, in Section~\ref{s: exlb}, we tackle a restricted version of our main question and obtain almost tight bounds in this setting: concretely we investigate the existential-positive fragment of $\klogic$, $\exklogic$ and prove the following.

\begin{theorem} \label{thm: exlb}
For every integer $k \ge 4$ there exists $\varepsilon >0$ and $n_0 \in \mathbb{N}$, such that for all $n>n_0$ there exists a pair of $k$-ary structures $\strucA, \strucB$ with $|A| = |B| = n$ 
that can be distinguished in $\exklogic$, but which agree on every sentence of\, $\exklogic$ with less than $2^{\varepsilon n^{2k-4}}$ quantifiers. 
\end{theorem}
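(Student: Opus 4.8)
The plan is to prove Theorem~\ref{thm: exlb} using the existential-positive variant of the QVT game introduced in Section~\ref{ss: exgame}, together with the simplified lower-bound game mentioned there. Since we work in $\exklogic$, the structures we need should be tailored so that the only way to distinguish them is via a large conjunction of existential assertions, each of which costs quantifiers; the natural candidates are again the group-constraint structures from Section~\ref{s: lb}, but now we should exploit the extra rigidity forced by positivity and the absence of universal quantifiers. Concretely, I would take $\strucA$ and $\strucB$ to be (modifications of) the structures arising from a carefully chosen system of linear constraints over an abelian group $G$ — most likely $\mathbb{Z}/2\mathbb{Z}$ or a small fixed group — on an underlying combinatorial gadget (an expander-like incidence structure or a product of small instances) whose size is $\Theta(n)$ but which encodes $\Theta(n^{2k-4})$ essentially independent local "defects," so that any existential-positive sentence separating them must witness all of them.

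The key steps, in order, are as follows. First, I would give the general construction: fix $G$, fix a base graph or hypergraph on $\Theta(n)$ vertices with good expansion/girth properties, and define a pair of $k$-ary relational structures $\strucA,\strucB$ together with a proof (using the quantifier-depth lower bound machinery lifted from \cite{berkholz2016}, as the excerpt promises in Section~\ref{s: lb}) that they are $\exklogic$-distinguishable but that $\QD$-separation already requires many rounds. Second — and this is the heart of the argument — I would invoke the lifting lemma from Section~\ref{s: exlb} (''how to lift quantifier depth lower bounds to quantifier number lower bounds'' in the existential-positive setting) to convert the $\QD$ bound into a $\QN$ bound; in the existential-positive world this lifting is lossless enough to turn a $\Omega(n^{2k-4})$-type depth/width bound into a $2^{\varepsilon n^{2k-4}}$ quantifier-number bound, because each ''level'' of the quantifier-variable tree can only make progress on an exponentially small fraction of the independent defects. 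Third, I would verify the counting: the base gadget has $\Theta(n)$ vertices, the number of $(2k-4)$-tuples (or the relevant local configurations respecting the $k$-variable / $k$-arity budget) is $\Theta(n^{2k-4})$, and a Duplicator strategy in the simplified game survives as long as fewer than $2^{\varepsilon n^{2k-4}}$ quantifiers have been used. Finally, I would assemble these into the statement, fixing $\varepsilon$ and $n_0$ from the expansion constants.

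The main obstacle I anticipate is the lifting step: in the full (non-positive) QVT game the analysis in Section~\ref{s: lb} already requires the group-constraint generalisation precisely because XOR structures hit a ''surprising technical hurdle,'' and in the existential-positive fragment one must additionally control how conjunctions of existential subformulas can share quantifier witnesses across different branches of the QVT — a naive union bound over defects would only give a polynomial, not exponential, lower bound. The trick will be to show that in $\exklogic$ the Duplicator can maintain a family of partial homomorphisms that is closed under the moves Spoiler can make, and that each Spoiler quantifier-block can ''kill'' at most a bounded number of these, so that the number of quantifiers needed is at least the number of defects, and then to amplify from linear in the number of defects to exponential by taking an appropriate product construction or by the standard trick of encoding the defects so that resolving one forces a binary choice. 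Verifying that the product construction stays within $k$ variables and $k$-ary relations while multiplying the number of independent defects is where the $2k-4$ exponent (as opposed to something larger) will come from, and getting that bookkeeping exactly right — so the base structures genuinely have only $n$ elements — is the delicate part. A secondary, more routine obstacle is checking $\exklogic$-distinguishability in the first place: one must exhibit an explicit existential-positive sentence that $\strucA$ satisfies and $\strucB$ does not, which should follow from the fact that $\strucA$ contains a homomorphic image of some small ''witness'' pattern that $\strucB$ omits by construction.
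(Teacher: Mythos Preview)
Your proposal conflates the two different constructions in the paper and misidentifies where the exponent $2k-4$ comes from. The group-constraint structures of Section~\ref{s: lb} are \emph{not} used for Theorem~\ref{thm: exlb}; that machinery is for the general $\klogic$ bound (Theorem~\ref{thm: genlb}). For the existential-positive result the paper instead takes, as a black box, a pair of structures $\strucA,\strucB$ on which the $\pexists(k-1)$-pebble game already requires $\Omega(n^{2(k-1)-2})=\Omega(n^{2k-4})$ rounds (Berkholz's quantifier-depth lower bound, Theorem~\ref{thm: groheLB}), and then applies a gadget transformation $S(\cdot)$ that adds dummy elements, a start gadget, and a split gadget. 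The exponent $2k-4$ therefore comes from a known $\QD$ bound shifted down by one variable (the split gadget costs one extra pebble), not from counting $(2k-4)$-tuples or ``independent local defects'' in an expander.

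Correspondingly, your proposed lifting mechanism is off. There are not $\Theta(n^{2k-4})$ independent defects each of which Spoiler must witness; there is a single hard instance whose pebble-game refutation is long. The paper's lifting (Theorem~\ref{thm: main_tec}) works round-by-round: the split gadget is engineered so that every time Spoiler wants to simulate one move of the underlying $\pexists(k-1)$-pebble game --- i.e.\ make a new element of $A$ \emph{active} --- Duplicator can force a degree-two split in the $\pexists k$-LB game. Hence an $r$-round pebble-game lower bound becomes $r$ forced splits, which yields $2^r$ points and thus a $2^r$ quantifier-number lower bound. Your ``union bound over defects'' and ``product construction for amplification'' are not needed and would not produce the right shape of argument; the exponential is immediate once you have one forced binary split per pebble-game round. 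The delicate part you should focus on is designing the gadget so that (i) Spoiler genuinely cannot make progress without splitting (Lemma~\ref{lem: dual-boards} and Corollary~\ref{cor: must-split}) and (ii) each split advances the simulated pebble game by at most one round (Lemma~\ref{lem: increment}).
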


We hope that the central idea in the proof of Theorem~\ref{thm: exlb}---that of lifting $\QD$ lowing bounds to $\QN$ lower bounds---can be applied in other settings. More generally, throughout the paper, we develop a range of tools for `taming' the QVT game. Extra details and omitted proofs are given in the appendix.

\confORfull{\paragraph{Related Work}}{\subparagraph{Related Work}}
This work contributes to a line of work which investigates $\QN$ as a complexity measure. This research strand was initiated by Immerman \cite{immerman1980}, who introduced a game characterising the number of quantifiers needed to separate two sets of structures; he called this the \emph{separability} game. \confORfull{R}{However, to the best of our knowledge, no further work was done in this area until r}ecently\confORfull{}{ when} Fagin, Lenchner, Regan and Vyas rediscovered the game \cite{fagin2021}; they named it the \emph{multi-structural} (MS) game. This game differs from the EF game in two key aspects. Firstly, the game is played on \emph{sets} of structures and secondly, Duplicator is given the power to make copies of structures. This has some novel consequences for how the game is played \cite{carmosino2023finer, fagin2021, fagin2022number}\confORfull{.}{; to provide some context we introduce these games in Section~\ref{s: prelims}.} 

The $k$-QVT game is similar to the \emph{formula size game} used by Adler and Immerman \cite{adler2003n} in the context of \emph{temporal logic}. In fact, the QVT game can be viewed as a version of their game where we only `count' moves corresponding to quantifiers; both games are placed within a common framework in \cite{carmosino2023finer}.

The structures used in Section~\ref{s: lb} are based on systems of linear constraints over the group $(\mathbb{Z}_2^k, +, 0)$, generalising a construction of Berkholz and Nordstr{\"o}m \cite{berkholz2016}. It is also possible to generalise their framework in other ways, see for example \cite{DBLP:conf/icalp/BerkholzG15, berkholz2017linear}. The idea of using XOR formulas to construct hard instances has a long\confORfull{}{ and diverse} history. Notably, in the 1980s Immerman discovered a beautiful construction which, given an XOR formula, produces two graphs which cannot easily be distinguished, in the sense that any separating sentence must either have many variables or high quantifier depth \cite{immerman1981number}. This, and related constructions, have led to hardness results \confORfull{for}{in the context of} \emph{counting logics} \cite{cai1992optimal, furer2001weisfeiler,  grohe2023iteration}. 

The use of XOR formulas as a source of hard instances is also a prominent idea in \emph{proof complexity}, see e.g., \cite{ ben2002hard, urquhart1987hard}. In fact, on a conceptual level, our construction in Section~\ref{s: exlb} has similarities to the technique of \emph{XORification}, which emerged from this discipline \cite{ben2002size,ben1999short,  buresh2007complexity}. \confORfull{Moreover, the lower bound game introduced in Section~\ref{ss:  exgame} is, on a high level, similar to the Prover-Delay game from \cite{ben2004near}.}
{The idea is to take a formula which is hard with respect to some complexity measure and replace every variable with an XOR clause to produce a  formula which is very hard with respect to some, potentially different, complexity measure. Our construction is similar in spirit. We start off with two structures which are hard to separate with respect to $\QD$ and we produce two structures which are very hard to separate with respect to $\QN$. Both constructions also make use of XOR gadgets, albeit in a different way. One final connection with proof complexity: the lower bound game introduced in Section~\ref{ss:  exgame} is, on a high level, similar to the Prover-Delayer game from \cite{ben2004near} introduced in the context of separating tree-like and general resolution.}

Theorem~\ref{thm: suc} concerns the \emph{succinctness} of logics. In finite model theory, two logics are often compared by their expressiveness: $\logic_1$ is at least as expressive as $\logic_2$ iff every property expressible in $\logic_2$ is also expressible in $\logic_1$. But expressiveness does not tell the whole story: it might be that a property is expressible in both $\logic_1$ and $\logic_2$ but can be expressed much more succinctly in one than the other. In particular, we say that $\logic_1$ is \emph{exponentially more succinct} than $\logic_2$ if there is a sentence $\phi \in \logic_1$ which is equivalent to a sentence of $\logic_2$ but such that every such sentence has size exponential in $|\phi|$. Results concerning succinctness are surprisingly scattered \cite{DBLP:conf/aiml/HellaV16, DBLP:journals/tocl/EickmeyerEH17, berkholz2019compiling}. Most relevant for us is the result from Grohe and Schweikardt \cite{DBLP:journals/lmcs/GroheS05}, who showed that $\mathcal{L}^4$ is exponentially more succinct than $\mathcal{L}^3$ over linear orders. To the best of our knowledge, it was open even whether FO is exponentially more succinct than $\logic^{k}$ for $k\ge 4$. With Theorem~\ref{thm: suc} we show that $\logic^{k+1}$ is exponentially more succinct than $\klogic$ for $k \ge 4$. We should note that our result holds over the class of all structures and only for a particular choice of signature, which includes higher arity relations. 

Theorem~\ref{thm: exlb} concerns the existential-positive fragment of FO. From a database theoretic perspective, this can be viewed as those relational algebra queries using only Select, Project, Join and Union which is semantically equivalent to the class of unions of conjunctive queries (UCQs). This class of queries is prominent in the literature \cite{carmeli2021enumeration, chen2014complexity, chen2016counting}, as  often queries asked by users in database systems are of this form. \confORfull{}{In this setting, the number of variables used is an important measure of how difficult the query is to evaluate. }

\section{Preliminaries} \label{s: prelims}

\confORfull{\paragraph{(Pebbled) Structures}}{\subparagraph{(Pebbled) Structures}}
 We assume throughout that all structures are finite,
relational and have a finite signature. We also assume knowledge of basic notions and notation from logic, see e.g. \cite{libkin2004elements}.

For a 
structure $\strucA$, we write $A$ for $\dom(\strucA)$. A 
\emph{pebbled} structure $\lstrucAa$ is a structure $\strucA$ 
along a partial assignment $\assA$ of variables to elements of 
$A$. In the context of games, we will also refer to pebbled structures as \emph{boards}. If a variable $x \not \in \dom(\assA)$, it will be notationally convenient to write $\assA(x) := \emptyset$. If $\assA = \emptyset$ we will sometimes write $\strucA$ for $
\lstrucAa$. $\lstrucAa$ \emph{models} a formula $\phi$, if $(\strucA, \assA) \models \phi$, we write $\lstrucAa \models \phi$; $\lstrucAa$ and $\lstrucBb$ \emph{agree} on a formula $\phi$, if either both $\lstrucAa$ and $\lstrucBb$ model $\phi$, or neither do. Similarly two sets of pebbled structures $\classA, \classB$ agree on a formula if every $\lstrucAa \in \classA$ and $\lstrucBb \in \classB$ agree on the formula. If for every $\lstrucAa \in \classA$, $\lstrucAa \models \phi$ we write $\classA \models \phi$.

We say that two assignments are \emph{compatible} if they have the same domain. Two pebbled structures over a common signature $\lstrucAa$ and $\lstrucBb$ are compatible if $\assA$ and $\assB$ are compatible. \confORfull{We extend this to sets of structures in the natural way.}{Finally we say that two collections of pebbled structures $\classA, \classB$ are compatible, if every $\lstrucXa, \lstrucYb \in \classA \cup \classB$ are compatible.} In this work, whenever we simultaneously consider two (sets of) structures we implicitly assume they are compatible.

We say that an element $a \in A$ is pebbled by $x$, a 
variable, if $\assA(x)=a$. If $\dom(\assA) \subseteq \{x_i \, \mid \, i\in [k]\}$ we say that $\lstrucAa$ is $k$-pebbled and that $\assA$ is a $k$-assignment. Given a $k$-pebbled structure $\lstrucAa$ we can think of our structure as coming with a set of pebbles labelled with elements of $[k]$ where pebble $i$ corresponds to variable $x_i$. Reflecting this we will write $\assA(i)$ for $\assA(x_i)$. 

An important construction for us will be that of \emph{moving a pebble}. 
Formally given a pebbled structure $\lstrucAa$ and a pebble $i$, we form a new pebbled structure $
\lstrucAb$ where 
$\assB$ and $\assA$ agree on the image of each variable in 
$\dom(\assA)$ except possibly $v_i$ and $\dom(\assB) = \dom(\assA) 
\cup \{x_i\}$. If $\assB(i)=a$ we write $\assA(i \to a)$ for this 
assignment, and say that $\lstruc{\strucA}{\assA(i \to a)}$ is formed from $\lstrucAa$ by moving pebble $i$ to $a$.

\confORfull{\paragraph{Partial Isomorphism/Homomorphism}}{\subparagraph{Partial Isomorphism/Homomorphism}}
Let $\lstrucAa$ and $\lstrucBb$ be pebbled $\sigma$-structures. Then $\lstrucAa$ and $\lstrucBb$ are partially isomorphic if:
\begin{itemize}
\item for every $i,j \in [k]$, $\assA(i) = \assA(j)$ iff $\assB(i) = \assB(j)$ and
\item for every $m$-ary relation symbol $R \in \sigma$, and every sequence $(i_1, \dots, i_m) \in [k]^m$
\[ R^{\strucA}(\assA(i_1), \dots, \assA(i_m)) \textrm{ iff } R^{\strucB}(\assB(i_1), \dots, \assB(i_m))
\]
\end{itemize}
Note that the map $\sigma$ which maps $\assA(i) \to \assB(i)$ is a partial isomorphism iff $\lstrucAa$ and $\lstrucBb$ are partially isomorphic. We call this $\sigma$ the \emph{canonical partial map} between $\lstrucAa$ and $\lstrucBb$. If the `only if' direction of the above points holds we say that $\lstrucAa$ and $\lstrucBb$ are partially homomorphic and say that $\sigma$ is a partial homomorphism.  Note, that this does \emph{not} imply that $\lstrucBb$ and $\lstrucAa$ are partially homomorphic. 

\confORfull{\paragraph{Finite Variable Logics and Pebble Games}}{\subparagraph{Finite Variable Logics and Pebble Games}}
Fix some integer $k$. Then $\klogic$ is the fragment of 
FO which only uses variables $x_1, \dots, x_k$. We will also be 
interested in the existential-positive fragment which 
we denote by $\exklogic$. To be precise this is the fragment of 
$\klogic$ which does not use the connectives $\forall$ and $\neg$. \confORfull{}{We will also briefly touch on the logic $\exists \klogic$ which expands $\exklogic$ by allowing negation, but only at the atomic level.} If $\phi$ is a formula of FO, $\klogic$ or $\exklogic$ we write $\quant(\phi)$ to denote the number of quantifiers occurring in $\phi$ and $|\phi|$ to denote the total number of symbols in $\phi$.

Given two $k$-pebbled structures $\lstrucAa, \lstrucBb$ 
the \emph{$k$-pebble game} 
starting from position $[\lstrucAa, \lstrucBb]$ is played by two 
players, Spoiler and Duplicator, as follows. If $
\lstrucAa$ and $\lstrucBb$ are not partially isomorphic
Spoiler wins in zero rounds. If Spoiler does not win in $r-1$ rounds, then in round $r$ play starts from the end position of round $r-1$, say $[\lstrucAc, \lstrucBd]
$, where $\assC$ and $\assD$ are $k$-assignments. Then Spoiler 
picks a pebble $i \in [k]$ and either moves $i$ on $\lstrucAc$ or on $
\lstrucBd$. Duplicator responds by moving $i$ on the 
other structure; we get a position $[\lstruc{A}{\assC(i \to 
x)}, \lstruc{B}{\assD(i \to y)}]$. If these two boards are \emph{not} partially isomorphic Spoiler wins in 
$r$ rounds. Otherwise, the game continues in round $r+1$ from this 
position. This game characterises the quantifier depth needed to separate $\lstrucAa$ from $\lstrucBb$ in the following sense.

\begin{theorem}[\cite{immerman1982upper}] \label{thm: pebble}
Spoiler wins the $k$-pebble game from position $[\lstrucAa,\, \lstrucBb]$ in $r$ rounds if and only if there is a formula $\phi \in \klogic$ with quantifier depth $r$ such that $\lstrucAa \models \phi$ and $\lstrucBb \not \models \phi$.
\end{theorem}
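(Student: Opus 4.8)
The plan is a routine induction on the round number $r$, following the classical argument (this is due to Immerman \cite{immerman1982upper}; see also \cite{libkin2004elements}), whose guiding idea is that a move of pebble $i \in [k]$ corresponds exactly to a quantification of the variable $x_i$. It is cleanest to prove the symmetric reformulation: Spoiler has a strategy ending the game within $r$ rounds from $[\lstrucAa, \lstrucBb]$ if and only if $\lstrucAa$ and $\lstrucBb$ \emph{disagree} on some $\phi \in \klogic$ with $\QD(\phi) \le r$, where $\QD(\phi)$ denotes the quantifier depth of $\phi$. The one-sided form in the statement follows since replacing $\phi$ by $\neg \phi$ preserves both membership in $\klogic$ and quantifier depth, and a separating formula of quantifier depth exactly $r$ is obtained from one of depth $\le r$ by padding with trivial quantifications such as $\exists x_1\,(x_1 = x_1)$. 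For the base case $r = 0$, Spoiler wins immediately iff the boards are not partially isomorphic; since the signature is finite and relational and only $x_1, \dots, x_k$ are in play, failure of partial isomorphism is witnessed by a single quantifier-free literal (an equality $x_i = x_j$ or an atom $R(x_{i_1}, \dots, x_{i_m})$), and conversely the canonical partial map preserves every quantifier-free formula.

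For the inductive step, forward direction, assume Spoiler wins from $[\lstrucAa, \lstrucBb]$ within $r \ge 1$ rounds; we may assume the boards are partially isomorphic, else we are in the base case. Consider Spoiler's first winning move. If it moves pebble $i$ to $a \in A$ on $\strucA$, then for \emph{every} Duplicator answer $b \in B$ Spoiler wins from $[\lstruc{A}{\assA(i \to a)}, \lstruc{B}{\assB(i \to b)}]$ within $r-1$ rounds, so by the inductive hypothesis there is $\phi_b \in \klogic$ with $\QD(\phi_b) \le r-1$ that the first board satisfies and the second does not. As $B$ is finite, $\psi := \bigwedge_{b \in B} \phi_b$ is a genuine $\klogic$-formula of quantifier depth $\le r-1$, and $\phi := \exists x_i\, \psi \in \klogic$ has $\QD(\phi) \le r$, is satisfied by $\lstrucAa$ (witness $a$), and is satisfied by no board obtainable from $\lstrucBb$ by a move of pebble $i$, so $\lstrucBb \not\models \phi$. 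If Spoiler's winning move is instead on $\strucB$, the argument is dual: one takes a disjunction over the finitely many elements of $A$ and prefixes $\forall x_i$. Conversely, suppose $\lstrucAa$ and $\lstrucBb$ disagree on some $\phi \in \klogic$; by induction on the structure of $\phi$ one reads off a Spoiler strategy winning within $\QD(\phi)$ rounds. If $\phi$ is atomic the boards are not partially isomorphic and Spoiler wins at once; $\neg$ and the binary connectives pass to a subformula of no larger quantifier depth on which the boards still disagree; and for $\phi = \exists x_i\, \psi$ (WLOG satisfied by $\lstrucAa$ via some $a \in A$ but by no extension of $\lstrucBb$) Spoiler moves pebble $i$ to $a$ on $\strucA$, after which every resulting position disagrees on $\psi$ and is finished within $\QD(\psi) = \QD(\phi) - 1$ further rounds; $\forall x_i\, \psi$ is dual, with Spoiler playing on $\strucB$. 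Combining the two directions closes the induction.

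The argument is essentially bookkeeping, so there is no deep obstacle; the points needing genuine care are that the conjunction/disjunction built in the forward direction is a legitimate formula \emph{only because the structures are finite} — this is the single essential use of finiteness — and that one must check that every quantifier introduced carries an index in $[k]$ while every pebble move is a legal $x_i$-quantification, so that the $k$-variable bound is honoured in both directions. The fact that re-moving an already-placed pebble corresponds to re-quantifying a variable already in scope is precisely the feature separating $\klogic$ from full first-order logic, and should be flagged explicitly. Finally, one should note that ``winning within $r$ rounds'' is monotone in $r$, so a strategy finishing in fewer than $\QD(\phi)$ rounds still counts, which is what lets the $\neg$/binary-connective cases and the padding step go through.
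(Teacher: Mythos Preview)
The paper does not actually prove this theorem; it is stated with a citation to \cite{immerman1982upper} as a known result, so there is no proof in the paper to compare against. Your proposal is the standard and correct argument (induction on rounds in one direction, induction on formula structure in the other), with appropriate attention to the finiteness of the structures and the $k$-variable constraint.
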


\confORfull{}{If we do not place a restriction on the number of pebbles we recover the EF game.} \confORfull{If we stipulate that Spoiler may only play on the left-hand structure and that Spoiler wins at the end of a round whenever the left hand structure is not partially homomorphic to the right hand structure, we get a game characterising quantifier depth in $\exklogic$: the $\pexists k$-pebble game. If instead the restriction on the number of variables is removed we get the EF game.}
{It is easy to see that if we restrict Spoiler to always playing on the left-hand structure in the $k$-pebble game then we get a game characterising quantifier depth in $\exists \klogic$. If we also change the winning condition so that Spoiler wins at the end of a round whenever the left hand structure is not partially homomorphic to the right hand structure we instead get a game characterising quantifier depth in $\exklogic$. See e.g.\! \cite[Chapter 11]{libkin2004elements} for an introduction to pebble games.}

\confORfull{}{\subparagraph{Quantifier Number in \texorpdfstring{$\klogic$}{L(k)} as a Complexity Measure}
For a formula $\phi$, we write $\quant(\phi)$ to denote the number of quantifiers in $\phi$. It is well known that, in $\klogic$, $\QN$ is very closely linked to formula size; we make this connection explicit in Section~\ref{ss: formula}. To motivate the study of $\QN$ in $\klogic$, we now give one example from Immerman's PhD thesis \cite{immerman1980}, connecting this measure to complexity theory. 

Let $\QN^k[\log(n)]$ be those properties $\mathbf{P}$, such that there is a \emph{uniform} sequence $\{\phi_n\}_{n \in \mathbb{N}}$ of $\klogic$-sentences such that:
\begin{enumerate}
\item $\phi_n$ expresses $\mathbf{P}$ on any structure with at most $n$ elements and
\item $\quant(\phi_n) = O(\log(n))$.
\end{enumerate}
Then Immerman showed in \cite{immerman1980} that $\mathsf{NL} \subseteq \bigcup_{k\in \mathbb{N}} \QN^k[\log(n)]$ over all \emph{ordered} finite structures. This gives us a method of separating $\mathsf{NL}$ from, for instance, $\mathsf{NP}$. Take some $\mathsf{NP}$ complete problem; this induces some property $\mathbf{P}$. Then if $\mathbf{P} \not \in \QN^k[\log(n)]$ for every $k$ it follows that $\mathsf{NL} \neq \mathsf{NP}$. 

The question, then, is: how we can prove statements like $\mathbf{P} \not 
\in \QN^k[\log(n)]$? Well we can, in principle, use the $k$-QVT game 
\cite{carmosino2023finer}. The problem is that this game is complicated and 
difficult to analyse. A natural suggestion would be to instead study $\QD$ 
as this provides a lower bound for $\QN$ and, as we have seen, can be 
characterised by a simpler combinatorial game. Unfortunately, this cannot 
work because, as Immerman showed \cite{immerman1980}, we can capture 
isomorphism on ordered finite structures with only logarithmic quantifier 
depth and three variables. It seems, therefore, that we really do need to 
study $\QN$. In this work, we provide---to the best of our knowledge---the 
first lower bounds which use the $k$-QVT games. While we only do this over 
\emph{unordered} structures we believe that the techniques we develop go 
some way to taming this game.

Finally, we should note that the above example is not the only connection 
between $\QN$ in $\klogic$ to complexity theory. Immerman's thesis also 
showed, more generally, that this measure is closely related to space 
complexity and that there is a close connection to computations on 
\emph{alternating Turing machines}.}

\confORfull{\paragraph{The Multi-Structural Game}}{\subparagraph{The Multi-Structural Game}} 
We now introduce the MS game, as a stepping stone to the QVT game.\confORfull{}{ This game is similar to the EF game, except we play on two sets of structures and Duplicator has the ability to make copies of structures.} It is played in rounds by two players, Spoiler and Duplicator, 
starting from position $[\classA_0, \classB_0]$, where $\classA_0, \classB_0$ are sets of pebbled structures. Initially, if no $\strucA 
\in \classA_0$ is partially isomorphic to any $\strucB \in \classB_0$ we 
say that Spoiler wins after 0 rounds. Otherwise, for $r>0$, round $r$ starts from position $[\classA_{r-1}, \classB_{r-1}]$. Then Spoiler 
chooses to either play on $\classA_{r-1}$ or $\classB_{r-1}$. If they play 
on $\classA_{r-1}$, then they take a fresh pebble $i$ and place it on an element of $\lstrucAa$ for each $\lstrucAa \in \classA_{r-1}$, 
to form $\classA_{r}$. Duplicator then 
replies by making as many copies as they would like of structures in $
\classB_{r-1}$ and placing $i$ on an element of each of these structures 
to form $\classB_{r}$. Spoiler wins 
after $r$ rounds if no structure in $\classA_{r}$ is partially isomorphic to a structure in $\classB_{r}$. If Spoiler chooses to play on $\classB_{r-1}$ play proceeds in the same way but with the roles of $\classA_{r-1}$ and $\classB_{r-1}$ swapped.

The following example, which is given in \cite{fagin2021}, nicely illustrates \confORfull{how Duplicator's extra power can help them survive for longer.}{how the extra power given to Duplicator makes it harder for Spoiler to win.}

\begin{example} \label{ex: 3vs2}
Let $\strucA$ be a linear order of size three and $\strucB$ a linear order of size two. Formally, we have a single binary relation $<$ and \confORfull{$\strucA := \{\{a_1, a_2, a_3\}, \{(a_1, a_2), (a_1, a_3), (a_2, a_3)\}\}$, $\strucB := \{\{b_1, b_2\}, \{(b_1, b_2)\}\}$.}{
\begin{align*}
\strucA :=& \{\{a_1, a_2, a_3\}, \{(a_1, a_2), (a_1, a_3), (a_2, a_3)\}\} \\
\strucB :=& \{\{b_1, b_2\}, \{(b_1, b_2)\}\}
\end{align*}} 
Then\confORfull{}{ it is easy to see that} Spoiler can win the EF game in two rounds: they first play $a_2$ and Duplicator must reply with $b_1$ or $b_2$. If they play $b_1$, then in the next round Spoiler plays $a_1$ and Duplicator has no reply as there is no $b\in B$ with $b< b_1$. Similarly, if Duplicator plays $b_2$ then Spoiler  wins in the next round by playing $a_3$. \confORfull{Now suppose that in the MS game Spoiler plays $a_2$ in the first round. The difference is that now Duplicator can make two copies of $\strucB$ and play $b_0$ on one board and $b_1$ on the other, see Figure~\ref{fig: 3vs2}. Now Spoiler \emph{cannot} win in the next round. For example, if they play $a_1$ then Duplicator will survive on the board they played $b_2$ on in the first round. In fact, one can show that Duplicator requires three rounds to win the MS game on $\strucA, \strucB$.
}{What is the difference in the MS game? Well here suppose again that Spoiler plays $a_2$ in the first round. The difference is that Duplicator can make two copies of $\strucB$ and play $b_0$ on one board and $b_1$ on the other, see Figure~\ref{fig: 3vs2}. Now Spoiler \emph{cannot} win in the next round. For example, if they play $a_1$ then Duplicator will survive on the board they played $b_2$ on in the first round. In fact, one can show that Spoiler requires three rounds to win the MS game on $\strucA, \strucB$ \cite{fagin2021}.}
\end{example}
\confORfull{
\begin{figure} 
\centering
\includegraphics[width=\linewidth]{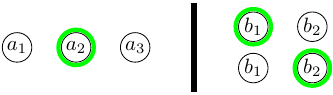}
\caption{Position of the MS game on $\strucA, \strucB$ after one round.}
\Description[The position of the MS game on a linear order of length 3 versus a linear order of size two after one round.]{The position of the MS game on a linear order of length 3 versus a linear order of size two after one round. On the left the middle element of the linear order of size 3 is circled in green to represent the fact that this is pebbled in the first round. On the right there are two copies of the linear order of size two: one with the first element circled in green and the other with the last element circled in green.}
\label{fig: 3vs2}
\end{figure}
}
{\begin{figure} 
\centering
\includegraphics[width=0.6\textwidth]{figures/fig-3vs2}
\caption{Position of the MS game on $\strucA, \strucB$ after one round, the green circle indicates the pebbled element on each board.}
\label{fig: 3vs2}
\end{figure}
}
\section{The \texorpdfstring{$k$}{k}-QVT game} \label{s: game}

In this section, we introduce the $k$-QVT game. \confORfull{We then use this game to give an easy upper bound for our main question and observe a connection to formula size.}
{We give a different formulation of the game to \cite{carmosino2023finer} and so our first task is to prove that the two formulations are equivalent. Afterwards, we give an easy upper bound for our main question; the idea here is that we can utilise the Spoiler winning strategy in the $k$-pebble game. Finally, we observe a connection to formula size.}

We next give an intuitive description of the game. A naive attempt to define a game capturing $\QN$ in $\klogic$ would be to amend the MS game such that each player has only $k$ pebbles available to them, call this the $k$-MS game. This approach is exactly how we generalise EF games to account for the number of variables. However, it is easy to see this is insufficient in our case.  It is shown in \cite{carmosino2023finer} that $\strucA$ and $\strucB$ from Example~\ref{ex: 3vs2} can be separated by a FO sentence with $2$ variables and $3$ quantifiers and yet Duplicator has a 3-round winning strategy for the $2$-MS game on these structures.

So what do we need to add? The problem arises because when we reuse variables we can no longer assume that our formulas are in PNF, which means we need to think about how to treat conjuncts and disjuncts. \confORfull{For instance the the separating sentence in the above counterexample is}{To see this let us look at the separating sentence in the above counterexample:}
\begin{equation} \label{eq: 2-var}
\exists v_1  \left(\exists v_2 \,(\,v_1 < v_2\,) \wedge \exists v_2 \,(\,v_2 < v_1\,) \right).
\end{equation}
\confORfull{}{If we did not need to worry about re-using variables we could re-write this as the equivalent sentence:
\begin{equation}
\exists v_1  \exists v_2  \exists v_3 \,(\, v_1 < v_2 \wedge v_3 < v_1 \, ).
\end{equation}
 But this sentence has three variables which is problematic. To deal with conjuncts and disjuncts we allow Spoiler to `split' the game into subgames. For example suppose we have some position $[\{\lstrucAa\}, \{\lstrucBb, \lstrucBc\}]$ then it may be that $\lstrucAa \models \phi_1 \wedge \phi_2$, $\lstrucBb \models \phi_1 \wedge \neg \phi_2$, $\lstrucBc \models \neg \phi_1 \wedge \phi_2$. In such a case we allow Spoiler to perform a split of the game into two new games, one in position $[\{\lstrucAa\}, \{\lstrucBb\}]$ and one in position $[\{\lstrucAa\}, \{\lstrucBc\}]$. This means the game is actually played on a tree where each node corresponds to some subgame.} \confORfull{To deal with conjuncts and disjuncts we allow Spoiler to `split' the game into subgames. We next formalise this.}{It turns out that this is exactly the ingredient we need; we next formalise this.}

\subsection{Game Description}

Let $\classA$, $\classB$ be sets of $k$-pebbled structures. The $k$-QVT game starting from position $[\classA, \classB ]$ is played by two players, Spoiler and Duplicator, and is initialised by a triple $(\mathcal{T}^1, \chi^1, c^1)$ where: $\mathcal{T}^1$ is a tree consisting of a single node, $t_0$; $\chi^1$ is a labelling function which labels $t_0$ by $[\classA,\classB]$ and $c^1 = t_0$. Before the first round, we delete every $\lstrucAa 
\in \classA$ which is not partially 
isomorphic to any $\lstrucBb \in \classB$. Similarly, we delete every $\lstrucBb 
\in \classB$ not partially 
isomorphic to any $\lstrucAa \in \classA$. If after this process $\chi^1(t_0) = [\emptyset, \emptyset]$ we say that Spoiler wins in $0$ rounds.  For $r>0$, if Spoiler does not win in round $r-1$ then round $r$ is played from the position reached at the end of the previous round, i.e., a triple $(\mathcal{T}^r,\, \chi^r,\, c^r)$ \confORfull{where: $\mathcal{T}^r$ is a tree rooted at $t_0$; $\chi^r$ is a labelling function which assigns each $t\in V(\mathcal{T}^r)$ a pair of sets of $k$-pebbled structures and
$c^r \in V(\mathcal{T}^r)$, we call this the \emph{current node}.}
{where: \begin{enumerate}
\item $\mathcal{T}^r$ is a tree rooted at $t_0$;
\item $\chi^r$ is a labelling function which assigns each $t\in T$ a pair of sets of $k$-pebbled structures, and
\item $c^r \in V(\mathcal{T}^r)$, we call this the \emph{current node}.
\end{enumerate}}    

Before continuing let us introduce some terminology. If $\chi^r(t) = [\classC, \classD]$ then we say that $\classC$ is the left-hand side ($\LHS$) and $\classD$ is the right-hand side ($\RHS$). If $\lstrucXa \in \LHS$ and $\lstrucYb \in \RHS$ we say that $\lstrucXa$ is on the other side to $\lstrucYb$ and, vice-versa. By a slight abuse of notation, if $\chi^r(t) = 
[\classC, \classD]$ we will write $\lstrucXa \in \chi^r(t)$, 
to mean $\lstrucXa \in \classC \cup \classD$. Finally, if $t$ is not a leaf or if $\chi^r(t) = [\emptyset, \emptyset]$ we say that $t$ is $\emph{closed}$, otherwise we say $t$ is \emph{open}. 

At the start of each round Spoiler may choose whether to \emph{split} or \emph{pebble}. In both cases `the action' occurs at $c^r$; suppose this is labelled by $[\classC, \classD]$. 

A split move works as follows. Spoiler chooses $\classX \in \{\classC,\classD\}$ and a sequence of disjoint sets 
$\classX_1, \dots, \classX_j$, such that $\classX = 
\bigcup_{\ell=1}^j \classX_{\ell}$. We then edit $\mathcal{T}^r$ by adding $j$ children of $t$:
$t_1, \dots, t_j$ and extend $\chi^r$ to these new nodes by setting $\chi^r(t_{\ell}) 
= (\classC_{\ell}, \classD)$, if $\classX =\classC$ and $
\chi^r(t_{\ell})=(\classC, \classD_{\ell})$ otherwise. The current node is then set to be $t_1$. After this round $r$ does not end, rather Spoiler can again choose to split or pebble starting from the amended position $(\mathcal{T}^r, \chi^r, c^r)$.

A pebble move is the same as a move in the MS game from position $[\classC, \classD]$ except now Spoiler cannot always choose a fresh pebble: they must move a pebble $i\in[k]$. 
This results in a fresh pair $[\hat{\classC}, \hat{\classD}]$.
Then every $\lstrucXa \in \hat{\classC} \cup \hat{\classD}$ which is not partially isomorphic to a structure on the other side is deleted. We set $\mathcal{T}^{r+1} = \mathcal{T}^r$, $\chi_{r+1}(u) = \chi(u)$ for $u\neq c^r$ and $\chi^{r+1}(c^r) = [\hat{\classC},\hat{\classD}]$. If $\chi(c^r) \neq [\emptyset, \emptyset]$, we set $c^{r+1} = c^r$ and end the round. If $\chi(c^r) = [\emptyset, \emptyset]$ and an open node exists we choose one arbitrarily and set it as the current node. Otherwise, Spoiler wins in $r$ rounds. 

\begin{example}[Linear order of size 3 vs 2] \label{ex: 3vs2tree}
We now show how Spoiler can win the $2$-QVT game in three rounds starting from position $[\{\strucA\}, \{\strucB\}]$, where $\strucA$, $\strucB$ are the structures from Example~\ref{ex: 3vs2}.

In the first round, Spoiler makes a pebble move: they move pebble one to $a_2$. Duplicator then makes both possible responses. Next Spoiler makes a split move. This closes the root of the tree and creates two new children $t_1$ and $t_2$, one for each $\RHS$ board. Suppose $t_1$ contains the board where Duplicator replied with $b_1$. Then Spoiler can close $t_1$ in a single round by moving pebble two to $a_1$. Similarly, Spoiler can close $t_2$ by moving pebble two to $a_3$. Therefore, Spoiler wins in 3 rounds. Note that the above strategy mimics the structure of Equation~(\ref{eq: 2-var}).  
\end{example}

\confORfull
{
}{
\begin{figure} 
\centering
\includegraphics[width=\textwidth]{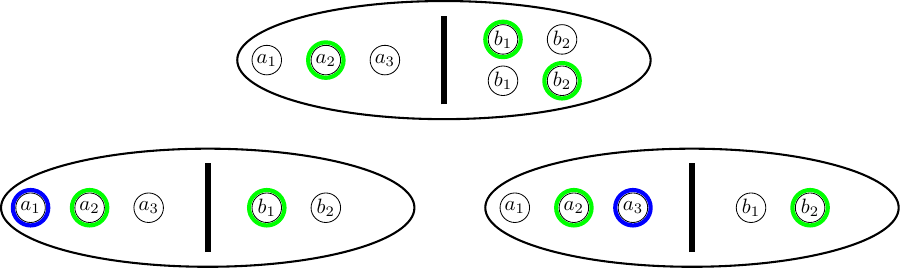}
\caption{The 3-round Spoiler win described in Example~\ref{ex: 3vs2tree}.}
\label{fig: 3vs2_QVT}
\end{figure}
}

\confORfull{}{\subsection{Remarks on the Game}}

Duplicator always has the option to play every possible move when it is their turn; call this the obvious strategy. Clearly, Duplicator can survive $r$ rounds iff they can survive $r$ rounds by playing the obvious strategy\confORfull{}{, see \cite{carmosino2023finer} for a formal proof of this}. We could, therefore, have made our game single player, however for ease of intuition and analysis we retain a second player. If Duplicator moves $p \to x$ on some board $\lstrucAa$ and $\mathcal{A}^{\assA(p\to x)}$ is not deleted at the end of the round we say that $x$ is a \emph{valid} response. It will be convenient at times to assume Spoiler only makes valid responses. \confORfull{Our version of the game is slightly different to that from \cite{carmosino2023finer}; see Appendix~\ref{a: game} for a comparison. The next theorem establishes the correctness of the game and therefore that it is equivalent to the formulation in \cite{carmosino2023finer}.}
{

Before moving on to prove the correctness of our game we should compare our presentation to that given in \cite{carmosino2023finer}. There are a few key differences. Firstly, in their game structures are not deleted at the end of each round. Instead at any time Spoiler may \emph{close} a node by providing an atomic formula which separates the two sides. If every node of the tree is closed Spoiler then wins the game. This may seem like a major difference between the two games but in fact it is not. If for example at the end of some round, a node is labelled by $[\classA, \classB]$ and some $\classB_1 \subseteq \classB$ would be deleted in our version of the game (i.e. because structures in $\classB_1$ are not partially isomorphic to any structures in $\classA$), then in the original formulation of the game Spoiler can perform a split creating two new nodes, $t_1, t_2$ labelled by $(\classA, \classB_1)$ and $(\classA, \classB_2 \setminus \classB_1)$ respectively. Then $\classA, \classB_1$ disagree on a Boolean combination of atomic formulas and so by performing further appropriate splits Duplicator can ensure that all leaves in the subtree rooted at $t_2$ are closed without playing any further pebble moves. Secondly, their game is not symmetric and they introduce so-called `swap' moves which swap the roles of the left- and right-hand side structures. Spoiler can then only play on the $\LHS$. 

Overall, it is slightly easier to prove Theorem~\ref{thm:eqiv} for the version of the game presented in \cite{carmosino2023finer}. Moreover, this version of the game is effectively a version of the formula-size games introduced from \cite{adler2003n} and so fit into a broader framework of games, again see \cite{carmosino2023finer} for details. In our view, the version of the game presented here is easier to work with for proving lower bounds since it strips back all elements which are not directly relevant to $\QN$. Nevertheless, playing our version of the game does yield some insight into the size of a formula needed to separate two structures, as we will see at the end of this section.}

\confORfull{}{\subsection{Proof of Correctness}}

\begin{theorem} \label{thm:eqiv} 
Let $\classA, \classB$ be sets of $k$-pebbled structures. Then Spoiler can win the $k$-QVT game from position $[\classA, \classB]$ in $r$ rounds if and only if there exists a $\klogic$-formula $\phi$ that $\classA$ and $\classB$ disagree on with $\quant(\phi) = r$.
\end{theorem}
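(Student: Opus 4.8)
The plan is to prove both directions by induction on $r$, translating between Spoiler strategies on the $k$-QVT tree and $\klogic$-formulas in a way that mirrors the recursive structure: a pebble move at the current node corresponds to the outermost quantifier of a subformula, and a split move corresponds to a Boolean connective. The key invariant I would maintain is a node-wise correspondence: if $t$ is a node of the final tree labelled $[\classC, \classD]$ and the Spoiler strategy rooted at $t$ wins in $r_t$ rounds (counting only pebble moves in the subtree below $t$), then there is a formula $\phi_t \in \klogic$ with $\quant(\phi_t) = r_t$ whose free variables are among the currently pebbled ones, such that $\classC \models \phi_t$ while no structure in $\classD$ models $\phi_t$ — and conversely. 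Summing the quantifier counts over the tree recovers exactly $\quant(\phi_{t_0}) = r$, because each pebble move contributes exactly one quantifier and splits contribute none.

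For the direction from a Spoiler win to a formula, I would set up the induction on the structure of Spoiler's play at the current node $c^r$. If Spoiler's first action is a split of, say, the $\LHS$ $\classC = \bigcup_{\ell} \classC_\ell$, creating children $t_1, \dots, t_j$, then by induction each child yields a formula $\psi_\ell$ with $\classC_\ell \models \psi_\ell$ and $\classD \not\models \psi_\ell$; I take $\phi = \bigvee_\ell \psi_\ell$, which every structure in $\classC$ satisfies (each lies in some $\classC_\ell$) while no structure in $\classD$ does; the quantifier count adds up since $\quant(\bigvee_\ell \psi_\ell) = \sum_\ell \quant(\psi_\ell)$. A split of the $\RHS$ dually gives a conjunction. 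If Spoiler's first action is a pebble move, say moving pebble $i$ on the $\LHS$ to produce $\hat\classC$, with Duplicator's obvious response producing $\hat\classD$, then by induction on the resulting position (which closes in $r-1$ further rounds) there is $\psi$ with $\hat\classC \models \psi$, $\hat\classD \not\models \psi$, and I set $\phi = \exists x_i\, \psi$. The point is that for each $\lstrucXa \in \classC$ the witness is the element Spoiler pebbled, and for each $\lstrucYb \in \classD$, every choice of value for $x_i$ appears among $\hat\classD$ (Duplicator's obvious strategy), so $\lstrucYb \not\models \exists x_i \psi$; the deletion step only removes structures already distinguished by a quantifier-free formula, which can be absorbed into $\psi$ via an extra conjunct/disjunct of atomic type without increasing $\quant$. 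A pebble move on the $\RHS$ gives a universal quantifier.

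For the converse, given a separating formula $\phi$ with $\quant(\phi) = r$ I would describe Spoiler's strategy by recursion on $\phi$, after first putting $\phi$ into negation normal form so the connectives are $\wedge, \vee$ and quantifiers $\exists, \forall$ with negation only at atoms (this does not change $\quant$). At a node labelled $[\classC, \classD]$ with $\classC \models \phi$, $\classD \not\models \phi$: if $\phi = \phi_1 \vee \phi_2$, Spoiler splits the $\LHS$ into $\classC_1 = \{ \lstrucXa \in \classC : \lstrucXa \models \phi_1\}$ and $\classC \setminus \classC_1 \subseteq \{\lstrucXa : \lstrucXa \models \phi_2\}$; if $\phi = \phi_1 \wedge \phi_2$, Spoiler splits the $\RHS$; if $\phi = \exists x_i\, \psi$, Spoiler makes the pebble move placing $i$ on a witness in each $\LHS$ structure, and after Duplicator's response every surviving $\RHS$ structure still fails $\psi$; if $\phi$ is quantifier-free it disagrees between the two sides by hypothesis, so the deletion step empties one side and the node closes with no pebble move. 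One checks that the number of pebble moves Spoiler makes equals $\quant(\phi)$.

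The main obstacle I expect is the bookkeeping around the deletion rule and the split rule interacting correctly: in our formulation structures get deleted mid-round whenever they are no longer partially isomorphic to anything on the other side, so I must verify that the formula extracted from a subgame can always be adjusted by an atomic-type conjunct/disjunct (of quantifier depth $0$, hence not counted) to account for those deletions, and symmetrically that when building Spoiler's strategy from a quantifier-free $\phi$ the deletion rule does indeed close the node. A secondary subtlety is ensuring the split into disjoint pieces is always possible — this is where I rely on the fact that it suffices to split off the set of structures satisfying $\phi_1$ (resp. not satisfying $\phi_1$), and that $\classX = \classX_1 \cup (\classX \setminus \classX_1)$ is a valid two-way split. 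Neither difficulty is deep, but getting the invariant precise enough that the quantifier counts match on the nose is the crux.
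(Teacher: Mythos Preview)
Your proposal is correct and follows essentially the same approach as the paper: both directions proceed by the natural structural induction, translating pebble moves to quantifiers and splits to Boolean connectives, with quantifier-free atomic-type formulas absorbing the deletion step. The only cosmetic difference is that you work in negation normal form (treating $\vee$/$\wedge$ and $\exists$/$\forall$ as the four primitive cases corresponding to the side Spoiler acts on), whereas the paper uses the basis $\{\neg,\wedge,\exists\}$ and handles right-hand-side moves by a symmetry-and-negate argument; the paper is also more explicit than you are about the exact form of the absorbing formulas (it writes out $\phi_{\classX} := \bigvee_{\lstrucXa\in\classX}\bigwedge\{\text{(neg.)\ atoms true in }\lstrucXa\}$), but you have correctly identified that this is the crux.
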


 \begin{proof}
We assume throughout w.l.o.g., that Duplicator plays the obvious strategy. First suppose that $\classA$ and $\classB$ disagree on a formula $\phi$, which we may assume uses only the connectives $\neg,\, \wedge,\, \exists$.  We show by induction on the length of $\phi$, that Spoiler has a $\quant(\phi)$-round winning strategy in the $k$-QVT game from position $[\classA, \classB]$. If $\phi$ is an atomic formula that $\classA$ and $
 \classB$ disagree on, then trivially Spoiler wins after $0$ 
 rounds. For the induction step, suppose the equivalence is 
 true for every formula of length at most $m$ and that $\phi$ is a non-atomic $\klogic$-formula of length $m+1$. 

 First, say $\phi \equiv \exists x_i \, \theta$, with $\quant(\theta) = r-1$. Spoiler's
first move is to select pebble $i$ and for each $\lstrucAa \in \classA$, place the pebble on an element $a$ such that $\lstruc{\strucA}{\assA(i \to a)} \models \theta$, noting that this is always possible since $\lstrucAa \models \exists x_i \, \theta$. Duplicator can then
duplicate structures and move pebble $i$ on each of these copies. Then every pebbled structure which is not partially isomorphic to any structure on the other side is deleted. Denote the resulting sets of structures by $\hat{\classA}$ and $\hat{\classB}$. Since $\classB \models \neg\phi$, pebble $i$ is not on a witness for $\theta$ for any $\lstrucBb \in \hat{\classB}$. Therefore $\hat{\classA}$ and $\hat{\classB}$ disagree on the formula $\theta$. As $\theta$ is of length $m$ and has $r-1$ quantifiers Spoiler has a $(r-1)$-round winning
strategy from this position and, therefore, a $r$-round winning strategy overall. 

Second, suppose $\phi \equiv \theta_1 \wedge \theta_2$, $\quant(\theta_j) =k_j$. Let $\classB_1 := \{
\lstrucBb \in \classB \, \mid \, \lstrucBb\not \models \theta_1 \}$
and $\classB_2 := \classB \setminus \classB_1.$  Then Spoiler performs a split, creating two new nodes $t_1, t_2$ such that $\chi(t_i)=
[\classA, \classB_i]$. By the
induction hypothesis, Spoiler has a $k_i$-round winning strategy for the game starting from position $\chi(t_i)$ and so a $k_1 + k_2 = \quant(\phi)$-round winning strategy overall.

 Finally, suppose $\phi \equiv \neg \theta$. Since $\classB \models \theta$ and $\classA \models \neg\theta$, by the induction hypothesis Spoiler has a $r$ round winning strategy from position $[\classB , \classA]$, where $r =\quant(\theta)$. By playing this exact strategy in the $[\classA, \classB]$ game Spoiler also wins this game in $r$ rounds.

For the other direction we induct on $r$, the number of rounds Spoiler needs to win the game. First, for a pebbled structure $\lstrucXa$ and $\classX$  a class of compatible pebbled structures we define, 
\begin{align*}
 T_{\lstrucXa} &:= \{ \phi \in \klogic \, \mid \, \lstrucXa \models \phi \textit{, $\phi$ is atomic or $\phi \equiv \neg \theta$, $\theta$ is atomic} \} \\
 \phi_{\lstrucXa} &\equiv \bigwedge_{\psi \in T_{\lstrucXa}} \psi \\
 T_{\classX} &:= \{ \phi_{\lstrucXa} \, \mid \, \lstrucXa \in \classX\} \\
\phi_{\classX} &\equiv \bigvee_{\psi \in T_{\classX}} \psi 
\end{align*}
Note that even if $\classX$ is infinite, the set $T_{\classX}$ is finite and so $\phi_{\classX} \in \klogic$. Clearly $\lstrucXa \models \phi_{\classX}$ for every $\lstrucXa \in \classX$. 
Moreover, if $\lstrucYb$ is a structure which is not partially isomorphic to any $
\lstrucXa \in \classX$ then $\lstrucYb \not \models \phi_{\classX}$.
Given this, for the base case, we set $\phi = \phi_{\classA}$.

 Now suppose that whenever Spoiler wins in $r$-rounds from a position $[\classX, \classY]$ there is a formula $\phi \in \klogic$, with $\quant(\phi)=r$, which $\classX$
and $\classY$ disagree on. Suppose Spoiler wins the ($r+1$)-round $k$-QVT
game from position $[\classA, \classB]$ and that in the first round Spoiler performs a split of
  $\classB$ into $(\classB_i)_{i \in [\ell]}$. Say Spoiler has a winning strategy in at
  most $r$ moves from position $[\classA, \classB_i]$ for every $i$. Then, by
  the induction hypothesis, there are formulas $\phi_1, \dots, \phi_{\ell}$ such
  that $\classA$ and $\classB_i$ disagree on $\phi_i$ and $\quant(\phi_i)$ is equal to
  the number of rounds in the Spoiler winning strategy from position $[\classA, \classB_i]$. Then 
  \[ \phi \equiv \neg \phi_{\classB} \vee \left(\phi_{\classA} \wedge \bigwedge_{i=1}^j \phi_i \right) 
 \]
 gives us what is required. To see this first let $\lstrucAa \in \classA$. If 
 $\lstrucAa$ is deleted before the split, then $\lstrucAa$ is not partially 
 isomorphic to any $\lstrucBb \in \classB$. Therefore $\lstrucAa \models \neg 
 \phi_{\classB}$. Otherwise $\lstrucAa$ is not deleted and so $\lstrucAa 
 \models \phi_i$ for each $i$ by the induction hypothesis. Either way $
 \lstrucAa \models \phi$. On the other hand, if $\lstrucBb \in \classB$ then 
 clearly $\lstrucBb \not \models \neg \phi_{\lstrucBb}$. If $\lstrucBb$ is 
 deleted before the split then $\lstrucBb \not \models \phi_{\classA}$, 
 otherwise $\lstrucBb \in \classB_i$ for some $i$ and so $\lstrucBb \not 
 \models \phi_i$, by assumption. So $\classA$ and $\classB$ really do disagree 
 on $\phi$. Finally, note that $\quant(\phi) = \sum_{i=1}^{\ell} 
 \quant(\phi_i) = r+1$.
  
   Otherwise, there is an $i$ such that Spoiler needs $r+1$
  rounds to win from position $[\classA, \classB_i]$. But then all other sub-games are winnable in $0$
  rounds since the game from position $[\classA, \classB]$ is winnable in $r+1$ rounds by assumption. So before the game starts we should have deleted $\classB \setminus \classB_i$ from $\classB$, a contradiction. Furthermore, if Spoiler performs a split of $\classA$ instead of $\classB$ we can construct $\phi$ exactly as above, except with the roles of $\classA$ and $\classB$ reversed, and negate the resulting formula.

Now suppose Spoiler does not perform any splits in the first round and that in the playing stage they choose $\classA$ and move pebble $i$. Then Spoiler has a $r$-move winning strategy from
position $\chi^2(t_0)= [\hat{\classA}, \hat{\classB}]$. Therefore, by the induction hypothesis, there is some $\theta \in \klogic$ with $\quant(\theta)=r$ which $\hat{\classA}$ and $\hat{\classB}$ disagree on. Let $\classB'$ be the class of structures derived from $\classB$ that occurs \emph{after} Spoiler and Duplicator have moved, but \emph{before} any structures are deleted. Let 
\[ 
\phi \equiv \neg \phi_{\classB} \vee \left(\phi_{\classA} \wedge  \exists x_i \left(\neg \phi_{\classB'} \vee (\phi_{\hat{\classA}} \wedge \theta )\right)\right). 
\]
Clearly $\quant(\phi) = r+1$. We will show that $\classA, \classB$ disagree on this formula. 
By the same argument as in the base case, we see that any structure in $\classA$ which is 
deleted before Spoiler moves models $\phi$. So suppose $\lstrucAa \in \classA$ is not deleted 
initially. Then Spoiler moves pebble $i \to a$ on $\lstrucAa$, so we get some structure $\lstruc{\strucA}
{\assA(i\to a)}$. It may be that this structure is deleted before the end of 
the turn. In this case, it is not partially isomorphic to any $\lstrucBb \in \classB'$. Therefore $
\lstruc{\strucA}{\assA(i\to a)} \models \neg \phi_{\classB'}$. But then, taking $a$ as the 
witness for the existential quantification, it follows that $\lstrucAa \models \exists x_i \neg 
\phi_{\classB'}$, so $\lstrucAa \models \phi$. But if $
\lstruc{\strucA}{\assA(i\to a)}$ is not deleted, then it is in $\hat{\classA}$ and so by the 
induction hypothesis models $\theta$. Therefore, again taking $a$ as the witness to the existential 
quantification we get that $\lstrucAa \models \phi$. 

We now claim that no $\lstrucBb \in \classB$ models $ \phi$. Again the case where $\lstrucBb$ is deleted before the turn starts is similar to the base case. Now suppose $\lstrucBb$ is not deleted. Then Duplicator plays every possible move on $\lstrucBb$. Suppose for a contradiction that
\[
 \lstrucBb \models \exists x_i \left(\neg \phi_{\classB'} \vee (\phi_{\hat{\classA}} \wedge \theta)\right),
 \]
then there is some $b \in B$ such that $\lstruc{\strucB}{\assB(i \to b)} \models \neg \phi_{\classB'} \vee (\phi_{\hat{\classA}} \wedge \theta)$. Since $\lstruc{\strucB}{\assB(i \to b)}\in \classB'$, it follows that $\lstruc{\strucB}{\assB(i \to b)} \models \phi_{\hat{\classA}} \wedge \theta$. Suppose $\lstruc{\strucB}{\assB(i \to b)}$ is deleted before the end of the round. Then $\lstruc{\strucB}{\assB(i \to b)} \not \models \phi_{\hat{\classA}}$. On the other hand, if $\lstruc{\strucB}{\assB(i \to b)}$ is not deleted, then $\lstruc{\strucB}{\assB(i \to b)} \in \hat{\classB}$ and so by the induction hypothesis does not model $\theta$, a contradiction. The claim follows.

Finally suppose that in the first round Spoiler makes no splits but this time selects $\classB$. By the symmetry of the game, we have that Spoiler wins in $(r+1)$ rounds from position $[\classB, \classA]$. Therefore there is a formula $\psi$ with $\quant(\psi) = r+1$ such that $\classB \models \psi$ but $\classA \not \models \psi$. Then taking $\phi \equiv \neg \psi$ we get the desired formula. 
\end{proof}
 
\subsection{Upper Bounds}
\confORfull{C}{We want to deploy this game to help us with our main question. To this end c}onsider the $k$-QVT game starting from position $[\{\strucA\},\{\strucB\}]$, where $\strucA$ and $\strucB$ are $n$-element structures. Then by Theorem~\ref{thm:eqiv}, Spoiler can win the game in $r$ rounds iff there is a sentence with $r$ quantifiers which $\strucA$ and $\strucB$ disagree on. Immediately we can see the following.

\begin{lemma} \label{lem: ub}
 If two $n$-element structures can be distinguished by a sentence of
 $\mathcal{L}^k$ with quantifier rank $r$, then there is a sentence $\phi$ which the structures disagree on with 
 \[ 
  \quant(\phi) \le \frac{n^{r}-1}{n-1}.
 \] 
\end{lemma}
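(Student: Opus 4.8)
The plan is to exploit the $k$-pebble game (Theorem~\ref{thm: pebble}) together with Theorem~\ref{thm:eqiv}, translating a Spoiler strategy in the pebble game into a Spoiler strategy in the $k$-QVT game and then bounding the number of quantifiers it uses. Since $\strucA$ and $\strucB$ are distinguished by a sentence of $\klogic$ of quantifier rank $r$, Spoiler wins the $k$-pebble game from $[\strucA, \strucB]$ in $r$ rounds; fix such a strategy $\tau$.

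First I would describe how Spoiler plays the $k$-QVT game from $[\{\strucA\}, \{\strucB\}]$ by ``simulating'' $\tau$. The key point is that in the QVT game Duplicator may make copies and so the $\RHS$ can contain many boards $\lstrucBb$, but on each of them Spoiler wants to continue playing according to $\tau$ against that particular board. So after Spoiler makes the pebble move dictated by $\tau$ (on the single $\LHS$ board $\strucA$) and Duplicator responds on all $\RHS$ copies, Spoiler immediately performs a split of the $\RHS$ into singletons — one node per surviving $\RHS$ board. After this split, at each node the game is in a position $[\{\strucA^{\assA}\}, \{\strucB^{\assB}\}]$ with $\assA, \assB$ being $k$-assignments, exactly the kind of position reached after one round of the $k$-pebble game; and since $\tau$ is winning, the position $[\strucA^{\assA}, \strucB^{\assB}]$ is still Spoiler-winning in $\le r-1$ further rounds in the pebble game. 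So Spoiler can recurse.

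Next I would set up the recursion for the quantifier count. Let $Q(s)$ be the maximum number of quantifiers Spoiler uses (i.e. the total size of the QVT tree's pebble moves) when continuing from a single-board-vs-single-board position that is Spoiler-winning in $\le s$ pebble-game rounds. A pebble move on the $\LHS$ contributes one quantifier; Duplicator then creates at most $n$ surviving $\RHS$ boards (at most $n$ since $|B| = n$), so the following split produces at most $n$ children, each Spoiler-winning in $\le s-1$ rounds. Hence $Q(s) \le 1 + n \cdot Q(s-1)$ with $Q(0) = 0$ (the base case: if the position is not partially isomorphic, Spoiler already won, zero quantifiers). Unrolling gives $Q(s) \le \sum_{i=0}^{s-1} n^i = \frac{n^s - 1}{n-1}$. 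Applying this with $s = r$ and invoking Theorem~\ref{thm:eqiv} to turn the $Q(r)$-round QVT Spoiler win into a sentence $\phi$ with $\quant(\phi) = Q(r) \le \frac{n^r - 1}{n - 1}$ that $\strucA$ and $\strucB$ disagree on completes the argument.

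The one point requiring a little care — and the main obstacle — is making sure the split-into-singletons step is legitimate and that on each resulting node the pebble-game invariant genuinely holds. The invariant needs: (i) the $\LHS$ board is literally the board $\tau$ would be playing on, which is fine since Spoiler only ever plays on the $\LHS$ and copies $\tau$'s moves there; and (ii) each $\RHS$ board $\strucB^{\assB}$ that survives deletion is partially isomorphic to $\strucA^{\assA}$, hence corresponds to a legal Duplicator response in the pebble game, so that Spoiler's strategy $\tau$ (which is winning against \emph{every} Duplicator response) still finishes in $\le s-1$ rounds — boards that get deleted require no further attention. One should also note that a board with $\strucB^{\assB}$ only gets fewer pebbles or the same, and that $|B| = n$ bounds the branching; if one is worried about repeated assignments the bound only improves. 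I would state these observations briefly and then present the recursion.
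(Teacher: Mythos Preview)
Your approach is essentially the same as the paper's: simulate the $r$-round pebble-game strategy inside the $k$-QVT game, splitting into singletons after each pebble move, and bound the total number of pebble moves by the geometric sum $\sum_{i=0}^{r-1} n^i$. The paper phrases it as ``pebble only when both sides are singletons, otherwise split,'' which amounts to the same thing.

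One small correction: you write that ``Spoiler only ever plays on the $\LHS$,'' but the $k$-pebble game (unlike its existential-positive variant) allows Spoiler to play on either structure, and the winning strategy $\tau$ may well require moves on $\strucB$. This does not damage the bound---if $\tau$ dictates a move on the $\RHS$ board, Spoiler plays it there, Duplicator produces at most $n$ copies on the $\LHS$, and Spoiler then splits the $\LHS$ into singletons---so the recursion $Q(s)\le 1+n\,Q(s-1)$ still holds. Just drop the one-sidedness assumption in your invariant (i) and say instead that after each split every node is a singleton-vs-singleton position reachable in the pebble game, on which $\tau$ wins in at most $s-1$ further rounds.
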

\begin{proof}
Let $\strucA$ and $\strucB$ be two $n$-element structures which disagree on some sentence with quantifier rank $r$. Then by Theorem~\ref{thm: pebble}, Spoiler has an $r$-round winning strategy in the $k$-pebble game from position $[\strucA, \strucB]$; we use this to construct a winning strategy for the $k$-QVT game from position $[\{\strucA\}, \{\strucB\}]$.

This works as follows. In round $s$, Spoiler only pebbles if $\chi(c^s)$ is of the form $[\{\lstrucAa\}, \{\lstrucBb\}]$. Otherwise, they split a set that didn't consist of a single structure into subsets each of which only contains one structure. Then whenever they pebble, they are playing on a node with a label of the form $[\{\lstrucAa\}, \{\lstrucBb\}]$. They can therefore play as in an optimal strategy of the $k$-pebble game starting from position $[\lstrucAa, \lstrucBb]$. After such a move one side consists of a single structure and the other of at most $n$ structures, since Duplicator has $n$ possible responses to any Spoiler move. In the next round the larger class will then be split into singletons and the current node will gain at most $n$ children. Since Spoiler wins the $k$-pebble game in at most $r$ rounds the height of the final tree is bounded by $r-1$. Therefore, in the worst case for Spoiler we get a complete $n$-ary tree of height $r-1$ where exactly one pebble move occurs at each node. By Theorem~\ref{thm:eqiv} the result follows.
\end{proof}

The advantage of Lemma~\ref{lem: ub} is that we can turn any quantifier depth upper bound into a quantifier number upper bound; we will see a similar phenomenon in the context of lower bounds in Section~\ref{s: exlb}. By combining Lemma~\ref{lem: ub} with the $n^{k-1}$ upper bound from \cite{immerman1990describing} we obtain the following concrete bound.\confORfull{}{\footnote{We should note here that the recent upper bound from \cite{grohe2023iteration} does not apply to our case, as this upper bound is for $k$-variable logic \emph{with counting quantifiers}.}} 

\begin{corollary} \label{cor: ub}
If two $n$-element structures can be distinguished by 
 $\mathcal{L}^k$, then there is a sentence $\phi$ which the structures disagree on with 
 \[ 
  \quant(\phi) \le \frac{n^{n^{k-1}}-1}{n-1}.
 \] 
\end{corollary}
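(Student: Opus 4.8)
The plan is to obtain the corollary as an immediate consequence of Lemma~\ref{lem: ub} together with a classical quantifier-depth upper bound for $\klogic$. The key ingredient I would invoke is Immerman's bound~\cite{immerman1990describing}: if two $n$-element structures can be distinguished in $\klogic$ at all, then they can already be distinguished by a $\klogic$-sentence of quantifier rank at most $n^{k-1}$. Morally this holds because the $k$-pebble game on two $n$-element structures has only boundedly many positions---a position is determined by the placement of (at most) $k$ pebbles on $n$ elements of each structure---so if Spoiler has any winning strategy, Spoiler has one that never passes through two partially-isomorphic positions, and a careful count of the reachable positions yields the $n^{k-1}$ bound via Theorem~\ref{thm: pebble}.

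Given this, the corollary follows by substitution. If $\strucA$ and $\strucB$ are $n$-element structures distinguishable in $\klogic$, the cited bound gives a separating sentence of quantifier rank $r \le n^{k-1}$. Applying Lemma~\ref{lem: ub} with this value of $r$, and using that the bound $\tfrac{n^{r}-1}{n-1}$ there is monotone increasing in $r$ for $n \ge 2$, we obtain a sentence $\phi$ that $\strucA$ and $\strucB$ disagree on with $\quant(\phi) \le \tfrac{n^{n^{k-1}}-1}{n-1}$, which is exactly the claimed bound.

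There is essentially no obstacle to overcome here; the entire content lies in the two ingredients being composed, and the only things to be careful about are bookkeeping points. First, one must feed Lemma~\ref{lem: ub} the quantifier \emph{rank} of the depth-optimal separating sentence (not its quantifier number), which is precisely what Immerman's bound controls. Second, the cited depth bound is for $\klogic$ without counting quantifiers, so that Theorem~\ref{thm: pebble} --- and hence the strategy-transfer argument inside the proof of Lemma~\ref{lem: ub} --- applies unmodified. I would also remark in passing that the tower-like shape of this bound is simply an artefact of the naive strategy used in Lemma~\ref{lem: ub} (split every class into singletons, then replay the pebble-game strategy node by node); Theorem~\ref{thm: genlb} shows that an exponential lower bound on $\QN$ in $\klogic$ is unavoidable in general, so a gap of this flavour between $\QD$ and $\QN$ is genuine, even though the precise bound in Corollary~\ref{cor: ub} is surely far from tight.
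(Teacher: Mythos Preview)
Your proposal is correct and follows exactly the paper's approach: the paper obtains the corollary as an immediate consequence of Lemma~\ref{lem: ub} combined with Immerman's $n^{k-1}$ upper bound on quantifier depth in $\klogic$ from \cite{immerman1990describing}. Your additional remarks on monotonicity, the relevant hypotheses of Lemma~\ref{lem: ub}, and the gap between this bound and the lower bound of Theorem~\ref{thm: genlb} are accurate supplementary commentary.
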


\confORfull{}{Since the $k$-QVT game concerns sets of structures we now extend Lemma~\ref{lem: ub} to this setting.

\begin{lemma} \label{lem: ubset}
 Let $\sigma$ be a signature and let $\classA, \classB$ be two sets of compatible $k$-pebbled structures over $\sigma$ which disagree on a formula of $\mathcal{L}^k$. Moreover, suppose  each $\strucX \in \classA \cup \classB$ has at most $n$ elements. Then there is a formula $\phi$ which $\classA$ and $\classB$ disagree on with 
 \[ 
  \quant(\phi) = n^{O(n^{k-1})}.
 \] 
\end{lemma}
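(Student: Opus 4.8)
The plan is to reduce to the single-structure bound already established in Lemma~\ref{lem: ub}. The point is that split moves do not contribute to the round count of the $k$-QVT game (a split does not end a round), so they are exactly the device needed to pull the several structures on each side apart and then re-run, on each resulting pair, the strategy behind Lemma~\ref{lem: ub}.

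First I would pass to finite sets. Since no $\klogic$-formula distinguishes isomorphic pebbled structures, whether two sets of pebbled structures agree on a given formula depends only on the isomorphism types occurring on each side; hence $\classA$ may be replaced by a subset containing one representative per isomorphism type occurring in $\classA$, and $\classB$ likewise, and the reduced pair disagrees on a $\klogic$-formula iff the original one does, with any separating formula for the reduced pair also separating the original pair. Writing $N$ for the number of isomorphism types of $k$-pebbled $\sigma$-structures with at most $n$ elements, a crude count gives $N \le n^k \cdot 2^{O(n^{a})}$ with $a$ the largest arity in $\sigma$; for a fixed signature this is $n^{O(n^{k-1})}$. So from now on $|\classA|, |\classB| \le N$, and, exchanging the two sides and negating if necessary, I may fix a $\klogic$-formula $\phi_0$ with $\classA \models \phi_0$ and $\classB \models \neg\phi_0$ (that such a formula exists whenever the two sides are separable is implicit in the formulas built in the proof of Theorem~\ref{thm:eqiv}).

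Now I would describe a Spoiler strategy from $[\classA, \classB]$. First split the left-hand side into the singletons $\{\lstrucAa\}$, $\lstrucAa \in \classA$, creating at most $N$ children; at each child split the right-hand side into the singletons $\{\lstrucBb\}$, $\lstrucBb \in \classB$, creating at most $N$ grandchildren. Every leaf then carries a position $[\{\lstrucAa\}, \{\lstrucBb\}]$ with $\lstrucAa \models \phi_0$ and $\lstrucBb \models \neg\phi_0$, so $\lstrucAa$ and $\lstrucBb$ are distinguished in $\klogic$; since both have at most $n$ elements, the proof of Corollary~\ref{cor: ub} — which goes through for $k$-pebbled structures, using Theorem~\ref{thm: pebble} and the $n^{k-1}$ quantifier-depth bound of \cite{immerman1990describing} — shows Spoiler wins the subgame at that leaf in at most $\frac{n^{n^{k-1}}-1}{n-1}$ rounds. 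Since the pebble moves across the whole tree sum to the round count, Spoiler wins the full game in at most $N^2 \cdot \frac{n^{n^{k-1}}-1}{n-1} = n^{O(n^{k-1})}$ rounds, and Theorem~\ref{thm:eqiv} converts this win into a formula $\phi$ of the required quantifier number.

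The step I expect to need the most care is the passage to the leaf pairs: the naive pairwise notion of two \emph{sets} disagreeing on a formula is too weak for this, so one has to work with the ``clean'' separator $\phi_0$ above and verify that reducing to isomorphism-type representatives is compatible with it. The only quantitative point is the bound on $N$: the stated $n^{O(n^{k-1})}$ form of the final bound absorbs the $N^2$ overhead, which is legitimate for fixed signatures (more precisely once the arity $a$ is at most $k-1$; for larger arities one still gets the explicit bound ``$N^2$ times the bound of Corollary~\ref{cor: ub}''). One might hope to avoid the $N^2$ overhead by playing, at a node $[\{\lstrucAa\}, \classB]$, a single pebble-game strategy against all of $\classB$ at once rather than first splitting $\classB$ into singletons, but then Spoiler's best pebble move against one structure of $\classB$ need not be best against another — exactly the situation that split moves, and hence the detour through singleton pairs, are there to resolve.
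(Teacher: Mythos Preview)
Your proposal is correct and follows essentially the same approach as the paper: split both sides into singletons to obtain at most $N^2$ leaf positions of the form $[\{\lstrucAa\},\{\lstrucBb\}]$, apply Corollary~\ref{cor: ub} at each leaf, and sum. The paper bounds $|\classA\times\classB|$ by normalising domains to initial segments rather than passing to isomorphism-type representatives, but the effect is the same; you are in fact a little more careful than the paper on two points (ensuring each leaf pair is genuinely $\klogic$-separable via a clean separator $\phi_0$, and flagging the arity caveat in the count of $N$).
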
 }
\confORfull{}{
\begin{proof}
We may assume w.l.o.g.\! that each relation in $\sigma$ has arity at most $k$ and that any structure $\strucX \in \strucA \cup \strucB$ has domain $[|X|]$. Then for a structure with domain size $t$ there are at most $2^{t^k}$ 
possibilities for the interpretation of each relation in $\sigma$ and $
(t+1)^k$ possibilities for the positions of the pebbles. Therefore, by a simple calculation, there are $2^{O(n^{k})}$ elements of $\classA \times \classB$. Note here that the constant hidden in the $O$ notation depends on $\sigma$.

Now we give a Spoiler winning strategy for the $k$-QVT game starting from position $[\classA, \classB]$. First Spoiler performs a split where they create one node labelled by $[\classA, \{\strucB\}]$ for each element of $\strucB \in \classB$. They then perform a further split of each of these nodes such that the game tree has exactly $|\classA \times \classB|$ leaves, one  labelled by every $[\{\strucA\}, \{\strucB\}]$ with $(\strucA, \strucB) \in \classA \times \classB$. 

At each of these leaves, we apply Corollary~\ref{cor: ub}. This leads to Spoiler winning in:
\[
|\classA \times \classB| \cdot \frac{n^{n^k-1}-1}{n-1} 
\]
rounds so from the bound on $|\classA \times \classB|$ the result follows.
\end{proof}}

\subsection{Formula Size} \label{ss: formula}
Our game also provides bounds on the size of the formulas separating two  sets of $k$-pebbled structures $\classA$ and $\classB$\confORfull{, see Appendix~\ref{a: game} for details}{}. It is well known that there is a tight connection between formula size and $\QN$; for instance, this is (at least) implicit in the work of Immerman in the early 1980s \cite{immerman1980, immerman1981number, immerman1982upper}.  

\confORfull{}{Suppose $\classA$, $\classB$ disagree on a sentence of $\klogic$. Consider the formula $\phi_{\lstrucAa}$ introduced in the proof of Theorem~\ref{thm:eqiv}, $\lstrucAa \in \classA \cup \classB$. Each arity $m>0$ relation $R$ contributes at most $k^m$ conjuncts and 
there are at most $k^2$ equalities coming from elements labelled by $\assA$.  Since for each atomic formula either it or its negation is a conjunct in $\phi_{\lstrucAa}$, we get that the size of $\phi_{\lstrucAa}$ is $O(k^k)$ and that there are $2^{O(k^k)}$ distinct possibilities for $\phi_{\lstrucAa}$. Therefore, $\phi_{\classX}$ has size $2^{O(k^k)}$ and this observation combined with the proof of Theorem~\ref{thm:eqiv} yields the following.}

\begin{lemma} \label{lem: sizesep}
Let $\sigma$ be a signature and let $\classA, \classB$ be sets of  $k$-labelled $\sigma$-structures, which disagree on a sentence with $r$ quantifiers. Then there exists some $\phi \in \klogic$ which $\classA$ and $\classB$ disagree on with $|\phi|= 2^{O(k^k)} \cdot r$.
\end{lemma}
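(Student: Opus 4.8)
The plan is to prove Lemma~\ref{lem: sizesep} by inspecting the construction in the `only if' direction of Theorem~\ref{thm:eqiv} and bounding the size overhead contributed at each node of the game tree. Recall that the proof of Theorem~\ref{thm:eqiv} builds, by induction on the number of rounds $r$, a formula $\phi$ which $\classA$ and $\classB$ disagree on, and we merely need to track the \emph{total number of symbols} rather than just the number of quantifiers.

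First I would establish the key local bound: for any class $\classX$ of compatible $k$-pebbled $\sigma$-structures, the formula $\phi_{\classX}$ from that proof has $|\phi_{\classX}| = 2^{O(k^k)}$, where the constant in the $O$ depends only on $\sigma$. The reason is exactly the counting already sketched in the excerpt: a single type formula $\phi_{\lstrucXa}$ is a conjunction with at most $\sum_{R \in \sigma} k^{\mathrm{ar}(R)} + k^2 = O(k^k)$ literals (one literal, positive or negative, per atomic $\sigma$-formula in the variables $x_1,\dots,x_k$ together with the equalities among pebbled elements), so $|\phi_{\lstrucXa}| = O(k^k)$; and there are only $2^{O(k^k)}$ distinct such type formulas, so the disjunction $\phi_{\classX} = \bigvee_{\psi \in T_{\classX}} \psi$ has size $2^{O(k^k)}$ as well. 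Here I am treating $\sigma$ as fixed, so the arities are bounded and $\sum_{R} k^{\mathrm{ar}(R)}$ is indeed $O(k^k)$.

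Next I would run the induction from the proof of Theorem~\ref{thm:eqiv} again, but carrying the size bound. Let $S(r)$ denote the maximal size of the formula produced for an $r$-round Spoiler win. In the base case ($r=0$) we output $\phi_{\classA}$, which has size $2^{O(k^k)} = 2^{O(k^k)} \cdot 1$, matching the claim. In the split step, the produced formula is $\neg \phi_{\classB} \vee (\phi_{\classA} \wedge \bigwedge_{i=1}^{\ell} \phi_i)$, whose size is at most $2^{O(k^k)} + \sum_{i=1}^{\ell} |\phi_i| + O(\ell)$; since the rounds split additively, $\sum_i r_i = r$ with all $r_i \ge 1$, hence $\ell \le r$, and by induction $\sum_i |\phi_i| \le \sum_i 2^{O(k^k)} r_i = 2^{O(k^k)} r$, giving $S(r) \le 2^{O(k^k)} \cdot r$ after absorbing the additive $2^{O(k^k)} + O(\ell)$ terms. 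In the pebble step the produced formula is $\neg \phi_{\classB} \vee (\phi_{\classA} \wedge \exists x_i(\neg \phi_{\classB'} \vee (\phi_{\hat{\classA}} \wedge \theta)))$, which adds $O(1)$ symbols plus three type-formula blocks of size $2^{O(k^k)}$ on top of $|\theta| \le 2^{O(k^k)}(r-1)$; the quantifier count goes from $r-1$ to $r$, so again $S(r) \le 2^{O(k^k)}(r-1) + 2^{O(k^k)} \le 2^{O(k^k)} \cdot r$. The negation step (Spoiler first plays on $\classB$) costs only one extra symbol and does not change $r$, so it is harmless.

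The main obstacle — really the only subtle point — is bookkeeping the constants so that the single multiplicative factor $2^{O(k^k)} \cdot r$ genuinely dominates all the additive $2^{O(k^k)}$ and $O(\ell)$ contributions uniformly across the recursion; this is a routine amortisation argument (choose the constant hidden in $2^{O(k^k)}$ large enough that one ``unit'' of the budget per quantifier pays for the type formulas and the connectives introduced at each node), but it must be stated carefully because $\ell$ is unbounded a priori and is only controlled via $\ell \le r$. Once that is in place, Lemma~\ref{lem: sizesep} follows directly by applying the strengthened induction to an $r$-round Spoiler win, which exists by Theorem~\ref{thm:eqiv} whenever $\classA$ and $\classB$ disagree on a sentence with $r$ quantifiers.
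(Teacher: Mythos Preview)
Your proposal is correct and follows exactly the approach the paper takes: the paper's own proof is just the sentence ``this observation combined with the proof of Theorem~\ref{thm:eqiv} yields the following,'' where the observation is precisely your local bound $|\phi_{\classX}| = 2^{O(k^k)}$. You have simply spelled out the induction that the paper leaves implicit. One small remark: the naive inductive hypothesis $S(r) \le D \cdot r$ does not close at split nodes (the additive $2^{O(k^k)}$ has nowhere to go since a split consumes no round), so your ``routine amortisation'' really does need the strengthened invariant $S(r) \le D r - E$ for suitable $E = 2^{O(k^k)}$, using $\ell \ge 2$ and $k_i \ge 1$ to gain $E(\ell-1)$ at each split; equivalently one can count globally that the recursion tree has $r$ pebble nodes, at most $r$ leaves, and at most $r-1$ split nodes.
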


The moral of the story is that if we wish to study formula size in $\klogic$ it suffices to study $\QN$. \confORfull{}{This is good news because the $k$-QVT game is simpler than the formula size game of \cite{adler2003n}.}

\section{Variations of the Game} \label{ss: exgame}

In this section, we introduce new games which will be used to obtain our results in Sections~\ref{s: lb} and \ref{s: exlb}. Firstly, we define a simplification of the $k$-QVT game which still allows us to prove lower bounds on the number of quantifiers; we call this the $k$-LB game. Secondly, we provide a game corresponding to $\QN$ in $\exklogic$. Finally, we give an analogue to the $k$-LB game for $\exklogic$.

\subsection{Lower Bound Game} \label{ss: lbgame}
\confORfull{}{We begin by motivating the $k$-LB game. In this game Duplicator collects points and the number of points they collect gives a lower bound on the number of rounds they can survive in 
the $k$-QVT game, hence the name.}

To see the idea behind the\confORfull{ $k$-LB}{} game imagine the following situation in the $k$-QVT game. Spoiler first performs a split and creates two new nodes, $t_1$ and $t_2$. Duplicator sees they can survive roughly the same number of rounds from position $\chi(t_1)$ as from position $\chi(t_2)$. So\confORfull{}{, to save themselves some time,} they offer Spoiler a deal. They will let Spoiler choose whether they want to continue from position $\chi(t_1)$ or $\chi(t_2)$ \confORfull{}{and then the two players will play out the game from this position}. In exchange, each time Spoiler makes a pebble move the round counter will increase by two. This is a fair deal for Spoiler. If they can win in $r_1$ rounds from position $\chi(t_1)$ and in $r_2$ rounds from position $\chi(t_2)$, they can win this modified game in $2\min(r_1 +r_2) \le r_1 +r_2$ rounds. 

Formally, the game is
played on two evolving sets of $k$-pebbled structures $\classA$ 
and $\classB$. Before the first round and at the end of every round 
we delete all pebbled structures which are not partially isomorphic to any structure on the other side, as in the $k$-QVT game. Another 
similarity is that at the beginning of each round Spoiler may choose 
to either split or pebble. Pebbling works in exactly 
the same way as before. The difference occurs when Spoiler splits $\classX \in \{\classA, \classB\}$ into $(\classX_i)_{i \in [\ell]}$. Then Duplicator 
chooses $L \subseteq [\ell]$ and Spoiler some $i \in L$. Call $|L|$ the \emph{degree} of the split. Play then continues by renaming $ \strucX_i$ to $\strucX$ 
and\confORfull{}{, as in the $k$-QVT game,} Spoiler can again choose whether to split or pebble. For technical reasons, we stipulate that if they choose to split for a second time they cannot split $\classX$.

The other difference is that there is a scoring system. We define $\points(r)$ to be the product of the degrees of every split that occurred in round $r$ or earlier, using the convention that an empty product evaluates to one. Then after round $r$, we add $\points(r)$  to Duplicator's total score\confORfull{.}{, i.e., if Duplicator survives for $r$ rounds then their score is $\sum_{s=1}^r \points(s)$.} 

\begin{lemma} \label{lem: lbgame}
Let $\classA, \classB$ be sets of $k$-pebbled structures. If there is a $\phi \in \klogic$ with $\quant(\phi) \le r$ which $\classA$ and $\classB$ disagree on, then Spoiler can limit Duplicator to $r$ points in the $k$-LB game from position $[\classA, \classB]$.
\end{lemma}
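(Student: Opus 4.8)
The plan is to prove the contrapositive-flavoured statement directly: assuming Spoiler has an $r$-quantifier winning strategy in the $k$-QVT game from $[\classA,\classB]$ (which exists by Theorem~\ref{thm:eqiv} since $\classA,\classB$ disagree on $\phi$ with $\quant(\phi)\le r$), I will describe how Spoiler plays the $k$-LB game so as to keep Duplicator's final score at most $r$. The central observation, already foreshadowed in the informal description preceding the lemma, is that the $k$-LB game is a ``weakened'' version of the $k$-QVT game: a split of degree $d$ together with Duplicator's choice of $L$ and Spoiler's choice of $i\in L$ corresponds, in the QVT game, to Spoiler splitting and then being forced to continue in one of the $d$ sub-branches that Duplicator likes least. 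So Spoiler's QVT strategy tree, when restricted to any single root-to-leaf branch, gives a recipe for playing the LB game, and the scoring $\points(r)=\prod(\text{degrees of splits so far})$ is exactly designed to charge Duplicator, across all rounds, a total that telescopes to the number of quantifiers in $\phi$.

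The key steps, in order, are as follows. First I would set up an induction on $r$ (equivalently, on the structure of Spoiler's QVT winning strategy / on $\phi$), with the inductive claim phrased so that it tracks not just the number of rounds but the accumulated ``points multiplier'': concretely, I would prove that if Spoiler wins the $k$-QVT game from $[\classA,\classB]$ using a strategy whose game-tree has a certain quantifier count $r$, then in the $k$-LB game from $[\classA,\classB]$, \emph{with the convention that every future pebble round contributes $m$ points for a fixed multiplier $m\ge 1$ already in force}, Spoiler can hold Duplicator to at most $m\cdot r$ additional points. Second, for the pebble case: if Spoiler's first QVT move is a pebble move on pebble $i$, Spoiler makes the identical pebble move in the LB game; the resulting positions $[\hat{\classA},\hat{\classB}]$ are identical in the two games (deletion rules agree), Duplicator earns $m$ points this round, and the remaining QVT strategy has quantifier count $r-1$, so induction finishes it. Third, for the split case: if Spoiler's first QVT move splits $\classX$ into $\classX_1,\dots,\classX_\ell$ with the sub-branch from $[\classA,\classB_j]$ (say) having quantifier count $r_j$ and $\sum_j r_j = r$, Spoiler performs the \emph{same} split in the LB game; Duplicator picks $L\subseteq[\ell]$ of size $d=|L|$, the points multiplier becomes $m\cdot d$, and Spoiler then picks $i\in L$ minimising $r_i$; since $\min_{i\in L} r_i \le \frac{1}{d}\sum_{i\in L} r_i \le \frac{1}{d}\sum_{i=1}^\ell r_i = \frac{r}{d}$, the sub-game Spoiler enters has quantifier count at most $r/d$, so by induction Duplicator earns at most $(m\cdot d)\cdot (r/d) = m\cdot r$ further points, as required. (One must also handle the technical stipulation that a second consecutive split cannot re-split $\classX$; since Spoiler's QVT strategy, after a split, moves to the branch node and the next genuine action there is either a pebble move or a split of possibly the \emph{other} side, and in the worst case one can always postpone by noting a sequence of QVT splits can be reorganised into a single ``product'' split, this causes no loss — I would remark on this rather than belabour it.)

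I expect the main obstacle to be the bookkeeping around nested splits and the asymmetry of who chooses what: in the QVT game Spoiler unilaterally splits and then is free to pick which child to pursue, whereas in the LB game Duplicator first prunes to $L$ and only then does Spoiler choose within $L$. Making the averaging argument $\min_{i\in L} r_i \le r/|L|$ rigorous requires that $\sum_{i=1}^\ell r_i = r$ exactly (not just $\le r$), which in turn uses that in an optimal QVT strategy after a split every child sub-game is ``non-trivial'' — precisely the point made in the proof of Theorem~\ref{thm:eqiv} that if some child were winnable in $0$ rounds its structures should have been deleted already. So I would lean on that part of Theorem~\ref{thm:eqiv} to ensure the quantifier counts of the children sum exactly to $r$. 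The remaining subtlety, the ``cannot immediately re-split $\classX$'' restriction, is genuinely minor: either Spoiler's QVT strategy never needs to, or a short normalisation argument collapses consecutive same-side splits into one, and I would dispatch it with a sentence rather than a sub-lemma. Everything else — that deletions behave identically, that pebble moves transfer verbatim, that the empty-product convention matches the ``no splits yet'' base multiplier $m=1$ — is routine and I would state it as such.
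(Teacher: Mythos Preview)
Your proposal is correct and follows essentially the same route as the paper: reduce via Theorem~\ref{thm:eqiv}, induct on $r$, handle the pebble case by mirroring the move, and for the split case perform the same split, let Duplicator choose $L$, pick $i\in L$ minimising the sub-game cost, and finish with the averaging inequality $\min_{i\in L} r_i \le \tfrac{1}{|L|}\sum_i r_i \le r/|L|$. The paper normalises consecutive same-side splits and invokes the deletion argument from Theorem~\ref{thm:eqiv} exactly as you anticipate; the only cosmetic difference is that the paper applies the induction hypothesis to the sub-game with multiplier $1$ and scales afterwards, whereas you carry the multiplier $m$ explicitly through the hypothesis---these are equivalent formulations.
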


\begin{proof}
By Theorem~\ref{thm:eqiv} it is enough to show that if Spoiler can win the $k$-QVT game from position $[\classA, \classB]$ in $r$ rounds then they can limit Duplicator to at most $r$ points in the $k$-LB game. We proceed by induction on $r$. In the base case, Spoiler wins the $k$-QVT game in zero rounds, so---since the winning conditions are identical in this case---Spoiler also wins the $k$-LB game immediately and Duplicator gets zero points.

Now suppose that for any two compatible sets of pebbled structures, $\classX, \classY$, whenever Spoiler wins the $k$-QVT game in $s< r$ rounds from position $[\classX, \classY]$, they can limit Duplicator to at most $s$ points in the $k$-LB game. Further suppose that Spoiler wins in $r$ rounds from position $[\classA, \classB]$ in the $k$-QVT game. We claim that they can limit Duplicator to at most $r$ points in the $k$-LB game on $[\classA, \classB]$. 

First, suppose that Spoiler's $r$-round winning strategy in the $k$-QVT game on $[\classA, \classB]$ begins by pebbling. Then Spoiler can perform an identical move in the $k$-LB game; Duplicator receives one point for this round. Therefore, the position at the beginning of round two is identical in both games. Moreover, Spoiler can win in $r-1$ rounds from this position in the $k$-QVT game. So Spoiler can limit Duplicator to $r-1$ points after the first round in the $k$-LB game and therefore to $r$ points overall.

Otherwise, Spoiler begins the $k$-QVT game by splitting $\classB$ into $(\classB_i)_{i \in [\ell]}$. If at any of the new nodes Spoiler's next move is to perform 
a further split of some $B_j$ into $(B_{j_i})_{i \in [t]}$, then Spoiler 
could have originally performed a split of $\classB$ into 
\[B_1, \dots, B_{j-1}, B_{j_1}, \dots, B_{j_t},  B_{j+1}, \dots, B_{\ell}\] 
with the same overall 
effect. Therefore, we may assume w.l.o.g.\! that this phenomenon does not occur. 
Further, as in the proof of Theorem~\ref{thm:eqiv}, we may assume that Spoiler 
can win from position $[\classA, \classB_i]$ in $k_i<r$ rounds for every $i$. 
Then in the $k$-LB game Spoiler can begin by performing the same split; 
Duplicator responds by choosing some $L \subseteq [\ell]$. Let $k_j:= \min_{i 
\in L}\{k_i\}.$ Then Spoiler chooses to continue from position $[\classA, 
\classB_j]$.

By the induction hypothesis in the $k$-LB game starting from position 
$[\classA, \classB_j]$ Spoiler can limit Duplicator to at most $k_j$ 
points. Therefore, Spoiler can limit Duplicator to at most $|L| \cdot 
k_j $ points in the $k$-LB game starting from position $[\classA, 
\classB].$ On the other hand, Spoiler needs $\sum_{i=1}^{\ell} k_i $ 
rounds to win the $k$-QVT game starting from position $[\classA, 
\classB]$. By our choice of $j$, 
\[r = \sum_{i=1}^{\ell} k_i  \ge \sum_{i \in L} k_i \ge |L| 
\cdot k_j, \]
which completes the proof.
\end{proof}

\subsection{The Existential QVT Game}

The $\exists^{+}k$-QVT game is the same as the $k$-QVT game except for the following changes. \confORfull{Firstly, Spoiler is only allowed to pebble the $\LHS$. Secondly, before the game starts and at the end of every round, we only delete structures from the $\RHS$. We consequently say $t \in V(\mathcal{T}_r)$ is closed if it is not a leaf or if $\chi^r(t)$ is of the form $[\classA, \emptyset]$. Finally, the role of partial isomorphism is replaced by partial homomorphism. This means before the game starts and at the end of 
a round we delete $\lstrucBb \in \RHS$ if for every $\lstrucAa \in \LHS$, the canonical partial map from $\lstrucAa$ to $\lstrucBb$ is not a partial 
homomorphism. The following is proved similarly to Theorem~\ref{thm:eqiv}.}
{\begin{enumerate}
\item Spoiler is only allowed to pebble the $\LHS$\confORfull{.}{, i.e., if in round $r$, $\chi^r(c^r) = [\classA, \classB]$, then Spoiler must play on $\classA$. Note that Spoiler is still allowed to perform a split on either side.}
\item Before the game starts and at the end of every round, we only delete structures from the $\RHS$. We say $t \in V(\mathcal{T}_r)$ is closed if it is not a leaf or if $\chi^r(t)$ is of the form $[\classA, \emptyset]$.
\item We replace the role of partial isomorphism with partial homomorphism.
\end{enumerate}

\confORfull{P}{To be more explicit, p}utting (2) and (3) together, before the game starts and at the end of 
a round we delete $\lstrucBb \in \RHS$ if for every $\lstrucAa \in \LHS$, the canonical partial map between the two pebbled structures is not a partial 
homomorphism. \confORfull{}{Note that, while splitting is allowed on the $\LHS$, in all the examples we will look at the $\LHS$ consists 
initially of only a single labelled structure. Since Duplicator is never allowed to pebble on the left, at every position in the game there is exactly one labelled structure on the $\LHS$ and so Spoiler never splits the $\LHS$.} The following \confORfull{is proved similarly to Theorem~\ref{thm:eqiv}.}{is proved in Appendix~\ref{a: exequiv} in a similar way to Theorem~\ref{thm:eqiv}.}}

\begin{theorem} \label{thm: exeqiv}
Let $\classA, \classB$ be sets of $k$-pebbled structures. Then Spoiler can win the $\pexists k$-QVT game from position $[\classA, \classB]$ in $r$ rounds if and only if there exists a $\exklogic$-formula $\phi$ that $\classA$ and $\classB$ disagree on with $\quant(\phi) = r$.
\end{theorem}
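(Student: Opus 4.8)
The plan is to mirror the proof of Theorem~\ref{thm:eqiv}, adapting each step to the asymmetries of the existential-positive setting: only the $\RHS$ is pruned, only partial homomorphism (not isomorphism) matters, Spoiler pebbles only on the $\LHS$, and the logic has no $\neg$ and no $\forall$.

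For the direction ``separating formula $\Rightarrow$ Spoiler win'', I would induct on the structure of a formula $\phi \in \exklogic$ built from atoms using only $\wedge$, $\vee$, $\exists$, with $\classA \models \phi$ and $\classB \not\models \phi$ (note that for $\exklogic$ we genuinely need $\vee$ as a primitive connective, unlike the full-logic case where $\vee$ was simulated via $\neg$ and $\wedge$). The base case: if $\phi$ is atomic and $\classA \models \phi$, $\classB \not\models \phi$, then every $\lstrucBb \in \classB$ fails an atomic fact witnessed in every $\lstrucAa \in \classA$, so the canonical partial map is not a partial homomorphism and $\classB$ is emptied by pruning, a $0$-round win. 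For $\phi \equiv \exists x_i\,\theta$: Spoiler moves pebble $i$ on each $\lstrucAa$ to a witness of $\theta$; Duplicator makes copies and moves $i$ on the $\RHS$; after pruning, the surviving $\RHS$ structures $\hat{\classB}$ satisfy: each $\lstrucBb' \in \hat{\classB}$ still fails $\theta$ (since the original $\lstrucBb \models \neg\exists x_i\,\theta$), so $\hat{\classA}=\classA^{\text{pebbled}}$ and $\hat{\classB}$ disagree on $\theta$ and we apply induction. For $\phi \equiv \theta_1 \wedge \theta_2$: split $\classB$ into $\classB_1 = \{\lstrucBb \mid \lstrucBb \not\models \theta_1\}$ and $\classB_2 = \classB \setminus \classB_1$, which on the new nodes disagree with $\classA$ on $\theta_1$ and $\theta_2$ respectively. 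The new case is $\phi \equiv \theta_1 \vee \theta_2$: here Spoiler splits the $\LHS$ $\classA$ into $\classA_1 = \{\lstrucAa \mid \lstrucAa \models \theta_1\}$ and $\classA_2 = \classA \setminus \classA_1$ (so each $\lstrucAa \in \classA_2 \models \theta_2$); since $\classB \models \neg\theta_1 \wedge \neg\theta_2$, the pairs $(\classA_1,\classB)$ and $(\classA_2,\classB)$ disagree on $\theta_1$, $\theta_2$ respectively, and induction applies with quantifier counts adding up.

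For the converse, ``Spoiler win in $r$ rounds $\Rightarrow$ separating formula with $r$ quantifiers'', I would induct on $r$, again following Theorem~\ref{thm:eqiv} but replacing the type formula $\phi_{\lstrucXa}$ by its positive analogue: let $\phi^+_{\lstrucXa}$ be the conjunction of all atomic formulas satisfied by $\lstrucXa$ (no negated atoms), and set $\phi^+_{\classX} \equiv \bigvee_{\lstrucXa \in \classX} \phi^+_{\lstrucXa}$. The key property is one-directional and exactly suited to homomorphisms: $\lstrucAa \models \phi^+_{\classA}$ for every $\lstrucAa \in \classA$, and if the canonical partial map from some $\lstrucAa \in \classA$ to $\lstrucBb$ is \emph{not} a partial homomorphism, then $\lstrucBb \not\models \phi^+_{\lstrucAa}$ --- hence a $\RHS$ structure deleted by pruning fails $\phi^+_{\classA}$. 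The base case ($r=0$) takes $\phi = \phi^+_{\classA}$. For the inductive step, if Spoiler's first move is a pebble move on $\classA$ with pebble $i$, leading to a position $[\hat{\classA},\hat{\classB}]$ from which Spoiler wins in $r-1$ rounds with separating $\theta$, I would build $\phi \equiv \phi^+_{\classA} \wedge \exists x_i\,(\phi^+_{\hat{\classA}} \wedge \theta)$ --- but one must check carefully what happens to $\RHS$ structures that are deleted mid-round; these fail $\phi^+_{\hat{\classA}}$ by the homomorphism property applied to the pre-deletion $\RHS$ class $\classB'$, so a cleaner formula is $\phi \equiv \phi^+_{\classA} \wedge \exists x_i\,(\phi^+_{\classB'} \to \dots)$; since we lack negation, instead I would argue directly that $\lstrucAa \models \phi^+_{\classA} \wedge \exists x_i\,(\phi^+_{\hat{\classA}} \wedge \theta)$ using the witness $a$ that Spoiler moved to, and that no surviving $\lstrucBb$ can satisfy this, because for every $b$, either $\lstruc{\strucB}{\assB(i\to b)}$ is pruned (and then it is not partially homomorphic-to-target from the corresponding $\lstrucAa$-image, but we need it to fail $\phi^+_{\hat{\classA}}$ --- here is where I need to be careful) or it lies in $\hat{\classB}$ and fails $\theta$. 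If Spoiler's first move is a split of $\classA$ into $(\classA_i)$, take $\phi \equiv \bigvee_i \phi_i$; if it is a split of $\classB$ into $(\classB_i)$, take $\phi \equiv \phi^+_{\classA} \wedge \bigwedge_i \phi_i$. In each case $\quant(\phi)$ adds up to $r$.

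The main obstacle is the converse direction for a pebble move, specifically handling $\RHS$ structures that are deleted \emph{during} the round (not at the start): in the full-logic proof one writes $\neg\phi_{\classB'} \vee (\dots)$ to absorb them, but in $\exklogic$ negation is forbidden, so this trick is unavailable. The resolution is that in the existential-positive game $\LHS$ pruning never happens, so $\classA = \hat{\classA}$ (the $\LHS$ is unchanged by a pebble move except for the new pebble placement), and the positive type formula $\phi^+_{\hat{\classA}}$ already excludes every $\RHS$ structure not partially homomorphic-to from the pebbled $\LHS$ --- which is exactly the deletion criterion. So the clause $\phi^+_{\hat{\classA}} \wedge \theta$ does double duty: it rules out the mid-round-deleted $\RHS$ structures (via $\phi^+_{\hat{\classA}}$) and the surviving ones (via $\theta$, by induction), and no separate ``$\neg\phi_{\classB'}$'' disjunct is needed. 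Verifying this carefully --- that the homomorphism-based pruning condition is precisely captured by failure of a conjunction of positive atoms --- is the crux, and it is exactly the point where the logic being homomorphism-closed (hence preservation under the canonical partial map going only one way) makes everything fit together. I would also remark that, unlike Theorem~\ref{thm:eqiv}, there is no symmetry to exploit, so both the split-on-$\LHS$ and split-on-$\RHS$ cases must be treated explicitly, but neither is hard once the positive type machinery is set up.
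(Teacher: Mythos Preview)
Your proposal is correct and follows essentially the same approach as the paper's proof: both directions proceed by the same inductions as in Theorem~\ref{thm:eqiv}, with the positive type formula $\phi^+_{\lstrucXa}$ (conjunction of satisfied atoms, no negations) replacing the full type, and with the $\vee$ case handled by splitting the $\LHS$. Your identification and resolution of the key obstacle---that $\phi^+_{\hat{\classA}}$ exactly captures the homomorphism-based deletion criterion, so no ``$\neg\phi_{\classB'}$'' escape hatch is needed---matches the paper's argument precisely; the only cosmetic difference is that the paper keeps an extra (harmless) outer conjunct $\phi^+_{\classA}$ in the split-of-$\classA$ case where you write simply $\bigvee_i \phi_i$.
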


By the same argument as Lemma~\ref{lem: ub}, we can translate quantifier depth lower bounds in $\exklogic$ to quantifier number lower bounds in $\exklogic$. Moreover, in this case the correct upper bound is known: Berkholz showed that there are two $n$-element structures distinguishable in $\exklogic$, but that agree on every sentence with  $o(n^{2k-2})$  quantifier depth \cite[Theorem 1]{DBLP:conf/cp/Berkholz14}. This matches the $n^{2k-2}$ upper bound up to a constant factor. We obtain the following analogue\confORfull{ to Lemma~\ref{lem: ub}}{s to Corollary~\ref{cor: ub} and Lemma~\ref{lem: ub}}.

\begin{lemma} \label{lem: exub}
If two $n$-element structures can be distinguished by 
 $\exklogic$, then there is a sentence $\phi \in \exklogic$ which the structures disagree on with 
 \[ 
  \quant(\phi) \le \frac{n^{n^{2k-2}-1}}{n-1}.
 \] 
\end{lemma}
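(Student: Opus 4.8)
The plan is to mimic the proof of Lemma~\ref{lem: ub} verbatim, replacing the $k$-pebble game with the $\pexists k$-pebble game and Theorem~\ref{thm: pebble} with its existential-positive analogue, and the $k$-QVT game with the $\pexists k$-QVT game (using Theorem~\ref{thm: exeqiv}). Concretely, suppose $\strucA$ and $\strucB$ are $n$-element structures distinguished in $\exklogic$. Then there is a sentence of $\exklogic$ separating them, so by the existential-positive version of Theorem~\ref{thm: pebble} (mentioned in Section~\ref{s: prelims}: Spoiler plays only on the left and wins whenever the left structure fails to map homomorphically to the right), Spoiler has a winning strategy in the $\pexists k$-pebble game from $[\strucA, \strucB]$; by Berkholz's $n^{2k-2}$ upper bound \cite{immerman1990describing}-style argument (here the cited bound in the paper is $n^{2k-2}$, attributed via the surrounding text), Spoiler wins within $n^{2k-2}$ rounds. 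We then lift this to the $\pexists k$-QVT game exactly as in Lemma~\ref{lem: ub}.

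First I would describe the simulation. In the $\pexists k$-QVT game from $[\{\strucA\}, \{\strucB\}]$, Spoiler never splits the $\LHS$ (there is only one left structure, and Duplicator cannot pebble left), and uses splits on the $\RHS$ only to reduce the current node to a label of the form $[\{\lstrucAa\}, \{\lstrucBb\}]$ before pebbling. On such a singleton-versus-singleton node, Spoiler plays an optimal $\pexists k$-pebble-game move on the left; Duplicator's obvious response produces at most $n$ right-hand boards. In the next round that $\RHS$ is split into singletons, giving the current node at most $n$ children. Since the $\pexists k$-pebble game is won in at most $r := n^{2k-2}$ rounds, the game tree has height at most $r-1$, and in the worst case is the complete $n$-ary tree of height $r-1$ with exactly one pebble move per node. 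Counting nodes gives $\quant(\phi) \le \sum_{i=0}^{r-1} n^i = \frac{n^r - 1}{n-1}$, and by Theorem~\ref{thm: exeqiv} this bounds the quantifier number of a separating $\exklogic$-sentence. Substituting $r = n^{2k-2}$ yields $\quant(\phi) \le \frac{n^{n^{2k-2}} - 1}{n-1} = \frac{n^{n^{2k-2}-1} \cdot n - 1}{n-1}$; one should double-check the exact exponent bookkeeping to match the stated $\frac{n^{n^{2k-2}-1}}{n-1}$, i.e. verify whether the intended reading is $n^{(n^{2k-2}-1)}$ or $n^{(n^{2k-2})-1}$ and whether Berkholz's bound is being invoked as $n^{2k-2}$ or $n^{2k-2}-1$ rounds.

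The only genuinely new ingredient beyond Lemma~\ref{lem: ub} is confirming that the existential-positive pebble game truly admits the claimed round upper bound and that deletions-only-on-the-right and the homomorphism winning condition do not disrupt the counting argument — but they don't, since Duplicator still has at most $n$ responses to each left-hand pebble placement, so the branching factor at each split stays bounded by $n$. I expect the main obstacle to be purely cosmetic: getting the exponent in the final fraction exactly right (reconciling "$n^{2k-2}$ quantifier depth" with the precise number of rounds and the off-by-one in the geometric series), rather than anything structural. In the write-up I would state: "This follows by an argument identical to that of Lemma~\ref{lem: ub}, using Theorem~\ref{thm: exeqiv} and the existential-positive variant of the pebble game in place of Theorem~\ref{thm:eqiv} and Theorem~\ref{thm: pebble}, together with Berkholz's $n^{2k-2}$ upper bound on quantifier depth in $\exklogic$," and then give the two-line tree-counting calculation explicitly.
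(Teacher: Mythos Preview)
Your proposal is correct and follows exactly the paper's approach: the paper simply states that Lemma~\ref{lem: exub} is obtained ``by the same argument as Lemma~\ref{lem: ub}'' together with the $n^{2k-2}$ quantifier-depth upper bound for $\exklogic$, which is precisely the simulation-plus-tree-count you outline. Your caution about the exponent is well-placed --- the displayed bound appears to contain a typographical slip, since applying Lemma~\ref{lem: ub} with $r=n^{2k-2}$ yields $(n^{n^{2k-2}}-1)/(n-1)$ rather than $n^{n^{2k-2}-1}/(n-1)$.
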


\confORfull{}{\begin{lemma}
Suppose $\classA, \classB$ are two sets of compatible $k$-pebbled structures which disagree on a formula of $\exklogic$. Moreover, suppose each $\strucX \in \classA \cup \classB$ has at most $n$ elements. Then there is a formula $\phi \in \exklogic$ which $\classA$ and $\classB$ disagree on with 
 \[ 
  \quant(\phi) = n^{O(n^{2k-2})}
 \] 
\end{lemma}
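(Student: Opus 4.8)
The plan is to mirror exactly the proof of Lemma~\ref{lem: ub} (and its set-version, Lemma~\ref{lem: ubset}), but carried out inside the $\pexists k$-QVT game rather than the $k$-QVT game, using Lemma~\ref{lem: exub} in place of Corollary~\ref{cor: ub} at the leaves. Throughout we use Theorem~\ref{thm: exeqiv} to pass between Spoiler winning strategies in the $\pexists k$-QVT game and separating sentences of $\exklogic$, and we reduce to a uniform bookkeeping setup exactly as in Lemma~\ref{lem: ubset}: we may assume every relation of $\sigma$ has arity at most $k$ and that each structure $\strucX \in \classA \cup \classB$ has domain $[|X|]$, so that there are $2^{O(n^{k})}$ elements of $\classA \times \classB$ (the hidden constant depending on $\sigma$). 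Note that in the existential setting partial homomorphism replaces partial isomorphism, but this plays no role in the counting.

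First I would have Spoiler split. Recall that in the $\pexists k$-QVT game Spoiler is allowed to perform splits on either side; in particular Spoiler may split the $\RHS$. So Spoiler begins by performing a split of $\classB$ creating one child labelled $[\classA, \{\strucB\}]$ for each $\strucB \in \classB$, and then (after possibly also splitting the $\LHS$, though in our examples the $\LHS$ is a singleton) further splits so that the game tree has exactly $|\classA \times \classB|$ leaves, one labelled $[\{\strucA\},\{\strucB\}]$ for each pair $(\strucA,\strucB) \in \classA \times \classB$. No pebble moves have been used so far. Then at each such leaf, since $\classA$ and $\classB$ disagree on a sentence of $\exklogic$, in particular each surviving pair $[\{\strucA\},\{\strucB\}]$ that has not already been closed is distinguishable in $\exklogic$, so by Lemma~\ref{lem: exub} Spoiler can win the $\pexists k$-QVT game from that leaf using at most $\frac{n^{n^{2k-2}-1}}{n-1}$ quantifier-moves. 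Summing over all leaves, Spoiler wins from $[\classA,\classB]$ in at most
\[
|\classA \times \classB| \cdot \frac{n^{n^{2k-2}-1}}{n-1}
\]
rounds, and since $|\classA\times\classB| = 2^{O(n^k)}$ this is $n^{O(n^{2k-2})}$ (the $2^{O(n^k)}$ factor is absorbed, as $n^k = O(n^{2k-2})$ for $k \ge 1$, actually for $k \ge 2$; for $k \ge 1$ one still has $2^{O(n^k)} \le n^{O(n^{2k-2})}$ since $2^{cn^k} = n^{cn^k/\log n} \le n^{O(n^{2k-2})}$). Invoking Theorem~\ref{thm: exeqiv} once more converts this Spoiler winning strategy into a sentence $\phi \in \exklogic$ which $\classA$ and $\classB$ disagree on with $\quant(\phi) = n^{O(n^{2k-2})}$, as required.

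The only genuinely delicate point — and the one I would be most careful about — is the same subtlety flagged in Lemma~\ref{lem: exub}'s companion discussion: a priori Lemma~\ref{lem: exub} is stated for two single $n$-element structures distinguishable in $\exklogic$, so I must check that when $\classA,\classB$ disagree on an $\exklogic$-sentence, each not-yet-closed leaf pair $[\{\strucA\},\{\strucB\}]$ really is distinguishable in $\exklogic$; leaves where $\strucB$ is (homomorphically) indistinguishable from all of $\classA$ are already deleted/closed at the start before any round, so they contribute nothing. A subtle asymmetry to respect is that in the existential fragment disagreement is directional (partial homomorphism, not isomorphism), so I should run the split-to-singletons argument with $\classA$ on the left as given and take care that the sentence witnessing disagreement of $\classA$ and $\classB$ restricts appropriately to the pair at each leaf; this is routine given Theorem~\ref{thm: exeqiv}, which already handles the split bookkeeping. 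Everything else is the same elementary counting as in Lemma~\ref{lem: ubset}.
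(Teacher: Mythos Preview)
Your proposal is correct and takes essentially the same approach as the paper: the lemma is stated as the existential-positive analogue of Lemma~\ref{lem: ubset}, with no separate proof given, and your argument---split both sides down to singleton pairs, apply Lemma~\ref{lem: exub} at each leaf, and absorb the $2^{O(n^k)}$ leaf count into the $n^{O(n^{2k-2})}$ bound---is exactly the intended one. Your care about the directionality of ``disagree'' in the existential fragment is appropriate but ultimately routine: since $\classA \models \phi$ and no $\lstrucBb \in \classB$ models $\phi$, each surviving leaf pair inherits a separating $\exklogic$-formula in the correct direction, so Lemma~\ref{lem: exub} applies.
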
}

\confORfull{\subsection{The Existential Lower Bound Game}}{\subsection{The Existential Lower Bound Game}}

In Section~\ref{s: exlb} we prove Theorem~\ref{thm: exlb} using an existential-positive version of the $k$-LB game: the $
\exists^{+} k$-LB game. It is defined in the obvious way. Play proceeds in the same way as the $\exists^{+} k$-QVT game except in the case of splits---which are dealt with as in the same way as the $k$-LB game---and the existence of a points system---which is identical to that of the $k$-LB game. \confORfull{}{The proof of the following is essentially the same as Lemma~\ref{lem: lbgame}.}

\begin{lemma} 
Let $\classA, \classB$ be sets of $k$-pebbled structures. If there is a $\phi \in \exklogic$ with $\quant(\phi) \le r$ which $\classA$ and $\classB$ disagree on, then Spoiler can limit Duplicator to $r$ points in the $\pexists k$-LB game starting from position $[\classA, \classB]$.
\end{lemma}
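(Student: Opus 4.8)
The plan is to mirror the proof of Lemma~\ref{lem: lbgame} almost verbatim, replacing $\klogic$ with $\exklogic$ and the $k$-QVT game with the $\pexists k$-QVT game throughout, and invoking Theorem~\ref{thm: exeqiv} in place of Theorem~\ref{thm:eqiv}. Concretely, by Theorem~\ref{thm: exeqiv} it suffices to show that whenever Spoiler wins the $\pexists k$-QVT game from position $[\classA, \classB]$ in $r$ rounds, they can limit Duplicator to at most $r$ points in the $\pexists k$-LB game from the same position. We prove this by induction on $r$.

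For the base case $r=0$, the winning conditions of the two games coincide (both declare a Spoiler win in zero rounds exactly when every $\lstrucBb \in \classB$ fails to admit a partial homomorphism from some $\lstrucAa \in \classA$, i.e. the $\RHS$ empties after the initial deletion step), so Spoiler wins immediately and Duplicator scores zero. For the inductive step, suppose the claim holds for all values smaller than $r$ and that Spoiler has an $r$-round winning strategy in the $\pexists k$-QVT game from $[\classA, \classB]$. If that strategy begins with a pebble move, Spoiler copies it in the $\pexists k$-LB game; the resulting positions agree, Duplicator gets one point for the round, and Spoiler can finish in $r-1$ further rounds, so by induction they limit Duplicator to $(r-1)+1 = r$ points. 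If the strategy begins with a split, note first that in the $\pexists k$-QVT game Spoiler may split either side, but as observed in the remarks after Theorem~\ref{thm: exeqiv} the $\LHS$ stays a single structure in the situations of interest; in any case, exactly as in Lemma~\ref{lem: lbgame} we may assume Spoiler never immediately re-splits the same side (folding a nested split into the first one), so the restriction ``cannot split $\classX$ a second time'' in the $\pexists k$-LB game costs nothing. Write the split as $\classX = \bigcup_{i=1}^{\ell} \classX_i$ and assume Spoiler wins from $[\,\cdot\,, \classX_i]$ in $k_i < r$ rounds with $\sum_i k_i = r$. In the $\pexists k$-LB game Spoiler performs the same split; Duplicator picks $L \subseteq [\ell]$; Spoiler picks $j \in L$ minimising $k_i$ over $i \in L$. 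By induction Spoiler limits Duplicator to $k_j$ points from the subgame, hence to $|L| \cdot k_j$ points overall, and
\[
r = \sum_{i=1}^{\ell} k_i \ge \sum_{i \in L} k_i \ge |L| \cdot k_j,
\]
so Duplicator scores at most $r$, completing the induction.

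I do not expect any genuine obstacle here: the argument is a routine transfer of Lemma~\ref{lem: lbgame} through the homomorphism-flavoured variant of the games. The only points needing a line of care are (i) checking that the ``delete only from the $\RHS$ / partial homomorphism'' bookkeeping is shared by the $\pexists k$-QVT and $\pexists k$-LB games so that positions really do stay in sync after a copied pebble move, and (ii) confirming that the one structural asymmetry of the existential games — Spoiler may split either side but pebble only the left — does not interact badly with the ``no immediate re-split of the same side'' convention of the LB game; the same folding trick used in Lemma~\ref{lem: lbgame} handles this. Given these remarks, the proof can reasonably be compressed to ``essentially identical to that of Lemma~\ref{lem: lbgame}, using Theorem~\ref{thm: exeqiv} in place of Theorem~\ref{thm:eqiv}.''
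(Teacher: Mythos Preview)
Your proposal is correct and matches the paper's own treatment exactly: the paper does not give a separate proof but simply remarks that it is ``essentially the same as Lemma~\ref{lem: lbgame}'', which is precisely the transfer you carry out. Your care in flagging the left/right asymmetry of the existential-positive game is apt, but as you note the folding and minimum-selection argument from Lemma~\ref{lem: lbgame} goes through unchanged.
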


\confORfull{}{ The nice thing about the lower bound games is that we no longer have to keep track of a whole 
tree and instead have only two evolving sets of $k$-pebbled 
structures. Moreover, in Section~\ref{s: exlb} we will be in the  existential-positive case and 
start from a position with only one structure on each side; we will therefore only ever have a 
single $\LHS$ structure.}

\section{An Exponential Lower Bound} \label{s: lb}

In this section, we prove Theorem~\ref{thm: genlb}. To this end,\confORfull{}{ in Section~\ref{ss: XOR_gen},} we introduce a method for building pairs of structures from sets of constraints over elements of an abelian group. The key is that the two structures are isomorphic if and only if the set of constraints is satisfiable. \confORfull{}{This generalises a construction which produces pairs of relational structures from XOR formulas \cite{berkholz2016}, which in turn was inspired by previous graph constructions based on XOR formulas \cite{cai1992optimal, furer2001weisfeiler, immerman1981number}. We will only need the cases $G = (\mathbb{Z}_2^k, +)$, but we are able to show some nice properties of this construction in the general case essentially for free.}\confORfull{}{In \confORfull{Appendix}{Section}~\ref{ss: XOR_fail} we gesture at why it does not suffice to consider structures built from XOR formulas.} \confORfull{}{This highlights a surprising Spoiler resource which in turn helps motivate the exact construction we give in Section~\ref{ss: LB} to prove our lower bound.}

\subsection{XOR Generalisation} \label{ss: XOR_gen}

\confORfull{}{We now introduce a construction for turning a set of constraints into a pair of relational structures. Playing games on these structures is closely related to the original constraints; this allows us to analyse games played on such structures more easily.} 

\confORfull{Fix}{So fix} an integer $k$, $(G,+)$ a finite abelian group and $V$ a set of variables. Let $F$ be a set of constraints in variables $V$ of the form $\sum_{x \in S} x \in H$ for some $S \subseteq V$ with $|S| \le k$ and some $H \subseteq G$.  We write $(x_{1}, \dots, x_{|S|}, H)$ for such a constraint, if $S= \{x_1, \dots, x_{|S|}\}$. We will also abbreviate this as $(S, H)$. For a constraint $C$, we write $|C|$ for the number of variables occurring in $C$, $\dom(C)$ for the set of these variables and $H_C$ for the subset of $G$ appearing in $C$. We say an assignment $f : V \to G$ satisfies $C$ if $\sum_{i \in [|S|]} f(x_i) \in H$ and that $F$ is satisfied by $f$ if every $C\in F$ is satisfied by $f$. 

We define a relational structure $\strucX=\strucX(F)$ as follows. The domain of the structure contains $|G|$ elements for every variable, i.e., $X = \{x^g \, \mid \, x\in V, \, g \in G \}$. We also have a colour for each variable, that is for each $x \in V$ we define a unary relation $R_x^{\strucX}:= \{x^g \, \mid \, g \in G\}$. If $ y\in R_x^{\strucX}$, we write $\pi(y):= x$ and say $y$ projects to $x$; we extend this notation to sets in the natural way. We also define $|x^g| := g$. Then for each constraint $C =(x_{1}, \dots, x_{m}, H)$ we have a $m$-ary relation symbol $R_C$ with:
\[
R_C^{\strucX} := \left\{(x_{1}^{a_1}, \dots, x_{m}^{a_{m}}) \; \middle|  \; \sum_i a_i \in H \right\} 
\]
Now partition $F$ into two sets $N$ and $D$. If $C 
\in D$ we say it is a \emph{distinguishing constraint}. Define $
\hat{F} := N \cup \hat{D}$, where 
$\hat{D}:= \{\hat{C}:= (S, G\setminus H) \, \mid \, C=(S,H) \in D\}$. Then $\strucA(F):= \strucX(\hat{F})$ and $\strucB(F) := \strucX(F)$. One technicality: for every $C \in D$ we rename the relation $R_C$ as $R_{\hat{C}}$ in $\strucA(F)$, so that both structures share a common signature.

\confORfull{}{We have made a few choices in our definition which are simply the most convenient in our present setting. For example, we introduce a fresh relation for every constraint to allow us to have multiple constraints on the same set of variables.} We restrict ourselves to  instances where every $H \subseteq G$ appearing in a constraint of $\hat{F}$ is a subgroup of index two, i.e., where every $H$ appearing in a constraint of $F$ is either a subgroup of index two or the complement of such a subgroup. \confORfull{}{The reason to consider such a restriction is that the structures emerging from such sets of constraints are easier to analyse.} We will refer to constraints of this form as \emph{clauses} and to an $F$ consisting exclusively of clauses as a \emph{formula}. Before plunging in we give an example.

\begin{example}
Let $G = (\mathbb{Z}_2, +)$. The only subgroup of $G$ of index two is $\{0\}$. Let $F$ be a formula over $G$. Then every clause is of the form $\sum_i x_i \in \{0\}$ or $\sum_i x_i \in \{1\}$. Therefore, $F$ is an XOR formula and conversely, every XOR formula is a formula over $G$. In fact, in this case, our construction is identical to that given in \cite{berkholz2016}. 
\end{example}

Our use of index two subgroups is motivated by the following easy characterisation\confORfull{.}{; for completeness, we give a proof in the appendix.}

\begin{lemma} \label{lem: index2}
Let $G$ be a finite group, let $H \leqslant G$. Then $H$ has index two iff for every $g, h \not \in H$, $g + h \in H$.
\end{lemma}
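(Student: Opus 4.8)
The plan is to prove both directions of the biconditional directly from the definition of index, using the fact that $H$ has index two exactly when there is precisely one nontrivial coset, namely $G \setminus H$.

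For the forward direction, suppose $H$ has index two. Then the cosets of $H$ partition $G$ into exactly two classes: $H$ itself and $G \setminus H$. Now take any $g, h \notin H$. Since there is only one non-identity coset, the left coset $gH$ must equal $G \setminus H$, and likewise $hH = G \setminus H$. In particular $h \in gH$ (both lie in the same coset), so $g^{-1}h \in H$; writing additively, this is $-g + h \in H$. I would then want $g + h \in H$ rather than $-g + h \in H$, so I should be a little careful: the cleanest route is to observe that $hH = G\setminus H = gH$ implies $g^{-1} \cdot h \in H$, and separately $H$ being index two forces $H$ to be normal, so cosets behave symmetrically; alternatively, since $g \notin H$ we get $gH = G \setminus H$, hence $h \in gH$ means $h = g \cdot h'$ for some $h' \in H$, and then it is immediate from closure considerations. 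Actually the simplest phrasing: $G/H$ has two elements, so it is (isomorphic to) $\mathbb{Z}_2$; the quotient map $q : G \to G/H$ sends both $g$ and $h$ to the nonzero element, and since that element has order two, $q(g) + q(h) = 0$, i.e. $q(g+h) = 0$, i.e. $g + h \in H$. This is the argument I would write, as it sidesteps any left/right coset bookkeeping.

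For the reverse direction, suppose that for every $g, h \notin H$ we have $g + h \in H$. I want to show $[G : H] = 2$. Since $H$ is a subgroup, $[G:H] \ge 1$, and if $H = G$ the hypothesis is vacuous but we separately need to rule this out — so in fact I should read the statement as: $H$ is a proper subgroup and the stated closure property holds iff the index is exactly two. Assuming $H \ne G$, pick some fixed $a \notin H$. I claim $G = H \cup aH$ (equivalently, every element outside $H$ lies in the single coset $aH$, hence there are exactly two cosets). Indeed, if $g \notin H$, then both $a \notin H$ and $g \notin H$, and I want to apply the hypothesis: $a + g$... no, I want $-a + g \in H$. Here I would use the hypothesis in the form $a + g \in H$ for the pair $(a,g)$ — wait, that gives $a+g \in H$, not $-a + g \in H$. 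Let me instead note that $-a \notin H$ as well (if $-a \in H$ then $a \in H$, contradiction), so applying the hypothesis to the pair $(-a, g)$ gives $-a + g \in H$, i.e. $g \in a + H = aH$. Hence every element not in $H$ lies in $aH$, so there are exactly two distinct cosets, $H$ and $aH$, giving $[G:H] = 2$.

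The only genuine subtlety — and the step I would be most careful about — is the additive-versus-(left/right)-coset translation and the implicit assumption that $H$ is a proper subgroup (the statement only makes sense, and is only true, when $H \neq G$; if $H = G$ the "iff" fails in the $\Leftarrow$ direction since the right side is vacuously true). Since the group here is abelian (the paper works with finite abelian groups $(G,+)$), left and right cosets coincide and all subgroups are normal, so these concerns evaporate; I would simply note that we work in an abelian group and that $H$ is understood to be a proper subgroup. Everything else is a two-line coset count. I expect the whole proof to be under ten lines.
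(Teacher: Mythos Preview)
Your proof is correct. Your route differs from the paper's: the paper fixes some $g \notin H$, considers the bijection $\sigma(h) = g + h$, and argues by cardinality that $\sigma$ must swap $H$ and $G \setminus H$ (forward: $\sigma(H) \subseteq G \setminus H$ since $g+h \in H$ would force $g \in H$; as $|H| = |G \setminus H|$ this forces $\sigma(G \setminus H) \subseteq H$; converse: the hypothesis gives $\sigma(G \setminus H) \subseteq H$, and together with $\sigma(H) \subseteq G \setminus H$ this yields $|H| = |G \setminus H|$). You instead use the quotient $G/H \cong \mathbb{Z}_2$ for the forward direction and a direct coset containment ($-a + g \in H$) for the converse. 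Both arguments are short and elementary; yours is perhaps more conceptual, the paper's more combinatorial. You also correctly flag the edge case $H = G$, which the lemma as stated (and the paper's proof) silently assumes away.
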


How does playing games on such structures relate to the original set of constraints? \confORfull{S}{To see the idea s}uppose Spoiler plays $x^g$ on some board in the $k$-pebble game on $[\strucA(F), \strucB(F)]$. Then Duplicator must reply with an element of the same colour, i.e. some $x^{\hat{g}}$. This corresponds to Spoiler choosing variable $x$ and Duplicator setting it to be equal to $g + \hat{g}$. Similarly, each position in the $k$-pebble game corresponds to a partial assignment of variables in $F$ and Spoiler wins after $r$ rounds iff the corresponding partial assignment does not satisfy $F$. We can therefore recast the $k$-pebble game as being played on $F$. This is not\confORfull{}{ quite} the case for the $k$-QVT game, but we can still exploit the relationship to the original formula in our analysis. 

We now make the above remarks precise by giving a correspondence between (partial) assignments satisfying $F$ and (partial) isomorphisms between $\strucA(F)$ and $\strucB(F)$.  So let $Y= \{y_1, \dots, y_t\} \subseteq A(F)$ and  $\sigma : Y \to B(F)$. Denote $\pi(y_i)$ by $x_{i}$ and collect these in a set $X$. We say that $\sigma$ is \emph{c defining} if two conditions are met. Firstly, the map must respect the unary relation $R_x$ for every $x \in X$. Secondly, for every $y, z$ with $\pi(y) = \pi(z)$, $|\sigma(y)| +|y| = |\sigma(z)| + |z|$. From a partial assignment defining $\sigma$ we define a partial assignment $\hat{\sigma} : X \to G$ by $\hat{\sigma}(x) = |\sigma(y)| + |y|$, where $y$ is any element of $Y$ such that $\pi(y)=x$. 

Conversely, let $X \subseteq V$, $f : X \to G$ and $\{y_1, \dots, y_t\}:=Y \subseteq A(F)$ with $\pi(Y) =X$. Then we define a map $\hat{f}_Y: Y \to B(F)$, which respects the relations $R_x$ for every $x \in X$, by $|\hat{f}_Y(y)| = f(x) - |y|$, where $x = \pi(y)$. \confORfull{This map is clearly partial assignment defining.}{Then, since if $\pi(y) = \pi(z)$, $|\hat{f}_Y(y)| -|\hat{f}_Y(z)| = |z|-|y|$, this map is assignment defining. }

\begin{lemma} \label{lem: passignment}
If a partial assignment defining map $\sigma$ is a partial isomorphism then $\hat{\sigma}$ does not violate any clause of $F$. Conversely, if $X \subseteq V$, $f: X \to G $ is a partial assignment which does not violate any clause of $F$ and $Y \subseteq A(F)$ with $\pi(Y) = X$, then $\hat{f}_Y$ is a partial isomorphism. \confORfull{}{In particular, $F$ is satisfiable iff $\strucA(F)$ and $\strucB(F)$ are isomorphic.}
\end{lemma}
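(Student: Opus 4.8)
The plan is to prove the two directions of the biconditional by unwinding the definitions of partial isomorphism and of the relations $R_C^{\strucX}$, and then deduce the ``in particular'' clause by taking $X = V$ and $Y = A(F)$.

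\medskip

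\noindent\emph{Setup.} Fix a partial assignment defining map $\sigma : Y \to B(F)$ with $Y = \{y_1,\dots,y_t\}$, $\pi(y_i) = x_i$, and $X = \{x_1,\dots,x_t\}$; recall $\hat\sigma(x) = |\sigma(y)| + |y|$ for any $y$ with $\pi(y) = x$, which is well-defined precisely because $\sigma$ is assignment defining. The key observation I would record first is a translation lemma: for a constraint $C = (z_1,\dots,z_m,H)$ of $F$ with all $z_j \in X$, and any representatives $w_j \in Y$ with $\pi(w_j) = z_j$, the tuple $(w_1,\dots,w_m)$ lies in $R_{\hat C}^{\strucA(F)}$ (where $\hat C = C$ if $C \in N$) iff $\sum_j (|w_j| + |\sigma(w_j)|) \in H$ — that is, iff $\hat\sigma$ satisfies $C$ — while $(\sigma(w_1),\dots,\sigma(w_m)) \in R_C^{\strucB(F)}$ iff $\sum_j |\sigma(w_j)| \in H_C$, where $H_C = H$ if $C \in N$ and $H_C = G\setminus H$ if $C \in D$ (since in $\strucB(F) = \strucX(F)$ we use the original $H$, but the relation symbol is $R_{\hat C}$, which in $\strucX(F)$ picks out $H_C$). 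Here I would invoke Lemma~\ref{lem: index2}: since $H$ is an index-two subgroup, $\sum_j |w_j| \in H$ iff an even number of the $|w_j|$ lie outside $H$, and more usefully $\sum_j(|w_j| + |\sigma(w_j)|) \in H$ iff $\sum_j |w_j|$ and $\sum_j |\sigma(w_j)|$ lie in the same coset of $H$.

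\medskip

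\noindent\emph{Forward direction.} Suppose $\sigma$ is a partial isomorphism and suppose for contradiction $\hat\sigma$ violates some clause $C = (z_1,\dots,z_m,H) \in F$, so $\dom(C) \subseteq X$ and $\hat\sigma$ fails $C$; that means either $C \in N$ and $\sum_j \hat\sigma(z_j) \notin H$, or $C \in D$ and $\sum_j \hat\sigma(z_j) \in H$ (since violating $\hat C = (S, G\setminus H)$ means the sum lands in $H$). Picking representatives $w_j \in Y$ with $\pi(w_j) = z_j$, the translation lemma shows $(w_1,\dots,w_m) \in R_{\hat C}^{\strucA(F)}$ iff $\hat\sigma$ satisfies $C$, and $(\sigma(w_1),\dots,\sigma(w_m)) \in R_{\hat C}^{\strucB(F)}$ iff the corresponding sum of $|\sigma(w_j)|$ lies in $H_C$; a short computation using that the $|w_j| + |\sigma(w_j)| = \hat\sigma(z_j)$ and the index-two coset property shows exactly one of these two memberships holds, contradicting that $\sigma$ respects $R_{\hat C}$. (I should also note $\sigma$ respecting the unary colours $R_x$ is built into the definition of assignment defining.)

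\medskip

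\noindent\emph{Converse direction.} Now let $f : X \to G$ violate no clause of $F$, let $Y \subseteq A(F)$ with $\pi(Y) = X$, and set $\sigma = \hat f_Y$; we already noted $\sigma$ is assignment defining, so $\hat\sigma = f$. To check $\sigma$ is a partial isomorphism I must verify the two clauses in the definition: first, $\sigma(y) = \sigma(y')$ iff $y = y'$ — this follows since $\sigma$ preserves colours and on each colour class $R_x$ the map $y \mapsto |\hat f_Y(y)| = f(x) - |y|$ is a bijection $G \to G$; and second, that $\sigma$ respects every relation symbol in both directions. For a unary $R_x$ this is immediate. For $R_{\hat C}$ coming from a constraint $C$ of $F$, take any tuple $(w_1,\dots,w_m) \in X^{\dots}$ from the colour classes matching $C$'s variables (tuples whose entries do not come from exactly the colours $z_1,\dots,z_m$ trivially fail both relations, so there is nothing to check); the translation lemma plus $f$ not violating $C$ gives that $(w_1,\dots,w_m) \in R_{\hat C}^{\strucA(F)}$ iff $(\sigma(w_1),\dots,\sigma(w_m)) \in R_{\hat C}^{\strucB(F)}$. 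Finally, for the ``in particular'' clause: if $F$ is satisfiable by $f : V \to G$, then $\hat f_{A(F)}$ is a bijective partial isomorphism hence an isomorphism $\strucA(F) \cong \strucB(F)$; conversely an isomorphism $\sigma$ is in particular a partial isomorphism on $Y = A(F)$, so $\hat\sigma : V \to G$ violates no clause of $F$, i.e.\ satisfies $F$.

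\medskip

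\noindent I expect the main obstacle to be bookkeeping: keeping straight the three-way interaction between (i) the renaming of $R_C$ to $R_{\hat C}$ in $\strucA(F)$, (ii) the swap $H \leftrightarrow G \setminus H$ for distinguishing constraints, and (iii) the offset $|\sigma(y)| + |y|$ versus $|\sigma(y)|$ between the two structures — making sure the parities line up so that ``$\hat\sigma$ satisfies $C$'' corresponds to the \emph{same} membership fact on both sides (for $C \in N$) or \emph{opposite} facts that nevertheless cancel correctly (for $C \in D$). Isolating the translation lemma and the index-two coset fact up front is what keeps this from becoming error-prone. The rest — the colour-preservation bijection, restricting attention to colour-respecting tuples — is routine.
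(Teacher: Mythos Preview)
Your approach matches the paper's: both directions are a case split on whether $C$ is distinguishing, using the index-two property (Lemma~\ref{lem: index2}) to relate $\sum_j |w_j|$, $\sum_j |\sigma(w_j)|$, and $\sum_j \hat\sigma(z_j) = \sum_j(|w_j|+|\sigma(w_j)|)$. But your translation lemma is misstated. You claim $(w_1,\dots,w_m) \in R_{\hat C}^{\strucA(F)}$ iff $\sum_j(|w_j|+|\sigma(w_j)|) \in H$; this is false, since membership in $R_{\hat C}^{\strucA(F)}$ depends only on $\sum_j |w_j|$ and not on $\sigma$ at all. The correct formulation --- and what the paper actually proves --- is that the two memberships $(w_j) \in R_{\hat C}^{\strucA(F)}$ and $(\sigma(w_j)) \in R_{\hat C}^{\strucB(F)}$ \emph{agree} iff $\hat\sigma$ satisfies $C$. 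For $C\in N$ both relations test membership of the respective sum in $H_C$, so they agree iff the two sums lie in the same coset of $H_C$, i.e.\ iff $\sum_j \hat\sigma(z_j) \in H_C$. For $C\in D$ the $\strucA$-relation tests $\sum|w_j| \in G\setminus H_C$ while the $\strucB$-relation tests $\sum|\sigma(w_j)| \in H_C$, so they agree iff the two sums lie in \emph{different} cosets of the subgroup $G\setminus H_C$, i.e.\ again iff $\sum_j \hat\sigma(z_j) \in H_C$. With this corrected lemma both your directions go through; your forward-direction case split (``$C\in D$ means $\sum\hat\sigma(z_j)\in H$'') also conflates $H_C$ with the underlying subgroup and should be reworked to match.

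One further gap in your ``in particular'' clause: you assert that an arbitrary isomorphism $\sigma:\strucA(F)\to\strucB(F)$ is partial-assignment-defining, so the forward direction of the lemma applies to it. But an isomorphism need only preserve the colour relations $R_x$; nothing forces it to act as a translation $g\mapsto g+c_x$ on each colour class. The paper's proof does not address this direction either, and only the implication ``satisfiable $\Rightarrow$ isomorphic'' (which follows cleanly from the converse part with $Y=A(F)$) is used elsewhere.
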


\begin{proof}

First suppose that $\sigma: Y \to B(F)$ is a partial isomorphism, for some $Y \subseteq A(F)$. Let $C 
=(x_1, \dots, x_m, H)$ be a clause of $F$ with $x_i \in \pi(Y)$ 
for every $i$. We need to show that $\hat{\sigma}$ 
does not violate $C$. For each $i$ pick some $y_i \in Y$ that projects to 
$x_i$. Then $\hat{\sigma}(x_i) = |\sigma(y_i)| +|y_i|$ by 
definition. Therefore, $\sum_i \hat{\sigma}(x_i) = \sum_i |
\sigma(y_i)| +|y_i|$. Since $\sigma$ is a partial isomorphism 
it in particular preserves the relation $R_C$. Therefore, if 
$C$ is not a distinguishing clause we have that $
\sum_i |\sigma(y_i)| \in H$ iff $\sum_i |y_i| \in H$.  
Since $H$ is a subgroup of index two we may conclude 
that $\sum_i \hat{\sigma}(x_i) \in H$. If $C$ is a 
distinguishing clause we have that $\sum_i |\sigma(y_i)| \in 
H$ iff $\sum_i |y_i| \not \in H$. Since in this case $H$ is the complement of a subgroup we again obtain $
\sum_i \hat{\sigma}(x_i) \in H$. Therefore $\sigma$ does not violate any clause of $F$.

For the other direction suppose that $f: X \to G$ does not violate any clause of $
F$ and let $Y \subseteq A(F)$ with $\pi(Y) = X$. Suppose for a contradiction that $\hat{f}_Y$ is 
\emph{not} a partial isomorphism. Then there must be some 
relation $R$, say of arity $m$, and sequence $\underline{y}= 
(y_{i})_{i \in [m]}$ of elements in $Y$ such that 
$R^{\strucA}(\underline{y})$ and $R^{\strucB}
(\hat{f}_Y(\underline{y}))$ have different truth values. Moreover, by the definition of $\sigma $, we know that $R = R_C$ for some $C\in F$, so $\strucB \models R(\hat{f}_Y(\underline{y}))$ iff $\sum_{i=1}^m|\hat{f}_Y(y_{i})| \in H_C$. Since $f$ does not violate any clause of $F$ we also know that $\sum_{i=1}^m f(x_{i}) \in H_C$.  Then by definition 
\[\sum_{i=1}^m|\hat{f}_Y(y_{i})| = \sum_{i=1}^m f(x_{i}) - |y_{i}| = \sum_{i=1}^m f(x_{i}) - \sum_{i=1}^m |y_i|.
\]
If $C$ is \emph{not} a distinguishing constraint then $\sum_{i=1}^m|\hat{f}_Y(y_{i})| \in H_C$  iff $\sum_{i=1}^m |y_i|\in H_C$ iff $\strucA \models R_C^{\strucA}(\underline{y})$. This implies that $\strucB \models R(\hat{f}_Y(\underline{y}))$ iff $\strucA \models R_C^{\strucA}(\underline{y})$. So $C$ must be a distinguishing constraint. But then $\sum_{i=1}^m|\hat{f}_Y(y_{i})| \in H_C$ iff $\sum_{i=1}^m |y_i| \not\in H_C$ iff $\strucA \models R_C^{\strucA}(\underline{y})$. Here the first iff holds becomes $H_C$ is the complement of an index two subgroup. This is a contradiction, so $\hat{f}_Y$ is a partial isomorphism.
\end{proof}

The above lemma is very useful and we will now outline some consequences. Firstly, to specify an isomorphism between $\strucA(F)$ and $\strucB(F)$ it is enough to give a satisfying assignment for $F$. Secondly, the following corollary will be frequently used.

\begin{corollary} \label{cor: sumreduct}
Let $\assA, \assB$ be two compatible $k$-assignments such that $\pi(\assA(i)) = \pi(\assB(i))$ for all $i \in \dom(\assA)$. Moreover, suppose that the map $\assA(i) \to \assB(i)$ is partial assignment defining. Then $\mathcal{A}(F)^{\assA}$ and $\mathcal{B}(F)^{\assB}$ are partially isomorphic iff for every $C = (x_1, \dots, x_m, H) \in F$ and $(i_1,\dots  ,i_m ) \in \dom(\assA)^m$ with $\pi(\alpha(i_j)) = x_j$, 
\[\sum_{j =1}^m |\assA(i_j)| + |\assB(i_j)| \in H
\]
\end{corollary}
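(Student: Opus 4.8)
The plan is to reduce Corollary~\ref{cor: sumreduct} to Lemma~\ref{lem: passignment} by setting up exactly the right partial assignment $f$ on $\dom(\assA)$ and the right tuple $Y \subseteq A(F)$, so that the two conditions in the corollary translate literally into ``$f$ does not violate any clause of $F$'' and ``$\hat f_Y$ is (or is not) a partial isomorphism''. Concretely, first I would note that since $\pi(\assA(i)) = \pi(\assB(i)) =: x_{j(i)}$ for every $i \in \dom(\assA)$, the element $\assA(i)$ lives in $R_{x_{j(i)}}^{\strucA(F)} = R_{x_{j(i)}}^{\strucX(\hat F)}$, and I can take $Y := \{\assA(i) \mid i \in \dom(\assA)\}$, so that $\pi(Y) = X$ where $X$ is the set of variables pebbled. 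The natural candidate for the assignment is $f(x) := |\assA(i)| + |\assB(i)|$ for any $i$ with $\pi(\assA(i)) = x$; this is well-defined precisely because the map $\assA(i) \mapsto \assB(i)$ is partial assignment defining, i.e.\ $|\assB(i)| + |\assA(i)| = |\assB(i')| + |\assA(i')|$ whenever $\pi(\assA(i)) = \pi(\assA(i'))$ (and $G$ is abelian). In other words, $f = \widehat{\sigma}$ for the partial-assignment-defining map $\sigma : Y \to B(F)$, $\assA(i) \mapsto \assB(i)$.

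Next I would observe that $\mathcal A(F)^{\assA}$ and $\mathcal B(F)^{\assB}$ being partially isomorphic is, by the definition of partial isomorphism from the preliminaries, exactly the statement that this canonical partial map $\sigma$ (extended so that it is the identity on colours) is a partial isomorphism between the pebbled structures; one just has to check that the ``equality'' clause of partial isomorphism ($\assA(i) = \assA(j)$ iff $\assB(i) = \assB(j)$) is automatic here, since $\pi(\assA(i)) = \pi(\assB(i))$ and the map is assignment defining forces $\assA(i) = \assA(j) \Leftrightarrow |\assA(i)|=|\assA(j)| \wedge \pi(\assA(i))=\pi(\assA(j)) \Leftrightarrow |\assB(i)|=|\assB(j)| \wedge \pi(\assB(i))=\pi(\assB(j)) \Leftrightarrow \assB(i)=\assB(j)$. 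So the only content is preservation of the relations $R_C$, and I would note moreover that $\hat f_Y$ (in the sense of Lemma~\ref{lem: passignment}) agrees with $\sigma$ on $Y$: by definition $|\hat f_Y(y)| = f(\pi(y)) - |y|$, and plugging in $y = \assA(i)$ gives $|\hat f_Y(\assA(i))| = |\assA(i)| + |\assB(i)| - |\assA(i)| = |\assB(i)| = |\sigma(\assA(i))|$, and both respect the colour relations, hence $\hat f_Y = \sigma$.

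Now both directions follow from Lemma~\ref{lem: passignment}. For the ``if'' direction: suppose the stated sum condition holds for every clause $C \in F$ and every suitable index tuple; this says precisely that $f$ does not violate any clause of $F$ (a clause $C = (x_1,\dots,x_m,H)$ with all $x_j \in X$ is evaluated by picking indices $i_j$ with $\pi(\assA(i_j)) = x_j$ and checking $\sum_j f(x_j) = \sum_j |\assA(i_j)| + |\assB(i_j)| \in H$). Hence by Lemma~\ref{lem: passignment}, $\hat f_Y = \sigma$ is a partial isomorphism, so the two pebbled structures are partially isomorphic. For the ``only if'' direction: if $\mathcal A(F)^{\assA}$ and $\mathcal B(F)^{\assB}$ are partially isomorphic then $\sigma$ is a partial isomorphism, so by the first half of Lemma~\ref{lem: passignment} (applied with the assignment-defining map $\sigma$), $\widehat{\sigma} = f$ violates no clause of $F$, which is exactly the displayed sum condition.

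I do not expect a serious obstacle here — the corollary is essentially a bookkeeping restatement of Lemma~\ref{lem: passignment} in the pebble-game coordinates $(i \in [k])$ rather than the domain-element coordinates $(y \in A(F))$. The one place to be careful is the well-definedness of $f$ and the claim $\hat f_Y = \sigma$: both hinge on the hypothesis that $\assA(i) \mapsto \assB(i)$ is partial assignment defining together with $\pi(\assA(i)) = \pi(\assB(i))$, so I would make sure to invoke both hypotheses explicitly and to note that $G$ being abelian is what lets me freely regroup the sums $\sum_j |\assA(i_j)| + |\assB(i_j)|$ when matching against the definition of $f$. A minor additional point: the corollary quantifies over all index tuples $(i_1,\dots,i_m) \in \dom(\assA)^m$ with $\pi(\assA(i_j)) = x_j$, not just one per clause; but since $f$ is well-defined, the value $\sum_j f(x_j)$ is independent of the choice of representatives $i_j$, so checking one tuple per clause suffices and the two formulations coincide.
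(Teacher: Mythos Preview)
Your proposal is correct and follows essentially the same approach as the paper: both reduce the corollary to Lemma~\ref{lem: passignment} by identifying the canonical partial map $\sigma : \assA(i) \mapsto \assB(i)$ with $\hat f_Y$ for $f = \hat\sigma$, and then translating the sum condition into ``$\hat\sigma$ violates no clause of $F$''. Your write-up is in fact a bit more careful than the paper's, which silently assumes the equality clause of partial isomorphism and the identity $\hat f_Y = \sigma$ that you spell out explicitly.
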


\confORfull{}{
\begin{proof}
Let $\lstrucAa$ and $\lstrucBb$ be as above and let $\sigma$ be the associated canonical partial map. Suppose that $\sigma$ is not a partial isomorphism. Then by Lemma~\ref{lem: passignment}, $\hat{\sigma}$ violates a clause of $F$, say $C = (x_1, \dots, x_{m}, H)$. Let $(i_1,\dots  ,i_m ) \in \dom(\assA)^m$ with $\pi(\alpha(i_j)) = x_j$. Then, since $\hat{\sigma}$ violates $C$,
\[ 
\sum_{j=1}^m |\assA(i_j)| + |\assB(i_j)| = \sum_{j=1}^m |\assA(i_j)| + |\sigma(\assA(i_j))|= \sum_{i=1}^m \hat{\sigma}(x_i) \not \in H.
\] 

Conversely suppose $\sigma$ is a partial isomorphism. Suppose $C = (x_1, \dots, x_{m}, H)$ is a constrain from $F$ and $(i_1,\dots  ,i_m ) \in \dom(\assA)^m$ with $\pi(\alpha(i_j)) = x_j$. Then $\hat{\sigma}$ satisfies $C$ by Lemma~\ref{lem: passignment}. Therefore, $\sum_{j=1}^m |\assA(i_j)| + |\assB(i_j)| = \sum_{i=1}^m \hat{\sigma}(x_i) \in H$.
\end{proof}}

Thirdly, we can now reconceive $k$-pebble games\confORfull{}{ played on such structures} as being played directly on formulas. This, again, generalises the XOR case. \confORfull{To be precise, Spoiler can win the $k$-pebble game iff they can win the $k$-formula game, defined as follows. The game is played on $V$ and Spoiler has a set of $k$ pebbles labelled with elements of $[k]$. If pebble $p$ lies on $x$ we write $\pi(p)=x$. In each round Spoiler places a pebble $p$ on some variable $x \in V$ and Duplicator chooses an element $g \in G$. We write $f(\pi(p)) = g$. Then Spoiler wins at the end of round $r$ if the map $f=f_r : \{\pi(p) \, \mid \, p \in [k]\} \to G$ violates a clause of $F$.

The above corresponds in the $k$-pebble game to Spoiler choosing $x_i^h$ and Duplicator responding with the $x_i^u$ such that $h + u =g$. Note that Duplicator makes such a response w.l.o.g. as any other response is immediately losing. Then by Corollary~\ref{cor: sumreduct} and induction it is easy to see that if Spoiler has a winning strategy for the formula game on $F$ this corresponds to a winning strategy for the $k$-pebble game on $[\strucA(F), \strucB(F)]$. This is the direction we need; the correspondence in the other direction is similar but do not use it.}{We need some terminology. If on Spoiler's turn they choose, on some board, an element $x^{g}$ then we say that Spoiler plays $g$ on $x$. Because of the unary relations, any valid Duplicator response to this move must involve them playing $x_i^{\hat{g}}$, we simply say that Duplicator replies with $\hat{g}$ as we may assume w.l.o.g.\! that Duplicator only makes valid responses.

Suppose Spoiler plays $g$ on $x$ in the $k$-pebble game. Then we may assume w.l.o.g.\! that Duplicator always replies in a partial assignment defining manner. Moreover, it doesn't matter which element Spoiler plays on $x$. This follows from the following three facts.
\begin{enumerate}
\item  Spoiler wins iff $\hat{\sigma}$ violates some clause.
\item The value of this valuation on $x$ is determined by the sum of  Spoiler's move and Duplicator's response.
\item The map $ h \to h + g$ is a bijection on $G$ for every $g\in G$.
\end{enumerate}

Therefore, Spoiler can win the $k$-pebble game iff they can win the $k$-formula game, defined as follows. The game is played on $V$ and Spoiler has a set of $k$ pebbles labelled with elements of $[k]$. If pebble $p$ lies on $x$ we write $\pi(p)=x$. In each round Spoiler places a pebble $p$ on some variable $x \in V$ and Duplicator chooses an element $g \in G$ to write on $p$. We write $f(\pi(p)) = g$. Then Spoiler wins at the end of round $r$ if the map $f=f_r : \{\pi(p) \, \mid \, p \in [k]\} \to G$ violates a clause of $F$.

The above corresponds in the $k$-pebble game to Spoiler choosing $x_i^h$ and Duplicator responding with the $x_i^u$ such that $h + u =g$. Note that Duplicator makes such a response w.l.o.g. as any other response is immediately losing. Then by Corollary~\ref{cor: sumreduct} and induction it is easy to see that if Spoiler has a winning strategy for the formula game on $F$ this corresponds to a winning strategy for the $k$-pebble game on $[\strucA(F), \strucB(F)]$. This is the direction we need; the correspondence in the other direction is similar but we do not use it.}

\confORfull{}{
\subsection{The Problem with XOR formulas} \label{ss: XOR_fail}

We now give an extended example highlighting why we need to generalise beyond XOR formulas. So fix $G=(\mathbb{Z}_2,+)$, so that a formula over $G$ is an XOR formula. Here, and throughout, we will often forgo the set notation and instead use notation involving equalities, with the obvious interpretation. Let $V = \{x_i \, \mid \, i \in [8] \}$ and let $F$ be the formula with the following clauses:
\begin{enumerate}
\item $x_1=1$, this is the only distinguishing clause.
\item $x_1 + x_2 + x_3 = 0$
\item $x_2 + x_4 + x_5 = 0$
\item $x_3 + x_6 + x_7 = 0$
\item $x_4 + x_5 + x_8 = 0$
\item $x_6 + x_7 + x_8 = 0$
\item $x_8 = 0$.
\end{enumerate}
We can visualise our formula as a graph where each triangle represents a non-distinguishing clause on the vertices involved in the triangle and a black (resp. blue) loop corresponds to a variable being forced to zero (resp. one), see Figure~\ref{fig: XOR_vis}. It is then easy to see that $F$ is unsatisfiable: the first clause forces $x_1$ to be one and we can then `push' this from left to right eventually forcing $x_8$ to be zero, a contradiction.\footnote{A similar intuition to this can be formalised and in fact in \cite{berkholz2016} they focus on XOR formulas induced by DAGS. Generalising this to our setting is probably possible but unnecessary for our purposes.} For example, the second clause forces either $x_2$ or $x_3$ to be one, since we know $x_1=1$. It follows by Lemma~\ref{lem: passignment} that $\strucA(F)$ and $\strucB(F)$ are not partially isomorphic. 

\begin{figure} 
\centering
\includegraphics[width=\textwidth]{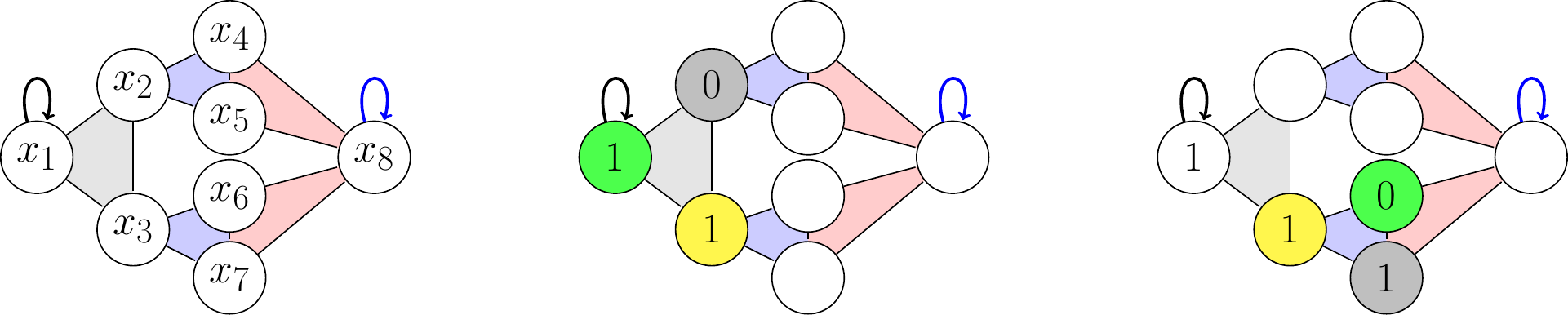}
\caption{From left to right: a visualisation of $F$; the first three rounds of the $3$-formula game on $F$, the filled vertices represent the fact the corresponding elements are pebbled, the labels are the Duplicator moves; a possible continuation of the game. In the next move Spoiler will move the yellow pebble to $x_8$ and win.}
\label{fig: XOR_vis}
\end{figure}

One can similarly see that Spoiler can win the $3$-QVT game on these structures. To see this recall that it is enough to show that Spoiler can win the $3$-pebble game which is equivalent to the $3$-formula game. Now Spoiler's winning strategy in the $3$-formula game is to `push' the one through from left to right. In detail they begin by placing pebble one on $x_1$, Duplicator must reply with one by constraint (1). They then place pebble two on $x_2$ and pebble three on $x_3$. Duplicator must reply with one on either pebble two or three by constraint (2). In the former case, Spoiler leaves a pebble on $x_2$ and moves the other two pebbles to $x_4$ and $x_5$. Then by constraint (3) the sum of the Duplicator responses to these two moves must be 1. Finally Spoiler moves the pebble lying on $x_2$ to $x_8$. By constraint (5) Duplicator must reply with $1$ but by constraint (7) this reply is not valid. Therefore Spoiler wins. The case where after three rounds there is a one on pebble three is similar, except now Spoiler utilises constraints (4) and (6) instead of (3) and (5).

A natural question is whether Spoiler has to perform a split in order to win the $3$-QVT game on these structures. This was our original idea for obtaining a lower bound: we wanted to take $n$ copies of these structures and then chain them together in the right way. Our idea was that Spoiler will have to perform $n$ splits of degree two in order to win. But in fact Spoiler can win without performing any splits at all!

First: why did we think that Spoiler needed to split in order to win? Well because this is what happens if we restrict Spoiler to play only on the $\LHS$. Moreover, as we have seen, in the pebble game over structures emerging from XOR formulas we may assume w.l.o.g. that Spoiler always plays on the $\LHS$, this is why we can recast the game as being played directly on the formula. However, interestingly, in the $k$-LB game this assumption does involve a loss of generality. Let us give some more details.

Why does Spoiler need to split if they are restricted to play on the $\LHS$? We will provide a sketch of this argument. Firstly, it is easy to see that we may assume w.l.o.g. that Duplicator always plays elements of the form $x^0$. Next it is easy to prove, that if we never reach a position where for each $i\in [3]$, on the unique $\LHS$ board $\lstrucAa$, $\assA(i) = x_i^0$ (up to renaming of pebbles) then Spoiler cannot win, call this position the \emph{critical position}. Now suppose we are in the critical position and Spoiler wants to make progress;  there will be two partially isomorphic $\RHS$ boards $\lstrucAb$, $\lstrucAc$ with $\assB(1) = \assC(1) = x_1^1$ and $\assB(2) = x_2^1 \neq \assC(2)$ and $\assC(3) = x_3^1 \neq \assB(3)$. If Spoiler wants to win they need to win on both these boards. But there is no way to make progress on one board without destroying all progress on the other, see Figure~\ref{fig: xor_multi}.

\begin{figure} 
\centering
\includegraphics[width=0.7\textwidth]{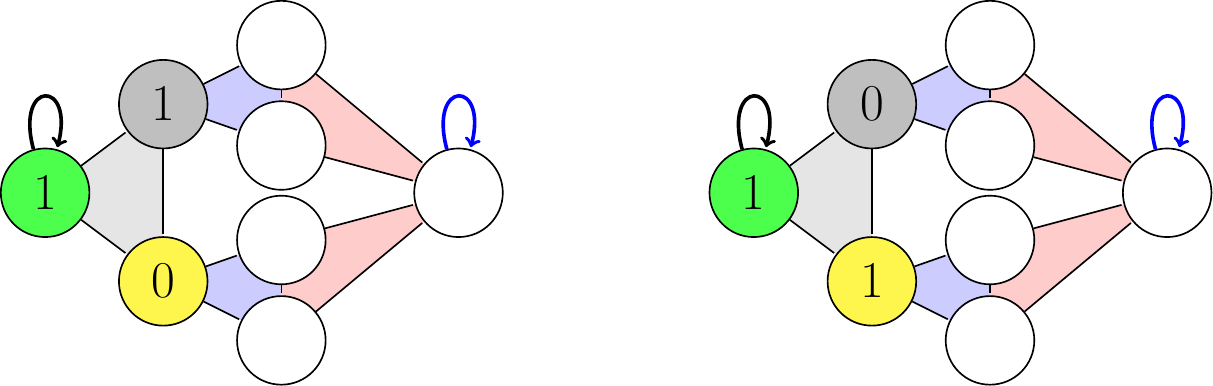}
\caption{A visualisation of the two $\RHS$ boards in a critical position.}
\label{fig: xor_multi}
\end{figure}

To see this consider just the board $\lstrucBb$. To make progress Spoiler must `push the one' further and the only way to do this is to moves pebbles one and three to cover $x_4^0$ and $x_5^0$. Then because of constraint (3) and as $\assB(2) = x_2^1$, Duplicator must reply with either $x_4^1$ and $x_5^0$ or $x_4^0$ and $x_5^1$. Spoiler rejoices: progress has been made! But not so fast: for now consider the situation on $\lstrucBc$. There Duplicator can simply reply to Spoiler's moves with $x_4^0$ and $x_5^0$: not only has no progress been made but progress has actually been reversed. Now Duplicator discards every board apart from this one and it is easy to see we are once again in a position where Spoiler needs to reach a critical position in order to win.

It is straightforward, if a little tedious, to turn the above intuitions into a formal argument. Now let us see how Spoiler can win the $3$-QVT game without splitting. Again let us consider the critical position, but now let us give Spoiler the ability to also play on the $\RHS$. Then Spoiler has the following devious tactic: that of freezing a board. What exactly does this mean? Consider what happens if Spoiler moves pebble $i$ on the $\RHS$ and on $\lstrucBc$ moves $i \to \assC(i)$, that is they do not really move pebble $i$ at all but rather pick it up and place it back where it was before. Then it is easy to see that on $\lstrucAa$ the only valid response is to play $i \to \assA(i)$, due to constraint (2). We call this \emph{freezing} a board. Note that it does not matter which pebble Spoiler chooses.

Of course, such a manoeuvre is entirely pointless in the case of pebbles games, but the presence of multiple boards changes things entirely. Now starting from the critical position Spoiler on the $\RHS$ in each round. On $\lstrucBb$ they  push the one through the structure, as in the winning pebble game strategy. While they do this they freeze board $\lstrucBc$. At the end of this procedure, Spoiler wins on all boards stemming from $\lstrucBb$ and the position is $[\{\lstrucAa\}, \{\lstrucBc\}]$. It is then easy to see that Spoiler can win from this position without splitting again by `pushing the one though the structure.'

We should observe that this phenomenon of freezing was not specific to the $F$ used above. Rather the issue stems from the fact that if in an XOR clause we fix all but one variable then the value of the final variable is fixed. This raises the question: can any exponential lower bounds be obtained using structures emerging from XOR formulas? Instead of answering this question we avoid the hurdle of freezing altogether by working with a different group. }

\subsection{Lower Bound} \label{ss: LB}

\confORfull{}{The musings of the last subsection motivate us to use $G= (\mathbb{Z}_{2}^k, +), k>1$. The point is that now if we fix the value of all variables in a clause bar one, then the value of the remaining variable isn't  determined. Therefore, Spoiler will not necessarily be able to perform freezes. There are two additional benefits. Firstly, in this setting we can force splits of degree $k-1$. Secondly, the lower bound argument is relatively simple.}

\subsubsection{Forcing Spoiler to Split}
We now show that there exists two structures $\strucA, \strucB$ such that Spoiler can win the $k$-QVT game from position $[\{\strucA\}, \{\strucB\}]$ but must perform a split in order to do so. In the next subsection, we chain these structures together to obtain Theorem~\ref{thm: genlb}. Here we assume $k$ is odd and set $G= (\mathbb{Z}_2^{k},+)$; a similar construction works for even $k$, see Appendix~\ref{a: even}.

We need some notation. Let $\mathsf{EVEN}$ ($\mathsf{ODD}$), denote those elements of $\mathbb{Z}_{2}^k$ whose coordinates sum to an even (odd) number. We denote the $i$th coordinate of $g \in \mathbb{Z}_2^k$ by $g[i]$. Moreover, we write $\underline{0}$ for the identity element, $\underline{0}_i$ for the element of $\mathbb{Z}_2^k$ with $\underline{0}_i[\ell]= 1$ iff $\ell =i$ and $\underline{1}:= (1, 1, \dots, 1)$. Let $V:=\{s_1, \dots s_{k}, e_1, \dots, e_k\}$. We define $F$ to be the formula consisting of the following clauses.
\begin{itemize}
\item $s_1 \in \mathsf{ODD}$, this is the only distinguishing clause.
\item $s_i \in \mathsf{EVEN}$ for $i \in [k] \setminus \{1\}$. 
\item  $(e_{\ell} +\sum_{i \in [k]\setminus \{\ell\}} s_i)[\ell] =0$, for  $\ell \in [k]$.
\item $ s_{i}[i]=0$, for  $i \in [k]$
\item $e_{\ell}[i] = 0$, for $\ell, i \in [k]$
\end{itemize}

Observe that the last set of constraints forces every $e_{\ell}$ to be $\underline{0}$. \confORfull{}{We need to phrase this in terms of multiple constraints due to our requirement that for every clause $C$, $H_C$ is a subgroup of index two or the complement of such a subgroup.}\confORfull{}{\footnote{We could have replaced all the $e_{\ell}$ with a single variable. The reason we for not doing this is that it will allow us to `chain together' copies of these structures more easily later on.}} We will write $\strucA/\strucB$ instead of $\strucA(F)/\strucB(F)$.

\begin{lemma} \label{lem: spoiler-win-gen}
$\strucA$ and $\strucB$ disagree on a sentence of $\klogic$.
\end{lemma}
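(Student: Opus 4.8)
The plan is to reduce to the $k$-formula game on $F$ and exhibit a Spoiler win there; by the correspondence recorded at the end of Section~\ref{ss: XOR_gen} this yields a Spoiler win in the $k$-pebble game on $[\strucA,\strucB]$, which by Theorem~\ref{thm: pebble} is exactly the assertion that $\strucA$ and $\strucB$ disagree on a $\klogic$-sentence. First I would record the (weaker) fact that $F$ is unsatisfiable: the clauses $e_\ell[i]=0$ force $e_\ell=\underline 0$, so the clause $(e_\ell+\sum_{i\ne\ell}s_i)[\ell]=0$ becomes $(\sum_{i\ne\ell}s_i)[\ell]=0$, and together with $s_\ell[\ell]=0$ this gives $(\sum_{i=1}^k s_i)[\ell]=0$ for every $\ell$, i.e.\ $\sum_i s_i=\underline 0$; hence the coordinate-sum of $\sum_i s_i$ is even, while it equals $\sum_i$(coordinate-sum of $s_i$)$=1$ because $s_1\in\mathsf{ODD}$ and $s_i\in\mathsf{EVEN}$ otherwise, a contradiction. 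By Lemma~\ref{lem: passignment} this already gives $\strucA\not\cong\strucB$, but that is not enough for $\klogic$-inequivalence, so the work is in the game.

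For the Spoiler strategy I would use a ``pushing'' argument. Spoiler keeps its $k$ pebbles on the block-set $\{e_\ell\}\cup\{s_i:i\ne\ell\}$ of a single big clause at a time; having $e_\ell$ pebbled forces its value to $\underline 0$, so that clause becomes a linear constraint on the currently pebbled $s_i$'s, which combined with $s_j\in\mathsf{EVEN}$ and $s_j[j]=0$ forces a $1$ into a new coordinate of some still-``free'' variable $s_j$. Spoiler then keeps $s_j$ pebbled and slides the other pebbles to set up the next big clause, thereby carrying the discrepancy originally created by $s_1\in\mathsf{ODD}$ around the cyclic structure of the clauses, until reaching a position where, for some $\ell'$, all of $\{s_i:i\ne\ell'\}$ are pebbled with values forcing $\sum_{i\ne\ell'}s_i[\ell']$ to be odd; moving a pebble onto $e_{\ell'}$ then makes $(e_{\ell'}+\sum_{i\ne\ell'}s_i)[\ell']=0$ demand $e_{\ell'}[\ell']=1$, contradicting $e_{\ell'}[\ell']=0$, so Duplicator has no valid response. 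For $k=3$ this is short and concrete: pebble $s_1$ (answered by some $v\in\mathsf{ODD}$ with $v[1]=0$, w.l.o.g.\ $v=(0,1,0)$ using the coordinate symmetry of $F$), pebble $e_2$ (forced to $\underline 0$), pebble $s_3$ (forced to $(1,1,0)$ by the clause on $\{e_2,s_1,s_3\}$ together with $s_3\in\mathsf{EVEN}$, $s_3[3]=0$), slide the $e_2$-pebble onto $s_2$ (answered by $(0,0,0)$ or $(1,0,1)$), and then slide a pebble onto $e_1$ in the first case and onto $e_3$ in the second; in each case the relevant big clause forces an $e[\cdot]$-coordinate to be $1$, contradicting $e[\cdot]=0$.

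The main obstacle is making the ``pushing'' precise for general odd $k$. Unlike the $k=3$ case, Duplicator can spread a forced $1$ over several coordinates of one of its even variables, so a single pass around the clauses need not pin anything down; one must follow a chain of forced discrepancies through the clauses and argue it cannot be ``absorbed'' before reaching some $e_{\ell'}$. I would organise this with an invariant recording which $s_i$'s are currently $\underline 0$ and where the live discrepancy sits, together with a potential (for instance the number of coordinates in which Duplicator is already pinned) that strictly increases with each rotation, so that the trap closes after boundedly many rounds; I expect the parity of $k$ to enter precisely in checking that going once around the cycle does not cancel the discrepancy. Duplicator's branching --- its choice for $s_1$ and its choices whenever a forced value is not unique --- is handled by the coordinate-permutation symmetry of $F$ (permuting coordinates $\{2,\dots,k\}$ while simultaneously permuting $s_2,\dots,s_k$ and $e_2,\dots,e_k$) together with the same invariant.
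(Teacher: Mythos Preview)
Your proposal has a genuine gap: you give a complete argument only for $k=3$ and explicitly leave the general odd $k$ case as an open ``obstacle'' to be handled by a pushing/invariant scheme you have not carried out. That scheme is not needed, and you should not pursue it.

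The paper's approach is direct, and in fact you already wrote down the essential computation in your unsatisfiability paragraph. Spoiler has $k$ pebbles, so in the first $k$ rounds Spoiler covers \emph{all} of $s_1,\dots,s_k$; say Duplicator replies $g_1,\dots,g_k$. Now apply your own parity argument to these specific values: since $g_1\in\mathsf{ODD}$ and $g_i\in\mathsf{EVEN}$ for $i\ne 1$, we have $\sum_i g_i\in\mathsf{ODD}$; but if $\sum_{i\ne\ell}g_i[\ell]=0$ held for every $\ell$, then (using $g_\ell[\ell]=0$) the $\ell$th coordinate of $\sum_i g_i$ would be $0$ for each $\ell$, i.e.\ $\sum_i g_i=\underline 0\in\mathsf{EVEN}$, a contradiction. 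Hence some $\ell$ satisfies $\sum_{i\ne\ell}g_i[\ell]=1$. Spoiler now moves pebble $\ell$ from $s_\ell$ to $e_\ell$: the clause $(e_\ell+\sum_{i\ne\ell}s_i)[\ell]=0$ forces Duplicator's reply to have $\ell$th coordinate $1$, contradicting $e_\ell[\ell]=0$. That is the entire proof, in $k+1$ rounds.

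The idea you are missing is simply to cover all the $s_i$ at once rather than reserving a pebble for some $e_\ell$ and rotating. Your $k=3$ run actually stumbles into the all-$s_i$ configuration at round~$4$, and your round-$5$ move is precisely the paper's finishing move; the detour through $e_2$ in rounds $2$--$3$ is superfluous.
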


\begin{proof}
It suffices to consider the $k$-formula game on $F$. In the first $k$ rounds, Spoiler moves pebble $i$ to $s_i$ for every $i\in [k]$. Suppose Duplicator makes responses $g_i$ to the move $s_i$. If there is some $\ell \in [k]$ such that $\sum_{i \in [k]\setminus \{\ell\}} g_i[\ell] =1$ then in the next round Spoiler moves pebble $\ell$ to $e_{\ell}$. Duplicator must reply with some element of $g$ whose $\ell$th coordinate is one, to satisfy the constraint $(e_{\ell} +\sum_{i \in [k]\setminus \{\ell\}} s_i) [\ell] = 0$. But this is a contradiction to the constraint $e_{\ell}[\ell] =0$. 

Now suppose that, $\sum_{i \in [k]\setminus \{\ell\}} g_i[\ell] =0$, for every $\ell$. Note that $\sum_{i\in[k]} g_i \in \mathsf{ODD}$, as $g_1 \in \mathsf{ODD}$ and every other $g_i \in \mathsf{EVEN}$. But also $\sum_{i\in[k]} g_i = (g_1[1], g_2[2], \dots, g_k[k]) = \underline{0}$, a contradiction. \confORfull{}{Here the second equality follows from the constraint $s_{\ell}[\ell] = 0$}. 
\end{proof}

We now show that Spoiler needs to perform a split in the $k$-LB game in order to win. One might think that one could re-imagine the $k$-LB game as played directly on $F$. Unfortunately, this doesn't work, at least not in the naive way of defining things. The blockage comes from the multiple boards on each side. In particular, there can be an asymmetry between Spoiler playing on the $\LHS$/$\RHS$, which seems hard to capture when playing directly on $F$\confORfull{.}{, see Section~\ref{ss: XOR_fail} for a concrete example of this.} Instead, we proceed by reasoning directly about the structures. \confORfull{}{Still, we frequently deploy Corollary~\ref{cor: sumreduct} in our analysis.}

Consider Spoiler's winning strategy from Lemma~\ref{lem: spoiler-win-gen}. After Spoiler covers all the $s_i$ there must be some $\ell$ such that $\sum_{i\in [k] \setminus \{\ell\}} s_{i}[\ell]=1$, but Duplicator can choose for which value of $\ell$ this occurs. \confORfull{}{Then, depending on $\ell$, Spoiler has to move a different pebble in the next round.} Therefore, in the $k$-LB game Duplicator can ensure that all of these possibilities occur simultaneously, which stops Spoiler from winning without performing a split. We now formalise this intuition.

First, some preliminaries. We will describe a Duplicator strategy which ensures that, at the end of every round, all boards are partially isomorphic. Therefore, pebble $i$ always lies on the same colour element on every board. We denote this colour by $\pi(i)$. Then we define $\type(\lstrucXa): = \{\pi(i) \, \mid \, i \in \dom(\assA)\}$ \confORfull{}{and refer to this set as the \emph{type} of $\lstrucXa$}. Given $\lstrucXa, \lstrucYb$, $P \subseteq [k]$ we refer to $\sum_{i\in P} |\assA(i)| + |\assB(i)|$ as the sum of $P$ relative to $\lstrucXa, \lstrucYb$. The following definitions are crucial.

\confORfull{\begin{figure} 
\centering
\includegraphics[width=\linewidth]{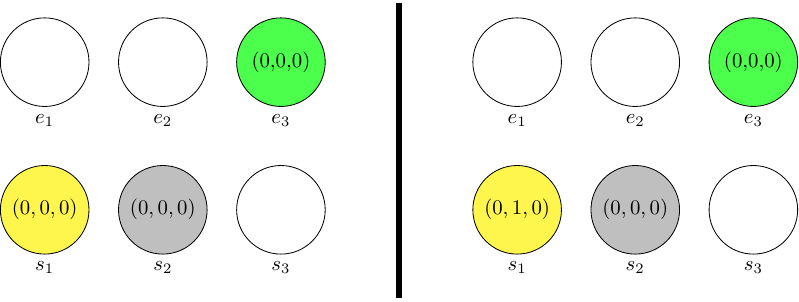}
\caption{A pair of good boards. Here we consider the case $k=3$. The colours represent the three different pebbles.} 
\Description[An example of a pair of good boards.]{An example of a pair of good boards.}
\label{fig: good}
\end{figure}}
{\begin{figure} 
\centering
\includegraphics[width=\textwidth]{figures/fig-good}
\caption{A pair of good boards. Here we consider the case $k=3$. The colours represent the three different pebbles.  To see that the boards are good one just needs to check they are good for three.}
\label{fig: good}
\end{figure}}

\begin{definition} \label{def: good}
Let $\lstrucXa, \lstrucYb$ be partially isomorphic, $\{\strucX, \strucY\} = \{\strucA, \strucB\}$. Let $S := [k] \setminus {\ell}$. We say that $\lstrucXa$ is \emph{good} for $\ell$ relative to $\lstrucYb$ if for every set of pebbles $P=\{p_i \, \mid \, i \in S\}$ with $\pi(p_i) = s_i$, the sum of $P$ relative to $\lstrucXa, \lstrucYb$ has $\ell$th coordinate zero. If $\lstrucXa, \lstrucYb$ are good for every $\ell \in [k]$ we say that they are \emph{good} (relative to one another).
\end{definition}

By similar reasoning to that in the proof of Lemma~\ref{lem: spoiler-win-gen} it can be seen that if an element of every $R^{\strucX}_{s_i}$ is pebbled then there cannot be a board $\lstrucYb$ which is good relative to $\lstrucXa$. But crucially whenever such a situation occurs Duplicator can play such that there is a \emph{dual set} of boards.

\begin{definition} \label{def: dual}
Fix some $\ell \in [k]$. Let $T:= \{\lstruc{X}{\assA_{i}} \, \mid \, i \in [k] \setminus \{\ell\} \}$, be a set of $k$-pebbled structures all partially 
isomorphic to $\lstrucYb$, $\{\strucX, \strucY\} = \{\strucA, 
\strucB\}$. Moreover, suppose that $\type(\lstrucYb)= \{s_i \, \mid \, i\in [k]\}$. We say that $T$ is a 
\emph{dual set} (relative to $\lstrucYb$) if for every $j \neq i$, $\lstruc{X}{\assA_{i}}$ is good for $j$ relative to $\lstrucYb$.
\end{definition}

The following lemma shows that Spoiler has to perform a split in order to win.

\begin{lemma} \label{lem: dual}
In the $k$-LB game starting from position $[\{\strucA\}, \{\strucB\}]$, Duplicator may play so that until Spoiler performs a split of degree $k-1$ the following  properties hold at the end of every round.
\begin{enumerate}
\item Every board is partially isomorphic.
\item If the type of these boards is not $\{s_i \mid \, i\in [k]\}$, there is\confORfull{}{ exactly} one board on each side and these are good relative to each other. 
\item Otherwise, one side contains one board and the other side consists of a dual set relative to that board.
\end{enumerate} 
\end{lemma}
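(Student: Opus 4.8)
The plan is to describe an explicit Duplicator strategy and verify by induction on the round number that the three stated invariants are preserved until Spoiler is forced into a split of degree $k-1$. The strategy has two regimes, corresponding to whether or not the current type equals $\{s_i \mid i \in [k]\}$, and the analysis is carried out via Corollary~\ref{cor: sumreduct}: since Duplicator will always maintain that all boards project pebbles to a common colour, partial isomorphism of a pair $\lstrucXa, \lstrucYb$ is equivalent to every clause of $F$ being respected by the ``doubled sum'' $\sum_{j} |\assA(i_j)| + |\assB(i_j)|$. The key point is that the only clauses that could be violated by moving a pebble are the $e_\ell$-clauses $(e_\ell + \sum_{i \neq \ell} s_i)[\ell] = 0$ together with the single-variable clauses $s_i[i] = 0$ and $e_\ell[i] = 0$; everything else is automatically satisfied because $\mathsf{EVEN}$ and $\mathsf{ODD}$ interact correctly (Lemma~\ref{lem: index2}).

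First I would handle the regime where the type is \emph{not} $\{s_i \mid i \in [k]\}$, so there is (inductively) exactly one board on each side and they are good relative to one another. There are three kinds of Spoiler move to consider. If Spoiler pebbles a colour $c \notin \{s_i \mid i \in [k]\}$ — i.e.\ some $e_\ell$ — then Duplicator responds so as to keep the board partially isomorphic; here goodness of the old boards is exactly the statement that the $\ell$th coordinate of the relevant $s$-sum is zero, so the $e_\ell$-clause can be satisfied by choosing the response with $\ell$th coordinate matching $|e_\ell^{\text{played}}|$, and since the $e_\ell[i]=0$ clauses pin down $e_\ell$ Duplicator in fact has a forced but legal reply; I must then re-verify goodness, which is straightforward since $s$-pebbles were untouched. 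If Spoiler pebbles some $s_i$ but this does not complete the full set $\{s_i\}$, Duplicator plays to maintain partial isomorphism and goodness; goodness for each $\ell$ is a linear condition on the $s$-sums, and since the newly pebbled colour $s_i$ still leaves at least one $s_j$ uncovered, Duplicator has freedom in $\mathbb{Z}_2^k$ to satisfy all the coordinate conditions simultaneously — this is where the move from $G = \mathbb{Z}_2$ to $G = \mathbb{Z}_2^k$ matters, as flagged in the text. If Spoiler pebbles the \emph{last} missing $s_i$, we transition into the dual-set regime: Duplicator makes $k-1$ copies of the board, and on the $\ell$th copy plays the response that makes that copy good for every $j \neq \ell$ (it cannot be good for $\ell$ — that is the content of the Lemma~\ref{lem: spoiler-win-gen} computation — but it can be good for all other coordinates), producing a dual set. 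If Spoiler splits before this point, the degree is at most... well, there is only one board per side, so a nontrivial split is impossible; I would note this explicitly.

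Next I would handle the regime where the type \emph{is} $\{s_i \mid i \in [k]\}$, so one side is a single board $\lstrucYb$ and the other is a dual set $T = \{\lstruc{X}{\assA_i} \mid i \neq \ell\}$. If Spoiler pebbles some $s_i$ (moving a pebble that currently lies on $s_i$, necessarily replacing it), the type stays the same; Duplicator replies on each board of the dual set and on $\lstrucYb$ to restore the dual-set structure — the crucial observation being that ``good for $j$'' constrains only the $s$-sum, and moving one $s$-pebble still leaves enough freedom to re-establish goodness for all $j \neq \ell$ on copy $\ell$. If Spoiler pebbles some $e_\ell$ on the single-board side: on the copy $\lstruc{X}{\assA_\ell}$ (the one that fails to be good exactly at coordinate $\ell$) there is \emph{no} legal Duplicator response — the $e_\ell$-clause demands $\ell$th coordinate $1$ while $e_\ell[\ell]=0$ demands $0$ — so that board dies; but on every other copy $\lstruc{X}{\assA_i}$, $i \neq \ell$, goodness at $\ell$ holds and Duplicator survives, reducing to one board per side of a non-$s$ type, back to invariant~2. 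If Spoiler pebbles $e_\ell$ on the dual-set side, similar reasoning applies after tracking which board dies. The remaining case, and the one I expect to be the main obstacle, is a \emph{split} of the dual set: Spoiler can partition the dual set into up to $k-1$ parts, Duplicator picks $L$, Spoiler picks $i \in L$ — I must argue that Duplicator's only way to avoid immediate collapse is to keep all $k-1$ boards in play, forcing $|L| = k-1$, hence a split of degree $k-1$, at which point the invariant is allowed to break and the lemma's conclusion (``until Spoiler performs a split of degree $k-1$'') is met. Making this last step airtight — ruling out that Spoiler can usefully split into fewer than $k-1$ nonempty classes, or that Duplicator is forced by the rules to discard boards — is the delicate part, and I would phrase it by showing that any proper sub-collection of the dual set, after the type is already full, is a position from which Spoiler wins quickly (it behaves like the single-board restricted game), so Duplicator gains nothing and the split, to be worth Spoiler's while, must isolate the boards into singletons.

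Throughout, I would lean on the bookkeeping that all boards remain partially isomorphic (invariant~1), which lets me speak of $\pi(i)$ and $\type$ unambiguously and lets me reduce every partial-isomorphism check to Corollary~\ref{cor: sumreduct}. The ``sum of $P$ relative to $\lstrucXa, \lstrucYb$'' language makes the verification of goodness purely a linear-algebra exercise over $\mathbb{Z}_2^k$, and the only genuinely new content beyond the $\QD$ argument is (a) the copying trick that converts the Duplicator choice ``which $\ell$ fails'' into a dual set, and (b) the argument that a dual set cannot be dismantled except by a degree-$(k-1)$ split.
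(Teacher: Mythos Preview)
Your overall architecture---induction on rounds, two regimes, transition via the dual set, and reducing all partial-isomorphism checks to Corollary~\ref{cor: sumreduct}---matches the paper's proof closely. The treatment of pebble moves in both regimes is essentially right (the paper handles the transition out of the dual-set regime by having Duplicator proactively discard all but one good board rather than letting boards ``die'', but this is cosmetic).

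The genuine gap is in your split analysis, and it is exactly the step you flagged as ``delicate''. You propose to argue that Duplicator must keep all $k-1$ boards to avoid collapse, so that $|L|=k-1$ is forced; and you suggest showing that any proper subcollection of the dual set is a quick Spoiler win. Both claims are false, and the correct argument runs in the opposite direction. Any \emph{two} boards $\lstruc{X}{\assA_i},\lstruc{X}{\assA_j}$ with $i\neq j$ from the dual set already suffice: once Spoiler picks a pebble $p$ to move, uncovering $s_\ell$, at least one of $i,j$ differs from $\ell$, and that board is good for $\ell$---so Duplicator can discard the rest and continue exactly as in the single-board regime. This is the content of the paper's Claim~\ref{claim}. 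Consequently, if Spoiler splits the dual set into fewer than $k-1$ parts, pigeonhole gives a part containing at least two boards; Duplicator chooses $L$ to be that single part (degree $1$), and the invariant is preserved. Only when Spoiler splits into $k-1$ singletons does Duplicator take $L=[k-1]$, yielding the degree-$(k-1)$ split after which the invariant is allowed to break. So the point is not that Duplicator is \emph{forced} into a large $L$, but that Spoiler gains nothing from a coarse split and must eventually refine all the way to singletons.
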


\begin{proof}
We prove this by induction on the number of rounds played, noting that the base case is trivial. So suppose (1)-(3) hold at the beginning of round $r$ and that Spoiler does \emph{not} perform a split of degree $k-1$ in this round. Then at least one side only contains one board, we assume it is the $\RHS$: the case where it is the $\LHS$ is symmetric. Let $\lstrucBb$ be the unique $\RHS$ board. We will analyse the situation where Spoiler moves  on the $\LHS$, the case where Spoiler moves on the $\RHS$ is easier. Let $p$ be the pebble moved in round $r$\confORfull{}{ and let $T := \{ \pi(i) \, \mid \, i \in [k] \setminus \{p\} \}$.}; the following crucial claim is almost immediate from the induction hypotheses; see Appendix~\ref{a: dual} for details. 

\begin{claim} \label{claim}
Suppose that \confORfull{$\{\pi(i)  \mid  i \in [k] \setminus \{p\}\} = \{s_i  \mid  i \in [k] \setminus \{\ell\}\}$}{$T =\{s_i  \mid  i \in [k] \setminus \{\ell\}\}$}, for some $\ell \in [k]$. Then when Spoiler pebbles there is some $\LHS$ board $\lstrucAa$ which is good for $\ell$ relative to $\lstrucBb$.
\end{claim}

\confORfull{\subfile{must_split_conf}}{%
\underline{Case 1:} $T = \{s_i \, \mid \, i\in [k] \setminus \{\ell\}\}$ \\
Take the board $\lstrucAa$ given to us by Claim~\ref{claim}. Duplicator does not need any other $\LHS$ board, so they delete them. While this is not formally part of the rules of the game, allowing Duplicator to perform deletions makes it harder for them to win and so does not affect the veracity of our eventual lower bound. Suppose Spoiler plays $p\to x$ on $\lstrucXa$. Note that by relabelling pebbles we may assume w.l.o.g.\! that $p = \ell$; we do this henceforth to reduce notational clutter.

\underline{Case 1a:} $\type(\lstruc{A}{\assA(p \to x)}) = \{s_i \, \mid \,i \in [k]\}$ \\
Recall, that as $\lstrucAa, \lstrucBb$ are good for $p$, the sum over $[k]\setminus \{p\}$ relative to these two boards has $p$th coordinate zero. First assume $p \neq 1$. Then we claim that the sum over $[k]\setminus \{p\}$ is in $ \mathsf{ODD}$. This follows as, by applying Corollary~\ref{cor: sumreduct}, the sum over $\{1\}$ is in $\mathsf{ODD}$ and the sum of every other pebble is in $\mathsf{EVEN}$. Define $y$ to be the element with $\pi(y) =s_p$ and $|y|=|x| + \sum_{i\in [k]\setminus \{p\}} |\assA(i)| + |\assB(i)|$. It follows that $|y|\in \mathsf{ODD}$ iff $|x|\in \mathsf{EVEN}$. Then for $j \in [k] \setminus \{p\}$ set $y_j$ to be the element on $s_{p}$ such that $|y_j|[i] = |y|[i]$ iff $i \neq j$. Then $|y_j| \in \mathsf{EVEN}$ iff $|x| \in \mathsf{EVEN}$, for every $j$ and $|y_j|[p] = |x|[p]$.

 We claim that $\{\lstruc{B}{\assB(p \to y_j)} \, \mid \, j \in [k] \setminus \{p\} \}$ is a dual set relative to $\lstruc{A}{\assA(p \to x)}$.  Firstly, it is clear from the previous discussion that this response satisfies the unary constraints on $s_{p}$. Secondly, recall that every board in the set is good for $p$ relative to $\lstruc{A}{\assA(p \to x)}$. Finally, fix some $i \not \in \{j, p\}$ and consider the $i$th coordinate of the sum of $S:= [k] \setminus \{i\}$ relative to $\lstruc{A}{\assA(p \to x)},\lstruc{B}{\assB(p \to y_j)}$, which, noting that the sum over $\{i\}$ of these two boards has $i$th coordinate zero, equals:
\confORfull{\begin{align*}
&\sum_{t=1}^k |\assA(t)|[i] + |\assB(t)|[i] = \\ &|y|[i] +|x|[i] + \sum_{\substack{t \in [k] \\ t \neq p}} |\assA(t)|[i] + |\assB(t)|[i]  
= |y|[i] + |y|[i]=  0
\end{align*}}
{\begin{align*}
\sum_{t=1}^k |\assA(t)|[i] + |\assB(t)|[i] =|y|[i] + \left(|x|[i] + \sum_{\substack{t \in [k] \\ t \neq p}} |\assA(t)|[i] + |\assB(t)|[i]\right) = |y|[i] + |y|[i]=  0
\end{align*}}

where $+$ denotes addition modulo 2. It follows that $\lstruc{B}{\assB(p \to y_j)}$ is good for $i$ relative to $\lstruc{A}{\assA(p \to x)}$ for all $i \neq j$. Therefore$ \{\lstruc{B}{\assB(p \to y_j)} \, \mid \, j \in [k] \setminus \{p\} \}$ is indeed a dual set

The case where $p =1$ is very similar. The difference is that 
here the sum over $[k] \setminus \{p\}$ is in $\mathsf{EVEN}$, so 
by replacing every $\mathsf{EVEN}$ with $\mathsf{ODD}$ and vice versa in the above argument we also get a dual set in this case. 
This works because $s_1 \in \mathsf{ODD}$ is 
a distinguishing constraint. Therefore the induction hypotheses 
are maintained.

\underline{Case 1b:} $\type(\lstruc{A}{\assA(p \to x)}) \neq \{s_i \, \mid \, i \in [k]\}$ \\
Again take the $\lstrucAa$ which is good for $p$ given to us by Claim~\ref{claim}. If $\pi(x) \neq e_{p}$ then there is no non-unary constraint $C$ with $\dom(C) \subseteq \type(\lstruc{A}{\assA(p \to x)})$. It is therefore easy to see that Duplicator has a response such that (2) holds at the end of every round. Otherwise $\pi(x) = e_{p}$ and Duplicator replies with $x$. The two boards in the resulting position are clearly good since $\lstrucAa$ is good for $p$ relative to $\lstrucBb$.

\underline{Case 2:} $T \neq \{s_i \, \mid \, i \in [k] \setminus \{\ell\}\}$ for any $\ell$ \\
In this case, the types of the boards prior to Spoiler's move cannot have 
been $\{s_i \, \mid \, i\in [k]\}$. So by the induction hypotheses, there is 
a unique $\lstrucAa \in \LHS$. Suppose $\type(\lstruc{A}{\assA(p \to x)}) 
= \{s_i \, \mid \, i \in [k] \setminus \{\ell\} \} \cup \{e_{\ell}\}$ for some $\ell$. Then 
$\pi(x) = s_t$ for some $t \in [k] \setminus \{\ell\}$. We need to show that Duplicator has a reply 
such that the resulting boards are good for $\ell$. Let $\delta$ be the $
\ell$th coordinate of the sum of $[k] \setminus \{p\}$. Then clearly any 
Duplicator reply $y$ with $|y|[\ell]= \delta + x[\ell]$ will be valid. By setting the 
other coordinates in any way such that $x+y$ satisfies the unary 
constraints on $s_t$ we get a valid Duplicator response. Otherwise, there 
is no non-unary constraint $C$ with $\dom(C) \subseteq \type(\lstruc{A}
{\assA(p \to x)})$. It is therefore easy to see that Spoiler can play such 
that there are good boards at the end of the round. The result follows.  %
}
\end{proof}

\confORfull
{\begin{figure} 
\centering
\includegraphics[width=\linewidth]{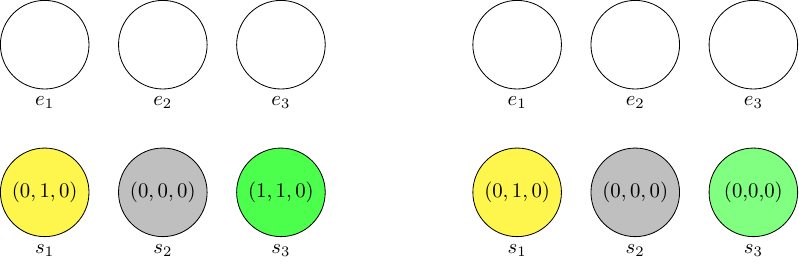}
\caption{If from the position in Figure~\ref{fig: good} Spoiler moves the green pebble to $s_3^{(0,0,0)}$ on the $\RHS$, then Duplicator can reply with $s_3^{1,1,0}$ and $s_3^{0,0,0}$; the $\RHS$ (above) is a dual set relative to the unique $\LHS$ board.}
\Description[An example of a dual set.]{An example of a dual set.}
\label{fig: dual}
\end{figure}}
{
\begin{figure} 
\centering
\includegraphics[width=\textwidth]{figures/fig-dual}
\caption{Suppose that from the position in Figure~\ref{fig: good} Spoiler moves the green pebble to $s_3^{(0,0,0)}$ on the $\RHS$. Then Duplicator can make two copies of the $\RHS$ board and play $s_3^{1,1,0}$ on one and $s_3^{0,0,0}$ on the other. This is then a dual set relative to the $\LHS$ board and is depicted above.}
\label{fig: dual}
\end{figure}}

\subsubsection{Chaining structures together}

To achieve our lower bound we need to chain copies of the above
structures together. So let $F$ be the formula above. Define $V_i := \{x(i) \, \mid \,
x\in V\}$ and let $F_i$ be obtained from $F$ by replacing every $x\in V$ by $x(i)$. $F_i$ is then a set of constraints over $V_i$. 
We begin with $\bigcup_{i=1}^n F_i$ and edit it
to get a new formula $\mathcal{F}$ as follows. For every $i \neq 1$, we remove the clause  
$s_1(i) \in \mathsf{ODD}$. Similarly, for every $\ell$ and every $i \neq n$ we remove the  
clause $e_{\ell}(i)[\ell]=0$. The idea is that we only need 
these clauses for the start and end points respectively. Finally, for every $i \in [n-1],\, \ell \in [k]$, we add the clause $e_{\ell}(i) + s_1(i+1) \in \mathsf{EVEN}$. This provides the `link' between $V_i$ and $V_{i+1}$. 

Let $\strucC := \strucA(\mathcal{F})$ and $\strucD:= \strucB(\mathcal{F})$.  The removal of the $s_{\ell}$-clauses prevents Spoiler from winning the $k$-LB game too quickly on $[\strucC, \strucD]$. Previously these elements had to be mapped to $\underline{0}$, but now they may also be $\underline{0}_{\ell}$. This results in Spoiler having to pass through every $V_i$ to win. When they finally reach $V_n$ the original constraints are added back in, which gives Spoiler the ability to win. The lower bound comes from the fact that for every $V_i$ they pass through they must perform a split of degree $k-1$. 

\confORfull{It is easy to see that there is a sentence of $\klogic$ that $\strucC$ and $\strucD$ disagree on. The idea is that we apply the strategy from Lemma~\ref{lem: dual} on each $V_i$ in turn, see Appendix~\ref{a: chain_ub}.}{}

\begin{lemma} \label{lem: chain_ub}
There exists a sentence $\phi$ such that $\strucC \models \phi$ and $\strucD \models \neg \phi$ such that that the quantifier rank of $\phi$ is linear in $n$.
\end{lemma}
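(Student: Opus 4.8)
The plan is to reduce to the $k$-formula game on $\mathcal{F}$ (see Section~\ref{ss: XOR_gen}) and exhibit a Spoiler strategy that wins after $O(n)$ rounds; since $k$ is fixed this is linear in $n$. As explained in Section~\ref{ss: XOR_gen}, via Corollary~\ref{cor: sumreduct} a Spoiler win of length $r$ in the $k$-formula game on $\mathcal{F}$ yields a Spoiler win of length $r$ in the $k$-pebble game on $[\strucC, \strucD] = [\strucA(\mathcal{F}), \strucB(\mathcal{F})]$, and then Theorem~\ref{thm: pebble} produces a sentence $\phi \in \klogic$ of quantifier depth $r$ with $\strucC \models \phi$ and $\strucD \models \neg\phi$, which is exactly the claim.

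The strategy is a chained version of the one from Lemma~\ref{lem: spoiler-win-gen}: Spoiler `pushes oddness' through the blocks $V_1, \dots, V_n$ in order, spending only $O(k)$ rounds on each. When processing $V_i$, Spoiler first makes sure all of $s_1(i), \dots, s_k(i)$ are pebbled; for $i = 1$ this costs $k$ moves, while for $i > 1$ the pebble on $s_1(i)$ is carried over from the previous block so $k-1$ moves suffice. Exactly as in Lemma~\ref{lem: spoiler-win-gen} --- using that $s_1(i)$ has been forced into $\mathsf{ODD}$ (by the distinguishing clause $s_1(1) \in \mathsf{ODD}$ if $i = 1$, and by the link clause $e_\ell(i-1) + s_1(i) \in \mathsf{EVEN}$ together with $e_\ell(i-1) = \underline{0}_{\ell} \in \mathsf{ODD}$ if $i > 1$), while the other Duplicator responses $g_j$ satisfy $g_j \in \mathsf{EVEN}$ and $g_j[j] = 0$ --- there is some $\ell$ with $\sum_{j \in [k] \setminus \{\ell\}} g_j[\ell] = 1$, and Spoiler fixes such an $\ell$. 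If $i < n$, Spoiler moves a pebble to $e_\ell(i)$: the clause $(e_\ell(i) + \sum_{j \neq \ell} s_j(i))[\ell] = 0$ forces its $\ell$th coordinate to $1$ and the clauses $e_\ell(i)[j] = 0$ ($j \neq \ell$) force the rest to $0$, so $e_\ell(i) = \underline{0}_{\ell} \in \mathsf{ODD}$; then Spoiler moves a further pebble to $s_1(i+1)$, which activates the link clause and forces $s_1(i+1) \in \mathsf{ODD}$, setting up the next block. If $i = n$, Spoiler moves a pebble to $e_\ell(n)$: now the clause $e_\ell(n)[\ell] = 0$ is also present, contradicting the forced value $1$, so Duplicator has no reply and Spoiler wins.

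The point that needs care is the pebble accounting: only $k$ pebbles are available, yet $V_{i+1}$ needs all of its $s$-variables pebbled simultaneously while also needing the value of $s_1(i+1)$ to have been pinned by the link clause, which involves $e_\ell(i)$. The ordering of moves resolves this: Spoiler pebbles $s_1(i+1)$ while the pebble on $e_\ell(i)$ is still down (so the link clause is active and does its work of fixing $s_1(i+1)$), and only afterwards lifts that pebble and reuses it inside $V_{i+1}$. Thus $V_{i+1}$ starts with one pebble on $s_1(i+1)$ and $k-1$ free pebbles --- exactly enough for $s_2(i+1), \dots, s_k(i+1)$ --- and at every intermediate position of the strategy the only active clauses are unary ones (the parity and link clauses of block $i+1$ require its $e$-variables, which are not yet pebbled), so Duplicator is never forced to lose prematurely. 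Summing up, $V_1$ costs $k+2$ rounds, each of $V_2, \dots, V_{n-1}$ costs $k+1$ rounds, and $V_n$ costs $k$ rounds, for a total of $O(n)$ rounds. Verifying this bookkeeping, and checking at each step that exactly the intended clauses are active, is the only laborious part of the argument.
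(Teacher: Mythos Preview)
Your proof is correct and follows essentially the same approach as the paper: reduce to the $k$-formula game on $\mathcal{F}$, and have Spoiler replay the strategy of Lemma~\ref{lem: spoiler-win-gen} on each block $V_i$ in turn, using the link clause $e_\ell(i)+s_1(i+1)\in\mathsf{EVEN}$ (together with the forced value $e_\ell(i)=\underline{0}_\ell\in\mathsf{ODD}$) to propagate the role of the distinguishing clause $s_1\in\mathsf{ODD}$ to the next block. Your treatment is in fact more careful than the paper's about the pebble accounting (which pebble to lift when passing from $V_i$ to $V_{i+1}$, and why Duplicator always has a valid reply at intermediate steps), points the paper leaves implicit.
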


\begin{proof}
It suffices to analyse the $k$-formula game on $\mathcal{F}$. First, Spoiler carries out their winning strategy from Lemma~\ref{lem: dual} on $V_1$. To be precise if Spoiler plays $x$ in round $r$ of their winning strategy on $F$ they play $x(1)$ in round $r$ in the game on $\mathcal{F}$. 

Recall that in the final round of the game on $F$ Spoiler moves pebble $\ell$ to  $e_{\ell}$, for some $\ell \in [k]$, while leaving pebbles on $\{s_i \, \mid \, i \neq \ell\}$. Then Spoiler wins because the sum of the elements played on the pebbles other than $\ell$ has $\ell$th coordinate one, which means any Duplicator response must have $\ell$th coordinate one, contradicting the fact our constraints imply that $e_{\ell} = \underline{0}$. In our game on $\mathcal{F}$, our constraints instead imply that $e_{\ell}(1) \in \{\underline{0}, \underline{0}_{\ell}\}$ and so Duplicator has a unique valid response, $\underline{0}_{\ell}$.

Next Spoiler moves any pebble other than $\ell$ to  $s_1(2)$. Then because of the constraint $e_{\ell}(1) + s_1(2) \in \mathsf{EVEN}$, it follows that Duplicator must respond with some member of $\mathsf{ODD}$. The idea is that this replaces the constraint $s_1 \in \mathsf{ODD}$ that was deleted in the transition from $F$ to $\mathcal{F}$. Therefore, Spoiler can now carry out their winning strategy from Lemma~\ref{lem: dual} on $V_2$. Repeating this process $n$ times we see that Spoiler wins in a linear number of rounds.
\end{proof}

Our aim is now to lift Lemma~\ref{lem: dual} to $\mathcal{F}$. We need some notation. For $x(i) \in V_i$ we define $\pi_i(x(i)) = x$ and $\pi_i(x) = 
\emptyset$ for all $x\not \in V_i$. For an assignment $\assA$ on $
\strucA$ we define $\pi_i(\assA)$ to be the assignment $ p \to \pi_i(\assA(p))$, where we stipulate that $\pi_i(\assA(p)) =\emptyset$ means that $p$ is unassigned. Then for a 
board $\lstrucCa$ we define $\pi_i(\lstrucCa)$ to be $\lstruc{A}{\pi_i(\assA)}$. Call this the \emph{$i$th projection} of $\lstrucCa$. We define the corresponding notions for $\strucD$ boards analogously, but with $\strucB$ playing the role of $\strucA$.  Given a position $P$ in the $k$-LB game starting from position $ [\strucC, \strucD]$ we write $ \pi_i(P)$ to denote the position obtained by replacing every board with its $i$th projection.

With this notation in hand, we can lift the notion of a dual set to our new context. We say that a position $P$ in the $k$-LB game starting from position $[\strucC, \strucD]$ contains a \emph{dual set of degree $i$} if its $i$th projection contains a dual set. We analogously define good boards (for $\ell$) of degree $i$. As we now want to show that Spoiler must perform multiple splits of order $k-1$, we introduce the notion of the \emph{order} of a split. The higher the order the closer Spoiler is to winning when the split takes place.

\begin{definition} \label{def: order_split}
Let $\{\strucX, \strucY\} = \{\strucA, \strucB\}$. Suppose we are in a position $[\classX, \{\lstrucYb\}]$ or $[\{\lstrucYb\}, \classX\}]$ in the 
$k$-LB game starting from position $[\strucC, \strucD]$ where 
$\classX  =\{\lstruc{X}{\assA_i} \, \mid \, i\in [k] \setminus \{\ell\}\}$ is a dual set of order $i$  relative to $
\lstrucYb$. Then if Spoiler performs a split of degree $k-1$ we say that this split has order $i$. 
\end{definition}

Our lower bound is achieved via the following proposition, which is a lifting of Lemma~\ref{lem: dual} to the game on $[\strucC, \strucD]$.

\begin{proposition} \label{prop: lb_technical}
Duplicator may play such that for all $2 \le j \le n$ if after $r$ rounds Spoiler has performed a split of degree $j-1$ but not one of degree $j$  then the following properties hold.
\begin{enumerate}
\item Every board is partially isomorphic.
\item If the type of these boards is not $\{s_i(j) \mid \, i\in [k]\}$, there is exactly one board on each side and these are good of order $j$ relative to each other. Furthermore, whenever $\pi(p) \not \in \{s_i(j) \mid \, i\in [k]\} \cup V_{j-1}$ then the boards agree on the image of $p$.
\item Otherwise, one side contains one board and the other side consists of a dual set of order $j$ relative to that board.
\end{enumerate} 
Moreover, if Spoiler has not yet performed a split of degree one then (1)-(3) also hold with $j=1$.
\end{proposition}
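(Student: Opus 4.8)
The plan is to induct on the number of rounds and carefully lift the argument of Lemma~\ref{lem: dual} to the chained formula $\mathcal F$, keeping track of which copy $V_j$ the `action' is currently in. The key structural observation is that $\mathcal F$ decomposes into the blocks $F_j$ linked only by the clauses $e_{\ell}(j) + s_1(j+1) \in \mathsf{EVEN}$; so once Spoiler has `passed through' $V_1, \dots, V_{j-1}$ — which, as in Lemma~\ref{lem: chain_ub}, forces the $e_{\ell}(j-1)$ to be $\underline 0_{\ell}$ and hence $s_1(j)$ into $\mathsf{ODD}$ — the sub-board living on $V_j$ behaves exactly like the single-block board from Lemma~\ref{lem: dual}, with the distinguishing clause $s_1(j) \in \mathsf{ODD}$ effectively reinstated by the link clause. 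The bookkeeping device is the projection maps $\pi_j$: conditions (1)–(3) of the proposition for the parameter $j$ are precisely conditions (1)–(3) of Lemma~\ref{lem: dual} applied to $\pi_j$ of the current position, plus the extra clause in (2) recording that on pebbles pointing outside $V_{j-1} \cup \{s_i(j)\}$ the two sides have not yet diverged.

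First I would set up the induction: the base case is the start position $[\strucC, \strucD]$, where everything is trivially partially isomorphic, no split has occurred, and (1)–(3) hold vacuously with $j = 1$. For the inductive step, fix the round and let $j$ be the index such that a split of degree $j-1$ but not $j$ has occurred (or $j = 1$ if no split of degree one has occurred yet; note that since all splits Duplicator is forced to respond to have degree $k-1 \ge 2$, the hypothesis `degree $j-1$ but not $j$' only genuinely constrains things at $j-1 = k-1$, i.e.\ $j = k$ — the statement is really a telescoping family, and I would remark that the cases $2 \le j \le k-1$ are vacuous unless a split has already happened, in which case a larger $j$ applies; the real content is carrying the invariant forward each time a split of degree $k-1$ is executed, incrementing the block index). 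Then I case on Spoiler's move exactly as in Lemma~\ref{lem: dual}: (a) a pebble move whose projection to $V_j$ keeps us inside block $j$; (b) a pebble move that reaches $e_{\ell}(j)$ completing block $j$; (c) the `bridge' move onto $s_1(j+1)$; (d) a split of degree $k-1$; (e) moves that touch coordinates outside $V_{j-1} \cup V_j$, which are handled by the `agree on the image of $p$' clause and never create an obstruction because no non-unary clause of $\mathcal F$ has its whole domain inside such a type.

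For cases (a), (b), (d) I would essentially quote the corresponding cases of Lemma~\ref{lem: dual} verbatim, with every $\mathsf{EVEN}/\mathsf{ODD}$ computation carried out via Corollary~\ref{cor: sumreduct} on the $j$th projection — the point being that the projection of a partial isomorphism between $\strucC$-boards and $\strucD$-boards is a partial isomorphism between $\strucA(F_j)$ and $\strucB(F_j)$ boards, since the link clauses only involve $e_{\ell}(j-1)$ and $s_1(j)$ and those coordinates agree on both sides by clause (2) of the invariant. Case (c) is where the chaining is genuinely used: when Spoiler moves onto $s_1(j+1)$, the clause $e_{\ell}(j) + s_1(j+1) \in \mathsf{EVEN}$ together with $e_{\ell}(j) = \underline 0_{\ell}$ (forced at the end of block $j$) pins Duplicator's response on $s_1(j+1)$ to $\mathsf{ODD}$, which is exactly the hypothesis needed to restart the Lemma~\ref{lem: dual} argument with block index $j+1$; I would verify that after this move the invariant holds with $j$ replaced by $j+1$ and with no new split, so the `degree $j-1$ but not $j$' bookkeeping is undisturbed. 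Finally, in case (d), executing a split of degree $k-1$ out of a dual set of order $j$ advances the split-degree parameter, and the induction hypothesis for $j+1$ (in state (2), with $\pi(p) = s_i(j)$ pebbled but the rest agreeing) is re-established by choosing, at each of the $k-1$ children, the good board supplied by the analogue of Claim~\ref{claim}.

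The main obstacle I expect is case (c) and the interface between blocks — specifically, checking that the invariant clause `the boards agree on the image of $p$ whenever $\pi(p) \notin \{s_i(j)\} \cup V_{j-1}$' is both \emph{maintained} (when Spoiler moves a pebble off $V_{j-1}$, or re-pebbles inside $V_j$, one must confirm no partial-isomorphism violation forces the sides apart) and \emph{strong enough} to guarantee that the $j$th projection of the position is a legitimate position of the single-block game, so that Lemma~\ref{lem: dual} applies as a black box. A secondary nuisance is the degenerate role of the parameter $j$ for $j < k$: I would need a short paragraph up front clarifying that `a split of degree $j-1$ but not of degree $j$' is only a live hypothesis when $j - 1$ equals the degree $k-1$ of the forced splits, so the proposition is really the single statement `after the $m$th split of degree $k-1$, the invariant holds with block index $m+1$', dressed up in telescoping form; once that is said, the induction is clean. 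Everything else is a routine transcription of the Lemma~\ref{lem: dual} case analysis through the projection maps, using Corollary~\ref{cor: sumreduct} for the coordinate arithmetic.
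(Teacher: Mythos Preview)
Your overall plan---lifting Lemma~\ref{lem: dual} block by block through the projections $\pi_j$---is the right one, and is what the paper does. But there is a genuine gap in your case (e), the ``Spoiler plays outside $V_{j-1}\cup V_j$'' case, and it is precisely the obstacle the paper singles out as the main difficulty.

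Here is the issue. Immediately after a split at block $j-1$ the two sides \emph{disagree} on the pebbles covering $\{s_i(j-1)\}$: that is what ``not good for some $\ell$'' means. Now suppose Spoiler moves a pebble to $e_\ell(j-2)$. Your proposal is that Duplicator copies, so that the boards agree on this new pebble. But the link clause $e_\ell(j-2)+s_1(j-1)\in\mathsf{EVEN}$ then forces the sum over the $s_1(j-1)$-pebble to lie in $\mathsf{EVEN}$, whereas (by the way the oddness propagates through the chain) that sum is in $\mathsf{ODD}$. So copying is simply not a valid response. More generally, once Spoiler starts walking pebbles backward into $V_{j-2},V_{j-3},\dots$, your claim that ``no non-unary clause of $\mathcal F$ has its whole domain inside such a type'' fails already at the first step (the link clause has one foot in $V_{j-2}$ and one in $V_{j-1}$), and fails worse as more pebbles migrate back.

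The paper's fix is Lemma~\ref{lem: isom}: once Spoiler commits to moving the pebble off $s_\ell(j-1)$, Duplicator computes an \emph{isomorphism} $\sigma$ between the restrictions $\strucC_{j-1}$ and $\strucD_{j-1}$ that respects the remaining pebbles and additionally satisfies $\hat\sigma(e_\ell(j-2))\in\mathsf{ODD}$. Duplicator then plays via $\sigma$ (not via copying) on any move into $\bigcup_{i<j}V_i$. This is what guarantees no clause in the earlier blocks---including the link clauses---is ever violated, no matter how far back Spoiler retreats. Your proposal needs this lemma (or something equivalent) and does not have it; the ``agree on the image of $p$'' clause in invariant (2) is not maintainable by copying once Spoiler goes backwards past the interface.

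A smaller point: your paragraph about the parameter $j$ being degenerate for $j<k$ rests on reading ``degree'' literally. In the proposition ``degree'' is really ``order'' in the sense of Definition~\ref{def: order_split} (the paper is sloppy here), so $j$ ranges over block indices $1,\dots,n$, and each forced split has degree $k-1$ but some order $i\in[n]$. Under the intended reading the block index advances when a split of the next \emph{order} occurs, not on every split; in particular Spoiler may perform several splits at the same order if they retreat and come forward again, so your reformulation ``after the $m$th split the invariant holds with block index $m+1$'' is not equivalent to what is being claimed.
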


A crucial point is to ensure that Spoiler never wins by `going backwards'. More precisely we shown that once a split of degree $j$ has taken place it never helps Spoiler to play on $\bigcup_{i=1}^{j-1} V_i$. The following notion will be useful. We say the $i$th restriction of $\mathcal{F}$, which we denote by $\mathcal{F}$, consists exactly of those clauses involving only variables in $\bigcup_{j=1}^i V_i$. We then define the $i$th restriction of $\strucC$, to be $\strucC_i := \strucA(\mathcal{F}_i)$. Similarly, we define $\strucD_i:= \strucB(\mathcal{F}_i).$ We have the following technical lemma.

\begin{lemma} \label{lem: isom}
Let $\lstrucCa, \lstrucDb$ be partially isomorphic pebbled structures such that for some $j \in [n]$, $\ell \in[k]$ their type is $\{s_i(j) \, \mid \, i\in [k] \setminus \{\ell\} \}$. Moreover, suppose that $\lstrucCa$ is \emph{not} good relative to $\lstrucDb$. Then there is an isomorphism $\sigma$ from $\strucC_j$ to $\strucD_j$ such that for every $p\in \dom(\assA)$ $\sigma(\assA(p)) = \assB(p)$ and $\hat{\sigma}(e_{\ell}(j-1)) \in \mathsf{ODD}$ for all $\ell \in [k]$. We say such an isomorphism \emph{respects} $\lstrucCa,\, \lstrucDb$ up to order $j$.
\end{lemma}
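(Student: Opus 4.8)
\noindent\emph{Proof plan.} The plan is to use Lemma~\ref{lem: passignment} to turn the statement into a purely combinatorial one about assignments to $\mathcal{F}_j$, and then to build the required assignment by hand, one layer $V_i$ at a time, working downwards from the top layer $V_j$. Recall from Lemma~\ref{lem: passignment} that satisfying assignments $f$ of a set of clauses $F'$ correspond to isomorphisms $\strucA(F') \to \strucB(F')$ via $f \mapsto \hat{f}_{A(F')}$, with $\widehat{\hat{f}_{A(F')}} = f$, and that $\hat{f}_{A(F')}$ sends $\assA(p) \mapsto \assB(p)$ exactly when $f(\pi(\assA(p))) = |\assA(p)| + |\assB(p)|$ (using that the canonical partial map between $\lstrucCa$ and $\lstrucDb$ respects colours). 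Since the pebbled colours lie in $V_j \subseteq V(\mathcal{F}_j)$, it therefore suffices to produce a satisfying assignment $f$ of $\mathcal{F}_j$ with $f(s_i(j)) = c_i$ for every $i \in [k] \setminus \{\ell\}$, where $c_i := |\assA(p_i)| + |\assB(p_i)|$ and $p_i$ is the pebble with $\pi(p_i) = s_i(j)$, and with $f(e_m(j-1)) = \underline{0}_m$ for every $m \in [k]$ --- the latter being vacuous for $j = 1$, and in general equivalent to ``$\hat\sigma(e_m(j-1)) \in \mathsf{ODD}$'' since the clauses $e_m(j-1)[t] = 0$ ($t \ne m$) force $f(e_m(j-1)) \in \{\underline{0}, \underline{0}_m\}$. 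Throughout I would take $j < n$, the only case needed for Proposition~\ref{prop: lb_technical}.

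On the top layer $V_j$ I would set $f(s_i(j)) = c_i$ for $i \ne \ell$: since $\lstrucCa$ and $\lstrucDb$ are partially isomorphic, Corollary~\ref{cor: sumreduct} applied to the clauses $s_m(j) \in \mathsf{EVEN}$ ($m \ne 1$) and $s_m(j)[m] = 0$ shows these values respect all layer-$j$ clauses with domain contained in $\{s_i(j) : i \ne \ell\}$. Choose $f(s_\ell(j))$ to respect its own unary clauses (e.g.\ $\underline{0}$, or any odd element with vanishing first coordinate if $\ell = 1$); the $e_m(j)$ are then forced by the clauses $(e_m(j) + \sum_{t \ne m} s_t(j))[m] = 0$ and $e_m(j)[t] = 0$ ($t \ne m$). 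As $j < n$ the clause $e_\ell(j)[\ell] = 0$ is absent, so the hypothesis that $\lstrucCa$ is \emph{not} good relative to $\lstrucDb$ --- i.e.\ that $(\sum_{i \ne \ell} c_i)[\ell] = 1$ --- causes no clash, merely forcing $f(e_\ell(j)) = \underline{0}_\ell$. On each lower layer $V_i$, $i = j-1, \dots, 1$, set $f(e_m(i)) = \underline{0}_m$ for all $m$; the clauses $(e_m(i) + \sum_{t \ne m} s_t(i))[m] = 0$ then demand that the layer-$i$ variables satisfy $(\sum_{t \ne m} s_t(i))[m] = 1$ for every $m$, alongside $s_t(i) \in \mathsf{EVEN}$ ($t \ne 1$), $s_t(i)[t] = 0$. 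Counting $1$s by rows and by columns in the associated $k \times k$ matrix shows such a configuration exists precisely when $k$ is odd (as assumed) and necessarily has $s_1(i) \in \mathsf{ODD}$; this is the same configuration Duplicator produces when Spoiler pushes through $V_i$ in the proof of Lemma~\ref{lem: chain_ub}.

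It then remains to verify the link clauses $e_m(i) + s_1(i+1) \in \mathsf{EVEN}$, $i \in [j-1]$. As $f(e_m(i)) = \underline{0}_m \in \mathsf{ODD}$, Lemma~\ref{lem: index2} says each such clause requires exactly $f(s_1(i+1)) \in \mathsf{ODD}$. For $i + 1 \le j - 1$ this holds by the previous paragraph, as does the bottom distinguishing clause $s_1(1) \in \mathsf{ODD}$; for $i + 1 = j$, either $\ell = 1$ and $s_1(j)$ is unpebbled so we chose $f(s_1(j)) \in \mathsf{ODD}$, or $\ell \ne 1$ and we need $c_1 \in \mathsf{ODD}$, which is part of the invariant of Proposition~\ref{prop: lb_technical}: once a split of degree $j - 1$ has occurred the $e_m(j-1)$ have been driven into $\mathsf{ODD}$, so the move onto $s_1(j)$ was answered in $\mathsf{ODD}$ (this being the role the link clause plays in place of the deleted clause $s_1(j) \in \mathsf{ODD}$). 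Every remaining unary clause of $\mathcal{F}_j$ holds by construction, so $f$ satisfies $\mathcal{F}_j$; then $\sigma := \hat{f}_{A(\mathcal{F}_j)} : \strucC_j \to \strucD_j$ has $\hat\sigma = f$, hence $\sigma(\assA(p)) = \assB(p)$ for all $p \in \dom(\assA)$ and $\hat\sigma(e_m(j-1)) = \underline{0}_m \in \mathsf{ODD}$ for all $m$.

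The hardest part, I expect, is exactly this parity bookkeeping along the chain: one must check both that a single layer admits the ``everywhere not good'' configuration when $k$ is odd (treating even $k$ via the variant of Appendix~\ref{a: even}) and that the induced $\mathsf{ODD}$/$\mathsf{EVEN}$ constraints propagate consistently through every link clause down to the bottom distinguishing clause while remaining compatible at the top with the prescribed values $c_i$. The only point that appeals to the game dynamics rather than to the static constraint structure is $c_1 \in \mathsf{ODD}$ in the case $\ell \ne 1$, so the lemma should be understood as applying to positions reachable under the Duplicator strategy of Proposition~\ref{prop: lb_technical}; everything else is a routine, if slightly fiddly, computation with the index-two subgroup $\mathsf{EVEN} \leqslant \mathbb{Z}_2^k$ and Lemma~\ref{lem: index2}.
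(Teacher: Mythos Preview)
Your proposal is correct and follows essentially the paper's route: both reduce via Lemma~\ref{lem: passignment} to constructing a satisfying assignment of $\mathcal{F}_j$ layer by layer with $e_m(i) = \underline{0}_m$ throughout, using oddness of $k$ for the parity bookkeeping; the paper organises this as an induction on $j$ and fixes the specific value $\gamma^j(s_\ell(j)) = \underline{1} + \sum_{i \ne \ell} c_i$ (this is precisely where the ``not good'' hypothesis is consumed --- your free choice of $s_\ell(j)$ sidesteps it). You are also right that, for $\ell \ne 1$ and $j>1$, the fact $c_1 \in \mathsf{ODD}$ does not follow from the stated hypotheses alone: the paper's appeal to ``partial isomorphism'' at that step is not a self-contained justification, and the fact is really supplied by the surrounding Duplicator strategy of Proposition~\ref{prop: lb_technical} (which is also why only $j<n$ is ever needed).
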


\begin{proof}
We prove this by induction on $j$. For the base case $j=1$ it is enough---by Lemma~\ref{lem: passignment}---to give a satisfying assignment $\gamma^1 : V_1 \to G$ for $\mathcal{F}_1$, such that for $i \in \dom(\assA)$, if $\pi(i) = x$ then $\gamma^1(x) = |\assA(i)| + |\assB(i)|$. 

Since by assumption the type of both boards is $\{s_i(1) \, \mid \, i\in [k] \setminus \{\ell\} \}$ the values of $\gamma^1$ at these variables are fixed. We then define $\gamma^1(s_{\ell}(1)) := \underline{1} + \sum_{i \in [k] \setminus \{\ell\}} \gamma^1(s_i(1))$. We set $\gamma^1(e_i(1)) = \underline{0}_i$ for every $i\in [k]$ and claim this is a satisfying assignment. 

To see this first observe that since $\lstrucCa$ is not good for $\ell$ relative to $
\lstrucDb$ it follows that $\sum_{i \in [k] \setminus 
\{\ell\}} \gamma^1(s_i(1))$ has $\ell$th coordinate one, so $\gamma^1(s_{\ell}(1))[\ell] =0$. Since $
\lstrucCa$ and $\lstrucDb$ are partially isomorphic we also 
know, via Corollary~\ref{cor: sumreduct}, that $\gamma^1(s_i(1)) \in \mathsf{EVEN}$ for all $i \neq 1, \ell$ 
and $\gamma^1(s_1(1)) \in \mathsf{ODD}$, if $\ell \neq 1$. Therefore, in the case $\ell \neq 
1$ we know that $\sum_{i \in [k] \setminus \{\ell\}} 
\gamma^1(s_i(1)) \in \mathsf{ODD}$ and so by definition $\gamma^1(s_{\ell}(1)) \in \mathsf{EVEN}$. Here we use the fact that $k$ is odd. Similarly, if $\ell=1$ we have that $\gamma^1(s_{1}(1)) \in \mathsf{ODD}$, since $\sum_{i \in [k] \setminus \{1\}} \gamma^1(s_i(1)) \in \mathsf{EVEN}$ and as $k$ is odd. It follows that $\gamma^1$ satisfies every unary constraint in $\mathcal{F}_1$. Moreover, for every $t \in [k]$ we obtain
\begin{align*}
&\gamma^1(e_t^1(1))[t] + \sum_{\substack{i \in [k] \\ i \neq t}} \gamma^1(s_i^1(1))[t] = \\
 &1 + \sum_{i \in [k]} \gamma^1(s_i(1))[t] =  \gamma^1(s_{\ell}(1))[t] +  1 +  \sum_{\substack{i \in [k] \\ i \neq \ell}} \gamma^1(s_i(1))[t] = 0.
\end{align*}

Therefore $\gamma^1$ satisfies each clause $(e_{t}(1) +\sum_{i \in [k]\setminus \{t\}} s_i)[t] =0$ and therefore $\mathcal{F}_1$. 

For the induction step we suppose we have some $\gamma^{j-1}: V_{j-1} \to G$ which satisfies $\mathcal{F}_{j-1}$ and such that 
$\gamma(e_{i}(j-1)) \in \mathsf{ODD}$ for all $i$. Note, that this second invariant is satisfied by the base case. We want to define $\gamma^j$, the argument is almost the same as the base case. We set $\gamma^j(x) = \gamma^{j-1}(x)$ for all $x \in V_{j-1}$. Now by assumption the type of our boards is $\{s_i(j) \, \mid \, i\in [k] \setminus \{\ell\} \}$ and so the value of $\gamma^j$ is fixed on these variables. We then define $\gamma^1(s_{\ell}(j)):= \underline{1}+ \sum_{i \in [k] \setminus \{\ell\}} \gamma(s_i(j))$ and set $\gamma^j(e_i(j)) = \underline{0}_i$ for every $i\in [k]$. We claim this is a satisfying assignment. The check is the same as in the base case except for two things. Firstly, we do not have a constraint $s_1(j) \in \mathsf{ODD}$. However if $\ell \neq 1$ we can infer that $\gamma^1(s_1(j)) \in \mathsf{ODD}$ as:
\begin{enumerate}
\item $\gamma^{j}(e_t(j-1))=\gamma^{j-1}(e_t(j-1)) \in \mathsf{ODD}$,
\item we have a constraint $e_t(j-1) + s_1(j) \in \mathsf{EVEN}$ and 
\item  $\lstrucXa$ and $\lstrucYb$ are partially isomorphic.
\end{enumerate} 
Moreover, if $\ell=1$ since we still have all constraints of the form $s_i(j) \in \mathsf{EVEN}$, for $i\neq 1$, by the same argument as the base case $\gamma^{j}(s_1(j)) \in \mathsf{ODD}$. Note, that this implies that $\gamma^j$ satisfies the constraint $e_t(j-1) + s_1(j) \in \mathsf{EVEN}$ for every $t$. Secondly, we now have that the induction hypothesis guarantees that all constraints involving only elements from $V_{j-1}$ are satisfied by $\gamma^j$. The result follows
\end{proof}

Now we are in a position to prove Proposition~\ref{prop: lb_technical}.

\begin{proof}[Proof of Proposition~\ref{prop: lb_technical}]
We proceed by induction. For the base case we need to give Duplicator's strategy before a split of order one occurs. This will be to copy Spoiler's move outside of $V_1$ and on $V_1$ to use the strategy from Lemma~\ref{lem: dual}.

More formally we assume inductively that, if Spoiler has not performed a split of order one, at the beginning of round $r$, (1)-(3) from the statement of the proposition hold for $j=1$. Note these properties hold trivially at the beginning of the first round. So suppose (1)-(3) hold at the beginning of round $r$ and that Spoiler moves pebble $p$ in this round. As in the proof of Lemma~\ref{lem: dual}, it is easy to see that if we remove pebble $p$ from every board, then there exists a pair of boards lying on opposite sides that are good relative to each other. Duplicator deletes every other board. If Spoiler plays $x\not \in V_1$, then Duplicator replies with $x$ on the only remaining board on the other side. Otherwise, before replying, Duplicator looks at the first projection of the resulting position. They then look at what response(s) they would make in the projected position according to the strategy set out in Lemma~\ref{lem: dual}. If they would make the response $p \to x$ on $\pi_1(\lstrucYb)$ then they play $p \to x(1)$ on $\lstrucYb$. 

It is easy to see that if Duplicator plays in this way at the end of each round no boards are deleted. For by Lemma~\ref{lem: dual}, no constraints involving only elements of $V_1$ are violated and since Duplicator copies Spoiler moves outside of $V_1$ all other constraints are respected. Note that in particular, Spoiler always matches moves played on each $e_{\ell}(1)$ and so (2) is maintained and the constraints $e_{\ell}(1) + s_1(2) \in \mathsf{EVEN}$ are respected.

The induction step is similar but there is one extra thing to take care of: we need to ensure Spoiler cannot win by going backwards. For this we use Lemma~\ref{lem: isom}.

In detail suppose that in the last round Spoiler performed a split of order $j-1$. Then before the split, there was a dual set of order $j$ which was separated in the split. Duplicator is happy to continue from any of the resulting positions and so the split has degree $k-1$. Therefore, the current position is $[\{\lstrucCa\}, \{\lstrucDb\}]$, where $\type(\lstrucCa) = \{s_i(j-1) \, \mid \, i \in [k]\}$ and where there is a unique element $\ell \in [k]$ such that $\pi_{j-1}(\lstrucCa)$ is \emph{not} good for $\ell$ relative to $\pi_{j-1}(\lstrucDb)$. Suppose Spoiler moves some pebble $p$ that does not cover $s_{\ell}(j-1)$. Then the position with pebble $p$ removed consists of two boards which are good of order $j-1$. Therefore, Duplicator can revert to their strategy from before the split and Spoiler will have to perform another split of order $j-1$ in order to win. 

Otherwise, Spoiler moves the pebble $q$ covering $s_{\ell}(j-1)$. Let $\lstrucCc$ (resp. $\lstrucDd$) be the board obtained from $\lstrucCa$ (resp. $\lstrucDb$) by removing pebble $q$. Duplicator's strategy from this point is then as follows. First, they find an isomorphism $\sigma$ respecting $\lstrucCc, \lstrucDd$ up to order $j-1$, the existence of which is guaranteed by Lemma~\ref{lem: isom}. If Spoiler plays on $V_i$ for $i>j$, Duplicator will match the move. On $V_j$ they will use the strategy from Lemma~\ref{lem: dual}. On $V_i$, for $i<j$ they use $\sigma$. To be precise if Spoiler plays $g$ on $x$ for $x\in V_i$, then Duplicator plays $\sigma(g)$ on $x$. Then we show inductively that, until Spoiler performs a split of order $j$, (1)-(3) from the statement of the proposition hold.

Note that these properties hold immediately after the split of order $j-1$. So suppose Spoiler plays $x \in V_i$, on some board. The case where $i>j$, is the same as the base case. The case where $j=i$ is also very similar. The only difference occurs when Spoiler plays on $s_1(j)$ and some $e_{\ell}(j-1)$ is already pebbled. But since by Lemma~\ref{lem: isom} we know that if a pebble $p$ projects to $e_{\ell}(j-1)$, then the sum over $p$, relative to the only two boards, is in $\mathsf{ODD}$; this ensures everything else goes through as before. The case $i<j$ is easy noting that since $\sigma$ is an isomorphism between $\strucC_{j-1}$ and $\strucD_{j-1}$ no constraint involving only elements from $V_{j-1}$ is violated.
\end{proof}

Our result now follows

\begin{corollary}\confORfull{[Proposition~\ref{prop: lb} restated]}{\label{prop: lb}}
Duplicator can play so that they are guaranteed to get more than $(k-1)^n$ points in the $k$-LB starting from position $[\{\strucC\}, \{\strucD\}]$.
\end{corollary}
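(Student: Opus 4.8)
The plan is to obtain this as a short bookkeeping consequence of Proposition~\ref{prop: lb_technical}; all of the real combinatorial work has already been done there. Duplicator simply commits to the strategy furnished by that proposition. We may assume that Spoiler eventually wins, since otherwise Duplicator survives forever and collects an unbounded number of points. So suppose Spoiler wins in round $r^\ast$ against this Duplicator strategy.

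The first step is to argue that, before winning, Spoiler is forced to perform $n$ distinct splits of degree $k-1$, one ``of order $j$'' for each $j\in\{1,\dots,n\}$ in the sense of Definition~\ref{def: order_split} (these are what Proposition~\ref{prop: lb_technical} refers to as splits ``of degree $j$''). Indeed, as long as Spoiler has not yet performed the order-$n$ split, the orders of the degree-$(k-1)$ splits Spoiler has made so far form an initial segment $\{1,\dots,m\}$ with $m<n$: this is built into the inductive structure of the proof of Proposition~\ref{prop: lb_technical}, since the order-$j$ dual set only materialises after the order-$(j-1)$ split, while Lemma~\ref{lem: isom} prevents Spoiler from making fresh progress by returning to $V_1,\dots,V_{j-1}$. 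Hence one of the conditional clauses of Proposition~\ref{prop: lb_technical} applies at that moment, and its invariant~(1) says that every surviving board is partially isomorphic to a board on the other side; in particular nothing has been deleted, the position is not $[\emptyset,\emptyset]$, and Spoiler has not won. Therefore a Spoiler win must be preceded by a split of order $n$ and, by the same inductive structure, by splits of every order $1,2,\dots,n-1$ as well; each of these $n$ splits has degree $k-1$.

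The scoring calculation is then immediate. All $n$ of these degree-$(k-1)$ splits take place in rounds $\le r^\ast$, every split has degree at least $1$, and $\points(r^\ast)$ is by definition the product of the degrees of all splits made by round $r^\ast$, so $\points(r^\ast)\ge(k-1)^n$; note also that $\points$ is non-decreasing in the round number. Moreover $r^\ast\ge 2$: Spoiler must pebble all of $s_1(1),\dots,s_k(1)$ before any dual set---hence before any split of degree exceeding one---can arise, which already consumes at least $k\ge 3$ rounds. Consequently Duplicator's total score satisfies
\[
\sum_{s=1}^{r^\ast}\points(s)\ \ge\ \points(r^\ast)+\sum_{s=1}^{r^\ast-1}\points(s)\ \ge\ (k-1)^n+(r^\ast-1)\ >\ (k-1)^n,
\]
as required.

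The arithmetic above is routine; the only point that genuinely needs care---and the main obstacle in writing this out---is the claim that Spoiler's degree-$(k-1)$ splits are forced to occur in the order $1,2,\dots,n$, so that at every moment strictly before the final win we lie within the scope of one of the conditional clauses of Proposition~\ref{prop: lb_technical}. As indicated, this is essentially asserted by that proposition, but it deserves to be stated explicitly rather than left implicit. Finally, combining the bound obtained here with the contrapositive of Lemma~\ref{lem: lbgame}, and observing that $\strucC$ and $\strucD$ have $O(n)$ elements for fixed $k$, yields Theorem~\ref{thm: genlb}.
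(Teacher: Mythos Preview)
Your proof is correct and follows essentially the same approach as the paper: deduce from Proposition~\ref{prop: lb_technical} that Spoiler must perform a split of each order $1,\dots,n$ (each of degree $k-1$) before winning, so that $\points(r^\ast)\ge (k-1)^n$ in the final round, and then use the existence of at least one earlier round to get the strict inequality. You are more explicit than the paper about the bookkeeping (handling the non-terminating case, and justifying the strict ``more than''), and you rightly flag the paper's terminological slip in Proposition~\ref{prop: lb_technical}, where ``split of degree $j$'' should read ``split of order $j$'' in the sense of Definition~\ref{def: order_split}.
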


\begin{proof}
By Proposition~\ref{prop: lb_technical} in order for Spoiler to win from position $[\{\strucC\}, \{\strucD\}]$ they must perform a split of degree $n$. Further, the proposition shows that if Spoiler performs a split of degree $n$ they have already performed a split of degree $j$ for all $j<n$. After the split of degree $n$ Spoiler requires one more round to win and Duplicator receives $(k-1)^n$ points for this round, so the result follows.
\end{proof}
Note that the only time we used the oddness of $k$ was in the proof of Lemma~\ref{lem: isom}. It is in fact not difficult (although a little combinatorially involved) to prove Proposition~\ref{prop: lb} (for all $k \ge 3$) without invoking Lemma~\ref{lem: isom}. We just need to explicitly give a Duplicator strategy for the case where Spoiler tries to `go backwards.' Alternatively, we can use a different formula for the even case and then the lower bound follows by almost the same argument as the above, see Appendix~\ref{a: even}. 

From this Theorem~\ref{thm: genlb} is immediate via Lemma~\ref{lem: lbgame}\confORfull{.}{, noting that $|A|= |B|= 2k \cdot n$ and that we can `pad out' either structure by adding a constant number of isolated elements without affecting the lower bound.}

\subsection{\texorpdfstring{$\mathcal{L}^{k+1}$}{L(k+1)} is exponentially more succinct than \texorpdfstring{$\klogic$}{L(k+1)}}

As a consequence of the above, it is relatively easy to prove Theorem~\ref{thm: suc}. The argument goes as follows. First, we show that there is a sentence $\phi$ using $k+1$ variables and of size $O(C)$ separating $\strucC$ and $\strucD$. Then by Theorem~\ref{thm: genlb} we know that any equivalent sentence to $\phi$ using $k$ variables must be exponentially longer than $\phi$. It therefore suffices to show that we can rewrite $\phi$ to an equivalent sentence using one fewer variable; this can be done by a simple recursive procedure. 

Fix an odd $k\ge 3$: the even case is similar using the structures defined in Appendix~\ref{a: even}. Let $\strucA, \strucB, \strucC, \strucD$ be \confORfull{the structures defined in Section~\ref{s: lb} from a formula over $G = (\mathbb{Z}_2^{k},+)$.}{as above.}  

We first give a sentence of $\mathcal{L}^{k+1}$ which separates $\strucA$ from $\strucB$. We will often want to say that a variable in our formula plays the role of some $v \in V := \{s_1, \dots, s_k, e_1, \dots, e_k\}$. To this end for $ v\in V$, define $U_v$ to be the set of unary relations associated with $v$, i.e. $U_v$ contains every relation symbol $R_C$ where $C = (v, H)$ for some $H$ as well as the colour associated with $v$. We define $\good_v(x) := \bigwedge_{R \in U_v} R(x)$. Let 
$R_i$ be the relation corresponding to the constraint $(e_{\ell} +\sum_{i \in [k]\setminus \{\ell\}} s_i)[\ell] =0$. Then consider the following sentence.
\begin{align*}
\psi \equiv& \exists x_1 \dots \exists x_k \forall x_{k+1} 
\Big[ \bigwedge_{i=1}^k \good_{s_i}(x_i) \wedge \\ 
 &\bigwedge_{i=1}^k  \left( \good_{e_i}(x_{k+1}) \rightarrow 
R_i(x_1, x_2, \dots, x_{i-1}, x_{i+1}, x_{i+2}, \dots, x_{k+1}) \right) \Big]
\end{align*}
The sentence corresponds to the following Spoiler strategy in the $(k+1)$-pebble game. In the first $k$ rounds they place pebble $i$ on $s_i^{\underline{0}}$ for every $i\in [k]$. This corresponds to the first $k$ existential quantifiers. Then Duplicator has to make `good' responses to these moves. For example, in the first round they must select an element of the form $s_1^g$ such that $g \in \mathsf{ODD}$ and $g[1] =0$. Suppose the Duplicator moves were to play $s_i^{g_i}$ in round $i$. We have already seen in Section~\ref{s: lb} that there must be some $i \in [k]$ such that $\sum_{j \in [k] \setminus \{i\}} g_j[i] =1$. Since the next quantifier is $\forall x_{k+1}$ in round $k+1$ Spoiler moves pebble $k+1$ on the $\RHS$. They place this pebble on $e_i^{\underline{0}}$;  the point is that Spoiler chooses a witness to the fact that $\strucB$ does \emph{not} model $\psi$. Then the only response Duplicator can make to respect the unary relation on $e_i$ is to also play $e_j^{\underline{0}}$. But then the $\LHS$ board models $R_i$ but the $\RHS$ board does not. Therefore, Spoiler wins after $k+1$ rounds and so $\strucA$ and $\strucB$ disagree on $\psi$.

Now let us rewrite this as a sentence using only $k$ variables. This is easy just using the distribution of universal quantification over disjunction and by renaming variables, concretely we obtain the formula $f(\psi) \equiv$
\begin{align*}
& \exists x_1 \dots \exists x_k 
\Big[ \bigwedge_{i=1}^k  \good_{s_i}(x_i) \wedge \\  
 &\bigwedge_{i=1}^k \forall x_i \left(\good_{e_i}(x_{i}) \rightarrow R_i(x_1, x_2 \dots, x_{i-1}, x_{i+1}, x_{i+2}, \dots, x_k, x_{i}) \right) \Big]
\end{align*}
The same idea works for $\strucC$ and $\strucD$; we just have to do things recursively. Here we use the $\good_v$ notation in the same way 
and use $R_i^j$ to correspond to the constraint $(e_{\ell}(j) +\sum_{i 
\in [k]\setminus \{\ell\}} s_i(j)))[\ell] =0$. We now also have 
constraints of the form $e_{\ell}(j) + s_1(j+1) \in \mathsf{EVEN}$, 
for which we use the relation symbol $E_{\ell}^j$. Define the formula 
$E^j(x,y) \equiv \bigvee_{\ell \in [k]} E_{\ell}^j (x,y).$  Now for $1\le j\le n-1$, set $
\phi^j(x_1)$ to be the formula:
\begin{align*}
&\exists x_2 \dots \exists x_k \forall x_{k+1} \Big[ \bigwedge_{i=2}^k  \good_{s_i(j)}(x_i) \wedge \\
 &\left( \bigvee_{i=1}^k  \good_{e_i(j)}(x_{k+1}) \wedge
R_i^j(x_1, x_2, \dots, x_{i-1}, x_{i+1}, x_{i+2}, \dots, x_{k+1}) \right) 
\rightarrow  \\
&\exists x_1 \Big(( \good_{s_1(j+1)}(x_1) \wedge E^j(x_{k+1}, x_1)  \wedge 
\phi^{j+1}(x_1)\Big)\Big]
\end{align*}

Let $\phi^n(x_1)$ be $\psi$ without the first existential quantification and with every mention of some $v\in V$ replaced by $v(n)$. Define $\phi \equiv \exists x_1 (\good_{s_1(1)}(x_1) \wedge \phi^1(x_1))$. 

Similarly, to the above, it is easy to verify that $\strucC$ and $\strucD$ disagree on this sentence: to show this we outline the corresponding $(k+1)$-pebble game strategy. In the first $k$ rounds, Spoiler plays analogously to the above strategy on $\strucA, \strucB$: they play $s_i(1)^{\underline{0}}$ in round $i$ on the $\LHS$. Suppose that Duplicator responds with $s_i(1)^{g_i}$ in round $i$. Again, there must be some $i \in [k]$ such that $\sum_{j \in [k] \setminus \{i\}} g_j[i] =1$. In round $k+1$ Spoiler then plays on the $\RHS$ and chooses $\underline{0}_i$ on $e_i(1)$. Therefore the $\RHS$ board models 
\[
\good_{e_i(1)}(x_{k+1}) \wedge
R_i^1(x_1, x_2, \dots, x_{i-1}, x_{i+1}, x_{i+2}, \dots, x_{k+1})
\]

Duplicator must then make a response so that the $\LHS$ also models this formula: the only way to do this is to play $\underline{0}$ on $e_i(1)$. In the next round Spoiler then plays on the $\LHS$ and moves pebble one (reflecting the quantifier $\exists x_1$) to $s_1(2)^{\underline{0}}$; Duplicator must reply with some element $y$ of the same colour such that $|y| \in \mathsf{ODD}$. By repeating essentially the same process on each $V(i)$, $i \in [n-1]$ in turn Spoiler will eventually reach a position where on the $\LHS$, $s_1(n)^{\underline{0}}$ is pebbled and on the $\RHS$ the corresponding pebble lies on $s_1(n)^g$ for some $g\in \mathsf{ODD}$. Then, similarly to the above, by following the formula $\phi^n$ from this position Spoiler wins.

We now need to rewrite this sentence using only $k$ variables. To do this 
we will use the fact $(\bigvee_{i=1}^k \psi_i) \rightarrow \psi$ is 
equivalent to $\bigwedge_{i=1}^k (\psi_i \rightarrow \psi)$ as well as the 
distribution of universal quantification over conjunction. Using this it 
follows that for $1\le j \le n-1$, $\phi^j(x_1)$ is equivalent to $
\theta^j(x_1)$ defined as 
\begin{align*}
&\exists x_2 \dots \exists x_k  \Bigg[ \bigwedge_{i=2}^k \good_{s_i(j)}(x_i) \wedge \bigwedge_{i=1}^k \forall x_{k+1} \bigg( \\
 &  \Big( \good_{e_i}(x_{k+1}) \wedge
 R_i(x_1, x_2, \dots, x_{i-1}, x_{i+1}, x_{i+2}, \dots, x_{k+1}) \Big) \rightarrow  \\
&\exists x_1 \left(  \good_{s_1(j+1)}(x_1)  \wedge E^j(x_{k+1}, x_1)  \wedge \theta^{j+1}(x_1) \right)  \bigg) \Bigg].
\end{align*}
Where we set $\theta^n$ to be equal to $f(\psi)$ without the first existential quantification and with every mention of some $v\in V$ replaced by $v(n)$. Then by renaming variables, it is easy to see that this can be rewritten as a formula only using variables $x_1, \dots, x_k$. To see this note that the first big conjunction in the definition of $\theta^j(x_1)$ does not use $x_{k+1}$ and, for $i \neq 1$ the $i$th conjunct from the second big conjunction is equivalent to:
\begin{align*}
&\forall x_{i} \bigg(  \Big( \good_{e_i}(x_{i}) \wedge
 R_i(x_1, x_2, \dots, x_{i-1}, x_{i+1}, x_{i+2}, \dots, x_{i}) \Big) \rightarrow  \\
&\exists x_1 \left(  \good_{s_1(j+1)}(x_1)  \wedge E^j(x_{i}, x_1)  \wedge \theta^{j+1}(x_1) \right)  \bigg).
\end{align*}
The case $i=1$ is a little messier since the variable $x_1$ already appears in this conjunct. But this only occurs in the last line, so we can first replace all occurrences of $x_1$ with $x_2$ and then replace $x_{k+1}$ with $x_1$. We then need to make appropriate adjustments deeper in the recursion but this just amounts to swapping every $x_1$ to a $x_2$ and vice versa. Therefore, we can obtain a sentence $\theta \in \klogic$ equivalent to $\phi$. 

Since $\strucC$ and $\strucD$ disagree on $\phi$ and $|\phi| = O(n)$ it follows by Proposition~\ref{prop: lb} that any  equivalent sentence in $\klogic$ must have size exponential in $|\phi|$ and so Theorem~\ref{thm: suc} follows.

\section{An Existential Lower Bound} \label{s: exlb}

We now present a construction which allows us to lift quantifier depth lower bounds to quantifier number lower bound for the logic $\exklogic$. The idea
 is to leverage the substantial literature on quantifier depth as a `black box' to yield results concerning quantifier number. 

 The construction begins with two relational structures $\strucA$ and $\strucB$ over a common signature $\sigma$ that 
can be distinguished in $\exklogic$ but which require a sentence of quantifier depth at least 
$r$ to do so. We now form two new structures $S(\strucA)$ and $S(\strucB)$. These structures 
take as their core $
\strucA$ and $\strucB$ respectively which we then expand with new elements and relations. 

The idea is that in order to win the existential lower bound game from position $[\{\strucSA\}, \{\strucSB\}]$, Spoiler has to `essentially' carry 
out their winning strategy from the existential pebble game on $\strucA, \strucB$. However, carrying out one step of this winning strategy is relatively expensive for Spoiler 
on our new structures, in that they are forced to perform a split of the game into two parts 
which are both equally hard. This is ensured by adding two things to our structure. First, we 
add dummy elements (the set $D$ below). If Spoiler simply tries to carry out their $\pexists k$-pebble 
strategy and plays some $a\in A$, Duplicator will reply with a dummy element, thwarting Spoiler's
progress. Of course, we need to give Duplicator some way to win so we also introduce a `split 
gadget'. The split gadget operates similarly to structures built from a simple XOR formula. Effectively there are two possible paths Spoiler can take to make progress and on each $\RHS$ board Duplicator can stop progress on one of these paths. 
 Duplicator must then perform a split to separate these two types of $\RHS$ board\confORfull{.}{, otherwise they cannot make progress.} We also give Spoiler an extra pebble to enable them to navigate the 
split gadgets while maintaining the pebbles on $A$. The upshot is that for every move in Spoiler's winning strategy for the $k$-pebble game on $[\strucA, \strucB]$, they must perform a split in the $\pexists(k+1)$-LB game on $[\{\strucSA\}, \{\strucSB\}]$. 

\subsection{Definitions of the Structures}

We impose two technical conditions on $\strucA$ and $\strucB$. Firstly, we assume that for every $R \in \sigma$, and $\strucX \in \{ \strucA, \strucB\}$, $R^{\strucX}$ has \emph{disjoint positions}, i.e., for every tuple $(x_1, \dots, x_{m}) \in R^{\strucX}$, $x_i \neq x_j$ for $i\neq j$. Secondly, we assume that there are $k$ isolated elements of each structure, that is elements $a(1), \dots, a(k)$ in $A$ and $b(1), \dots, b(k)$ in $B$, such that each $a(i)$ does not appear as a coordinate of any tuple of any relation in $\strucA$ and each $b(i)$ does not appear as a coordinate of any tuple of any relation in $\strucB$. \confORfull{}{If such elements do not exist we simply add them in. These elements will be crucial in allowing Spoiler to win, as we will see.}

Let $\strucX \in \{\strucA, \strucB \}$. Then within $S(X)$ we have a copy of $\strucX$. To be more precise the signature of $\strucSX$ is an expansion of $\sigma$---we denote it by $S(\sigma)$---and for every $R \in \sigma$, $R^{S(\strucA)} = R^{\strucA}$ and $R^{S(\strucB)} \subset R^{\strucB}$. We also have dummy elements in both structures, $D:= \{d_i \, \mid \, i \in [k+1]\}$; these occur in the additional tuples we add into $R^{S(\strucB)}$:
\[
R^{S(\strucB)} = R^{\strucB}  \cup \{(v_1, \dots, v_{m}) \in (B \cup D)^m \confORfull{}{\,} \mid \confORfull{}{\,}  v_i \in D,\textit{ for some $i \in [m]$} \}.
\]
where $m$ is the arity of $R$. If our signature only consisted of $\sigma$ then Duplicator could reply to every Spoiler move with elements of $D$ and thus Spoiler would be unable to win. But our full structure allows Spoiler to avoid this problem; it will allow Spoiler to make elements of $A$ \emph{active} in the following sense.

\begin{definition} \label{def: active}
We say a pebble $i$ is \emph{active} if 
\begin{enumerate}
\item there is some $\RHS$ board (i.e. Spoiler has not already won),
\item on the unique $\LHS$ board $\lstrucSAa$, $\assA(i) \in A$ and
\item on every $\RHS$ board $\lstrucSBb$, $\assB(i) \in B$.
\end{enumerate}
 We say an element $a \in A$ is active if it is covered by an active pebble.
\end{definition}

\subsubsection{The Start Gadget} \label{sss: start}

Spoiler, in the main, makes pebbles active via the \emph{split gadget}. However,\confORfull{}{ for reasons we will come to,} Spoiler has to fulfil certain initial conditions in order to use the split gadget. To enable Spoiler to fulfil these conditions the structures include a \emph{start gadget}. \confORfull{}{This is only necessary to ensure Spoiler can win: if we start from a position where the initial conditions for using the split gadget are fulfilled then the start gadget will never be used.}

It is the start gadget which uses the $k$ independent elements in each structure. These will be the first elements which become active. So let 
\confORfull{
\begin{align*}
G := \{g_0^0, g_0^1, g_1^0, g_1^1 \} &\cup \{l_i^a \, \mid \, i \in [k], a\in \{0,1\}\} \\ &\cup \{r_i^a \, \mid \, i \in [k], a\in \{0,1\}\} \subset S(\strucX).
\end{align*}
}
{\[G := \{g_0^0, g_0^1, g_1^0, g_1^1 \} \cup \{l_i^a \, \mid \, i \in [k], a\in \{0,1\}\} \cup \{r_i^a \, \mid \, i \in [k], a\in \{0,1\}\} \subset S(\strucX).
\]}
Then in both $\strucSA$ and $\strucSB$, for each element $x_i$, with $x \in \{g, l, r\}$ we give a unique colour (i.e. unary relation) to the set $\{x_i^0, x_i^1\}$. We also introduce a binary relation $E$ and a $
(k+1)$-ary relation $R_s$\confORfull{.}{, see Figure~\ref{fig: start}.}
\confORfull{
\begin{figure} 
\centering
\includegraphics[width=\linewidth]{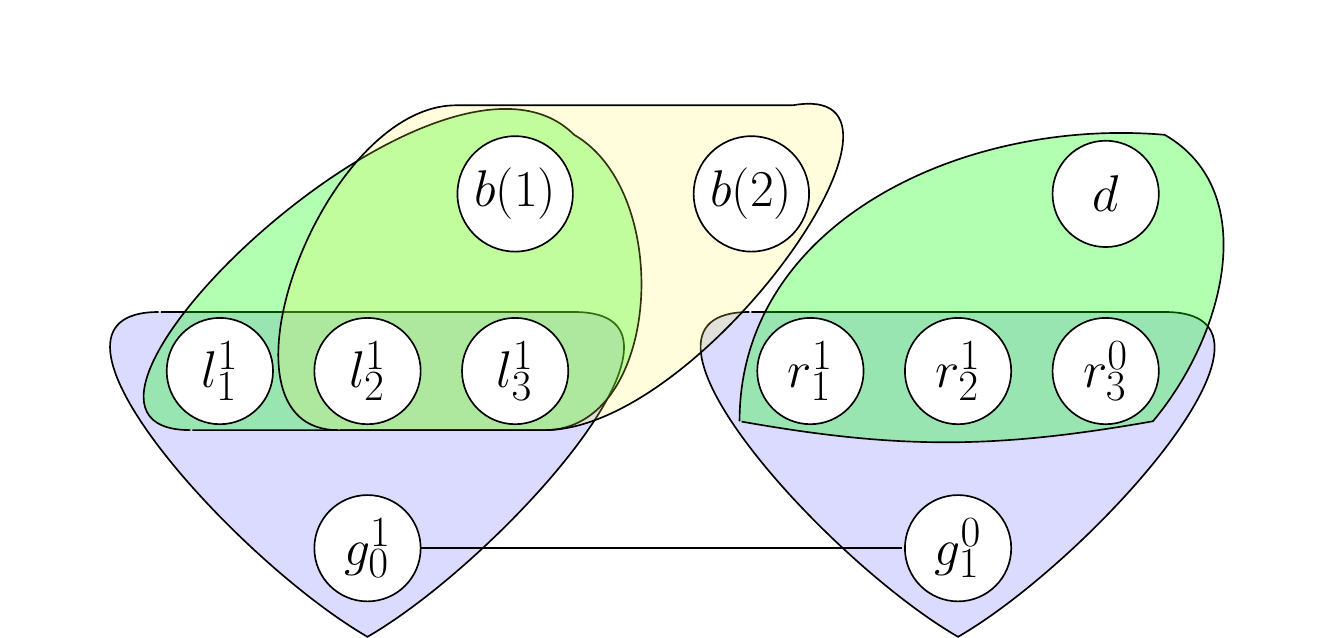}
\caption{A part of the start gadget in $\strucSB$, for $k=3$. The edge represents a tuple in $E^{\strucSB}$ and each hyperedge corresponds to a tuple in $R_s^{\strucSB}$. The element $d$ is a member of $D$. Note that many tuples and elements are excluded from this picture.}
\Description[A visualisation of the start gadget.]{A visualisation of the start gadget.}
\label{fig: start}
\end{figure}
}{
\begin{figure} 
\centering
\includegraphics[width=0.8\textwidth]{figures/fig-start}
\caption{A part of the start gadget in $\strucSB$, for $k=3$. The edge represents a tuple in $E^{\strucSB}$ and each hyperedge corresponds to a tuple in $R_s^{\strucSB}$. The element $d$ is a member of $D$. Note that many tuples and elements are excluded from this picture.}
\label{fig: start}
\end{figure}
}

We define 
$E^{\strucSA} := \{(g_0^0, g_{1}^0)\}$ and 
$E^{\strucSB} := \{(g_0^0, g_{1}^1), (g_0^1, g_{1}^0)\}$. This effectively enforces the XOR constraints $g_0 + g_1 =1$. 

$R_s^{\strucSA}$ consists of $(g_0^0, l_1^0, \dots, l_{k}^0)$,
$(g_1^0, r_1^0, \dots, r_k^0)$ and for $1 \le  i < k$, $y\in \{l,r\}$ the tuples $(y_i^0, \dots, y_{k}^0, a(1), \dots, a(i))$.

The tuples in $R_s^{\strucSB}$ are of two types. The first type mirrors those in $R_s^{\strucSA}$. To be precise $R_s^{\strucSA}$ contains every tuple obtained via the following procedure. For each $t \in R_s^{\strucSA}$ look at each coordinate in turn. If it is of the form $g_i^0$, change it to $g_i^1$. If it is of the form $r_i^0$ (resp. $l_i^0$), change it to $r_i^1$ (resp. $l_i^1$). Finally, if it is of the form $a(i)$, change it to $b(i)$. 

The second type of tuple in $R_s^{\strucSB}$ are what might be thought of as dummy tuples. The idea is that we build these structures with an `intended strategy' in mind for Spoiler. These are tuples which are only included to prevent Spoiler from trying any funny business and deviating from this strategy. In particular we add every $(k+1)$-tuple which,
\begin{enumerate}
\item has $g_i^0$ as a coordinate, $i \in \{0,1\}$, or
\item has at least one coordinate of the form $l_i^0$ (resp.  $r_i^0$), $i\in [k]$ and does not have $g_0^1$ (resp. $g_1^1$) as a coordinate, or
\item has two coordinates in $D$.
\end{enumerate}
This completes the description of the start gadget; to build some intuition we will now see how it enables Spoiler to make $a(1), \dots, a(k)$ active, in the setting of the $\pexists (k+1)$-pebble game. \confORfull{}{This will form the first part of a Spoiler winning strategy in this game. Note that to show Spoiler can win the $\pexists(k+1)$-QVT game from $[\{\strucSA\}, \{\strucSB\}]$ iff there is some sentence of $\pexists \mathcal{L}^{k+1}$ which $\strucSA$ and $\strucSB$ disagree on iff Spoiler can win the $\pexists (k+1)$-pebble game from $[\strucSA, \strucSB]$.}

In the first two rounds Spoiler pebbles $g_0^0$ and $g_1^0$. Then Duplicator must either reply with $g_0^0$ and $g_1^1$ or $g_0^1$ and $g_1^0$ because of the relation $E$. We suppose the former case occurs, the latter case is almost identical. Then Spoiler moves every pebble except that lying on $g_1^0$ to cover $\{r_i^0 \, \mid \, i \in [k]\}$. Then the $\LHS$ models $R_s$ so, since $g_1^1$ is covered on the $\RHS$, Duplicator must reply such that if pebble $p$ lies on $r_i^0$ on the $\LHS$ it lies on $r_i^1$ on the $\RHS$. Next Spoiler moves the pebble lying on $g_1^0$ to $a(1)$ and Duplicator must reply with $b(1)$. Then Spoiler moves the pebble lying on $r_1^0$ to $a(2)$ and Duplicator must reply with $b(2)$. Continuing similarly we reach a position where, for each $i$, $a(i)$ is pebbled and the corresponding pebble lies on $b(i)$ on the $\RHS$. 

\subsubsection{The Split Gadget}

We next describe the split gadget. Firstly, for each $x\in X$ we introduce two elements $x_0$ and $x_1$ and define $X_i := \{x_i \, \mid \, x\in X\}$. The universe also contains $S:=  \{s_0^0, s_0^1, s_1^0, s_1^1 \}$. This completes the description of the universe of $S(\strucX)$; we also add colours. \confORfull{}{These ensure that whenever Spoiler makes a move Duplicator always has to reply with an element of the right `type'.} On $\strucSA$ we give a colour to $\{s_i^0, s_i^1\}$ and on $\strucSB$ this colour covers $\{s_i^0, s_i^1\} \cup D$. This ensure that if Spoiler plays some $s_i^a$, Duplicator must reply with $s_i^0, s_i^1$ or an element of $D$. Similarly we introduce colours to ensure that if Spoiler plays on $A$, Duplicator must reply with an element of $B \cup D$ and if Spoiler plays on $A_i$, $i \in \{0,1\}$, Duplicator must reply with an element of $B_i \cup D$. 

The notation for the elements of $S$ is\confORfull{ deliberately}{} reminiscent of that used in Section~\ref{s: lb}. \confORfull{}{This is deliberate.} We effectively encode an XOR clause $s_0 + s_1 = 1$. The rough idea is that if $s_i = 1$ on every $\RHS$ board then Spoiler can make a new element of $A$ active. We just have to `wire everything up' in the right way. \confORfull{We introduce four $(k+1)$-ary relations, $R_0, R_1, R_2, R_3$ to do this wiring}{For notational convenience we introduce four $(k+1)$-ary relations, $R_0, R_1, R_2$ and $R_3$ to do this wiring; it is possible to do the same thing with a single relation. }

Before giving the interpretation of these relations we fix some notation. For sets $U, V$ and integers $t_1, t_2$ we write $U^{t_1} \otimes V^{t_2}$ to denote the set of tuples which have exactly $t_1$ entries from $U$ and $t_2$ entries from $V$, in any order. \confORfull{We define $\underline{x}:= \{(x(1), \dots, x(k))\}, x\in \{a,b\}$.}{} One further technicality: we assume that all relations have distinct coordinates. Thus, below when we define relations\confORfull{}{ in terms of cross products} we implicitly exclude tuples with repeated coordinates. 

We now give the interpretation of these relations, along with an intuitive explanation of what they encode. Firstly, $R_0$ encodes the initial conditions for Spoiler to access the split gadget. These are that either: Spoiler has just finished a pass through the split gadget or that every $a(i)$ is active. This is encoded as follows.

\confORfull{\begin{align*}
R_0^{S(\strucA)}=& \, \left[\underline{a} \cup \left(A^{k-2} \times \bigcup_{a\in A} \{(a_0,a), (a_1,a)\} \right) \right] \times \{s_0^0\}   \\
R_0^{S(\strucB)}=& \, \left[ \underline{b} \cup \left(B^{k-2} \times \bigcup_{b\in B} \{(b_0,b), (b_1,b)\} \right) \right] \times \{s_0^0, s_0^1\} \\
\cup& \, (D \otimes S(B)^{k-1}) \times S(B)
 \end{align*} }
 { \begin{align*}
R_0^{S(\strucA)}=& \, \left[\{(a(1), \dots, a(k))\} \cup A^{k-2} \times \bigcup_{a\in A} \{(a_0,a), (a_1,a)\} \right] \times \{s_0^0\}   \\
R_0^{S(\strucB)}=& \, \left[ \{(b(1), \dots, b(k))\} \cup B^{k-2} \times \bigcup_{b\in B} \{(b_0,b), (b_1,b)\}\right] \times \{s_0^0, s_0^1\} \\
\cup& \, (D \otimes S(B)^{k-1}) \times S(B)
 \end{align*} }

Like in $R_s^{\strucSB}$, the tuples in $R_0^{\strucSB}$ are divided into two types. The first type---on the first line---are those which will be used if Spoiler follows the intended strategy. The second type---on the second line---stops Spoiler from trying any funny business. This essentially means that if Spoiler does not fulfil the initial conditions they cannot enter the split gadget. \confORfull{}{Note that the tuples represented by the second line are exactly those where at least one of the first $k$-coordinates lies in $D$.} Each of the remaining three relations corresponds to one step of Spoiler's passage through the split gadget; we now give each of them in turn.

\begin{align*}
R_1^{S(\strucA)}=& \,  A^{k-1} \times \{(s_0^0, s_1^0)\}  \\
R_1^{S(\strucB)}=& \,B^{k-1} \times \{(s_0^0, s_1^1), (s_0^1, s_1^0) \}
\\ \cup& \, D^2 \otimes S(B)^{k-1} \\
\cup& \, (D \otimes S(B)^{k-2}) \times S(B)^2
 \end{align*} 
We may think of this relation as encoding the XOR clause $s_1 +s_2 = 1$. To see this suppose that in round $r-1$ Spoiler successfully `enters' the split gadget using $R_0$. Let $\lstrucSBb$ be a $\RHS$ board, then $\assB(p) = s_0^i$ for some pebble $p$ and $i\in \{0,1\}$. In round $r$, Spoiler can move any pebble other than $p$ to $s_1^0$ and the $\LHS$ will model $R_1$. Then Duplicator will be forced to reply with $s_1^{i \oplus 1}$ on $\lstrucSBb$. Here, and throughout, we use $\oplus$ to denote addition modulo two. If Spoiler makes the $\LHS$ model $R_1$ without doing the requisite groundwork then
the second and third lines in the definition of $R_1^{\strucSB}$ ensure that they don't make any progress.

\begin{align*} 
R_2^{S(\strucA)}= \,& A^{k-1} \times \bigcup_{i=0}^1 \{s_i^0\} \times A_i \\
R_2^{S(\strucB)}=& \,B^{k-1} \times \bigcup_{i=0}^1 \{s_i^1\} \times B_i \\
\cup& \, \left[\left(D \cup \{s_0^0, s_1^0\}\right) \otimes S(B)^{k-1}\right] \times S(B)   
\end{align*}
Imagine a situation where the $\LHS$ models $R_1$ and there is a $\RHS$ board with an element of $B^{k-1} \times \{s_i^1, s_{i \oplus 1}^0\} $ covered, $i\in \{0,1\}$. Then Spoiler can move a pebble on the $\LHS$ from $s_{i \oplus 1}^0$ to some element of $A_i$ so that the resulting board models $R_2$. Duplicator is then forced to reply with an element of $B_i$. The second line encodes the fact that if we are not in the `intended' scenario Duplicator is not forced to play an element of $B_i$.
\begin{align*}
R_3^{S(\strucA)}=&\, A^{k-1} \times \bigcup_{a\in A} \{(a_0,a), (a_1,a)\} \\
R_3^{S(\strucB)}=&\, B^{k-1} \times \bigcup_{b\in B} \{(b_0,b), (b_1,b)\} \\
\cup& \,  (D \otimes S(B)^{k-1}) \times S(B) 
\end{align*}
Finally, suppose that on the $\LHS$ an element of $A^{k-1} \times \{(s_i^0, a_i)\}$ is covered and on some $\RHS$ board an element of $B^{k-1} \times \{(s_i^1, b_i)\}$ is covered, $a \in A, b\in B, i \in \{0,1\}$. Then Spoiler can move the pebble lying on $s_i^0$ on the $\LHS$ to $a$, making it a model of $R_3$.  Duplicator is forced to reply with $b$. Again the second line says if we are not in this specific scenario Duplicator can thwart Spoiler's progress. \confORfull{}{We will see more detail of how Spoiler makes elements of $A$ active in Section~\ref{ss: spoiler_win}}

\confORfull{}{The construction ensures that any winning Spoiler strategy for the $\pexists (k+1)$-\confORfull{QVT}{LB} game on 
$[S(\strucA), S(\strucB)]$ corresponds to a winning Spoiler strategy in the $\pexists k$-pebble game on $[\strucA, \strucB]$ and this correspondence is such that we can 
lift lower bounds. To be more 
precise if Duplicator can survive for $r$-rounds in the $k$-pebble game on $
[\strucA, \strucB]$, then they can obtain $2^r$-points in the $\pexists (k+1)$-LB game on $[\strucSA, \strucSB]$. Moreover, our construction blows up the 
number of elements in the structure by only a constant factor. Therefore we can 
lift the $\Omega(n^{2k-2})$ quantifier depth lower bound from \cite{DBLP:conf/cp/Berkholz14}  
to a $2^{\Omega(n^{2k-2})}$ quantifier number lower bound. }

 \subsection{Simulating Pebbles Games} \label{ss: simul}

We now explain the key idea behind Duplicator's strategy: to use Spoiler's moves to simulate a $\pexists k$-pebble game on $[\strucA, \strucB]$\confORfull{}{ and then to use this game to guide their responses}. \confORfull{}{We will also use this simulated game to show that Spoiler can win and therefore that there is some $\phi \in \pexists \logic^{k+1}$ which $\strucSA$ and $\strucSB$ disagree on.} To formalise this idea we will need some definitions.

If at some point in the game $\lstrucSAa \models R_3(\bar{p})$ for some 
$\bar{p}$, where $\lstrucSAa$ is the unique $\LHS$ board,  we say we are in 
a \emph{critical position}. Here, and throughout, we use $\bar{p}$ 
to represent a tuple of pebbles (= variables). If we are in a critical 
position then there is a unique $a\in A$, such that either $(a, a_0)$ or 
$(a, a_1)$ is covered, call this $a$ the \emph{potentially active} 
element. If instead either $\assA = \emptyset$ or a tuple of the form $(y_i^0, \dots y_k^0, a(1), \dots, a(i))$, $y\in \{l,r\}$, is covered on the unique $\LHS$ board we say we are in an \emph{initial position}.

Our simulated game, defined below, is well-defined iff Duplicator plays such that the following two facts hold throughout the $\pexists (k+1)$-LB game.

\begin{fact}\label{fact1} If pebble $p$ is active at the end of a round, then:
\begin{enumerate}
\item[\textbf{(a)}]  $p$ was active at the start of the round and has not moved, or
\item[\textbf{(b)}] we are in a critical position, $p$ was moved during this round and $p$ covers the potentially active element, or
\item[\textbf{(c)}] we are in an initial position and $p$ was moved during this round.
\end{enumerate} 
\end{fact}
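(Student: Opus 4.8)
The plan is to establish Fact~\ref{fact1} by induction on the number of rounds, simultaneously with the companion invariant stated below and with the verification that the Duplicator strategy described in the next subsections is well defined. The base case is vacuous: the game starts from $[\{\strucSA\},\{\strucSB\}]$ with empty assignments, so no pebble is active. For the inductive step fix a round $r$ and suppose the invariants hold at its start. Spoiler either performs a split or moves a single pebble on the unique $\LHS$ board — recall that in the $\pexists(k+1)$-LB game Spoiler may only pebble the left.

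If Spoiler splits, no pebble changes position and no new element appears on any board, so a pebble active at the end can only be active via clause~(a); what must be checked is that a split cannot turn an inactive pebble active. This follows because the Duplicator strategy keeps the $\RHS$ boards uniform along active pebbles: if $p$ is inactive at the start because some $\RHS$ board $\lstrucSBb$ has $\assB(p)\notin B$, then — since every $D$-element lies in every relation of $\strucSB$ through the dummy tuples — such a board is never deleted for a homomorphism reason involving $p$, and whichever part of the split we continue from still contains a descendant of it, so $p$ stays inactive.

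The heart of the argument is the case where Spoiler moves a pebble $q$. A pebble $p\neq q$ keeps its $\LHS$-position, and by the same reasoning as in the split case a $\RHS$ board witnessing $\assB(p)\notin B$ survives if present; hence if $p$ is active at the end it was active at the start, giving~(a). For $q$ itself one reads off from the definitions of the relations in $S(\sigma)$ exactly which $\strucSA$-tuples with a coordinate in $A$ force Duplicator to answer in $B$ rather than in $D$ on a given $\RHS$ board: for every $R\in\sigma$, for $E$, and for $R_0,R_1,R_2$ a $D$-answer in the relevant coordinate is always available — because of the dummy tuples in $\strucSB$, e.g.\ the $(D\otimes S(B)^{k-1})\times S(B)$ summand of $R_0^{\strucSB}$ — and the Duplicator strategy plays it; only $R_3^{\strucSA}$, in its last coordinate, and the ladder tuples $(y_i^0,\dots,y_k^0,a(1),\dots,a(i))$ of $R_s^{\strucSA}$ pin Duplicator down to a $B$-element. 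Thus if $q$ is active at the end of round $r$ then, after Spoiler's move, either $\lstrucSAa\models R_3(\bar p)$ with $q$ on the last coordinate — which is exactly a critical position with $q$ covering the potentially active element, giving~(b) — or $\lstrucSAa$ has such a ladder tuple covered — which is exactly an initial position, giving~(c). If neither holds, Duplicator's $D$-answer makes $q$ inactive, a contradiction.

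The step I expect to be the main obstacle is the persistence of a $D$-witnessing $\RHS$ board used in the split and non-moved-pebble cases: it requires knowing that Duplicator's strategy never discards such boards and that they are never forced out by deletion, which leans on the structural fact that every $D$-element lies in every relation of $\strucSB$. Once that is in hand, the per-relation bookkeeping identifying $R_3$ and the $R_s$-ladder as the only rigid configurations is routine.
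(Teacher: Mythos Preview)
Your overall plan --- induction on rounds, treating the moved pebble separately from the others, and reading off from the definitions of $R_0,R_1,R_2,R_3,R_s$ that only $R_3$ (last coordinate) and the $R_s$-ladder tuples can force a $B$-response --- is exactly the paper's approach, and your analysis of the moved pebble is correct.

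The gap is in your handling of splits (and, by the same reasoning, of non-moved pebbles). You argue that if $p$ is inactive because some $\RHS$ board $\lstrucSBb$ has $\assB(p)\in D$, then ``whichever part of the split we continue from still contains a descendant of it.'' This is false in the $\pexists(k+1)$-LB game as played here: there are exactly two $\RHS$ boards $T_0,T_1$, the split is into $\{T_0\}$ and $\{T_1\}$, Duplicator gives Spoiler a free choice, and Spoiler may well choose the board on which $p$ sits in $B$. Your persistence claim (``never deleted for a homomorphism reason involving $p$'') addresses deletion but not the split itself, and it is also not literally true that every $D$-element lies in every relation of $\strucSB$ (e.g.\ $E^{\strucSB}$ contains no dummy tuples, and $R_s^{\strucSB}$ only contains tuples with \emph{two} $D$-coordinates among its dummy clauses).

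The paper closes this gap differently: it proves a \emph{strengthening} of Fact~\ref{fact2}, namely that if $p$ lies on some $b\in B$ on one $\RHS$ board then it lies on $b$ on \emph{every} $\RHS$ board (equivalently: if $p$ lies in $D$ on one board it lies in $D$ on all of them). This is established using the reset mechanism --- whenever a critical or initial position is reached, Duplicator collapses to a single board and duplicates it, so the two $\RHS$ boards are identical at exactly the moments where a pebble could land in $B$. With this strengthened invariant in hand, the split case becomes trivial: both boards agree on which pebbles sit in $D$, so neither choice of partition can activate $p$. You should replace the persistence argument by this strengthened companion invariant and invoke the resets to maintain it.
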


\begin{fact} \label{fact2}
If at the end of some round pebble $p$ is active then, for any two RHS board $\lstrucSBb$ and $\lstrucSBc$, $\assB(p) = \assC(p)$. 
\end{fact}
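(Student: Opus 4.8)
The plan is to establish Facts~\ref{fact1} and~\ref{fact2} simultaneously, by induction on the number of rounds of the $\pexists(k+1)$-LB game, once Duplicator's strategy has been pinned down (the preceding discussion only gestures at it). That strategy carries along, as an auxiliary object, a single position $\strucA^{\mu},\strucB^{\nu}$ of the $\pexists k$-pebble game on $\strucA,\strucB$ whose pebbled variables are precisely the currently active pebbles of the LB game (after relabelling, the $(k+1)$-st pebble is the ``navigation'' pebble that Duplicator never lets become active, and which is used only to move through the gadgets). Duplicator plays this simulated game according to a fixed optimal strategy, and on the \emph{core} of every RHS board --- i.e.\ on elements of $B$ --- always answers with the move prescribed by the simulated game, \emph{the same element on every RHS board}; on the gadget elements ($S$, the $A_i$-analogues in $\strucSB$, $D$, and the start gadget) each RHS board sits in one of finitely many ``states'' and Duplicator answers uniformly as a function of that state. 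From this description Fact~\ref{fact2} is essentially the assertion that the strategy is well defined: an active pebble $p$ carries the single value $\nu(p)$ on every RHS board.

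The induction is trivial in the easy cases. In the base case no pebble is active, so Fact~\ref{fact2} is vacuous. A split move changes no pebble position and replaces the RHS collection by a subset of itself, so both facts are inherited verbatim. In a pebble move, Spoiler moves some pebble $p$ on the unique LHS board to an element $x$; for a pebble $q\neq p$ that is active at the end of the round, Fact~\ref{fact1}(a) (itself part of the induction) forces $q$ to have been active already at the start, so by the induction hypothesis all then-current RHS boards agreed on $\assB(q)=\nu(q)$; every new RHS board is a copy of one of them with $p$ additionally placed, hence still carries $\nu(q)$ on $q$, and deletions only remove boards. So Fact~\ref{fact2} is preserved for every pebble other than the moved one, and the whole content is about $p$.

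The heart of the matter --- and the step I expect to be the main obstacle --- is the case analysis, over the colour-forced ``type'' of $x$ (namely $x\in A$, $x$ in an $A_i$, $x\in S$, $x$ in the start gadget, or $x\in D$), showing (i) that $p$ becomes active only in the configurations permitted by Fact~\ref{fact1}, and (ii) that when it does, the forced Duplicator reply is the \emph{same} element of $B$ on every RHS board. If $x\notin A$, then $p$ is not active and Fact~\ref{fact2} is vacuous for $p$, but one must still verify, using the dummy tuples added to $R_0^{\strucSB},\dots,R_3^{\strucSB}$ and $R_s^{\strucSB}$, that on every RHS board Duplicator has a $D$-valued (or state-consistent) answer that disturbs no previously active pebble. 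If $x\in A$: because relations have disjoint positions and because of the colours, the LHS board runs through $p$ a tuple of $R_3$ (resp.\ $R_s$) only when we are in a critical position with $x$ the potentially active element (resp.\ an initial position); there the XOR-style wiring of the split (resp.\ start) gadget forces each RHS board to answer with the core element $b\in B$ that the simulated game prescribes for the corresponding pebble move on $\strucA,\strucB$ --- and this $b$ is board-independent, since by the induction hypothesis all RHS boards already agree on the pebbles feeding that tuple and hence satisfy it in $\strucSB$ only via its first, ``mirrored'' family of tuples, not via a dummy tuple. In all remaining cases Duplicator answers with a dummy element, again uniformly. The delicate bookkeeping is tracking which state each RHS board is in as Spoiler traverses the split gadget, and checking that ``freezing'' a board is compatible with the board-uniformity demanded by Fact~\ref{fact2}; everything else is routine.
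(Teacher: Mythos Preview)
Your overall plan—prove Facts~\ref{fact1} and~\ref{fact2} together by induction on rounds, with the simulated $k$-pebble game as the auxiliary object—matches the paper. But your argument for the moved pebble $p$ in a critical position has a real gap, and it is exactly the place where the paper introduces an ingredient you do not mention: \emph{resets}.

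You argue that in a critical position the reply $b\in B$ is board-independent ``since by the induction hypothesis all RHS boards already agree on the pebbles feeding that tuple and hence satisfy it in $S(\strucB)$ only via its first, `mirrored' family of tuples, not via a dummy tuple.'' This inference does not go through. The inductive form of Fact~\ref{fact2} concerns only \emph{active} pebbles, i.e.\ pebbles that lie in $A$ on the LHS and in $B$ on \emph{every} RHS board. The tuple witnessing $R_3$ on the LHS has, besides the $k-1$ pebbles on $A$, one pebble on some $a_i\in A_i$; that pebble is never active, and the two RHS boards will typically disagree on it: on one board it sits on some $b_i\in B_i$, on the other it sits on a dummy (this is precisely the split-gadget behaviour you describe earlier as ``state''-dependent). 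Consequently, when Spoiler moves $p\to a$, on the first board Duplicator is forced into $B$, while on the second the passive (dummy) response is still valid. So the replies are \emph{not} the same, $p$ is not active, and—absent anything further—the two boards now disagree on $p$ even as a $B$-value, which breaks the induction for later rounds.

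The paper closes this gap by having Duplicator, at the end of every round that lands in a critical or initial position, \emph{delete} all but one RHS board (choosing the one with fewest pebbles in $B$) and then duplicate it, so that the two boards $T_0,T_1$ are identical afterwards. With resets in place, the proof of Fact~\ref{fact2} becomes almost trivial: if $p$ was not moved, use the induction hypothesis; if $p$ was moved and lies in $B$ on some RHS board, then (by the Fact~\ref{fact1} analysis) we are in a critical or initial position, the reset fires, and the two boards are identical. Your proposal needs this mechanism (or an equivalent device) to go through.
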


In Section~\ref{ss: duplicator_stat} we will outline a strategy such that the above facts hold. \confORfull{The following is immediate}{An important consequence of Fact~\ref{fact1} is the following.}

\begin{lemma} \label{lem: fact3}
If Duplicator plays a strategy such that Fact~\ref{fact1} holds, then at the end of each round there can be at most $k$ active elements.
\end{lemma}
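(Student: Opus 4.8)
The plan is to prove the lemma by a simple induction on the round number, the real work being a short case analysis driven by Fact~\ref{fact1}. The base case is immediate: before any round is played no pebble is on the board, so there are no active elements. For the inductive step I would first record the key structural fact that one round of the $\pexists(k+1)$-LB game changes the position of at most one pebble. Indeed, split moves never move a pebble (they only partition and prune the right-hand structures), and a pebble move alters the position of exactly one pebble, say $q$. Consequently, if a pebble $p \neq q$ is active at the end of round $r$, then clauses (b) and (c) of Fact~\ref{fact1} cannot apply to $p$ (both require $p$ to have moved during the round), so clause (a) must: $p$ was already active at the end of round $r-1$ and occupies the same element on every board. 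Hence any element active at the end of round $r$ through such a pebble was already active at the end of round $r-1$.

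The case distinction is then on whether the moved pebble $q$ is active at the end of round $r$. If it is not (or if Spoiler made no pebble move this round), then every active pebble is covered by the previous paragraph, so the active elements at the end of round $r$ form a subset of those at the end of round $r-1$, and the induction hypothesis gives the bound $k$. If $q$ is active, then Fact~\ref{fact1} puts us in a critical position or an initial position, and in both cases the shape of the $\LHS$ board pins down the pebble layout. In a critical position the unique $\LHS$ board satisfies $R_3(\bar p)$; since all relations are assumed to have pairwise distinct coordinates and $R_3$ has arity $k+1$, the tuple $\bar p$ must consist of $k+1$ distinct pebbles, i.e.\ all of them, and its $k$-th entry is mapped into $A_0 \cup A_1$, which is disjoint from $A$ — so that pebble is not active and at most $k$ pebbles, hence at most $k$ elements, are active. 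In an initial position either $\assA = \emptyset$ (no active elements) or a tuple $(y_i^0, \dots, y_k^0, a(1), \dots, a(i))$ with $y \in \{l, r\}$ and $1 \le i < k$ is covered on the $\LHS$ board; the $k - i + 1 \ge 2$ coordinates $y_i^0, \dots, y_k^0$ lie in the start gadget $G$, which is disjoint from $A$, so at least two pebbles are not active and at most $i \le k - 1$ elements are active. In every branch we get at most $k$ active elements, closing the induction.

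I do not expect a genuine obstacle here: the lemma is essentially a direct consequence of Fact~\ref{fact1} together with the observation that at most one pebble moves per round. The only points that need a little care are (i) justifying that in a critical or initial position all $k+1$ pebbles are actually on the board — this uses the disjoint-coordinates convention for relations and the fact that the relevant tuples list pairwise distinct elements — and (ii) checking that the auxiliary elements on which the ``spare'' pebbles sit ($A_0 \cup A_1$ for critical positions, $G$ for initial positions) are disjoint from the copy of $A$ inside $S(\strucX)$, so that those pebbles genuinely fail to be active by clause (2) of Definition~\ref{def: active}.
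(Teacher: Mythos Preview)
Your proof is correct and follows the same approach as the paper's own proof: both argue by induction that a newly active pebble forces a critical or initial position, and in either case at most $k$ elements of $A$ are pebbled on the $\LHS$. The paper compresses the final step into a single sentence (``in both cases at most $k$ elements of $A$ are pebbled''), whereas you spell out explicitly why the $R_3$-tuple and the $R_s$-tuple each leave at least one pebble off $A$; this extra care is helpful but not a different argument.
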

\confORfull{}{
\begin{proof}
Suppose at the end of round $r-1$ there are at most $k$ active elements but that at the end of round $r$, there are 
more than $k$ active elements. Therefore, a new element of $A$ becomes active in round $r$ and so by Fact~\ref{fact1} we know that we must be in either an initial position or a critical position. But in both cases at most $k$ elements of $A$ are pebbled, a contradiction.
\end{proof}}

We next define the simulated game. Denote by $P_t$ the position of the $\pexists(k+1)$-LB game starting from $[\{\strucSA\}, \{\strucSB\}]$ after round 
$t$ and the active pebbles in $P_t$ by $\act(P_t)$. We will use this to define a position 
in the $k$-pebble game on $[\strucA,\strucB]$, $S(P_t)$. We also define an injection, 
$f_r : \act(P_r) \to [k]$, which gives a correspondence between the active pebbles in the game on $[\{\strucSA, \strucSB\}]$ and pebbles in the simulated game on $[\strucA, \strucB]$. There are several cases.
\begin{enumerate}
\item $S(P_0) =[\strucA, \strucB]$.
\item If $\act(P_{r+1}) = \act(P_r) \setminus \{p\}$, then $S(P_{r+1})$ is attained from $S(P_r)
$ by deleting the pebble $f_r(p)$, $f_{r+1} = f_r|_{\act(P_{r+1})}$. 
\item Finally, suppose that $\act(P_{r}) \subseteq \act(P_{r+1}) $ and Spoiler moved $p\to x$ in round $r+1$. Then:
\begin{enumerate}
\item if $x \not \in A$ or $x$ isn't active at the end of round $r+1$  $S(P_{r+1}) = S(P_r)$ and $f_{r+1} =f_r$
\item otherwise $f_{r+1}(q) 
=f_r(q)$ for $q \neq p$ and $f_{r+1}(p) = \min \{m \in [k] \, \mid \, m \notin f_r(\act(P_r)\setminus \{p\}) \}$. 
If on some $\lstrucSBb \in \RHS$,  $\assB(p) = b$, then 
$S(P_{r+1})$ is obtained from $S(P_r)$ by moving $f_{r+1}(p)$ to  $x$ 
on $\strucA$ and $b$ on $\strucB$.
\end{enumerate}  
\end{enumerate}

Note that splits do not affect the simulated game, since by Fact~2 the active pebbles are placed on the same elements of every $\RHS$ 
board. Moreover, by Fact~\ref{fact1}, the above 
cases are exhaustive and, by Fact~\ref{fact2} and Lemma~\ref{lem: fact3}, case (3)(b) is
well-defined.

\confORfull{
\begin{figure*} 
\centering
\begin{subfigure}[t]{0.33\textwidth}
\begin{mybox}
\includegraphics[width=\linewidth]{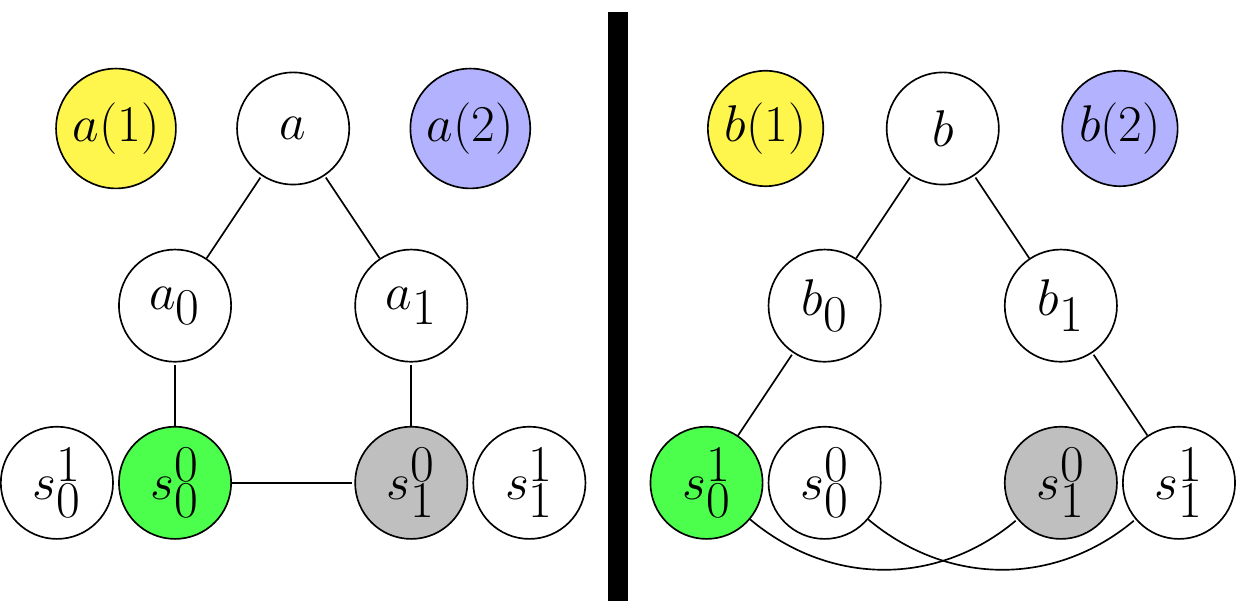}
\end{mybox}
\end{subfigure}
\begin{subfigure}[t]{0.33\textwidth}
\begin{mybox}
\includegraphics[width=\linewidth]{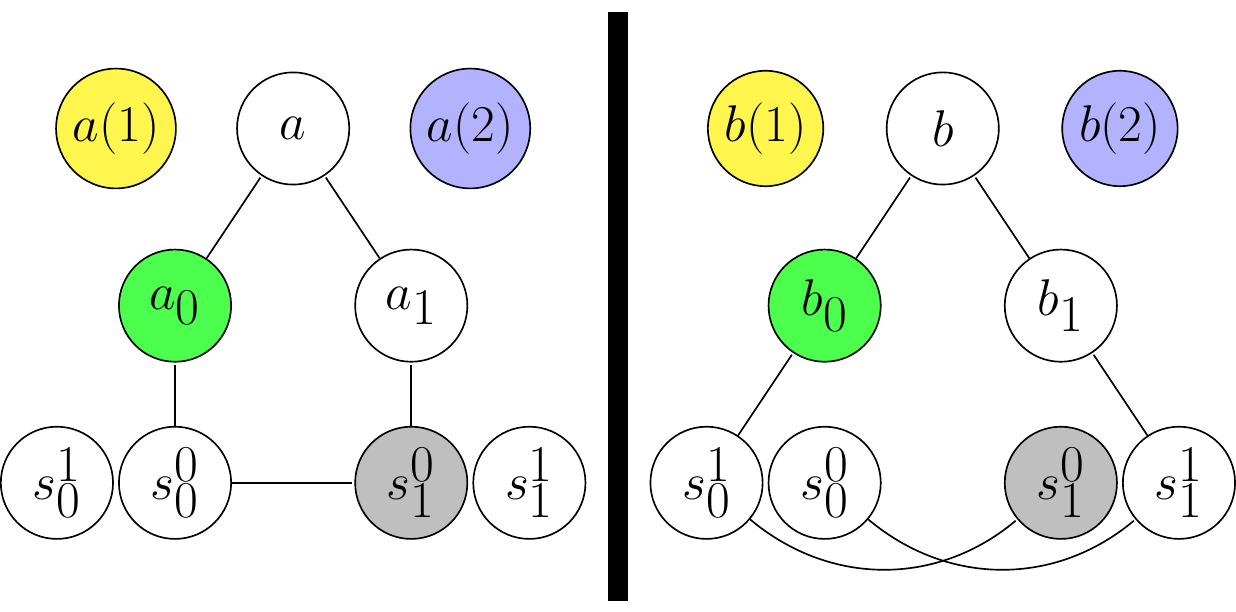}
\end{mybox}
\end{subfigure}
\begin{subfigure}[t]{0.33\textwidth}
\begin{mybox}
\includegraphics[width=\linewidth]{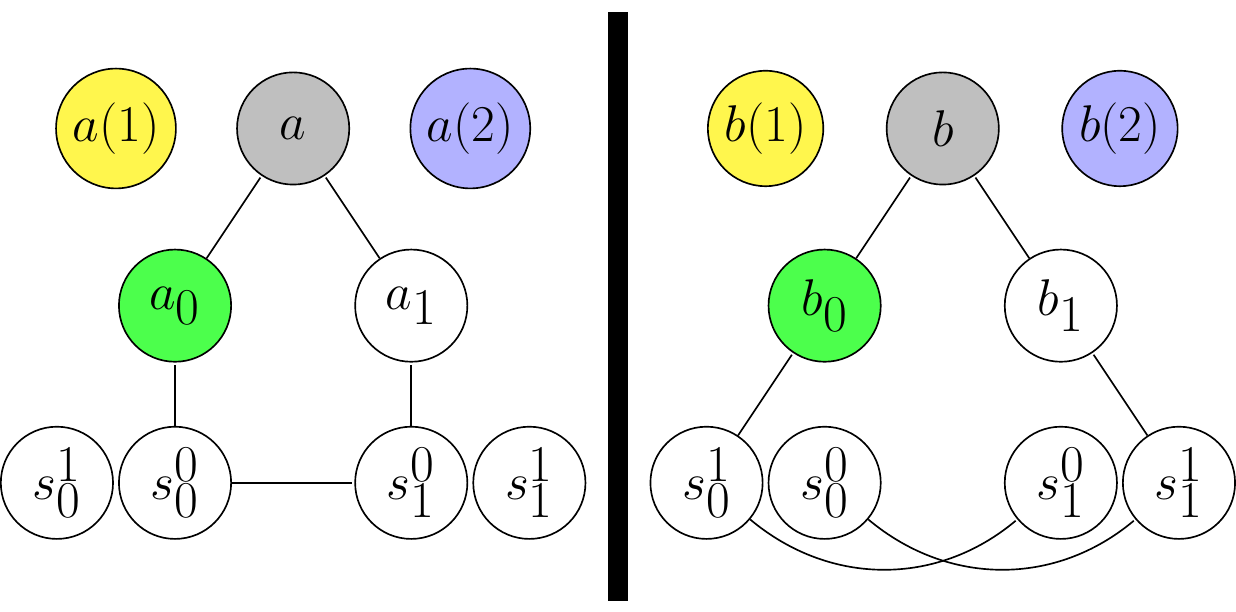}
\end{mybox}
\end{subfigure}

\caption{Part of the process by which Spoiler goes from a position where $a(1)$ and $a(2)$ are active to one where $a$ is also active, for $k=3$. The different colours represent the different pebbles. There is an edge $(u,v)$ on the left if $(a(1), a(2), u, v)$ lies in $\bigcup_{i=1}^3 R_i^{\strucSA}$. Similarly, there is an edge $(u,v)$ on the right if $(b(1), b(2), u, v)$ lies in $\bigcup_{i=1}^3 R_i^{\strucSB}$.}
\label{fig: spoilwin}
\Description[A figure showing how Spoiler can make the independent elements active.]{A figure showing how Spoiler can make the independent elements active.}
\end{figure*}}
{\begin{figure} 
\centering
\begin{subfigure}[t]{0.49\textwidth}
\begin{mybox}
\includegraphics[width=\linewidth]{figures/fig-Spoilerwin}
\end{mybox}
\end{subfigure}
\begin{subfigure}[t]{0.49\textwidth}
\begin{mybox}
\includegraphics[width=\linewidth]{figures/fig-Spoilerwin1}
\end{mybox}
\end{subfigure}
\begin{subfigure}[t]{0.49\textwidth}
\begin{mybox}
\includegraphics[width=\linewidth]{figures/fig-Spoilerwin2}
\end{mybox}
\end{subfigure}
\caption{Part of the process by which Spoiler goes from a position where $a(1)$ and $a(2)$ are active to one where $a$ is also active, for $k=3$. The different colours represent the different pebbles. There is an edge $(u,v)$ on the left if $(a(1), a(2), u, v)$ lies in $\bigcup_{i=1}^3 R_i^{\strucSA}$. Similarly there is an edge $(u,v)$ on the right if $(b(1), b(2), u, v)$ lies in $\bigcup_{i=1}^3 R_i^{\strucSB}$. \confORfull{}{This corresponds to the case where $A_0 = \{a(1), a(2)\}$}}
\label{fig: spoilwin}
\end{figure}} 

\subsection{A Spoiler Winning Strategy} \label{ss: spoiler_win}
 
We can now show that there is a sentence of $\pexists\mathcal{L}^{k+1}$ separating $\strucSA$ and $\strucSB$. To do this we show that Spoiler can win the 
$\pexists (k+1)$-pebble game on $[S(\strucA), S(\strucB)]$. It still 
makes sense to talk about active pebbles in this context, as we may consider the 
$\pexists (k+1)$-pebble game as a special case of the $\pexists (k+1)$-LB game where Duplicator chooses to 
make a unique response on each turn. 

For the initial part of the Spoiler strategy they---unsurprisingly---use the start gadget to make $a(1), \dots, a(k)$ active, as described at the end of Section~\ref{sss: start}. Having done this Spoiler makes use of the split gadget for the first time, see Figure~\ref{fig: spoilwin}. If in an optimum strategy in the $\pexists k$-pebble game on $[\strucA, \strucB]$ Spoiler plays $a$ in the first round, then over the next few moves they will make $a$ active. 
First, they move the only pebble not lying on some $a(i)$ to $s_0^0$. Since the $\LHS$ models $R_0$ and as every $b(i)$ is pebbled on the $\RHS$ Duplicator must reply with $s_0^0$ or $s_0^1$. Spoiler then moves the pebble covering $a(k)$ to $s_1^0$.  Then the $\LHS$ models $R_1$ so, since an element of $B^{k-1}$ is covered on the $\LHS$, Duplicator replies such that either $s_0^0$ and $s_1^1$ or $s_0^1$ and $s_1^0$ are pebbled. Suppose we are in the latter case, the former case is almost identical. Then Spoiler moves the pebble lying on $s_1^0$ onto $a_0$. Since the $\LHS$ board is a model of $R_2$ and an element of $B^{k-1} \times \{s_0^1\}$ is covered on the $\RHS$, Duplicator must reply with some $b_0 \in B_0$. Finally, Spoiler moves the pebble lying on $s_0^1$ to $b$ and \confORfull{}{then since the $\LHS$ is a model of $R_3$ and on the $\RHS$ an element of $B^{k-1} \times \{b_0\}$ is pebbled} Duplicator must reply with $b$. \confORfull{W}{At the end of this process, w}e get a board where $a$ is active. 

Now suppose after $r$ rounds we are in a critical position where $k$ elements of $A$ are active. This is the case after the first use of the split gadget which acts as our base case. Then Spoiler looks at $S(P_r)$ and finds the next move in an optimal Spoiler winning strategy, say moving pebble $p$ from $a'$ to $a$. Over the new few moves Spoiler will---without moving any pebbles in $\act(P_r) \setminus \{p\}$---reach a position where $a$ is active. \confORfull{This is similar to the above; we give the details in Appendix~\ref{a: exlb}.}{Then let $A_0$ be the set of all active elements of $A$ other than that covered by $q$, where $q$ is the unique pebble such that $f_r(p) = q$. Over the next few moves Spoiler will reach a position such that $A_0 \cup \{a\}$ are the active elements of $A$, see Figure~\ref{fig: spoilwin}. 

To do this they first move $q$ to cover $s_0^0$. Then the $\LHS$ models $R_0$ and so Duplicator must reply with $s_0^0$ or $s_0^1$. Next Spoiler moves the unique pebble covering an element of $A_0 \cup A_1$ to $s_1^0$. Then the $\LHS$ models $R_1$ and so, again, since an element of $B^{k-1}$ is pebbled on the $\LHS$ we get that Duplicator reply such that either $s_0^0$ and $s_1^1$ or $s_0^1$ and $s_1^0$ are pebbled. So from here Spoiler follows the same pattern as in their first pass through the split gadget and we eventually get a board where $A_0 \cup \{a\}$ are active.} 

Suppose Spoiler wins the 
$\pexists k$-pebble game on $\strucA, \strucB$ in $r$-rounds. Then by 
passing through the split gadget at most $r$ times, we reach a position which 
is winning for Spoiler in the $\pexists (k+1)$-pebble game on $\strucSA, \strucSB$. This gives an $O(r)$ upper bound for the quantifier depth 
of a $\exklogic$ sentence separating $\strucSA$ and $\strucSB$. It can be shown (by a similar 
but easier argument to that given below) that this upper bound is 
tight. Therefore, in terms of quantifier depth, a sentence separating $S(\strucA)$ and $S(\strucB)$ has the 
same complexity, up to a constant, as one separating $\strucA$ and $\strucB$. But there 
can be a big difference in terms of number of quantifiers, as we will now see.

\confORfull{}{\subsection{Lower Bound Argument} \label{ss: lb}

The aim of this section is to prove the following lower bound.

\begin{theorem} \label{thm: main_tec}
Suppose Spoiler needs $r$-rounds to win the $\pexists k$-pebble game on $[\strucA, \strucB]$. Then Duplicate can get at least $2^{r}$ points in the $\pexists (k+1)$-LB game on $[\strucSA, \strucSB]$. 
\end{theorem}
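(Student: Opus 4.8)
The plan is to exhibit an explicit Duplicator strategy in the $\pexists(k+1)$-LB game on $[\strucSA,\strucSB]$ which maintains Facts~\ref{fact1} and~\ref{fact2}, so that the simulated $\pexists k$-pebble game $S(P_t)$ defined in Section~\ref{ss: simul} is well defined, and which moreover guarantees that Spoiler can never trigger case~(3)(b) in the definition of $S(P_t)$ --- i.e.\ can never make a fresh element of $A$ active --- without first performing a split of degree at least $2$. Granting this, and recalling that Duplicator plays the simulated game according to an optimal Duplicator strategy in the $\pexists k$-pebble game on $[\strucA,\strucB]$, Spoiler needs at least $r$ moves that advance the simulated game in order to win it, hence at least $r$ splits of degree $\ge 2$ in the big game before winning; so $\points$ reaches $2^r$ and Duplicator's accumulated score is at least $2^r$.

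Duplicator's strategy is a hybrid dictated by where Spoiler plays. If Spoiler plays on (a coordinate of) one of the $\sigma$-relations, or otherwise touches $A$ in a way that is not a legitimate step of a traversal of the split gadget, Duplicator answers with a dummy element $d\in D$ on every $\RHS$ board; the additional tuples put into $R^{S(\strucB)}$ for $R\in\sigma$ and into $R_s,R_0,\dots,R_3$ ensure that this is always a partial homomorphism, so no board is deleted and the move makes no progress. If Spoiler plays along the intended path inside the start gadget or the split gadget, Duplicator is forced into the responses described in Sections~\ref{sss: start} and~\ref{ss: spoiler_win} on the boards whose pebbles on $A$ are genuinely active, and dummies out on all other boards; crucially, at the $E$-edge of the start gadget and at the XOR pair $\{s_0,s_1\}$ of the split gadget Duplicator plays \emph{both} admissible answers (copying boards), creating two families of $\RHS$ boards on which progress can only be continued by opposite choices of the subsequent $R_2$-branch. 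Finally, on a move that does make a new element of $A$ active, Duplicator reads off from $S(P_t)$ the simulated pebble move $q\to x$ and chooses the witnessing element of $B$ according to the optimal simulated-game strategy.

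I would then establish, in this order: (1) with this strategy Facts~\ref{fact1} and~\ref{fact2} hold after every round, so $S(P_t)$ is well defined, and $S(P_t)$ is always a position of the $\pexists k$-pebble game in which the $\LHS$ is partially homomorphic to the $\RHS$ --- true for the first $r-1$ advancing moves since Duplicator plays that game optimally; (2) whenever the simulated $\LHS$ is partially homomorphic to the simulated $\RHS$, at least one $\RHS$ board of the big game survives, so Spoiler has not yet won; (3) --- the core step --- the only way for Spoiler to make a fresh element of $A$ active is to traverse the split gadget from an $R_0$-configuration through $R_1$, $R_2$, $R_3$, and once Spoiler has entered the gadget the live $\RHS$ boards fall into the two XOR types on which only mutually exclusive $R_2$-branches are forced, so Spoiler must split the $\RHS$ into these two types; Duplicator then puts both types into $L$, since from either one Spoiler can complete a traversal, making the split have degree (at least) $2$. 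The same analysis applied to the start gadget shows that activating $a(1),\dots,a(k)$ also costs Spoiler one degree-$2$ split, which only helps Duplicator. Combining (1)--(3): to win the big game Spoiler must drive the simulated game to a Spoiler-win position, which by optimality of Duplicator's simulated play needs at least $r$ advancing moves, hence at least $r$ degree-$2$ splits; by the round of the $r$-th such split $\points\ge2^r$, so Duplicator scores at least $2^r$.

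The hard part is step~(3): one must check by a case analysis over all of Spoiler's possible moves that the dummy tuples always leave Duplicator a progress-free reply whenever Spoiler deviates from the intended path, and that there is no shortcut --- exploiting the re-used $(k+1)$-st pebble, a board on which an earlier traversal was aborted, or a partially built $R_i$-tuple --- by which Spoiler can activate an element of $A$ or get past the $s_0+s_1=1$ clause without a genuine degree-$2$ split.
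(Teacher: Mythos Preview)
Your proposal is correct and follows essentially the same approach as the paper: a Duplicator strategy combining dummy (``passive'') responses with two XOR-typed boards at the split gadget, driven by an optimal simulated $\pexists k$-pebble game, so that each unit of simulated-game progress forces a degree-$2$ split. The paper's implementation introduces two technical devices you would likely converge on when carrying out your step~(3): a \emph{reset} mechanism that collapses and re-duplicates the $\RHS$ to exactly two boards $T_0,T_1$ at every initial or critical position, and a ``safe pebble'' invariant together with a progress counter $c(t)$ (the remaining simulated-game distance) and an increment lemma $c(t_1)+1\ge c(t_0)$ between consecutive critical positions.
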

}

\confORfull{\subsection{Duplicator Strategy} \label{ss: duplicator_stat}}
{\subsubsection{Duplicator Strategy} \label{ss: duplicator_stat}}

Before stating the strategy let us make a few simplifications. We may assume Spoiler never plays any $d\in D$ since these elements do not appear in any tuple of any $R^{\strucSB}$, $R \in S(\sigma)$. Similarly we may assume that Spoiler doesn't play any element of the form $x_i^1$, $x\in \{s, g, l, r\}$, for on the $\LHS$ each such element appears only in a single unary relation along with $x_i^0$ . It is easy to see that if Spoiler plays $p \to x_i^1$ in a winning strategy, then if they replaced this move with $p \to x_i^0$ this would also be winning.  Also if Spoiler plays on top of a pebble, then Duplicator always plays their only valid move, i.e., to play on top of the corresponding pebble on every $\RHS$ board. As every relation has disjoint positions and since there is always a unique $\LHS$ board it follows that such moves do not help Spoiler, so we assume that such moves are not made.\footnote{The fact that there is a unique $\LHS$ board is crucial for this\confORfull{, see \cite{fagin2021}.}{. In the context of pebble games, we are used to assuming that it never helps Spoiler to play `on top' of an existing move. However, perhaps surprisingly, Duplicator's ability to make copies of boards means this assumption ceases to be valid in our games, see \cite{fagin2021} for an example of this.}}

\confORfull{For the start gadget, Duplicator always plays the same element on every $\RHS$ board and at each point they determine their response via a procedure we call the start strategy. We describe this in Appendix~\ref{a: exlb} where we also show the following easy lemma.}{We begin by looking at what happens when Spoiler plays on the start gadget and show that Duplicator always has a valid response to such moves, independently of their strategy on the rest of the structure. In such situations, Duplicator determines their reply on a $\RHS$ board $\lstrucSBb$ via the following procedure, which we call the \emph{start strategy}. If Spoiler plays some $g_i^0$ then Duplicator replies with $g_i^0$ unless either (1) this response is not valid or (2) $i=0$ and an element of the form $r_j^0$ is pebbled on $\lstrucSBb$ or (3) $i=1$ and an element of the form $l_j^0$ is pebbled on $\lstrucSBb$. If (1), (2) or (3) hold Duplicator instead plays $g_i^1$. If Spoiler instead plays $y_i^0$, $y \in \{l, r\}$, Duplicator plays $y_i^1$ whenever it is valid and $y_i^0$ otherwise. The following follows by a simple induction, we give the details in Appendix~\ref{a: start}. }

\begin{lemma} \label{lem: start}
Suppose that in round $r$ of the $\pexists (k+1)$-LB game starting from position $[\{\strucSA\}, \{\strucSB\}]$, Spoiler plays $p\to x \in G$. Suppose that in every previous round in which Spoiler played an element of $G$ Duplicator followed the start strategy on every $\RHS$ board. Then the move recommended by the start strategy in round $r$ is valid on every $\RHS$ board.
\end{lemma}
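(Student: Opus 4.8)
The plan is to prove the lemma by induction on the round number, carrying along an invariant on the $\RHS$ boards that is simultaneously strong enough to force validity of the recommended reply and preserved by the start strategy. First I would record two reductions. Since the colour class of each start‑gadget element is exactly $\{x_i^0,x_i^1\}$ for $x\in\{g,l,r\}$ (and contains no element of $D$), and since by our standing conventions Spoiler only plays elements of the form $x_i^0$, any Duplicator reply to a move $p\to x_i^0$ must be either the \emph{unflipped} reply $x_i^0$ or the \emph{flipped} reply $x_i^1$ — precisely the two options the start strategy chooses between. Moreover the $\LHS$‑interpretations of $R_0,R_1,R_2,R_3$ (and of the relations of $\sigma$) contain only tuples over $A\cup A_0\cup A_1\cup\{s_i^a\}$, so no tuple through a pebble just placed on a start‑gadget element can satisfy any of them on the $\LHS$; hence the only relations Spoiler's round‑$r$ move together with a reply can newly cause to fail are the colour relations (automatically respected by picking $x_i^0$ or $x_i^1$), the binary relation $E$, and the $(k+1)$‑ary relation $R_s$. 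It therefore suffices to show the recommended reply never violates $E$ or $R_s$, and re‑establishes the invariant.

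The invariant I would maintain on each $\RHS$ board $\lstrucSBb$ records the flip pattern of the start‑gadget pebbles together with the coherence it must satisfy: (i) every pebble whose $\LHS$‑image has the form $l_j^0$ or $r_j^0$ is flipped on $\lstrucSBb$; (ii) whenever $g_0^0$ and $g_1^0$ are both pebbled, exactly one of the two corresponding pebbles is flipped on $\lstrucSBb$ (this is exactly what $E$ demands once the pair $(g_0^0,g_1^0)$ is realised on the $\LHS$); and (iii) a coherence clause tying the flip status of the $g_0$‑ (resp. $g_1$‑)pebble to that of the $l$‑ (resp. $r$‑)pebbles and of any pebble on an isolated element $a(m)$, arranged so that whenever one of the defining tuples of $R_s^{\strucSA}$ is realised on the $\LHS$ its $\lstrucSBb$‑image lies in $R_s^{\strucSB}$ — either in the mirror part (all relevant pebbles flipped and each pebble on $a(m)$ mapped to $b(m)$, which holds under the ambient Duplicator strategy) or in one of the three dummy families (a $g_i^0$ coordinate survives; or an $l_j^0$/$r_j^0$ coordinate survives while the matching $g_0^1$/$g_1^1$ does not; or two $D$‑coordinates appear). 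Granting the invariant after round $r-1$, the inductive step is a case split on Spoiler's move $p\to x$. If $x=y_i^0$ with $y\in\{l,r\}$, the flipped reply can fail only if this move completes a tuple of $R_s^{\strucSA}$ whose image is forced out of $R_s^{\strucSB}$; I would show that in that event the unflipped reply $y_i^0$ yields a $\RHS$ tuple containing a surviving $l_j^0$/$r_j^0$ coordinate and, by (i), no $g_0^1$/$g_1^1$ — a dummy of type (2) — so the unflipped reply is valid, exactly what the strategy plays. If $x=g_i^0$, the unflipped reply violates $E$ only when the opposite $g$‑pebble is present and unflipped, and then the flipped reply restores $E$ while, since by (i) no $l_j^0$/$r_j^0$ coordinate survives unflipped on the $\RHS$, it leaves every realised $R_s^{\strucSA}$‑tuple's image in the mirror/dummy regime; and if instead clause (2) or (3) of the start strategy triggers the flip, the same invariant shows the flip creates neither an $E$‑ nor an $R_s$‑violation. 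In every branch the recommended move is valid and the invariant is restored.

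The main obstacle is the $R_s$‑completion bookkeeping: one must verify, for every way Spoiler can complete any of the $(k+1)$‑ary tuples of $R_s^{\strucSA}$ — the ``$g_0$–$l$'', the ``$g_1$–$r$'', and the ``$y_j^0,\dots,y_k^0,a(1),\dots,a(j)$'' families — from an arbitrary, not necessarily ``intended'', configuration, that the recommended $\RHS$ image falls into the deliberately engineered mirror or dummy parts of $R_s^{\strucSB}$. This is where the asymmetric cross‑linking of $g_0$ with $l$ and of $g_1$ with $r$ in the dummy tuples, and its interaction with the single $E$‑constraint, has to be tracked with care; once that is pinned down, the remaining verifications (colours, the $E$‑only cases, and the preservation of the invariant) are routine.
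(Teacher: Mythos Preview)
Your overall plan---induction on the round with a board-local invariant, reduction to the relations $E$ and $R_s$, and a case split on whether Spoiler plays a $g$-element or a $y\in\{l,r\}$ element---is exactly the paper's approach. The gap is in the specific invariant you propose.

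Your clause (i), ``every pebble whose $\LHS$-image has the form $l_j^0$ or $r_j^0$ is flipped on $\lstrucSBb$'', is too strong to be maintained. You yourself identify the branch where the flipped reply $y_i^1$ fails and the start strategy plays the unflipped $y_i^0$; after that move the pebble sits on $y_i^0$ on the $\RHS$, directly violating (i). So the invariant is \emph{not} restored in that branch, and the induction breaks at the next round. Relatedly, your deduction ``by (i), no $g_0^1/g_1^1$'' in that same branch is a non-sequitur: clause (i) speaks only about $l/r$-pebbles and says nothing about which $g$-elements are present on the $\RHS$.

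What actually makes the unflipped branch go through is a structural observation rather than (i): the flipped reply $y_i^1$ can only fail when the $\LHS$ realises one of the initial-position tuples $(y_j^0,\dots,y_k^0,a(1),\dots,a(j))$, and in such a position no $g$-element is pebbled at all, so in particular $g_0^1$ (resp.\ $g_1^1$) is absent on the $\RHS$ and the dummy family (2) applies. The paper accordingly replaces (i) by four pairwise-exclusion invariants on each $\RHS$ board---roughly, $g_0^1$ or $g_1^0$ present $\Rightarrow$ no $l_j^0$; $g_0^0$ or $g_1^1$ present $\Rightarrow$ no $r_j^0$; and $l_i^0$ and $r_j^0$ are mutually exclusive---which \emph{do} allow an unflipped $l_i^0$ or $r_i^0$ to appear (since in an initial position none of the forbidden companions are pebbled) and are what you actually need to push the $g_i^0$ cases through. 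If you weaken (i) to this exclusion form and replace the ``by (i)'' step with the initial-position observation, your sketch becomes the paper's proof.
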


We now describe Duplicator's complete strategy, which we call the \emph{lower bound strategy}. This ensures there are at most two $\RHS$ boards at the end of every round. It will be helpful to have names for our $\RHS$ boards: $T_0$ and $T_1$. Allowing Duplicator to maintain two identical boards does not help them to win so we do this at points for notational convenience. 

Suppose that Spoiler moves pebble $p \to x$ in round $r$. If $x\in G$, Duplicator follows the start strategy on every board. Otherwise, if on a $\RHS$ board there is some $d\in D$ which is a valid response Duplicator plays such a $d$. Call this the passive response. If the passive response is not valid on some $T_{\ell}$, then Duplicator's move depends on the colour of $x$.

 If $x \in A $, then if there is some $b\in B$ which is a valid response Duplicator chooses one such $b$ and plays it. If no such $b$ 
exists Duplicator plays arbitrarily---the board will be deleted anyway. If $x = s_i^0$, for some $i \in \{0,1\}$, then Duplicator plays $s_i^0$ on $T_{i \oplus 1}$ and $s_i^1$ on $T_i$; put differently Duplicator plays $s_i^{i \oplus \ell \oplus 1}$ on $T_{\ell}$.

Finally, suppose $x = a_i \in A_i$ for some $i\in \{0,1\}$. Then Duplicator looks 
at $S(P_{r-1})$, if $p$ was not previously active, and the position obtained from $S(P_{r-1})$ by removing $f_{r-1}(p)$ otherwise. From this position, they imagine that
Spoiler places pebble $q := 
\min\{m \, \mid \, m \not\in f_{r-1}(\act(P_{r-1})\setminus \{p\})\}$ on $a$. Then Duplicator finds an
optimum response in the $\pexists k$-pebble game to this move, say $b$. Then $b_i$ is the active response.\confORfull{}{\footnote{It may seem 
here that we are defining the simulated game in terms of our 
strategy and our strategy in terms of the simulated game therefore 
creating some kind of circularity. But in fact this circularity is 
not vicious. All we need to define the simulated game $S(P_r)$ is 
the position of the game at the start of round $r$ and all we need to determine 
what to do in round $r$ is the state of $S(P_{r})$. So from the 
position of the game at the start of round $1$ (i.e. the trivial position) $S(P_0)$ 
is determined, from $S(P_0)$ Duplicator knows how to respond in 
round $1$. Once this is done we can find $S(P_1)$ and so on.}} Note that for every $R \in \sigma$, no $a(i)$ appears as a coordinate of a tuple in $R^{\strucA}$. Therefore, any Duplicator response to this in the $\pexists k$-pebble game is optimal and so we stipulate that Duplicator always chooses $b(i)$ in the simulated game in this case. 

If Spoiler performs a split then there is a unique way this can be done, since there are always at most two $\RHS$ boards, and Duplicator gives Spoiler a free choice on which partition to continue the game on. In particular, the degree of every split is two.

Finally, Duplicator will sometimes delete boards. This is not formally part of the rules of the game, but allowing Duplicator to do this makes it harder for them to win so we may allow such moves without affecting the veracity of our lower bound. Deletions occur only if at the end of the round we are in an initial or a critical position. If we are in an initial position then Duplicator chooses a surviving board arbitrarily and deletes any other board. Afterwards, they make two identical copies of the remaining board and rename one as $T_0$ and one as $T_1$. Similarly, if we are in a critical position and there are two surviving boards, then Spoiler chooses the board with the lowest number of pebbles lying on $B$, breaking ties arbitrarily. They then delete any other board, make two identical copies of the remaining board and re-name one as $T_0$ and one as $T_1$. We call both of these manoeuvres \emph{resets}.
 
\subsubsection{Analysis of the Duplicator Strategy}

As our strategy depends on the simulated game being well-defined, our first step is to prove that Facts~\ref{fact1} and \ref{fact2} hold. Afterwards, we show that Duplicator's strategy respects every relation in $S(\sigma) \setminus \sigma$; this is the key point allowing us to prove Theorem~\ref{thm: main_tec}. In order to make the inductive arguments go through it is convenient to prove Facts 1 and 2 simultaneously; in fact, we prove a slight strengthening of Fact 2. 

\begin{lemma}[Facts 1 and 2] \label{lem: facts12}
If Duplicator plays the $\pexists (k+1)$-LB game starting from position $[\{\strucSA\}, \{\strucSB\}]$ using the lower bound strategy, then a pebble $p$ can only be active at the end of a round, if:
\begin{enumerate}
\item[a.]  $p$ was active at the beginning of the round and has not moved or
\item[b.] we are in a critical position, $p$ was moved during this round and $p$ covers the potentially active element, or
\item[c.] we are in an initial position and $p$ was moved during this round.
\end{enumerate} 
Moreover, for any pebble $p$ if $p$ lies on $b\in B$ on one $\RHS$ board, then it lies on $b$ on every $\RHS$ board.
\end{lemma}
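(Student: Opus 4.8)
The plan is to prove Facts~1 and~2, together with the strengthening, simultaneously by induction on the round number $r$, using the observation recorded in the footnote that there is no vicious circularity: the state $S(P_{r-1})$ of the simulated game is determined purely by the position of the $\pexists(k+1)$-LB game reached after round $r-1$, and Duplicator needs only $S(P_{r-1})$ (together with $f_{r-1}$) in order to decide their reply in round $r$. Thus the induction hypothesis ``Facts~1 and~2 hold after round $r-1$'' both makes $S(P_{r-1})$ well-defined and is exactly what is required to run the lower bound strategy in round $r$. For the base case $r=0$ no pebble is placed, so no pebble is active and there is nothing to check.

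For the inductive step, assume the statement holds after round $r-1$. If round $r$ is a split, note first that Duplicator's strategy always keeps at most two $\RHS$ boards which, by the strengthened Fact~2 part of the induction hypothesis, carry identical placements of every pebble lying on $B$; a split merely partitions $\{T_0,T_1\}$ and moves no pebble, so both facts are preserved and activity is unchanged. A reset only ever occurs at the end of a round in which an initial or a critical position has been reached, keeps a single surviving board and then duplicates it, and never alters the $B$-position of any pebble; we therefore fold resets into each of the pebble-move cases below.

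If round $r$ is a pebble move $p\to x$, we use the standing simplifications (Spoiler never plays an element of $D$, never plays $x_i^1$ for $x\in\{s,g,l,r\}$, and never plays on top of an existing pebble) and split on the colour of $x$. \emph{Case $x\in G$:} by Lemma~\ref{lem: start} Duplicator's start-strategy reply is valid on every board, so no board is deleted; the replies lie among the $g_i^a,l_i^a,r_i^a$, none of which is in $B$, so Fact~2 is immediate, and $p$ lands outside $A$, so $p$ is not active and the only active pebbles are those that were active and did not move, i.e.\ case~(a). \emph{Case $x\in S$ (resp.\ $x\in A_0\cup A_1$):} Duplicator's reply lies in $S\cup D$ (resp.\ in $B_i\cup D$), never in $B$, so Fact~2 holds, and $x\notin A$, so $p$ is not active; any reset triggered at the end of such a round is handled as above. \emph{Case $x\in A$:} if some $d\in D$ is a valid reply on every $\RHS$ board, Duplicator plays the passive response, $p$ is not active, and no $B$-position changes. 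Otherwise the passive response fails on some $T_\ell$; since the $\sigma$-relations always admit a $D$-coordinate in $\strucSB$, this can only be forced by one of the gadget relations, and a direct inspection of $R_s^{\strucSB}$ and $R_3^{\strucSB}$ shows that the $\LHS$ must then model the relevant relation through a tuple $\bar q\ni p$ whose other pebbles have the $B$-positions prescribed by the induction hypothesis; this forces precisely the situation of case~(b) (a critical position with $x$ the potentially active element) or case~(c) (an initial position), and in both cases the $(b_i,b)$-pattern of $R_3$ (resp.\ the mirror tuples of $R_s$) pins down a \emph{unique} admissible $b\in B$, so Duplicator plays the same $b$ on every board and Fact~2 is maintained.

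The step I expect to be the main obstacle is the case $x\in A$: one has to trace carefully through the dummy tuples of $R_s$, $R_0$, $R_1$, $R_2$ and $R_3$ to establish that (i) no $\sigma$-relation and none of $R_0$, $R_1$, $R_2$ can ever force a $B$-valued reply (their $\strucSB$-interpretations are deliberately flooded with $D$- and $s_i$-tuples), (ii) when $R_3$ or $R_s$ does force a $B$-valued reply we are necessarily in a critical or an initial position as described, and (iii) the forced element is uniquely determined once the $B$-positions of the remaining pebbles in the witnessing tuple are known. This is exactly the point at which the design of the start and split gadgets is being used, and it is essentially the whole content of the lemma; the remaining cases are routine bookkeeping.
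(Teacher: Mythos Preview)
Your overall plan (simultaneous induction on the round number, case analysis on the colour of $x$) matches the paper's proof in spirit, and your treatment of the cases $x\in G$, $x\in S$, $x\in A_0\cup A_1$ is fine. The gap is in the case $x\in A$, specifically in your argument for Fact~2.

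You argue that when the passive response fails, the $(b_i,b)$-pattern of $R_3$ (or the mirror pattern of $R_s$) ``pins down a \emph{unique} admissible $b\in B$, so Duplicator plays the same $b$ on every board.'' This is not what happens. First, the passive response can fail on $T_\ell$ but remain valid on $T_{\ell\oplus 1}$: indeed, when the pebble $q$ covering $a_i$ was placed (your own Case $x\in A_i$), the lower bound strategy put $b_i$ on exactly one board and $d\in D$ on the other, so in the subsequent critical position one board still has a $D$-coordinate in the witnessing tuple for $R_3$ and the passive response is valid there. In that situation Duplicator plays $b$ on one board and some $d'\in D$ on the other, and your uniqueness claim says nothing. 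Second, even the ``unique $b$'' on a single board is determined by the $B_i$-position of $q$ on that board, which is \emph{not} covered by the strengthened Fact~2 (that hypothesis only constrains pebbles lying on $B$, not on $B_0\cup B_1$), so you cannot use the induction hypothesis to align the forced replies across the two boards.

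The paper's proof sidesteps all of this: it simply observes that whenever a pebble newly lands on $B$ we are (by the Fact~1 part of the argument) in a critical or an initial position, and in both cases Duplicator performs a \emph{reset} at the end of the round, leaving two identical copies of a single surviving board. Fact~2 then holds trivially after the reset. You mention resets early on but dismiss them as ``never altering the $B$-position of any pebble''; this misses the point --- the reset is exactly the mechanism that restores agreement across boards, and it is the whole content of the Fact~2 argument in the $x\in A$ case. Replace your uniqueness argument by an appeal to the reset and the proof goes through.
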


\begin{proof}
We prove this by induction on the number of rounds played, $r$. The base case $r=0$ is trivial. Suppose that in round $r$, Spoiler moves $p\to x$ and some $a \in A$ 
becomes active. Let $\lstrucSAa \in \LHS$ after round 
$r$. Suppose that $\assA(q) = a$, for some pebble $q \neq p$. Since $q$ is not active at the beginning of round $r$ it lies on $D$ on some $\RHS$ board. Therefore, by the induction hypotheses, $q$ lies on $D$ on every $\RHS$ board. It follows that $q$ cannot become active in round $r$. 

So suppose $a=x$ and that we are in neither an initial nor a critical position at the end of round $r$. Then the $\LHS$ board is not a model of $R_3$ or $R_s$ and so, by the interpretation of every other relation in $\strucSB$, on any $\RHS$ board $p \to d$ is a valid response for any unused $d\in D$. Since Duplicator's strategy dictates that they play such a move whenever it is valid, it follows that $a$ does not become active. Similarly, if $a=x$ and we are in a critical position but $a$ is not the potentially active element then by the interpretation of $R_3$ it follows that $p \to d$ is a valid response on every $\RHS$ board. 

Now let $p$ be a pebble lying on some $b\in B$ on some $\RHS$ board at the end of round $r$. We claim that $p$ lies on $b$ on every $\RHS$ board. If $p$ was not moved during the last round this follows from the induction hypothesis. So suppose $p$ was moved during the last round. Then by the previous analysis since $p$ lies on $b$ on some $\RHS$ board it must be that we are in either a critical or an initial position. But in both these cases Duplicator performs a reset at the end of the round, leaving two identical boards. The claim follows.
\end{proof}

We next move on to the task of showing that Duplicator's strategy always respects every relation in $S(\sigma) \setminus \sigma$. To do this we will need auxiliary hypotheses which, as a bonus, imply that Spoiler cannot change the set of active pebbles without splitting. To help simplify the statement of our next lemma we introduce the notion of a \emph{safe} pebble. The idea is that if every pebble lying on $A$ is safe then the position in the simulated game has not got better for Spoiler since the last initial/critical position. If every pebble lying on $A_0$ and $A_1$ is also safe then Spoiler is, in some sense, not even close to changing this situation.

\begin{definition} \label{def: safe}
Let $P=[\{\lstrucSAa\}, \classB ]$ be a position in round $r$ of the $\pexists (k+1)$-LB game starting from position $[\{\strucSA\}, \{\strucSB\}]$ where Duplicator plays using the lower bound strategy. Suppose that the last time prior to round $r$ that an initial or a critical position was reached $\{v(0), \dots, v(t)\} \subseteq A$ were the active elements. Moreover, suppose that for every $i\in[t]$, the pebble which covered $v(i)$, covered $u(i) \in B$ on every $\RHS$ board. Then we say a pebble $p$ with $\assA(p) \in A$ is \emph{safe} at $P$ if one of the following holds.
\begin{enumerate}
\item There is some $i\in[t]$ such that $\assA(p) = v(i)$ and $\assB(p) =u(i)$ for every $\lstrucSBb \in \classB$.
\item There is some $i\in[k]$  such that $\assA(p) = a(i)$ and $\assB(p) =b(i)$ for every $\lstrucSBb \in \classB$.
\item $\assB(p) \in D$ for every $\lstrucSBb \in \classB$.
\end{enumerate}
Similarly, we say a pebble $p$ such that $\assA(p) \in A_0 \cup A_1$ is \emph{safe} at $P$ if one of the following holds.
\begin{enumerate}
\item There is some $i\in[t]$, $j\in \{0,1\}$ such that $\assA(p) = v(i)_j$ and $\assB(p) =u(i)_j$ for every $\lstrucSBb \in \classB$.
 \item $\assB(p) \in D$ for some $\lstrucSBb \in \classB$.
\end{enumerate}
Finally, it will be linguistically convenient to stipulate that every pebble $p$ not lying on $A \cup A_0 \cup A_1$ is safe.
\end{definition}

Note that in the definition we refer to the last round \emph{prior} to the current round that an initial or a critical position occurred. So even if we reach a critical position if every pebble is safe then no new elements of $A$ become active. We are now in a position to prove the following lemma.

\begin{lemma} \label{lem: dual-boards}
Suppose that Duplicator plays the lower bound strategy in the $\pexists (k+1)$-LB game starting from position $[\{\strucSA\}, \{\strucSB\}]$. Suppose further, that at the start of round $r$ there are two $\RHS$ boards and that in this round Spoiler does \emph{not} perform a split. Then there are also two $\RHS$ boards at the end of round $r$. Moreover, every pebble is safe at the end of the round.
\end{lemma}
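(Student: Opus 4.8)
The plan is to prove Lemma~\ref{lem: dual-boards} by induction on the number of rounds $r$, carrying both conclusions (two $\RHS$ boards survive; every pebble is safe) as a joint invariant, and to case-split on the colour of the element $x$ that Spoiler plays when moving $p \to x$ in round $r$. First I would invoke Lemma~\ref{lem: facts12} so that the simulated game $S(P_{r-1})$ is well-defined and I know the active pebbles sit on the same $B$-element on both $\RHS$ boards $T_0, T_1$. The base case is the first round after the last reset: immediately after a reset Duplicator holds two identical boards $T_0 = T_1$ and every pebble is safe by construction (the active elements are exactly the $v(i)$ or the $a(i)$, pebbled on $u(i)$ resp.\ $b(i)$, and no $A_0 \cup A_1$ pebble is set yet), so the invariant holds trivially at the start.

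For the inductive step, suppose the invariant holds at the start of round $r$ and Spoiler moves $p \to x$ without splitting. I would run through the Duplicator strategy cases. If $x \in G$, Lemma~\ref{lem: start} gives a valid start-strategy response on each board and the boards remain distinct (start-gadget moves never touch $A \cup A_0 \cup A_1$, so safety of those pebbles is preserved, and $p$ itself lands outside $A \cup A_0 \cup A_1$ hence is safe by fiat). If the passive response ($p \to d$, $d \in D$) is valid on both boards, Duplicator plays it: then $p$ lies on $D$ on both boards, so $p$ is safe (clause 3 of both halves of Definition~\ref{def: safe}), every other pebble is unchanged and still safe, and the two boards stay distinct because they were distinct before. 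The interesting cases are when the passive response fails on some $T_\ell$, which by the interpretations of $R_s, R_0, R_1, R_2, R_3$ forces the $\LHS$ board to be a model of the relevant relation with $x$ occupying a designated coordinate. If $x = s_i^0$, Duplicator plays $s_i^{i \oplus \ell \oplus 1}$ on $T_\ell$; this respects the colour on $S$ and keeps $T_0, T_1$ distinct (they differ on the $s_i$-coordinate), and $p$ is safe since it lies outside $A \cup A_0 \cup A_1$. If $x = a_i \in A_i$, Duplicator consults the simulated game, obtains an optimal $\pexists k$-pebble response $b$ and plays $b_i$ — here I must check that this response is valid on both boards and that $p$ is safe: it is, because $b$ being an optimal response in the (well-defined) pebble game means the corresponding $R_2$-tuple is respected, and the safety clause for $A_i$-pebbles (clause 1) is exactly `$\assB(p) = u(i)_j$' which holds by how $b_i$ was chosen relative to the recorded $u(i)$'s; moreover no reset occurs unless we land in a critical/initial position, and if we do land there the final sentence of the lemma about resets is handled in the next lemma. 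If $x \in A$, Duplicator plays a valid $b \in B$ if one exists; crucially, since Spoiler did \emph{not} split and by the induction hypothesis all pebbles were safe, the position in the simulated game has not improved, so such a $b$ exists on both boards — this is where I use that the two $\RHS$ boards `dual' each other — and the new pebble $p$ satisfies safety clause 1 or 2.

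The key structural point threading through all cases is: as long as Spoiler does not split, the two boards $T_0, T_1$ encode the two complementary ways of satisfying the hidden XOR constraint $s_0 + s_1 = 1$ in the split gadget, so any single Spoiler move that makes progress on one board is blocked (passively, via $D$, or via the colour/relation interpretations) on the other, and hence no new element of $A$ can become active and no pebble on $A$ can move to a `better' simulated-game position. Formally I would show that the set of active pebbles is unchanged and that the simulated game position $S(P_r) = S(P_{r-1})$ in every non-split round (using case (3)(a) of the simulated-game definition, since either $x \notin A$ or $x$ is not active at the end of the round), which immediately yields safety of every pebble lying on $A$; safety of pebbles on $A_0 \cup A_1$ then follows because in a non-reset round their $\RHS$-images are either in $D$ or were set to some $u(i)_j$ by the $R_2$-response rule.

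I expect the main obstacle to be the $x = a_i \in A_i$ case together with making precise the claim that Duplicator's active response is simultaneously valid on both $\RHS$ boards: this requires carefully unwinding the interpretation of $R_2^{S(\strucB)}$ (in particular the `funny-business' second line involving $D \cup \{s_0^0, s_1^0\}$) to see that on the board $T_\ell$ where the passive response failed, the $s$-coordinate is the `$1$-coloured' one, so Duplicator is indeed forced into $B_i$ and the simulated-pebble-game response lands correctly; on the other board the passive response succeeds and nothing needs checking. A secondary subtlety is the bookkeeping around the last reset: the definition of safe refers to the active elements $\{v(0), \dots, v(t)\}$ at the most recent initial/critical position, so I must be careful that in a round where we do \emph{not} reset, the reference set is unchanged, and the invariant I am propagating is stated relative to that fixed reference set. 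I do not expect the $x \in G$ or passive-response cases to cause trouble; they are routine given Lemma~\ref{lem: start} and Lemma~\ref{lem: facts12}.
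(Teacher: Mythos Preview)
Your overall approach---induction on $r$, case-split on the colour of the element $x$ that Spoiler plays, using Lemma~\ref{lem: facts12} and Lemma~\ref{lem: start} as inputs---is the same as the paper's, and most cases are handled correctly in outline.

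There is, however, a genuine gap in your treatment of the case $x = a_i \in A_i$. You write that Duplicator ``plays $b_i$'' and that you must check this is ``valid on both boards'', and then argue safety of $p$ via clause~1 of Definition~\ref{def: safe}, claiming that the simulated-game response $b$ coincides with one of the recorded $u(i)$'s. Both steps are wrong. First, Duplicator does \emph{not} play $b_i$ on both boards: the lower bound strategy plays the passive response on each board where it is valid, and the active response $b_i$ only where it is not. Second, even on the board where $b_i$ is played, there is no reason for $b$ to be one of the $u(i)$'s---Spoiler may choose any $a \in A$, and the optimal pebble-game reply to that $a$ is unconstrained by the previous critical position. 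The correct argument, which you in fact gesture at in your final paragraph (``on the other board the passive response succeeds''), is that the passive response is valid on exactly $T_{\ell \oplus 1}$: inspecting $R_2^{S(\strucB)}$, the passive response fails on $T_\ell$ only because $s_i^1$ (not $s_i^0$) is pebbled there, and by the $s$-response rule this happens on precisely one of the two boards. Hence $p$ lies in $D$ on $T_{\ell\oplus 1}$, and $p$ is safe by clause~2 of the $A_i$-half of Definition~\ref{def: safe}. You have the right ingredients but invoked the wrong safety clause; the fix is local.

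A smaller point: in the $x \in A$ case you assert that ``such a $b$ exists on both boards'' because the simulated game has not improved. The paper's argument is more specific: when the passive response fails on \emph{both} boards (so $p$ becomes active), the pebble $q$ on $A_0\cup A_1$ must lie in $B_i$ on both boards, hence (being safe by induction, and clause~2 now inapplicable) satisfies clause~1, forcing $a = v(i')$ and the $R_3$-forced response to be exactly $u(i')$---which is what makes $p$ safe via clause~1. Your sketch is compatible with this but does not isolate the mechanism.
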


\begin{proof}
We prove this by induction on $r$; the base case $r=1$ is trivial. So suppose that there are two $\RHS$ boards at the start of round $r$ and that in this round Spoiler does not perform a split. Since there are two $\RHS$ boards at the start of the round either Duplicator performed a reset in the last round or there were two boards at the start of round $r-1$ and Spoiler didn't perform a split. In both cases, every pebble is safe at the beginning of round $r$, since resets create two identical boards and by the induction hypotheses. So suppose Spoiler moves pebble $p \to x$ and let $\lstrucSAa$ be the unique $\LHS$ board after Spoiler's move. We proceed by case analysis. 

\underline{Case 1:} $\mathbf{x\not \in A \cup A_0 \cup A_1}$ \\
Here it suffices to show that Duplicator's response is valid on every $\RHS$ board. For $x\in G$ this follows directly from Lemma~\ref{lem: start}. Next, suppose $x = s_i^0$. If Duplicator replies with the passive response on $T_{\ell}$ this is valid by definition. So suppose this response is not valid, then we claim the Duplicator response $s_i^{\ell \oplus i \oplus 1}$ is valid. Let $ \lstrucSBb$ be $T_{\ell}$ after this move is played. Then, since the passive response is not valid, $\lstrucSAa \models R_0(\bar{p}) \vee 
 R_1(\bar{p})$, for some tuple of variables $\bar{p}$. Since every pebble is safe at the beginning of the round, it is easy to deduce that $s_i^0$ and $s_i^1$ are valid responses. If $\lstrucSAa \models R_1(\bar{p})$
then there is some pebble $q$ with $\assA(q) = s_{i \oplus 1}^0$. Since the passive response is not valid we know that $\assB(q) \not \in D$. So
by the definition of the Duplicator strategy $\assB(q) 
= s_{i \oplus 1}^{i + \ell}$. Therefore $\assB(\bar{p}) \in B^{k-1} 
\times \{(s_0^1, s_1^0), (s_0^0, s_1^1)\} \subset R_1^{\strucSB}$. 

\underline{Case 2:} $\mathbf{x = a_i \in A_i}$ \\
Again it is enough to consider the case where the passive response is not valid on some $T_{\ell} := \lstrucSBb$. This only occurs if $\lstrucSAa \models R_2(\bar{p})$. In this case, let $q$ be the pebble such that $\assA(q) = s_i^0$. If $\assB(q) \in D$ then the passive response would be valid, a contradiction. Therefore, by the definition of Duplicator's strategy, $q \to s_i^{i \oplus \ell \oplus 1}$. Therefore $i = \ell$ as else again the passive response would be valid. Then recall that Duplicator replies with an element of $B_{\ell}$ so it follows that $\assB(\bar{p}) \in B^{k-1} \times \{s_{\ell}^1\} \times B_{\ell} \subset R_2^{\strucSB}$. Moreover, as the analysis above implies that the passive response is valid on $T_{\ell \oplus 1}$, pebble $p$ is safe.

\underline{Case 3:} $\mathbf{x = a \in A}$ \\
Once again consider the case where the passive response is not valid on $T_{\ell}$. By Lemma~\ref{lem: facts12}, we may deduce that we are in either an initial or a critical position. Suppose we are in a critical position. Then since the passive response is not valid on a $\RHS$ board it must be that $p$ is the potentially active element. If $p$ is not active then we have two boards at the end of the round and $p$ is safe, since Duplicator performs a reset at the end of the round. So suppose $p$ is active. Let $q$ 
be the unique pebble lying on some $a_i \in A_0 \cup A_1$ on the $\LHS$. Then $q$ lies on $B_i$ on both $T_0$ and $T_1$ since the passive response is not valid on either $\RHS$ board. Since $q$ is safe $\assA(q) = v_i$ and $q$ lies on $u_i$ on every $\RHS$ board, where $u, v$ are such that the last time we were in a critical position $q$ lay on $u$ on the $\LHS$ and on $v$ on every $\RHS$ board. Therefore, in round $r$ Duplicator responds with $u$ on both $\RHS$ boards. In the resulting position, since Spoiler did not win the last time we were in a critical position, no relation in $\sigma$ is violated so $u$ is a valid response on $T_0$ and $T_1$. Moreover, $p$ is safe. 

If we are in an initial position then $\assA(\bar{p}) = (y_i^0, \dots y_k^0, a(1), \dots, a(i))$ for some $y \in \{l,r\}$ , $i\in [k]$. Then we know that $x= a(j)$ for some $1 \le j \le i$. If the passive response is not valid on some $T_{\ell}:= \lstrucSBb$ then this implies that no element of the form $y_t^0$ or $D$ is pebbled on $T_{\ell}$. Then Duplicator plays $b(i)$ which implies that $\assB(\bar{p}) =  (y_i^1, \dots y_k^1, b(1), \dots, b(i)) \in R_s^{\strucSB}$. Moreover, since Duplicator performs a reset at the end of the round it is easy to see that $p$ is safe. 
\end{proof}

By performing a similar analysis to the above for the cases where there is only one $\RHS$ at the beginning of the round and where Spoiler performs a split we can obtain the following lemma. In fact, since the move Duplicator plays on $T_{\ell}$ does not depend on whether there are two $\RHS$ boards much of the above analysis can be deployed for this task.

\begin{lemma} \label{lem: ignore}
Suppose Duplicator plays the $\pexists(k+1)$-LB game starting from position $[S(\strucA), S(\strucB)]$ according to the lower bound strategy. Let $\lstrucSAa$ be the unique $\LHS$ board after Spoiler's move in round $r$ and $\lstrucSBb$ be a $\RHS$ board after Duplicator's response. Then for all $R \in S(\sigma) \setminus \sigma$ and all $\bar{p}$, $\lstrucSAa \models R(\bar{p})$ implies $\lstrucSBb \models R(\bar{p})$. 
\end{lemma}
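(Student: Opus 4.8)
The plan is to prove Lemma~\ref{lem: ignore} by induction on the round number $r$, carried out in parallel with the invariants of Lemmas~\ref{lem: facts12} and \ref{lem: dual-boards}, so that at the inductive step we may freely assume Facts~1 and 2 hold and that every pebble is safe whenever the hypotheses of Lemma~\ref{lem: dual-boards} apply (this is automatic: a reset always produces two identical copies, and resets occur exactly when we enter an initial or critical position). At the inductive step Spoiler either performs a split or moves a pebble $p \to x$. If Spoiler splits, then since there are at most two $\RHS$ boards the split merely partitions $\{T_0,T_1\}$ and changes no pebble placement on any surviving board; hence every $R \in S(\sigma)\setminus\sigma$ that held before still holds, and the claim follows from the induction hypothesis. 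So the real work is the pebble-move case, which I would dispatch by case analysis on the colour of $x$.

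For the ``easy'' colours there is nothing to check: if $x$ lies in a colour class $\{x_i^0,x_i^1\}$ with $x\in\{g,l,r,s\}$, or in $A$, or in $A_0\cup A_1$, then Duplicator's lower bound strategy always responds with an element of the matching colour (possibly an element of $D$), so every unary relation in $S(\sigma)\setminus\sigma$ is respected automatically. The only binary relation, $E$, is at stake precisely when $x\in G$, and there Lemma~\ref{lem: start} guarantees that the reply dictated by the start strategy is valid on every $\RHS$ board, so in particular it respects $E$ (when Spoiler plays $g_0^0$ the reply sits on some $g_0^i$ and any later move on $g_1^0$ is answered by $g_1^{i\oplus 1}$, and symmetrically). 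The relation $R_s$ is likewise only relevant when Spoiler plays on $G$, where Lemma~\ref{lem: start} again applies, or when Spoiler plays some $a(j)$ from an initial position; in the latter case the pebbled tuple on the $\LHS$ is forced to be $(y_i^0,\dots,y_k^0,a(1),\dots,a(i))$ for some $y\in\{l,r\}$, and Duplicator's reply $b(j)$ places the $\RHS$ tuple into $R_s^{\strucSB}$, exactly as in Case~3 of Lemma~\ref{lem: dual-boards}; the single-$\RHS$-board sub-case needs no separate treatment because Duplicator's reply on a fixed board $T_\ell$ is computed the same way regardless of how many $\RHS$ boards there are.

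The main content, and the principal obstacle, is the split gadget: showing that $\lstrucSAa \models R_j(\bar p)$ forces $\lstrucSBb \models R_j(\bar p)$ for $j\in\{0,1,2,3\}$. For $R_1,R_2,R_3$ I would reuse the Case~1--3 analysis of Lemma~\ref{lem: dual-boards} essentially verbatim --- for $R_1$, safety of the pebble on $s_{i\oplus1}^0$ forces the $\RHS$ tuple into $B^{k-1}\times\{(s_0^1,s_1^0),(s_0^0,s_1^1)\}$; for $R_2$, safety of the pebble on $s_i^0$ forces $i=\ell$ and the reply into $B_\ell$; for $R_3$, Facts~1--2 put us in a critical or initial position, and safety of the unique pebble on $A_0\cup A_1$ makes the ``active'' response valid. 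For $R_0$, which was only gestured at in Lemma~\ref{lem: dual-boards}, I would argue directly: $\lstrucSAa \models R_0(\bar p)$ means Spoiler just played $s_0^0$ with the first $k$ coordinates of $\bar p$ forming one of the two $\strucSA$-types of $R_0$; since each of those pebbles is safe, on $\lstrucSBb$ these coordinates either reproduce the matching $\strucSB$-tuple (when none lands in $D$) or contain an element of $D$ (the ``funny-business'' line of $R_0^{\strucSB}$), and appending Duplicator's reply $s_0^0$ or $s_0^1$ yields a tuple of $R_0^{\strucSB}$ either way. The two genuinely new points beyond Lemma~\ref{lem: dual-boards} are: (i) that all of the above must also run when there is only one $\RHS$ board at the start of the round, which causes no trouble since the reply on a given board is computed identically and the safety bookkeeping survives resets; and (ii) that the active response Duplicator computes through the simulated $\pexists k$-pebble game really does land in $R_3^{\strucSB}$ --- here the correspondence $f_r$ between $\act(P_r)$ and the simulated pebbles, together with well-definedness of $S(P_r)$ from Lemma~\ref{lem: facts12}, must be invoked to identify the $A$-part of $\bar p$ with a legal position of the simulated game, so that $R_3^{\strucSB}$-membership reduces to the simulated move being a legal Duplicator response there. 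I expect the interlock between the resets and the simulated game --- rather than any individual relation --- to be the fiddliest part of the argument.
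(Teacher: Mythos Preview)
Your plan has a real gap in the single-board case. You argue that ``the single-$\RHS$-board sub-case needs no separate treatment because Duplicator's reply on a fixed board $T_\ell$ is computed the same way'' and that ``the safety bookkeeping survives resets''. But safety in the sense of Definition~\ref{def: safe} does \emph{not} survive a split: for a pebble $q$ with $\assA(q)\in A_0\cup A_1$, condition (2) of that definition reads ``$\assB(q)\in D$ for \emph{some} $\RHS$ board''. Before the split, with two boards, $q$ can be safe because it lies on $D$ on $T_{\ell\oplus 1}$ while lying on some $b_i\in B_i$ on $T_\ell$; after the split to $T_\ell$ alone, $q$ is no longer safe. Resets restore safety, yes, but the rounds you must analyse are precisely those \emph{between} the split and the next reset. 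Your $R_0$ argument (``since each of those pebbles is safe\dots'') and your reuse of the Case~1--3 analysis of Lemma~\ref{lem: dual-boards} both lean on safety of the $A_0\cup A_1$ pebble, so they do not go through as written.

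The paper's proof confronts exactly this: it observes that in the single-board regime ``we no longer have the guarantee that every pebble is safe'' and introduces three auxiliary invariants (1)--(3), weaker than safety but maintained with one board. The crucial one is (2): if $\assA(p)=a$, $\assA(q)=a_i$ and neither image lies in $D$, then $\assB(p,q)=(b,b_i)$ for a common $b$. This is what makes $R_0$ (and later $R_3$) go through in Case~1, and maintaining it in Case~2 is a genuine extra step: when Spoiler plays $a_i$ with some $p$ already on $a$, one must argue through the simulated $\pexists k$-pebble game that Duplicator's computed response is $\assB(p)_i$ (because $f_{r-1}(p)$ already covers $a$ in the simulated position, the only non-losing reply there is the element under $f_{r-1}(p)$). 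You gesture at this in your point (ii), but only for $R_3$; the paper needs it to keep invariant (2) alive across rounds, which in turn is what feeds $R_0$. So the missing idea is not any individual relation check but the replacement of safety by these auxiliary invariants in the one-board phase.
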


\begin{proof}
 We proceed by induction on the number of rounds, noting that the base case is trivial. Firstly, if there are two boards at the start of round $r$ and Spoiler does not perform a split this follows directly from the proof of Lemma~\ref{lem: dual-boards}. So suppose that there is one board at the start of round $r$ or that Spoiler performs a split in this round. Let $\lstrucSAa$ be the unique $\LHS$ board after round $r$. The proof in this case is similar to Lemma~\ref{lem: dual-boards} but we no longer have the guarantee that every pebble is safe. To replace this assumption we  maintain the following auxiliary hypotheses.   
\begin{enumerate}
\item If at the end of round $r$, $\assA(p) \in A$ and we are not in a critical position where $\assA(p)$ is the potentially active element, then $p$ is safe.
\item Suppose that at the end of round $r$, $\assA(p) =a$, $\assA(q) =a_i$, for some $a\in A$. If $\lstrucSBb \in \LHS$ with $\assB(p), \assB(q) \not\in D$ then $\assB((p,q)) \in \bigcup_{b\in B} (b, b_i)$.
\item If $\assA(p) = a(i)$ then $\assB(p) \in \{b(i)\} \cup D$.
\end{enumerate}

We first show that it is sufficient to analyse the case where there is a single board at the beginning of round $r$. To see this suppose Spoiler performs a split in round $r$. Then there were two boards at the end of round $r-1$. Therefore, either there were two boards at the start of round $r-1$ or Duplicator performed a reset in round $r-1$. By Lemma~\ref{lem: dual-boards} and since after a reset both boards are identical, every pebble is safe at the beginning of round $r$. Therefore, it is easy to see that after Spoiler performs a split (1)-(3) hold. It follows that it is sufficient to analyse the case where at the beginning of round $r$ the $\LHS$ consists of a single board and (1)-(3) hold. 

So suppose the unique $\RHS$ board is $T_{\ell}$ and that Spoiler plays $p \to x$ in this round to produce a $\lstrucSAa \in \RHS$. We proceed by case analysis.

\underline{Case 1:} $\mathbf{x\not \in A \cup A_0 \cup A_1}$ \\
Here it suffices to show that Duplicator's response is valid on the $\LHS$ board. The argument is almost the same as in Lemma~\ref{lem: dual-boards} but now using our auxiliary hypotheses. In detail we need to show that if the passive response is not valid then the Duplicator response $s_i^{\ell \oplus i \oplus 1}$ is valid. In this case, as before, $\lstrucSAa \models R_0(\bar{p}) \vee 
 R_1(\bar{p})$, for some tuple of variables $\bar{p}$. If $\lstrucSAa \models R_0(\bar{p})$, then by (2) and (3) both $s_i^0$ and $s_i^1$ are valid responses. If $\lstrucSAa \models R_1(\bar{p})$
then there is some pebble $q$ with $\assA(q) = s_{i \oplus 1}^0$. Since the passive response is not valid we know that $\assB(q) \not \in D$. So
by the definition of the Duplicator strategy $\assB(q) 
= s_{i \oplus 1}^{i + \ell}$. Therefore $\assB(\bar{p}) \in B^{k-1} 
\times \{(s_0^1, s_1^0), (s_0^0, s_1^1)\} \subset R_1^{\strucSB}$. 

\underline{Case 2:} $\mathbf{x = a_i \in A_i}$ \\
The argument that Duplicator's response is always valid is identical to Case~(2) in Lemma~\ref{lem: dual-boards}. If Duplicator makes the passive response the induction hypotheses are clearly maintained. Otherwise $\lstrucSAa \models R_2(\bar{p})$ and $k-1$ elements of $A$ are active. In this case, we need to show that (2) is maintained. So suppose that Spoiler moves pebble $q$ to $a_i$ and that some pebble $p$ lies on $a$. Then to decide on their response Duplicator imagines that in position $S(P_{r-1})$, Spoiler moves the unused pebble to $a$ and finds an optimum response in the $\pexists k$-pebble game. But pebble $f_{r-1}(p)$ already lies on $a$. Therefore, the only response which doesn't immediately lose for Duplicator in the pebble game is to play the element covered by $f_{r-1}(p)$ on the $\LHS$. But this element is $\assB(p)$ by definition, so in the $\pexists (k+1)$-LB game Duplicator plays $\assB(p)_i$ and so (2) is maintained.

\underline{Case 3:} $\mathbf{x = a \in A}$ \\
Once again it is enough to consider the case where the passive response is not valid on $T_{\ell}$. By Lemma~\ref{lem: facts12} we may deduce that we are in either an initial or a critical position. Suppose we are in a critical position. Then since the passive response is not valid on a $\RHS$ board it must be that $p$ is the potentially active element. But in this case, we do not need to show that $p$ is safe. Moreover, since we are in a critical position and the passive response is not valid, after Duplicator's move it follows that $\assB(\bar{p}) \in B^{k-1}  \times \bigcup_{b\in B} \{(b_0, b), (b_1, b)\} \subset R_3$. This also implies that (2) is maintained. To see that (3) is maintained, suppose that $\assA(p) = a(j)$. Then there is a pebble $q$ with $\assA(q) = a(j)_i$. Since the passive response is not valid $\assB(q) \not \in D$. Therefore $\assB(q) = b(j)_i$, by the definition of Duplicator's strategy. Therefore  Duplicator's only valid response is to play $b(j)$ and so (3) is maintained.

If we are in an initial position then $\assA(\bar{p}) = (y_i^0, \dots y_k^0, a(1), \dots, a(i))$ for some $y \in \{l,r\}$ , $i\in [k]$. Then we know that $x= a(j)$ for some $1 \le j \le i$. If the passive response is not valid on some $T_{\ell}:= \lstrucSBb$ then this implies that no element of the form $y_t^0$ or $D$ is pebbled on $T_{\ell}$. Then Duplicator plays $b(i)$ which implies that $\assB(\bar{p}) =  (y_i^0, \dots y_k^0, b(1), \dots b(i)) \in R_s^{\strucSB}$, by (3). Moreover, $p$ is safe, so (1) is maintained.
\end{proof}

This is important as it shows that Spoiler can only win the LB game if they win the associated simulated game. We have now done the heavy lifting and can begin putting everything together; first, we deduce the following corollary to Lemmas~\ref{lem: dual-boards} and \ref{lem: ignore}. 

\begin{corollary} \label{cor: must-split}
Suppose Duplicator plays the $\pexists(k+1)$-LB game starting from  position $[S(\strucA), S(\strucB)]$ using the lower bound strategy. Also, suppose that after $r$ rounds there are two $\RHS$ boards. Then in order to win Spoiler must perform a split. Moreover, if the last time a critical position occurred $A_0 \subseteq A$ were the active elements, then until a split takes place no element of $A \setminus (A_0 \cup \{a(i) \, \mid \,i \in [k]\})$ can become active.
\end{corollary}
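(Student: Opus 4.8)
The plan is to derive both assertions from iterated applications of Lemma~\ref{lem: dual-boards}, with Lemma~\ref{lem: ignore} in the background guaranteeing that the surviving $\RHS$ boards are never deleted on account of the auxiliary relations in $S(\sigma)\setminus\sigma$. By hypothesis there are two $\RHS$ boards at the end of round $r$. Suppose Spoiler performs no split in any of the rounds $r+1, r+2, \dots$. A straightforward induction then shows that at the end of every round $s \ge r$ there are still two $\RHS$ boards and, moreover, every pebble is safe in the sense of Definition~\ref{def: safe}: the base case $s=r$ is the hypothesis, and the inductive step is a direct application of Lemma~\ref{lem: dual-boards}, whose premise (two $\RHS$ boards at the start of the round and no split in that round) is met.

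The first assertion is now immediate: while there are two $\RHS$ boards the position is never of the form $[\classA, \emptyset]$, so the $\pexists(k+1)$-LB game does not end; hence Spoiler cannot win without performing a split.

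For the second assertion, fix any round $s > r$ occurring before Spoiler's first split, and let $a \in A$ be active at the end of round $s$, covered by the pebble $p$. By the induction above $p$ is safe; since $\assA(p) = a \in A$ while $\assB(p) \in B$ on every $\RHS$ board, the clause ``$\assB(p) \in D$ on every board'' of Definition~\ref{def: safe} fails, so one of the remaining two clauses holds, giving either $a = a(j)$ for some $j \in [k]$, or $a$ being one of the active elements at the last initial or critical position strictly before round $s$. Hence the set of active elements at the end of round $s$ is contained in $V_s \cup \{a(j) \mid j \in [k]\}$, where $V_s$ denotes the set of active elements at the last initial or critical position strictly prior to round $s$. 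Now unwind this inclusion: the successive ``anchor'' rounds obtained in this way strictly decrease, so the process terminates, and it can only terminate at either (i) an initial position --- where the unique $\LHS$ board is empty or covers a tuple $(y_i^0, \dots, y_k^0, a(1), \dots, a(i))$, so all $k+1$ pebbles lie on this tuple and every active element is among $\{a(j) \mid j \in [k]\}$ --- or (ii) the last critical position at or before round $r$, at which the active elements are by definition $A_0$. (Lemma~\ref{lem: dual-boards} is legitimately invoked at each intermediate anchor round, since no split has yet taken place and hence two $\RHS$ boards persist throughout $(r,s]$; and an anchor at a round $\le r$ that is critical must be the last critical position at or before $r$, because no initial or critical position lies strictly between it and the previous, larger, anchor.) Either way the active elements at the end of round $s$ lie in $A_0 \cup \{a(j) \mid j \in [k]\}$. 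As $s$ was arbitrary, no element of $A \setminus (A_0 \cup \{a(j) \mid j \in [k]\})$ is ever active before a split occurs, and in particular none becomes active.

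The bulk of the work --- and the main obstacle --- is this last unwinding step: one must keep careful track of which initial or critical position serves as the anchor for each pebble's safety, check that chasing these anchors backwards terminates precisely at an initial position or at the designated last critical position preceding round $r$, and confirm at each stage that Lemma~\ref{lem: dual-boards} is genuinely applicable (which it is, by the persistence of two $\RHS$ boards up to the first split). Everything else is bookkeeping with Definition~\ref{def: safe} and the game rules.
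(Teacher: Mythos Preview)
Your proof is correct and rests on the same ingredients as the paper's: iterate Lemma~\ref{lem: dual-boards} to keep two $\RHS$ boards alive together with the safety invariant, then read off the containment of the active set from Definition~\ref{def: safe}. The only real difference is organisational. The paper runs a single forward induction starting from the last initial/critical position preceding the two-board round, directly maintaining the invariant ``active $\subseteq A_0\cup\{a(i)\}$'' and observing (via the reset mechanism) that no split can have occurred since that reference round; you instead chase the safety reference backwards through a chain of ``anchor'' initial/critical rounds until you drop to or below round $r$, at which point the anchor is forced to be the last initial/critical position $\le r$. Both arguments are equivalent unrollings of the same induction. Two minor remarks: your invocation of Lemma~\ref{lem: ignore} ``in the background'' is not actually needed here, since Lemma~\ref{lem: dual-boards} already delivers two surviving boards at the end of each round; and your unwinding, while correct, is more elaborate than necessary --- the paper's forward induction avoids having to argue separately that the descending chain of anchors lands exactly where you want it to.
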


\begin{proof}
 Suppose that at the beginning of round $r$ there are two $\RHS$ boards and that Spoiler moves $p \to x$. Also, suppose that the last time we were in a critical position $A_0 \subseteq A$ were the active elements. We note that no split can have occurred since the last initial or critical position ---after a split there is always one $\RHS$ board and we only get two $\RHS$ boards again if we reach an initial or a critical position. We claim that the active elements at the end of round $r$ are a subset of $A_0 \cup \{a(i) \, \mid \,i \in [k]\}$. 
 
To see this suppose it holds $t$ rounds after we were last in an initial or a critical position. So suppose that some $a\in A$ becomes active in round $t$. Then we know by Fact~\ref{fact1} that we are in an initial or a critical position. If we are in an initial position then $a \in \{a(i) \, \mid \,i \in [k]\}$. If we are in a critical position then $x=a$ is the potentially active element. But then by Lemma~\ref{lem: dual-boards}, since the passive response is not valid on either $\RHS$ board, it must be that $x\in A_0$. This completes the proof of our claim.

Moreover, by Lemma~\ref{lem: ignore} we know that only relations in $\sigma$ can be violated by $T_0$ or $T_1$. By the claim it is clear that no relations in $\sigma$ are violated in round $r$, therefore Duplicator survives round $r$. Since in round $r$ we are in an arbitrary 2-board position, it follows that for Duplicator to win they must at some point perform a split.
\end{proof}

We now need to show that Spoiler has to perform $r$ splits in order to win. To this end, we introduce the following notation. Suppose Spoiler wins the $\pexists (k+1)$-LB game in $\ell$ 
rounds starting from position $[\strucSA, \strucSB]$. Then for every $t\le \ell$ we  write $c(t)$ to denote 
the minimum number of moves in a Spoiler winning strategy in the $\pexists k$-pebble game on $S(P_t)
$. We know that $c(0) = r$, $c(\ell) =0$ and $c(t) =c(t-1)$, if the set of active pebbles 
doesn't change in round $t$. The second point follows by Lemma~\ref{lem: ignore}, along with the observation that if the $\LHS$ board models some $R\in \sigma$ and no $\RHS$ board models $R$ that the simulated game is in a winning position for Spoiler. The following proves that Spoiler must effectively carry out the strategy laid out in Section~\ref{ss: spoiler_win}. For expositional clarity we first deal with a special case.

\begin{lemma} \label{lem: incremenet_base}
Suppose that in round $t_1$ of the $\pexists (k+1)$-LB game starting from position $[\{\strucSA\}, \{\strucSB\}]$ a critical position is reached. Moreover, suppose that Duplicator plays the lower bound strategy and that in round $t_0$ an initial position is reached and no critical or initial position is reached in any round $t$ with $t_0 < t < t_1$. Then $c(t_1) \ge r-1$. 
\end{lemma}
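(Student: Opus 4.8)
The plan is to follow the simulated game $S(P_t)$ across rounds $t_0,\dots,t_1$ and argue that it can only become easier for Spoiler by one move over this whole stretch. Throughout, Duplicator plays the lower bound strategy, so by Lemma~\ref{lem: facts12} Facts~\ref{fact1} and \ref{fact2} hold; in particular $S(P_t)$ is well defined, and we may assume Spoiler eventually wins so that $c(\cdot)$ is defined. The first step is the combinatorial observation that no element of $A$ can become newly active in a round strictly between $t_0$ and $t_1$: a split moves no pebble and, by the ``moreover'' clause of Lemma~\ref{lem: facts12}, cannot promote an inactive pebble to an active one; and by Fact~\ref{fact1}, a pebble moved in a round which is neither critical nor initial is inactive at the end of that round. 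Since by hypothesis no critical or initial position occurs in rounds $t$ with $t_0<t<t_1$, the set of active pebbles can only shrink over rounds $t_0,\dots,t_1-1$; combined with the fact that in an initial position the only elements of $A$ pebbled on the unique $\LHS$ board are among the isolated elements $a(1),\dots,a(k)$, this shows that for every $t$ with $t_0\le t\le t_1-1$ every assigned pebble of $S(P_t)$ sits, on the $\strucA$-side, on one of $a(1),\dots,a(k)$.

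The second step is the arithmetic fact that such pebbles are free. Because each $a(i)$ occurs in no relation of $\sigma$, a pebble whose $\strucA$-coordinate is some $a(i)$ contributes only vacuously to whether the $\LHS$ is partially homomorphic to the $\RHS$ in the $\pexists k$-pebble game on $[\strucA,\strucB]$; hence for any position $Q$ of that game in which every assigned pebble sits on some $a(i)$ on the $\strucA$-side, the minimum number of rounds Spoiler needs to win from $Q$ equals $r$. (Given a winning Spoiler strategy from $Q$ in $s$ rounds, replace every move of a pebble off an $a(i)$ by a placement of the corresponding fresh pebble; the positions so reached have the same partial-homomorphism status as those of the $Q$-play, since an unassigned pebble and one sitting on an isolated element impose the same, empty, set of constraints; this gives a winning strategy from the empty position in at most $s$ rounds, so $s\ge r$, and the converse inequality is symmetric.) Applying this to $Q=S(P_t)$ for $t_0\le t\le t_1-1$, and noting that in these rounds $S(P_t)$ can only change by a case-(2) deletion of a pebble sitting on some $a(i)$ — which leaves $c$ equal to $r$ — or by a case-(3)(a) step or a split, which leave $S(P_t)$ unchanged (a case-(3)(b) step is excluded, since by Fact~\ref{fact1} it forces a critical or initial position), we obtain $c(t)=r$ for all such $t$, and in particular $c(t_1-1)\ge r$.

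The last step analyses the single transition from round $t_1-1$ to round $t_1$. If the active set does not grow in round $t_1$ then either $S(P_{t_1})=S(P_{t_1-1})$ (a case-(3)(a) step or a split), giving $c(t_1)=c(t_1-1)\ge r$, or the active set shrinks (a case-(2) step); since every pebble active at the end of round $t_1-1$ sits on some $a(i)$, the deleted pebble does too, so $c(t_1)\ge r$ once more. In the remaining case, a case-(3)(b) step, the simulated game performs exactly one pebble move in round $t_1$, and by the definition of the lower bound strategy the Duplicator response used in this move is an optimum response in the $\pexists k$-pebble game, so $c(t_1-1)\le 1+c(t_1)$ and hence $c(t_1)\ge r-1$. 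In every case $c(t_1)\ge r-1$, which is the claim.

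The main obstacle is the first step: one must argue carefully that no element of $A$ becomes active in rounds strictly between $t_0$ and $t_1$, which needs both Fact~\ref{fact1} and the ``moreover'' clause of Lemma~\ref{lem: facts12} (to rule out splits creating active pebbles), together with the bookkeeping that, among the transitions available to the simulated game, only a case-(3)(b) step can decrease $c$, and then only by one.
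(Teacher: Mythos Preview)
Your overall plan matches the paper's: show that between $t_0$ and $t_1$ only isolated elements $a(i)$ can be active, deduce $c(t)=r$ for $t_0\le t<t_1$, and then bound the single transition at $t_1$. Steps 1 and 2 are correct, and your care about splits is fine (indeed slightly more explicit than the paper).

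The gap is in step 3. You write that ``by the definition of the lower bound strategy the Duplicator response used in this move is an optimum response in the $\pexists k$-pebble game''. That is not what the strategy says. When Spoiler plays $p\to a\in A$, the lower bound strategy simply plays \emph{any} valid $b\in B$; optimality is nowhere invoked at this step. In a critical position the valid $b$ is forced by $R_3$: it is the unique $b$ such that the pebble on $a_i$ sits on $b_i$ on the $\RHS$. That element $b_i$ was chosen at an \emph{earlier} round $\hat t$ with $t_0<\hat t<t_1$, namely when Spoiler played $a_i\in A_i$; and \emph{that} is the moment the strategy computes an optimum response---but from $S(P_{\hat t-1})$, not from $S(P_{t_1-1})$. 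The paper's proof makes this trace-back explicit: because we were in an initial position at $t_0$, the pebble on $b_i$ must have been placed after $t_0$, and by your own step-1 argument $S(P_{\hat t-1})$ has only isolated $a(j)$ pebbled, so the optimum response from $S(P_{\hat t-1})$ to Spoiler playing $a$ coincides with the optimum response from the empty position, hence $c(t_1)\ge r-1$.

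So your conclusion is right, but the inequality $c(t_1-1)\le 1+c(t_1)$ does not follow ``by definition''; it needs the $\hat t$-argument. This matters more than it might seem: in the companion Lemma~\ref{lem: increment} one cannot assume $S(P_{\hat t-1})=S(P_{t_1-1})$, and the argument there really does compare to $S(P_{t_0})$ via $S(P_{\hat t-1})$ rather than to $S(P_{t_1-1})$. Fixing step 3 here by inserting the trace-back to $\hat t$ makes your proof match the paper's.
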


\begin{proof}
First, observe that $c(t_0) = r$. This follows because in any initial position only the $a(i)$ elements are active and as these are isolated in $\strucA$. Now let $t$ be such that $t_0 < t < t_1$. We claim that $c(t) = r$. This follows because by Fact 1 the active elements are  all of the form $a(i)$ in round $t$. Moreover, if no new new element of $A$ becomes active in round $t_1$ then also $c(t_1) =r$. Otherwise, by Fact 1, Spoiler moves $p \to a$ where $a$ is the potentially active element and Duplicator responds with some $b$.

Note that the element $b$ is determined by the unique element $b_i \in B_0 \cup B_1$ 
which is pebbled on every $\RHS$ board, by Lemma~\ref{lem: dual-boards}. Furthermore, since in round $t_0$ we were in 
an initial position it must be that $b_i$ was pebbled in some round $
\hat{t}$ with $t_0 < \hat{t} < t_1$. Recall that this $\hat{b}$ was 
determined by Duplicator by looking at $S(P_{\hat{t}-1})$, imagining that 
Spoiler placed pebble $j:= \min\{m \, \mid \, m \not\in f_{\hat{t}-1}
(P_{\hat{t}-1}))\}$ on $a$ and then finding the optimal response. But because in the position $S(P_{\hat{t}-1})$ only elements of the form $a(j)$ are pebbled, then this Duplicator response is the same as if in the first round of the game on $[\strucA, \strucB]$, Spoiler had pebbled $a$. It follows that $c(t_1) \ge r-1$.
\end{proof}

\begin{lemma} \label{lem: increment}
Consider the $\pexists (k+1)$-LB game starting from \confORfull{\\}{}$[\{\strucSA\}, \{\strucSB\}]$, where Duplicator plays the lower bound strategy. Suppose a critical position is reached in this game after $t_0$ rounds and the next critical position is reached after $t_1$ rounds. Then $c(t_1) +1 \ge c(t_0)$. 
\end{lemma}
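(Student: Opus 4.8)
The plan is to show that between the two consecutive critical positions the simulated $\pexists k$-pebble game advances by at most one legal move in which Duplicator answers optimally; since one round can then decrease $c$ by at most one, this yields $c(t_1)+1\ge c(t_0)$. I would start from two easy facts about the $\pexists k$-pebble game. \emph{Deletion monotonicity:} if $Q^-$ is obtained from $Q$ by removing a pebble then $c(Q^-)\ge c(Q)$, because a Duplicator strategy surviving $m$ rounds from $Q$ restricts to one surviving $m$ rounds from $Q^-$; applied to $S(P_t)$ with all pebbles deleted (the empty position, of value $r$) this also gives $c(t)\le r$ for every $t$. \emph{One-round estimate:} if $Q'$ arises from $Q$ by one Spoiler move followed by Duplicator's \emph{optimal} $\pexists k$-pebble reply then $c(Q')\ge c(Q)-1$ (and if $Q$ is already a Spoiler win the inequality we want is trivial). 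Recall also, as noted before the lemma, that $c$ changes only in a round where the active set changes, so splits leave $c$ untouched, and that in the simulated game Duplicator always plays an optimal response.

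Let $t^\ast$ be the round of the last initial or critical position strictly before $t_1$, so $t_0\le t^\ast<t_1$ and no initial or critical position occurs in $(t^\ast,t_1)$. If $t^\ast>t_0$ then, since $t_1$ is the first critical position after $t_0$, round $t^\ast$ must be an initial position; there only the isolated elements $a(i)$ are active, so $c(t^\ast)=r$, and Lemma~\ref{lem: incremenet_base} applied to $(t^\ast,t_1)$ gives $c(t_1)\ge r-1\ge c(t_0)-1$ using $c(t_0)\le r$. So it remains to treat $t^\ast=t_0$, i.e.\ no initial or critical position lies in $(t_0,t_1)$.

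In that case I would first observe, using Fact~\ref{fact1} and the fact that a pebble can only be forced onto $B$ --- hence an element of $A$ can only become active --- through $R_3$ (a critical position) or $R_s$ (an initial position), that no element of $A$ becomes active strictly between $t_0$ and $t_1$. Hence on $(t_0,t_1)$ the active set only shrinks: each time Spoiler lifts an active pebble this deletes the corresponding labelled pebble in the simulated position, while surviving active pebbles do not move, so just before $t_1$ the simulated position is $S(P_{t_0})$ with some pebbles deleted. At round $t_1$ either nothing becomes active, and then $c(t_1)\ge c(t_0)$ by deletion monotonicity; or the potentially active element $a$ becomes active, in which case (by the analysis of $R_2,R_3$ and Lemma~\ref{lem: ignore}) the reset leaves a single right-hand board at the moment of activation and the $k-1$ remaining pebbles of the witnessing $R_3$-tuple have been active and unmoved since $t_0$.

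The crux is then a bookkeeping argument on the injections $f_r$: it should show that at most one active pebble was deleted in $(t_0,t_1)$, that the label it freed is exactly the one reused on $a$ (by the $\min$-rule defining $f_r$), and that the element $b$ onto which Duplicator is eventually forced is --- by its definition in the lower bound strategy, fixed at the round Spoiler first played on $A_i$ --- the optimal $\pexists k$-pebble answer to moving that pebble onto $a$ from $S(P_{t_0})$ with that pebble removed. Hence $S(P_{t_1})$ is $S(P_{t_0})$ with at most one deletion followed by one optimally-answered pebble move, so $c(t_1)\ge c(t_0)-1$ by the two facts; the variant where the reactivated label was not active at $t_0$ is identical with ``place a fresh pebble on $a$'' in place of the move. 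I expect this last alignment --- matching the round where $b$ was committed, the simulated position there, and the $\min$-rule --- to be the only real difficulty; everything else follows formally from deletion monotonicity, the one-round estimate, and Lemma~\ref{lem: incremenet_base}.
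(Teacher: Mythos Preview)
Your approach is essentially the paper's: reduce via Lemma~\ref{lem: incremenet_base} to the case where no initial position lies in $(t_0,t_1)$, observe that the simulated game can only lose pebbles on this interval, and then argue that the single activation at $t_1$ corresponds to one optimally-answered move from a sub-position of $S(P_{t_0})$, whence deletion monotonicity plus the one-round estimate give $c(t_1)\ge c(t_0)-1$.

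Two places where you over-invest. First, your claim that at most one active pebble is deleted in $(t_0,t_1)$ happens to be true---the $k-1$ pebbles witnessing the $R_3$-tuple at $t_1$ are active and unmoved since $t_0$, and at most $k$ were active at $t_0$---but the paper does not use it. The paper only records that $S(P_{\hat t-1})$ is $S(P_{t_0})$ with \emph{some} pebbles removed and then absorbs any further deletions between $\hat t$ and $t_1$ by one more application of deletion monotonicity. Second, and more importantly, the label-alignment you flag as the crux is both unnecessary and not literally true: if the labels in use at $t_0$ are, say, $\{1,3\}$ and the pebble with label $3$ is deleted, the $\min$-rule assigns label $2$, not $3$, to the new activation. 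Fortunately this does not matter, because $c(\cdot)$ depends only on the multiset of pebble pairs, not on their names. All that is actually needed is that $b$ was chosen optimally from some position obtained from $S(P_{t_0})$ by deletions, and that $S(P_{t_1-1})$ is a further deletion of that position; the inequality then follows without any bookkeeping on $f_r$. This is exactly how the paper proceeds, so the ``only real difficulty'' you anticipate dissolves.
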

\begin{proof}
First, suppose there is some $t$ with $t_0 < t < t_1$ such that in round $t$ the game is in an initial position. Then by Lemma~\ref{lem: incremenet_base}, $c(t_1) \ge r-1 \ge c(t_0) - 1$. So we may suppose that no initial position occurs between round $t_0$ and $t_1$.

Now let $A_0$ be the active elements of $A$ after round $t_0$ and $A_1$ be the set of active elements of $A$ after round $t_1$. Suppose that Spoiler pebbles $a$ in round $t_1$. Let $t$ lie between $t_0$ and $t_1$ and let $p$ be a pebble which is active after round $t$. Then by Fact~\ref{fact1} and as no critical or initial position occurred between round $t_0$ and round $t$, we know that $p$ has not moved since round $t_0$. Therefore $c(t) \ge c(t_0)$. 

If no new element of $A$ becomes active in round $t_1$ we similarly have that $c(t_1) \ge c(t_0)$. Otherwise we know that if $a$ is the potentially active element in the position reached after $t_1$ rounds then $a$ becomes active. Suppose Duplicator responds with $b$ on some, and therefore on every, board. We have to update our simulated game, recall that this is done by moving some pebble, say $p$, onto $a$ on the $\LHS$ and $b$ on the $\RHS$. Then the position $S(P_{t_1})$ is identical to $S(P_{t_0})$ except possibly with some pebbles removed and with $p$ lying on $a$ on the $\strucA$ board and $b$ on the $\strucB$ board.

The element $b$ is determined by the unique element $b_i \in B_0 \cup B_1$ 
which is pebbled on all $\RHS$ boards. Moreover, since in round $t_0$ we are in 
a critical position, if $a \not \in A_0$, it must be that $b_i$ was pebbled in some round $
\hat{t}$ with $t_0 < \hat{t} < t_1$. Recall that this $b$ was 
determined by Duplicator by looking at $S(P_{\hat{t}-1})$, imagining that 
Spoiler placed pebble $j:= \min\{m \, \mid \, m \not\in f_{\hat{t}-1}
(P_{\hat{t}-1}))\}$ on $a$ and then finding the optimal response. Since $S(P_{\hat{t}-1})$ is identical to $S(P_{t_0})$, except possibly with some pebbles removed, it follows that 
$c(t_1) + 1  \ge c(t_0)$. 
\end{proof}

All our work has built up to proving the following corollary, which implies Theorem~\ref{thm: main_tec}, via Lemma~\ref{lem: lbgame}.

\begin{corollary}
If Duplicator plays the $\pexists (k+1)$-LB game from position $[\{\strucSA\}, \{\strucSB\}]$ using the lower bound strategy, then in order to win Spoiler must perform at least $r$ splits, where $r$ is the number of rounds Spoiler requires to win the $\pexists k-$pebble game on $[\strucA, \strucB]$. Therefore, Duplicator gets more than $2^{r}$ points.
\end{corollary}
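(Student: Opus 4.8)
The plan is to combine the three technical lemmas just proved — Lemma~\ref{lem: facts12}, Lemma~\ref{lem: ignore}, and the two incremental lemmas (Lemmas~\ref{lem: incremenet_base} and \ref{lem: increment}) — into a clean accounting argument about splits, and then invoke Lemma~\ref{lem: lbgame} (its $\pexists$ analogue) to translate a split lower bound into a points lower bound. First I would fix an optimal Spoiler winning strategy in the $\pexists(k+1)$-LB game against the lower bound strategy, say it wins in $\ell$ rounds, and track the quantity $c(t)$, the number of rounds Spoiler needs in the $\pexists k$-pebble game from the simulated position $S(P_t)$. By definition $c(0)=r$ and $c(\ell)=0$; the goal is to show $c$ can only decrease across a round in which a critical position is reached, and can only decrease by one each time.

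The core step is to argue that between consecutive critical positions $c$ drops by at most $1$: this is exactly Lemma~\ref{lem: increment} (with the initial-position special case handled by Lemma~\ref{lem: incremenet_base}). Combining these, if there are $N$ critical/initial positions among the $\ell$ rounds then $c(0)-c(\ell) = r \le N$, so Spoiler must reach at least $r$ critical positions in order to win. Next I would use Corollary~\ref{cor: must-split}: whenever there are two $\RHS$ boards — which is precisely the state immediately following a reset at an initial or critical position — Spoiler cannot win and cannot make any new element of $A\setminus(A_0\cup\{a(i)\})$ active without performing a split. Hence each of these $\ge r$ critical positions must be "resolved" by a subsequent split before the next one can occur (and before Spoiler can win), so Spoiler performs at least $r$ splits over the course of the game. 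Each split in this game has degree $2$ (Duplicator's strategy forces this, as noted in the strategy description), so $\points$ is multiplied by $2$ after each split; with at least $r$ splits occurring and at least one further round needed to actually close the game, Duplicator accumulates a score bounded below by $2^r$.

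The main obstacle I anticipate is making the bookkeeping between "critical position", "reset", and "split" fully precise: one has to be careful that a split cannot serve double duty (resolve two critical positions), that the simulated game is genuinely advanced only at critical/initial positions, and that between a critical position and its resolving split $c$ truly stays constant — all of which is underwritten by Lemma~\ref{lem: ignore} (only $\sigma$-relations can be violated) and Fact~\ref{fact1}, but needs to be assembled carefully. A secondary subtlety is the $+1$ slack in Lemma~\ref{lem: increment}: one must verify that this slack does not compound across the $\approx r$ critical positions, i.e. that the telescoping sum $\sum (c(t_{i-1}) - c(t_i)) \le \sum 1 = (\text{number of critical positions})$ really does give the count $\ge r$ rather than something weaker. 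Once these points are nailed down, the conclusion that Duplicator gets more than $2^r$ points is immediate, and Theorem~\ref{thm: main_tec} follows by the $\pexists k$-LB analogue of Lemma~\ref{lem: lbgame}; Theorem~\ref{thm: exlb} then follows by plugging in Berkholz's $\Omega(n^{2k-2})$ quantifier-depth lower bound from \cite{DBLP:conf/cp/Berkholz14}, noting that $S(\cdot)$ inflates the universe only by a constant factor.
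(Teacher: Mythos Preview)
Your proposal is correct and follows essentially the same route as the paper's proof: track $c(t)$ from $c(0)=r$ down to $c(\ell)=0$, use Lemmas~\ref{lem: incremenet_base} and~\ref{lem: increment} to bound each drop by one across consecutive critical positions, invoke Corollary~\ref{cor: must-split} to force a split between them, and then cash in the $r$ degree-two splits (plus the extra pebble move needed after the final split) for the $2^r$ points bound. The bookkeeping worries you flag are exactly the right ones, and the paper resolves them in the way you anticipate---via Lemma~\ref{lem: ignore} to ensure Spoiler can only win through the simulated game, and Fact~\ref{fact1} to pin changes in $c$ to critical/initial positions.
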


\begin{proof}
 Suppose we are in a critical or an initial position after $t_0$ rounds that is not immediately winning for Spoiler. Then by Lemma~\ref{lem: ignore} in order to win Spoiler must make an element of $A\setminus (A_0 \cup \{a(i) \, \mid i \in [k]\})$ active. By Corollary~\ref{cor: must-split} a split must occur before this can happen. Moreover, by Fact~\ref{fact1},  to make such an element active we must be in a critical position after the split, say at round $t_1$. Then by Lemma~\ref{lem: increment} we know that $c(t_1)+1 \ge c(t_0)$. As we have seen that $c(t)=r$, for every initial position, and as $c(\ell)=0$, it follows by Lemma~\ref{lem: increment} that Spoiler must perform at least $r$ splits in order to win. Moreover, after every split there is a single board on each side and Spoiler has not yet won. Therefore, they need to perform at least one pebble move after every split. Since the degree of each split is two the result follows.
\end{proof}

\confORfull{We now apply the following result.}{We can now get a concrete lower bound by applying the following theorem proved by Berkholz.\footnote{We should note that in the paper the theorem is stated slightly differently, see the beginning of Section~3 for an explicit statement in terms of pebble games.}}

\begin{theorem}[{\cite[Theorem~1]{DBLP:conf/cp/Berkholz14}}] \label{thm: groheLB}
For every integer $k \ge 3$, there exists a constant $\varepsilon>0$ and two positive integers $n_0, m_0$ such that for every $n>n_0$ and $m>n_0$ there exists a pair of vertex coloured graphs $\strucA$ and $\strucB$ with $|A| =n$, $|B| = m$, that are distinguishable in $\exklogic$ but on which Spoiler needs at least $\varepsilon n^{k-1} m^{k-1}$ rounds to win the $\pexists k$-pebble game. 
\end{theorem}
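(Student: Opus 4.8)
Strictly speaking, the statement quoted here is Berkholz's lower bound on the round complexity of the $\pexists k$-pebble game \cite{DBLP:conf/cp/Berkholz14}, and it is used below as a black box, so ``proving it'' means reconstructing that argument. I would start from the standard dictionary between the $\pexists k$-pebble game and $k$-consistency: Duplicator survives $r$ rounds from $[\strucA,\strucB]$ iff there is a nonempty family of partial homomorphisms from $\strucA$ to $\strucB$, all of domain size at most $k$, closed under restriction, and such that any member of domain size $<k$ extends to a member covering any prescribed further element of $A$, with this extension step chainable $r$ times. Hence the number of rounds Spoiler needs equals the \emph{propagation depth} of $k$-consistency on the instance --- the number of rounds of the naive fixpoint algorithm before it stabilises --- and the task reduces to exhibiting a pair on which this depth is $\Omega(n^{k-1}m^{k-1})$ while still being inconsistent (so that Spoiler eventually wins).

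For the construction I would follow the same philosophy as Section~\ref{ss: LB}: build $\strucA$ and $\strucB$ so that the reachable game positions --- pairs consisting of a $(k-1)$-element partial map on each side --- are laid out along a $(k-1)$-dimensional ``grid'' of size $\Theta(n^{k-1}m^{k-1})$, and wire the relations so that from the position corresponding to a grid cell $c$ the only non-losing moves lead to a cell adjacent to $c$. Then any Spoiler strategy is forced to traverse this grid essentially one step at a time, and the single designated ``contradiction cell'' is reachable only after $\Omega(n^{k-1}m^{k-1})$ steps. The matching Duplicator strategy is an explicit family of partial homomorphisms indexed by the grid cells, and correctness is an adversary/potential argument: take the potential of a position to be its grid-distance from the contradiction cell, show the family is $k$-consistent, and show the potential drops by a bounded amount per round; summing gives the bound. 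Tightness is then free, since a game position is a pair of $\le k$-tuples, so Spoiler never needs to revisit one and $O(n^{k-1}m^{k-1})$ rounds always suffice.

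The main obstacle, exactly as for Lemma~\ref{lem: dual} and Proposition~\ref{prop: lb_technical} earlier in the paper, is verifying that the Duplicator family really is $k$-consistent and that Spoiler cannot ``short-circuit'' the grid with only $k$ pebbles: making the grid rigid enough that every shortcut is blocked, yet flexible enough that Duplicator survives each non-final round, is where the real work sits. This is precisely why the statement is imported from \cite{DBLP:conf/cp/Berkholz14} rather than reproved.

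Finally, I would record how this plugs into Theorem~\ref{thm: exlb}. Apply Theorem~\ref{thm: groheLB} with $k$ replaced by $k-1$ (legitimate since $k\ge 4$) and with $m=n$, obtaining vertex-coloured graphs $\strucA,\strucB$ with $|A|=|B|=n$ that are distinguishable in $\exists^{+}\mathcal{L}^{k-1}$ but on which Spoiler needs at least $\varepsilon n^{2k-4}$ rounds in the $\pexists(k-1)$-pebble game. After padding both structures with a constant number of isolated elements and, if necessary, discarding loops so that the two technical conditions (disjoint positions, enough isolated elements) hold, feed them into the $S(\cdot)$ construction. By Theorem~\ref{thm: main_tec}, Duplicator can score at least $2^{\varepsilon n^{2k-4}}$ points in the $\pexists k$-LB game on $[\strucSA,\strucSB]$, and $S(\cdot)$ increases the universe by only a constant factor, so $|S(A)|=|S(B)|=\Theta(n)$; rescaling $\varepsilon$ and re-indexing $n$ absorbs this. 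By the existential-positive analogue of Lemma~\ref{lem: lbgame}, any $\exists^{+}\mathcal{L}^{k}$ sentence distinguishing $\strucSA$ and $\strucSB$ must therefore have at least $2^{\varepsilon' n^{2k-4}}$ quantifiers, which is Theorem~\ref{thm: exlb}.
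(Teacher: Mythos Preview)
The paper does not prove this theorem at all: it is quoted verbatim from Berkholz~\cite{DBLP:conf/cp/Berkholz14} and used as a black box to derive Theorem~\ref{thm: genlbplus} (and hence Theorem~\ref{thm: exlb}). You correctly recognise this in your first sentence.

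Your middle two paragraphs are a plausible high-level outline of how one proves such a propagation-depth lower bound, but since the paper contains no proof there is nothing to compare against; whether your sketch accurately reflects Berkholz's actual construction is outside the scope of this paper. Your final paragraph, explaining how to plug the cited bound into Theorem~\ref{thm: main_tec} to obtain Theorem~\ref{thm: exlb}, is correct and is exactly what the paper does: apply Theorem~\ref{thm: groheLB} with parameter $k-1$ (valid since $k\ge 4$ gives $k-1\ge 3$), feed the resulting graphs into the $S(\cdot)$ construction, and invoke Theorem~\ref{thm: main_tec} together with the $\pexists k$-LB game lemma to get the $2^{\varepsilon n^{k-2}m^{k-2}}$ bound of Theorem~\ref{thm: genlbplus}, which specialises to $2^{\varepsilon n^{2k-4}}$ when $m=n$.
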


Since these graphs do not contain any self-loops they satisfy the disjoint positions assumptions. Moreover, we may add in $k$ independent vertices to these graphs without affecting the $k$-pebble game. Therefore, we can apply Theorem~\ref{thm: main_tec} to obtain the following slight generalisation of Theorem~\ref{thm: exlb}.

\begin{theorem} \label{thm: genlbplus}
For every integer $k \ge 4$ there exists a constant $\varepsilon >0$ and two positive integers $n_0, m_0$ such that for every $n>n_0$ and $m>n_0$ there exists a pair of $k$-ary structures $\strucA, \strucB$ with $|A| = |n|, |B| = m$ 
that can be distinguished by a sentence in $\exklogic$, but which agree on every sentence $\phi \in \exklogic$ with $\quant(\phi) < 2^{\varepsilon n^{k-2}m^{k-2}}$. 
\end{theorem}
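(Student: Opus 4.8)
The plan is to combine the generic lifting construction with Berkholz's quantifier-depth lower bound for the $\pexists k$-pebble game. Concretely, I would start from the pair of vertex-coloured graphs $\strucA,\strucB$ given by Theorem~\ref{thm: groheLB}: these have $|A|=n$, $|B|=m$, are distinguishable in $\exklogic$, and Spoiler needs at least $\varepsilon n^{k-1}m^{k-1}$ rounds to win the $\pexists k$-pebble game on $[\strucA,\strucB]$. First I would check the two technical preconditions needed to apply the $S(-)$ construction: the disjoint-positions condition holds because the graphs have no self-loops (edges are pairs of distinct vertices and all relations are unary colours or the binary edge relation), and the requirement of $k$ isolated elements is met after adding $k$ fresh uncoloured vertices to each graph, which changes neither the distinguishability nor the pebble-game value since isolated vertices cannot be used by Spoiler to make progress. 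Note this addition only changes the sizes by an additive constant, so it does not affect the asymptotics.

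Next I would form $S(\strucA)$ and $S(\strucB)$ as in Section~\ref{s: exlb}. The key facts are: (i) by the Spoiler-winning-strategy analysis of Section~\ref{ss: spoiler_win}, $S(\strucA)$ and $S(\strucB)$ are still distinguishable in $\exklogic$ — in fact in $\pexists\mathcal{L}^{k+1}$ — since Spoiler can simulate their $\pexists k$-pebble strategy through the split gadgets; and (ii) by Theorem~\ref{thm: main_tec}, Duplicator can obtain at least $2^r$ points in the $\pexists(k+1)$-LB game on $[S(\strucA),S(\strucB)]$, where $r$ is the number of rounds Spoiler needs in the $\pexists k$-pebble game on $[\strucA,\strucB]$. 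Plugging in $r \ge \varepsilon n^{k-1}m^{k-1}$ gives that Duplicator gets at least $2^{\varepsilon n^{k-1}m^{k-1}}$ points in the $\pexists(k+1)$-LB game. By the existential-positive analogue of Lemma~\ref{lem: lbgame} (the lemma immediately preceding Section~\ref{s: lb}, stating that if $\classA,\classB$ disagree on a $\phi\in\exklogic$ with $\quant(\phi)\le r$ then Spoiler can limit Duplicator to $r$ points in the $\pexists k$-LB game), it follows that $S(\strucA)$ and $S(\strucB)$ agree on every sentence of $\pexists\mathcal{L}^{k+1}$ with fewer than $2^{\varepsilon n^{k-1}m^{k-1}}$ quantifiers.

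Finally I would do the index bookkeeping to match the statement. The structures $S(\strucA),S(\strucB)$ are $(k+1)$-ary (the relations $R_s,R_0,\dots,R_3$ have arity $k+1$), distinguishable in $\pexists\mathcal{L}^{k+1}$, and hard for $\pexists\mathcal{L}^{k+1}$ up to $2^{\varepsilon n^{k-1}m^{k-1}}$ quantifiers. Renaming $k+1$ to $k$ (so the ambient logic is $\exklogic$ with $k\ge 4$, applying the previous step with parameter $k-1\ge 3$) turns the exponent $n^{k-1}m^{k-1}$ into $n^{k-2}m^{k-2}$, which is exactly the bound claimed in Theorem~\ref{thm: genlbplus}; Theorem~\ref{thm: exlb} then follows by specialising $m=n$ (and possibly padding with a constant number of isolated elements to make $|A|=|B|=n$ exactly). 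The new constant $\varepsilon$ and the thresholds $n_0,m_0$ are inherited, up to harmless adjustments, from those in Theorem~\ref{thm: groheLB}.

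The main obstacle in this argument is not in this final assembly — which is essentially bookkeeping once the machinery is in place — but in establishing Theorem~\ref{thm: main_tec}, i.e.\ verifying that the lower bound strategy of Section~\ref{ss: duplicator_stat} really does force Spoiler to perform $r$ splits. That is where one must check that the simulated game is well-defined (Facts~\ref{fact1} and~\ref{fact2}, proved as Lemma~\ref{lem: facts12}), that Duplicator never violates any of the gadget relations (Lemmas~\ref{lem: dual-boards} and~\ref{lem: ignore}), and that each passage through the split gadget — which costs Spoiler one split of degree two — advances the simulated pebble game by at most one round (Lemmas~\ref{lem: incremenet_base} and~\ref{lem: increment}). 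All of this is carried out earlier in the excerpt, so for the present statement I would simply cite it; the only genuinely new content here is confirming that Berkholz's graphs satisfy the hypotheses of the construction and tracking the arity and exponent shifts.
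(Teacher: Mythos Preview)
Your proposal is correct and follows essentially the same route as the paper: take Berkholz's graphs from Theorem~\ref{thm: groheLB}, verify the disjoint-positions and isolated-elements preconditions, apply the $S(-)$ construction together with Theorem~\ref{thm: main_tec} and the $\pexists k$-LB game lemma, and then shift the index $k\to k-1$ to match the statement. The paper's own justification is in fact terser than yours; the only detail you might make slightly more explicit is that $|S(\strucX)|=\Theta(|X|)$, so the constant-factor size blowup is absorbed into the constant $\varepsilon$ and the thresholds $n_0,m_0$.
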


\section{Conclusion} \label{s: con}

The main technical contributions of this paper are the lower bounds in Theorems~\ref{thm: genlb} and \ref{thm: exlb} which, to the best of our knowledge, are the first lower bounds proved using the QVT game. The first result allows us to easily prove an exponential succinctness gap between $\klogic$ and $\mathcal{L}^{k+1}$. Moreover, the bounds from Theorem~\ref{thm: exlb} are close to the upper bounds from Lemma~\ref{lem: exub}. \confORfull{}{We would also like to emphasise the techniques  used to obtain these results, as well as some of the insights obtained along the way. In particular, in Section~\ref{s: lb} we observed a peculiar resource that Spoiler sometimes has at their disposal: that of performing `freezes'. In this section, we gave one way of overcoming this hurdle to proving lower bounds as well as demonstrating a type of compositional argument, that has similarities to those used in the context of of EF games. In Section~\ref{s: exlb} we showed how, in the context of $\exklogic$, to transfer lower bounds on quantifier depth to lower bounds on quantifier number.}

A natural future task is to close the gap between the lower bound of Theorem~\ref{thm: genlb} and the upper bound of Lemma~\ref{lem: ub}. One possible way to do this would be to adapt the techniques from Section~\ref{s: exlb} to this context: that is to transfer lower bounds from quantifier depth. It should also be possible to prove an analogue to Theorem~\ref{thm: suc} but for $\exklogic$; we leave this to future work. \confORfull{}{There are, however, several hurdles to this. Firstly, the specific construction of Section~\ref{s: exlb} heavily relies on dummy variables which only works because in the associated game we use partial homomorphism, not partial isomorphism as the criteria for boards being deleted. Secondly, the spectra of freezing looms large in the general context. 

It would also be interesting to see if our results or techniques could be applied to the problem of showing an exponential succinctness gap between $\exklogic$ and $\pexists \mathcal{L}^{k+1}$; this is stated as an open problem in \cite{berkholz2019compiling}. That is to show an analogue to Theorem~\ref{thm: suc} in the existential-positive case
}

We have developed a range of insights and techniques in our analysis. It would be interesting to see if these could be deployed to prove more lower bounds using the QVT games. Particularly interesting would be the settings of bounded arity structures, such as graphs, and ordered structures. Indeed, as outlined in \confORfull{Appendix~\ref{a: complexity}}{Section~\ref{s: prelims}}, proving lower bounds on the $\QN$ needed to express properties of ordered structures could imply answers to long standing questions in complexity theory.

Another interesting avenue is to extend the QVT games to counting logics. In the context of quantifier depth, these logics have been widely studied, in part due to their close connection to the Weisfeiler–Leman algorithm \cite{cai1992optimal}. In fact, many of the quantifier depth lower bounds for $\klogic$ also apply to $k$-variable counting logic with almost no extra work, for instance, those in \cite{berkholz2016} and \cite{grohe2023iteration}. This should, as far as we can see, also hold for Theorem~\ref{thm: genlb}.

\confORfull{}{\subparagraph{Acknowledgements} Funded by the Deutsche Forschungsgemeinschaft (DFG, German Research Foundation) - project numbers 385256563; 414325841. Thank you to Christoph Berkholz for many helpful discussions which helped shape this work. In particular, his many tips regarding possible proof techniques akin to those used in proof complexity were invaluable. Thank you also to the participants and organisers of the 2023 LICS workshop \emph{Combinatorial Games in Finite Model Theory} where an early version of this work was presented; the feedback obtained helped to drastically improve this manuscript. }

\bibliography{lit} 

\clearpage

\appendix

\section{Proof of Theorem~\ref{thm: exeqiv}} \label{a: exequiv}

\begin{theorem}[Theorem \ref{thm: exeqiv} restated]
Spoiler wins the r-round $\exists^{+} k$-QVT game from position $[\classA, \classB]$ if and only if there is am $\exklogic$-formula $\phi$, with $\quant(\phi) \le r$, which $\classA$ and $\classB$ disagree on.
\end{theorem}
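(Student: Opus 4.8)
This proof is closely modelled on the proof of Theorem~\ref{thm:eqiv}, so the plan is to adapt that argument, tracking the asymmetries introduced by restricting Spoiler to the $\LHS$ and by using partial homomorphism in place of partial isomorphism. I would prove both directions by induction, exactly as before, but now I must be careful that every formula constructed is in $\exklogic$ (no negation, no universal quantifier) and that the "disagreement" is always witnessed in the same direction, namely $\classA \models \phi$ while $\classB \not\models \phi$ — since the game is no longer symmetric, one cannot freely swap sides and negate.

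For the "formula implies Spoiler win" direction I would assume $\phi \in \exklogic$ is built from atomic formulas using only $\wedge,\vee,\exists$, and induct on $|\phi|$ to show that if $\classA \models \phi$ and $\classB \not\models \phi$ then Spoiler wins the $\pexists k$-QVT game from $[\classA,\classB]$ in $\quant(\phi)$ rounds. The atomic case is immediate from the deletion rule (if $\classA \models \phi$ with $\phi$ atomic and $\classB \not\models \phi$, every $\RHS$ board fails to be partially homomorphic to a $\LHS$ board and is deleted, closing the node). The $\exists x_i\,\theta$ case is as before: Spoiler pebbles each $\lstrucAa \in \classA$ on a witness for $\theta$ — legal because Spoiler plays only on the $\LHS$ — Duplicator plays all responses, and since $\classB \models \neg\exists x_i\,\theta$ no surviving $\RHS$ board has the pebble on a witness, so $\hat\classA \models \theta$, $\hat\classB \not\models\theta$, and we apply the induction hypothesis. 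For $\phi \equiv \theta_1 \wedge \theta_2$: since $\classB \not\models \theta_1 \wedge \theta_2$, partition $\classB = \classB_1 \cup \classB_2$ with $\classB_i \not\models \theta_i$, have Spoiler split the $\RHS$, and apply the induction hypothesis at each child. The genuinely new case is $\phi \equiv \theta_1 \vee \theta_2$, which has no analogue in the proof of Theorem~\ref{thm:eqiv} (there disjunction was handled via $\neg(\neg\theta_1 \wedge \neg\theta_2)$, which is unavailable here): now $\classA \models \theta_1 \vee \theta_2$, so partition $\classA = \classA_1 \cup \classA_2$ with $\classA_i \models \theta_i$ — since $\classB \not\models \theta_i$ for each $i$ — and have Spoiler split the $\LHS$ into $\classA_1, \classA_2$; at child $t_i$ the position $[\classA_i, \classB]$ satisfies the induction hypothesis, and $\quant(\theta_1) + \quant(\theta_2) = \quant(\phi)$.

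For the converse, I would induct on the number of rounds $r$ Spoiler needs, mirroring the second half of the proof of Theorem~\ref{thm:eqiv} but replacing the formulas $\phi_{\lstrucXa}$ by their \emph{positive} analogues: let $\phi_{\lstrucXa}$ be the conjunction of all atomic formulas (no negated atoms) satisfied by $\lstrucXa$, let $\phi_{\classX} \equiv \bigvee_{\lstrucXa \in \classX}\phi_{\lstrucXa}$, and observe $\phi_{\classX} \in \exklogic$, that $\classX \models \phi_{\classX}$, and — the key replacement for the partial-isomorphism argument — that if $\lstrucYb$ is not partially homomorphic \emph{from} any $\lstrucXa \in \classX$ then $\lstrucYb \not\models \phi_{\classX}$ (because $\lstrucXa \models \psi$ atomic does not transfer to $\lstrucYb$ for the offending $\psi$). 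The base case $r=0$: after deletions the $\RHS$ is empty, so $\classA \models \phi_{\classA}$ and every surviving — hence every — $\lstrucBb$ fails to be a homomorphic image, giving $\classB \not\models \phi_{\classA}$; take $\phi = \phi_{\classA}$. For the inductive step, when Spoiler's first move is a split of the $\RHS$ into $(\classB_i)$, take $\phi \equiv \phi_{\classA} \wedge \bigwedge_i \phi_i$ (no leading $\neg\phi_{\classB}$ is needed since only $\RHS$ structures are deleted and $\classA \models \phi_{\classA}$ automatically); when it is a split of the $\LHS$ into $(\classA_i)$, take $\phi \equiv \bigvee_i \phi_i$; and when it is a pebble move $i \to \cdot$ on $\classA$ producing $[\hat\classA, \hat\classB]$ with induction-hypothesis formula $\theta$, take $\phi \equiv \phi_{\classA} \wedge \exists x_i\,\theta$, checking as in Theorem~\ref{thm:eqiv} (with "partially isomorphic" weakened to "homomorphic image") that $\classA \models \phi$ and $\classB \not\models \phi$, and that the deletion of $\RHS$ structures that fail the homomorphism test is exactly captured by the $\phi_{\classA}$ conjunct forcing any surviving $\lstrucBb$ witness to be a homomorphic image of some $\hat\classA$ structure. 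In each case $\quant(\phi) = r$.

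The main obstacle is getting the homomorphism direction consistently right throughout the converse: every place where the proof of Theorem~\ref{thm:eqiv} invokes "$\lstrucBb$ not partially isomorphic to any $\lstrucAa$ implies $\lstrucBb \not\models \phi_{\classA}$" must be re-examined, since partial homomorphism is one-directional — we need "$\lstrucAa$ partially homomorphic to $\lstrucBb$" to conclude $\lstrucBb \models \phi_{\lstrucAa}$, and the deletion rule deletes exactly those $\lstrucBb$ admitting \emph{no} such $\lstrucAa$, so the directions must be lined up carefully. A secondary subtlety is that, because the $\LHS$ is never pebbled by Duplicator and (in our intended applications) starts as a singleton, Spoiler may in principle still split the $\LHS$; the disjunction case above handles this, but one should note it cannot increase the quantifier count, and that it is the step with no counterpart in the symmetric proof. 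Modulo these bookkeeping points the argument is a routine adaptation, so I would present it compactly, emphasizing only the disjunction and pebble cases and referring back to the proof of Theorem~\ref{thm:eqiv} for the rest.
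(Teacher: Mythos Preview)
Your approach is essentially the same as the paper's, and the forward direction is correct. However, there is a concrete gap in your converse direction: in the pebble case, your formula $\phi \equiv \phi_{\classA} \wedge \exists x_i\,\theta$ is too weak. The outer conjunct $\phi_{\classA}$ handles $\RHS$ structures deleted \emph{before} round one, but it does nothing about $\RHS$ responses deleted \emph{after} the pebble move. Concretely: suppose $\lstrucBb$ survives the initial deletion (so $\lstrucBb \models \phi_{\classA}$), and suppose that for some $b$ the board $\lstruc{\strucB}{\assB(i\to b)}$ is deleted at the end of the round because no $\hat\classA$-board is partially homomorphic to it. Then $\lstruc{\strucB}{\assB(i\to b)} \notin \hat\classB$, so the induction hypothesis tells you nothing about whether it satisfies $\theta$; it may well do so, in which case $\lstrucBb \models \exists x_i\,\theta$ and your $\phi$ fails to separate. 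The fix, which the paper uses, is $\phi \equiv \phi_{\classA} \wedge \exists x_i\,(\phi_{\hat\classA} \wedge \theta)$: the inner $\phi_{\hat\classA}$ is refuted by exactly those deleted responses. Your claim that ``the $\phi_{\classA}$ conjunct forc[es] any surviving $\lstrucBb$ witness to be a homomorphic image of some $\hat\classA$ structure'' conflates $\phi_{\classA}$ (about the position before the move) with $\phi_{\hat\classA}$ (about the position after); the former cannot see pebble $i$'s new location.

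A similar, smaller issue arises in your split-$\LHS$ case: your $\phi \equiv \bigvee_i \phi_i$ need not be refuted by $\RHS$ boards deleted before the split, since the induction hypothesis only yields $\classB' \not\models \phi_i$ for the post-deletion $\classB'$. The paper's $\phi_{\classA} \wedge \bigvee_i \phi_i$ handles this. Both fixes are cosmetic once identified, but as written the formulas do not do what you assert.
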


\begin{proof}
We mimic the proof of Theorem~\ref{thm:eqiv}. For the backward direction, we can again induct on the length of the formula. Since we no longer have negation we now assume the formula uses only the connectives $\vee, \wedge, \exists$. If $\phi$ is an atomic formula which $\classA, \classB$ disagree on then by the definition of a partial homomorphism Spoiler wins after $0$ rounds. The $\wedge$ and $\exists$ cases are the same as in Theorem~\ref{thm:eqiv}. So suppose $\phi \equiv \theta_1 \vee \theta_2$ with $\quant(\theta_i)=k_i$. Define $\classA_1 := \{\lstrucAa \in \classA \, \mid \, \lstrucAa \models \theta_1\}$ and $\classA_2 : = \classA \setminus \classA_1$. Then Spoiler should perform a split of $\classA$ so that we get two children of the root, $t_1, t_2$ with $\chi(t_1) = [\classA_1, \classB]$ and $\chi(t_2)= [\classA_2, \classB]$. By the induction hypothesis Spoiler can win from position $\chi(t_i)$ in at most $k_i$ rounds. Therefore overall Spoiler wins in at most $k_1+k_2=\quant(\phi)$ rounds. 

For the forward direction, we induct on the number of rounds $r$ Spoiler needs to win the $\pexists k$-QVT game from position $[\classA, \classB]$. We have to modify the definition of $T_{\lstrucXa}$ to now only contain atomic formulas $\phi$ such that $\lstrucXa \models \phi$ (i.e.\! we do not include negations of atomic formulas). We then define $\phi_{\mathcal{X}}$ as before but using our new definition of $T_{\lstrucXa}$.
 
The base case is trivial. For the induction step, we assume that $r \geq 1$ and that for any two sets of structures $\classX, \classY$, if Spoiler wins the $\pexists k$-QVT game in $s<r$ moves from position $[\classX, \classY]$, then there is a formula $\phi \in \klogic$, such that $\classX$ and $\classY$ disagree on $\phi$, with $\quant(\phi) = s$. The main difference here is that because the $\LHS$ and the $\RHS$ play asymmetric roles in this game we have to consider separately the case where Spoiler performs a split of $\classA$.

If before the first round Spoiler splits $\classB$ into $(\classB_i)_{i \in [\ell]}$, then as in the proof of \ref{thm:eqiv} we may assume that Spoiler has a winning strategy in at most $r-1$ moves from position $[\classA, \classB_i]$ for every $i$. Therefore there are formulas $\phi_1, \dots, \phi_{\ell}$
  with $\classA \models \phi_i$ and $\classB_i\not \models
  \phi_i$, such that $\quant(\phi_i)$ is equal to
  the number of rounds in the Spoiler winning strategy from position $[\classA, \classB_i]$. Then $\classA$ and $\classB$ disagree on
  \[ \phi \equiv \phi_{\classA} \wedge \bigwedge_{i=1}^{\ell} \phi_i.
 \]
To see this note that in the proof of Theorem~\ref{thm:eqiv}, the only time we needed the disjunct $\neg \phi_{\classB}$ was for in $\classA$ that were deleted before the split and in our present context this does not occur. Every other case is dealt with in exactly the same way.

Now suppose before the first round Spoiler splits $\classA$ into $(\classA_i)_{i \in [\ell]}$. As before we may suppose that Spoiler wins from position $[\classA_i, \classB]$ in at most $r-1$ rounds for all $i \in [\ell]$, so by the induction hypothesis we have formulas $\phi_i$ which $\classA_i$ and $\classB$ disagree on. Then the formula 
\[ 
\phi = \phi_{\classA} \wedge \bigvee_{i=1}^{\ell} \phi_i 
\]
gives us what is required.

Next, suppose Spoiler's first move is to move
pebble $i$. We may assume that Duplicator makes all possible responses. Call the resulting position $[\hat{\classA}, \hat{\classB}]$. Then Spoiler has a $(r-1)$-move winning strategy from
this position and so by the induction hypothesis there is a formula
$\theta \in \exklogic$, with $\quant(\theta) =r-1$, which $\hat{\classA}$ and $\hat{\classB}$ disagree on. Let
\[
\phi = \phi_{\classA} \wedge  \exists x_i (\phi_{\hat{\classA}} \wedge \theta). 
\]
Clearly $\quant(\phi)=r$. We will show that $\classA, \classB$ disagree on this formula. So now let $\lstrucAa \in \classA$, then Spoiler moves pebble $i$ to some $a\in A$ on $\lstrucAa$ and we get some structure $\lstruc{\strucA}{\assA(i \to a)}$. Then, since no $\RHS$ structures are deleted in this game, $ \lstruc{\strucA}{\assA(i \to a)} \in \hat{\classA}$, so by the induction hypothesis $\lstruc{\strucA}{\assA(i \to a)} \models \theta$. Therefore, taking $a$ as the witness to the existential quantification we get that $\lstrucAa \models \phi$. Similarly, by essentially the same argument as in Theorem~\ref{thm:eqiv} we get that, for every $\lstrucBb \in \classB$, $\lstrucBb \not \models \phi$. 
\end{proof}

\section{Proof of Lemma~\ref{lem: index2}}

\begin{lemma}[Lemma \ref{lem: index2} restated]
Let $G$ be a finite group, let $H \leqslant G$. Then $H$ has index two iff for every $g, h \not \in H$, $g + h \in H$.
\end{lemma}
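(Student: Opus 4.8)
The statement is a completely elementary group-theoretic fact, so the plan is to prove both directions directly from the definition of index and the coset structure of $H$ in $G$. Note that although the paper writes the group operation additively (matching the $\mathbb{Z}_2^k$ setting), $G$ need not be abelian, so I will be a little careful and phrase things in terms of left cosets; the argument works verbatim.

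\emph{Forward direction.} Suppose $H$ has index two. Then $G$ is the disjoint union of exactly two left cosets of $H$, namely $H$ itself and $G \setminus H$; in particular $G \setminus H$ is a single coset, say $G \setminus H = aH$ for any fixed $a \notin H$. Now take arbitrary $g, h \notin H$. Since $h \in G \setminus H = aH$ we have $aH = hH$, i.e.\ $h H = G \setminus H$. Similarly $gH = G \setminus H = hH$. The key step is to observe that $g \notin H$ forces $g \in G\setminus H$, and since $G\setminus H$ is closed under the coset relation we get $g^{-1}(G\setminus H) \subseteq$ ... more cleanly: from $gH = hH$ we get $h^{-1}g \in H$, and then $g + h$ — wait, additively $g+h$, multiplicatively $gh$ — needs a separate tiny argument. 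The cleanest route: since there are only two cosets, $gH$ and $Hg$ etc.; to keep it painless I will instead argue via the quotient when $H$ is normal, but $H$ of index two is always normal, so the quotient $G/H \cong \mathbb{Z}_2$ exists. Then $g, h \notin H$ means both map to the nonidentity element $1 \in \mathbb{Z}_2$, so $g+h$ (i.e.\ $gh$) maps to $1+1 = 0$, i.e.\ $gh \in H$. That is the whole forward direction: \emph{index-two subgroups are normal, pass to $G/H \cong \mathbb{Z}_2$, and use $1+1=0$.}

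\emph{Reverse direction.} Suppose that for every $g, h \notin H$ we have $g + h \in H$ (multiplicatively $gh \in H$). We must show $[G:H] = 2$. First, $H \neq G$: if $H = G$ the hypothesis is vacuous but also we'd want index $1$; actually the statement implicitly assumes $H$ is proper, so either we note the edge case or observe that if $G = H$ the claim "$H$ has index two" fails and the hypothesis "for every $g,h \notin H$..." is vacuously true, so strictly the lemma needs the convention that $H \subsetneq G$; I will just add the harmless remark that we assume $H \neq G$ (which is the only case used). So fix some $a \notin H$. The main step is to show $G \setminus H \subseteq aH$, which combined with the obvious $aH \subseteq G \setminus H$ (a coset of $H$ disjoint from $H$) gives $G \setminus H = aH$, whence $G = H \sqcup aH$ has exactly two cosets. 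To see $G\setminus H \subseteq aH$: take $g \notin H$; then $g, a \notin H$, so by hypothesis $a + g \in H$ — hmm, I want $a^{-1}g \in H$, i.e.\ additively $-a + g \in H$. Since $a \notin H \iff -a \notin H$ (as $H$ is a subgroup), apply the hypothesis to the pair $-a, g$: both lie outside $H$, so $(-a) + g \in H$, i.e.\ $g \in a + H = aH$. Done.

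\textbf{Main obstacle.} There is no real obstacle — this is routine. The only things to be mildly careful about are: (i) the paper writes $+$ but does not assume $G$ abelian, so I should either note that every index-two subgroup is normal (making the quotient argument clean) or phrase the coset manipulations so that side of multiplication does not matter — I will use the fact $x \notin H \iff x^{-1} \notin H$, which handles both directions without invoking normality explicitly; and (ii) the degenerate case $H = G$, which I dispatch with one sentence. So the write-up will be: reverse direction via "$G \setminus H$ equals the single coset $aH$", forward direction via "both non-members have inverses outside $H$, and any two elements outside a two-coset subgroup multiply back in." Total length, a short paragraph each.

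\begin{proof}[Proof sketch]
We may assume $H \neq G$ (otherwise the claimed index-two property fails, and the statement is understood to concern proper subgroups). Throughout we use that for a subgroup $H$ and any $x \in G$, $x \in H$ if and only if $-x \in H$.

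Suppose first that $[G:H] = 2$. Then $G$ is the disjoint union of $H$ and $G \setminus H$, and $G \setminus H$ is a single left coset of $H$; fix $a \notin H$, so $G \setminus H = a + H$. Let $g, h \notin H$. Then $-h \notin H$, so $-h \in a + H$, hence $h + H = $... equivalently $g \in a+H$ and $-h + H = -a + H$, so $g + (-(-h)) $ — more directly: an index-two subgroup is normal, so the quotient $G/H$ is a group of order $2$, i.e.\ $G/H \cong \mathbb{Z}_2$. Both $g$ and $h$ map to the nontrivial element, hence $g + h$ maps to $0$ in $G/H$, i.e.\ $g + h \in H$.

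Conversely, suppose that $g + h \in H$ for all $g, h \notin H$. Fix some $a \notin H$. We claim $G \setminus H = a + H$. The inclusion $a + H \subseteq G \setminus H$ is clear, since $a + H$ is a coset of $H$ distinct from $H$ and the cosets partition $G$. For the reverse inclusion, let $g \notin H$. Since $a \notin H$ we have $-a \notin H$, so applying the hypothesis to the pair $-a, g$ gives $(-a) + g \in H$, i.e.\ $g \in a + H$. Thus $G \setminus H = a + H$, so $G = H \;\sqcup\; (a+H)$ is the union of exactly two cosets of $H$, i.e.\ $[G:H] = 2$.
\end{proof}
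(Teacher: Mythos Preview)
Your proof is correct. The converse direction is essentially the same as the paper's: both show that $G\setminus H$ is a single coset of $H$ (you phrase it as $G\setminus H = a+H$; the paper phrases it as the translation-by-$g$ bijection sending $G\setminus H$ into $H$, forcing $|G\setminus H|=|H|$). The forward direction differs slightly: you invoke the standard fact that an index-two subgroup is normal and then compute in the quotient $G/H\cong\mathbb{Z}_2$, whereas the paper argues directly with the bijection $h\mapsto g+h$, observing it must send $H$ to $G\setminus H$ and hence $G\setminus H$ back to $H$. Your route is a touch more conceptual; the paper's is fully self-contained and avoids citing normality. You were also right to flag the degenerate case $H=G$, which the paper silently excludes by fixing some $g\notin H$ at the outset.
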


\begin{proof}
Fix some $g \not \in H$ and consider the bijection $\sigma : G \to G$ given by $\sigma(h) = g +h$. Suppose $H$ has index two. It is easy to see that $\sigma(h) \not \in H$ for all $h \in G$, as otherwise one can show that $g\in H$, a contradiction. Since $\sigma$ is a bijection of $G$ it follows that the image of every element outside of $H$ lies in $H$. Conversely suppose for every $h \not \in G$, $\sigma(h) \in H$. This implies that $\sigma$ restricted to $H$ is a bijection from $G \setminus H$ to $H$, which implies $H$ has index two. 
\end{proof}

\section{Proof of Claim~\ref{claim}.} \label{a: dual}
 
\begin{claim}[Claim~\ref{claim} restated]
Suppose that $\{\pi(i)  \mid  i \in [k] \setminus \{p\}\} = \{s_i  \mid  i \in [k] \setminus \{\ell\}\}$, for some $\ell \in [k]$. Then when Spoiler pebbles there is some $\LHS$ board $\lstrucAa$ which is good for $\ell$ relative to $\lstrucBb$.
\end{claim}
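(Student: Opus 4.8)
The plan is to read the claim off the induction hypotheses~(1)--(3) of Lemma~\ref{lem: dual} by a short case analysis on the common type $\tau$ of the $\LHS$ boards at the start of round $r$ (recall that $\lstrucBb$ is the unique $\RHS$ board, that (1)--(3) hold at the start of the round, and that $T=\{\pi(i)\mid i\in[k]\setminus\{p\}\}$, where $p$ is the pebble Spoiler is about to move). First I would record the elementary fact that, since $|T|=k-1=|[k]\setminus\{p\}|$, the $k-1$ pebbles other than $p$ are in bijection with the colours $\{s_i\mid i\in[k]\setminus\{\ell\}\}$, one pebble per colour. Hence $\tau\supseteq\{s_i\mid i\in[k]\setminus\{\ell\}\}$, so goodness for $\ell$ is a non-vacuous condition for these boards; and moreover $\tau=\{s_i\mid i\in[k]\}$ holds precisely when $p$ is already placed and lies on $s_\ell$, since in every other situation $\tau$ either omits $s_\ell$ or contains a colour outside $\{s_1,\dots,s_k\}$.

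Next I would split into the two cases. If $\tau\neq\{s_i\mid i\in[k]\}$, then by hypothesis~(2) of Lemma~\ref{lem: dual} there is exactly one $\LHS$ board and it is good relative to $\lstrucBb$, i.e.\ good for every index and in particular for $\ell$, so this board is the desired witness. If instead $\tau=\{s_i\mid i\in[k]\}$, then hypothesis~(3) applies: one side consists of a single board and the other of a dual set relative to it. Since a dual set contains $k-1\ge 2$ boards while the $\RHS$ side holds the single board $\lstrucBb$, the dual set must be the $\LHS$, say $\{\lstruc{A}{\assA_i}\mid i\in[k]\setminus\{\ell'\}\}$ for some $\ell'\in[k]$, with $\lstruc{A}{\assA_i}$ good for $j$ relative to $\lstrucBb$ for every $j\neq i$ (Definition~\ref{def: dual}). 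Because $k\ge 3$, the set $[k]\setminus\{\ell,\ell'\}$ is non-empty, and for any $i$ in it the board $\lstruc{A}{\assA_i}$ is good for $\ell$ relative to $\lstrucBb$, which completes the proof.

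I do not expect a genuine obstacle in this claim — the paper rightly calls it "almost immediate". The only points that need care are the bookkeeping that identifies which side carries the single board (forced by $\lstrucBb$ being the unique $\RHS$ board together with a dual set having at least two members), and the appeal to $k\ge 3$, which is exactly what guarantees that a dual set contains a board good for the particular index $\ell$ we care about rather than merely for some index.
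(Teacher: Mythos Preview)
Your argument correctly handles the situation where Spoiler pebbles immediately at the start of round~$r$ — this is what the paper calls ``immediate from the induction hypotheses''. But you have overlooked the case where Spoiler first performs a split (necessarily of degree less than $k-1$, by the standing assumption of Lemma~\ref{lem: dual}) and only \emph{then} pebbles. The invariants (1)--(3) hold at the \emph{start} of round~$r$, not after a mid-round split: once Spoiler splits the $\LHS$, the left-hand side is reduced to a single part of Spoiler's partition of the dual set, and that part could in principle be the singleton $\{\lstruc{A}{\assA_\ell}\}$, which is good for every index \emph{except} $\ell$ --- precisely the one you need. Your appeal to the full dual set and to $[k]\setminus\{\ell,\ell'\}\neq\emptyset$ then no longer applies, because the board you selected may have been split away before Spoiler pebbles.

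The paper closes this gap by prescribing Duplicator's response to the split as part of the strategy being constructed. Since the split has degree less than $k-1$, some part of Spoiler's partition contains at least two dual-set boards $\lstruc{A}{\assA_i}$ and $\lstruc{A}{\assA_j}$ with $i\neq j$; Duplicator chooses $L$ so as to force play to continue from that part (and the rules of the $k$-LB game prevent Spoiler from immediately re-splitting the $\LHS$). Because $i\neq j$, at least one of $i,j$ differs from $\ell$, so the surviving $\LHS$ still contains a board good for~$\ell$. This split-handling is the only substantive content of the claim; the no-split case really is the triviality you wrote down.
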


\begin{claimproof}
This is immediate from the induction hypotheses unless Spoiler performed a split this round. So suppose Spoiler performed a split at the beginning of round $r$. As there must be more than one board on the $\LHS$, we know by the induction hypotheses that this side consists of a dual set, $\{\lstruc{A}{\assA_{i}} \, \mid \, i \in [k] \setminus \{t\} \}$, for some $t\in[k]$. By assumption the degree of the split is less than $k-1$ so there are two boards $\lstruc{A}{\assA_i}$ and $\lstruc{A}{\assA_j}$ in the same partition. Duplicator forces play to continue from this partition. As the $\RHS$ consists of only a single board and since Spoiler isn't allowed to immediately split the $\LHS$ again, due to the rules of the $k$-LB game, Spoiler must pebble from this position. If $i \neq \ell$ then $\lstruc{A}{\assA_i}$ is good for $\ell$ relative to $\lstrucBb$. Similarly if $\ell \neq j$, then $\lstruc{A}{\assA_i}$ is good for $\ell$ relative to $\lstrucBb$. Therefore as $i \neq j$ there is a $\LHS$ board which is good for $\ell$ relative to $\lstrucBb$ and so the claim holds.
\end{claimproof} 

\section{The Even Case from Section~\ref{s: lb}} \label{a: even}

For even $k$ we consider a formula over $G= (\mathbb{Z}_2^{k-1},+)$. Here we have variables $V:=\{s_1, \dots s_{k}, e_1, \dots, e_{k-1}\}$ and the following clauses.
\begin{itemize}
\item $s_k \in \mathsf{ODD}$, this is the only distinguishing clause.
\item $s_i \in \mathsf{EVEN}$ for $i \in [k-1]$.
\item  $(e_{\ell} +\sum_{i \in [k]\setminus \{\ell\}} s_i)[\ell] =0$, for  $\ell \in [k-1]$.
\item $ s_{i}[i]=0$, for  $i \in [k-1]$.
\item $e_{\ell}[i] = 0$, for $\ell, i \in [k-1]$.
\end{itemize} 

Call this formula $F_1$. It is very similar to $F$; here $s_k$ plays the role of $s_1$ and because we are now working with $\mathbb{Z}_2^{k-1}$ we have removed a few constraints. Then by chaining copies of $F_1$ together, in much the same way as we chained copies of $F$ together, we may obtain a formula $\mathcal{F}_1$ which is analogous to $F$. The only difference in the chaining procedure is that we add constraints of the form $e_{\ell}(j) + s_k(j+1) \in \mathsf{ODD}$ rather than $e_{\ell}(j) + s_1(j+1) \in \mathsf{ODD}$. By essentially the same arguments as those given in Section~\ref{s: lb} one can prove that Duplicator can get $2^{\Omega(n)}$ points in the $k$-LB game on $\strucA(\mathcal{F}_1)$ and $\strucB(\mathcal{F}_1)$. 

\section{Proof of Lemma~\ref{lem: start}} \label{a: start}

To aid the reader we restate the start strategy here. Suppose that Spoiler plays on the start gadget. If Spoiler plays some $g_i^0$ then Duplicator replies with $g_i^0$ unless either (1) this response is not valid or (2) $i=0$ and an element of the form $r_j^0$ is pebbled on $\lstrucSBb$ or (3) $i=1$ and an element of the form $l_j^0$ is pebbled on $\lstrucSBb$. If (1), (2) or (3) hold Duplicator instead plays $g_i^1$. If Spoiler instead plays $y_i^0$, $y \in \{l, r\}$, Duplicator plays $y_i^1$ whenever it is valid and $y_i^0$ otherwise.

\begin{lemma}[Lemma \ref{lem: start} restated]
Suppose that in round $r$ of the $\pexists (k+1)$-LB game starting from position $[\{\strucSA\}, \{\strucSB\}]$, Spoiler plays $p\to x \in G$. Suppose that in every previous round in which Spoiler played an element of $G$ Duplicator followed the start strategy on every $\RHS$ board. Then the move recommended by the start strategy in round $r$ is valid on every $\RHS$ board.
\end{lemma}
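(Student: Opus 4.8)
The plan is to prove the lemma by induction on $r$, showing simultaneously that if Duplicator has followed the start strategy on every $\RHS$ board in all previous rounds in which Spoiler played an element of $G$, then every such $\RHS$ board is consistent with (i.e. has not been deleted because of) the start gadget relations $E$ and $R_s$, and moreover certain structural invariants hold about which elements of $G$ are pebbled on each side. Concretely, I would maintain the invariant that on each $\RHS$ board $\lstrucSBb$: (i) whenever $g_0^0$ (resp.\ $g_1^0$) is pebbled on the $\LHS$, either $g_0^0$ or $g_0^1$ (resp.\ $g_1^0$ or $g_1^1$) is pebbled on $\lstrucSBb$ in a way compatible with $E^{\strucSB}=\{(g_0^0,g_1^1),(g_0^1,g_1^0)\}$; (ii) if some $l_j^0$ is pebbled on the $\LHS$ then the pebble covering $g_0$ on $\lstrucSBb$ lies on $g_0^1$; symmetrically for $r_j^0$ and $g_1^1$; (iii) if some $l_j^0$ is pebbled on the $\LHS$ and $g_0^1$ is pebbled on $\lstrucSBb$, then the $l$-pebbles on $\lstrucSBb$ lie on the $l_i^1$ elements, and similarly for the $r$-pebbles.

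First I would handle the case $x = g_i^0$. If Duplicator's response $g_i^0$ is valid, we are done. Otherwise the start strategy tells Duplicator to play $g_i^1$, and I must check this is valid. The only relations that could be violated are $E$ and $R_s$ (and colours, which are automatically respected since $g_i^1$ has the right colour). Playing $g_i^1$ for the first $g$-coordinate cannot violate $E$ by itself; if the other $g$-coordinate is already pebbled, then by invariant (i) it was pebbled compatibly, and a short case check on $E^{\strucSB}$ together with conditions (2)/(3) in the start strategy shows the pair lies in $E^{\strucSB}$. For $R_s$: the only $R_s$-tuples of $\strucSB$ with a $g_i^1$ coordinate are the "mirror" tuples $(g_0^1, l_1^1, \dots, l_k^1)$ and $(g_1^1, r_1^1, \dots, r_k^1)$ plus dummy tuples, so I need to use invariant (iii) to see that if the $\LHS$ models the corresponding $R_s$-tuple then so does $\lstrucSBb$; the dummy-tuple clauses (1)–(3) in the definition of $R_s^{\strucSB}$ absorb the remaining cases. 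The key point for condition (2) of the start strategy is precisely that if an $r_j^0$ is already pebbled on $\lstrucSBb$, then Duplicator must have previously placed it on $r_j^1$ (by the start strategy it plays $y_i^1$ whenever valid, and one checks it is valid), so to keep $R_s$ consistent the $g_1$-pebble must go on $g_1^1$ — which is exactly why condition (2) forces $g_0^1$ via the analogous reasoning on the other side. I would spell out this dependency carefully as it is the heart of the argument.

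Next I would handle $x = y_i^0$ with $y \in \{l,r\}$. Duplicator plays $y_i^1$ if valid, else $y_i^0$. If $y_i^1$ is valid there is nothing to prove; if it is not valid, I claim $y_i^0$ is valid. A response $y_i^0$ can only be invalidated by an $R_s$-tuple of $\strucSB$ containing $y_i^0$, and by the dummy clause (2) in the definition of $R_s^{\strucSB}$ every $(k+1)$-tuple with an $l_i^0$ (resp.\ $r_i^0$) coordinate that does not contain $g_0^1$ (resp.\ $g_1^1$) lies in $R_s^{\strucSB}$; so the only way $y_i^0$ fails to be valid is if $g_0^1$ (resp.\ $g_1^1$) is already pebbled on $\lstrucSBb$ and the $\LHS$ models a genuine (non-dummy) $R_s$-tuple through $y_i^0$. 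But if that $\LHS$ tuple is the real tuple $(y_i^0,\dots,y_k^0,a(1),\dots,a(i))$, then the corresponding $\strucSB$-tuple with $y_j^1$'s is available — except we are trying to play $y_i^0$, contradiction with validity; so in fact this situation forces $y_i^1$ to have been the valid choice after all, contradicting our assumption. This bookkeeping is a bit delicate, so I would organise it as: "$y_i^1$ is invalid $\Rightarrow$ $g_y^0$ (the appropriate $g$) is pebbled on $\lstrucSBb$ in the wrong configuration $\Rightarrow$ $y_i^0$ lies only in dummy $R_s$-tuples $\Rightarrow$ $y_i^0$ valid", maintaining invariants (ii)–(iii) afterwards.

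The main obstacle I anticipate is precisely the interaction between the two "halves" of the start gadget (the $l$-side attached to $g_0$ and the $r$-side attached to $g_1$) through the $E$-constraint $g_0 + g_1 = 1$: Duplicator's forced choice of $g_i^1$ on one coordinate propagates a constraint to the other $g$-coordinate and thence to all the $l$- or $r$-pebbles, and the conditions (1)–(3) in the start strategy are engineered exactly to make this propagation consistent. Getting the invariants strong enough to carry the induction but not so strong that they fail after an arbitrary Spoiler move will require care; in particular I will need to observe that Spoiler, by the simplifications already made, never plays an element $x_i^1$, so the only $\LHS$ configurations that arise involve the $x_i^0$ elements, which keeps the case analysis finite and the dummy-tuple clauses applicable. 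Once the invariants are pinned down, each individual verification is a routine check against the explicit definitions of $E^{\strucSB}$ and $R_s^{\strucSB}$.
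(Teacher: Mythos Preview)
Your overall plan---induction on $r$ with auxiliary structural invariants on each $\RHS$ board, followed by a case analysis on whether $x=g_i^0$ or $x=y_i^0$---is exactly the paper's approach. The issue is with the invariants you chose. Your (ii) and (iii) are stated in terms of what is pebbled on the $\LHS$, and that does not survive the induction: if Spoiler first plays $g_0^0$ (Duplicator validly replies $g_0^0$, since nothing else is pebbled) and then plays $l_1^0$ (Duplicator validly replies $l_1^1$, since with only two pebbles no $R_s$-tuple is covered), you reach a position where $l_1^0$ is pebbled on the $\LHS$ but the $g_0$-pebble on $\lstrucSBb$ sits on $g_0^0$, contradicting your (ii).

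The fix is to phrase the invariants purely about the $\RHS$ board, which is also what the conditions (2) and (3) in the start strategy actually inspect. The paper maintains, on every $\RHS$ board $\lstrucSBb$: (1) if $g_0^1$ or $g_1^0$ is pebbled then no $l_j^0$ is; (2) if $g_0^0$ or $g_1^1$ is pebbled then no $r_j^0$ is; (3) if some $l_i^0$ is pebbled then no $r_j^0$ is; and (4) the symmetric statement. With these, your case analysis for $x=g_i^0$ goes through essentially as you sketched. For $x=l_i^0$ the paper's organisation is slightly different from yours: one first argues that if the $\LHS$ covers the full tuple $(g_0^0,l_1^0,\dots,l_k^0)$ then $l_i^1$ is always valid (using the dummy clause when $g_0^0$ is on $\lstrucSBb$, and invariant (1) when $g_0^1$ is), and otherwise the only $R_s$-tuple the $\LHS$ can model is an initial-position tuple, for which the dummy clause $\{l_i^0\}\otimes (S(B)\setminus\{g_0^1\})^k\subset R_s^{\strucSB}$ makes $l_i^0$ valid.
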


\begin{proof}
To show this we induct on $r$ while maintaining the following auxiliary hypotheses on every $\RHS$ board.
 \begin{enumerate}
 \item If $g_0^1$ or $g_1^0$ is pebbled then no element of the form $l_j^0$ is pebbled.
 \item If $g_0^0$ or $g_1^1$ is pebbled no element of the form $r_j^0$ is pebbled.
 \item If an element of the form $l_i^0$ is pebbled then no element of the form $r_j^0$ is pebbled.
\item If an element of the form $r_i^0$ is pebbled then no element of the form $l_j^0$ is pebbled.
 \end{enumerate}
 
So suppose (1)-(4) hold at the end of round $r-1$ on $\lstrucSBb$, a $\RHS$ board, and that in round $r$ Spoiler plays $p \to x$. If $x \not \in G$ then clearly our hypotheses cannot break. Suppose $x = g_0^0$. If Duplicator responds with $g_0^0$ on $\lstrucSBb$ then by the specification of the start strategy this is a valid response. Moreover, the strategy dictates that if they make such a move no element of the form $r_j^0$ is pebbled, so the induction hypotheses are maintained. 

So suppose instead that Duplicator replies with $g_0^1$ on $\lstrucSBb$. Then either the response $g_0^0$ is not valid or an element of the form $r_i^0$ is pebbled on $\lstrucSBb$. Suppose the response $g_0^0$ is invalid. Then the $\LHS$ board models $E$, as $\{g_0^0\} \otimes S(B)^k \subset R_s^{\strucSB}$. Then since $g_0^0$ is an invalid reply, $g_1^0$ is pebbled on $\lstrucSBb$ and so the move $p \to g_0^1$ respects $E$. Furthermore, the $\LHS$ board is not a model of $R_s$ since both $g_0^0, g_1^0$ are pebbled. So Duplicator's response is valid. Since $g_1^0$ is pebbled on $\lstrucSBb$ we know by (1) that no element of the form $l_j^0$ is pebbled on $\lstrucSBb$, so the induction hypotheses are maintained. So suppose instead that Duplicator plays $g_0^1$ because some $r_i^0$ is pebbled on $\lstrucSBb$. Then $g_1^1$ is not pebbled on $\lstrucSBb$ by (2), therefore the response respects $E$. Also by (4) no element of the form $l_j^0$ is pebbled on $T_{\ell}$. Finally since $r_j^0$ and $g_0^0$ are pebbled on the $\LHS$, it does not model $R_s$, so we again see that Duplicator's response is valid. This concludes the case $x=g_0^0$, the case $x=g_1^0$ is similar.

Next, suppose that $x= l_i^0$. Then Duplicator plays $l_i^1$ on $\lstrucSBb$ whenever this is a valid response; such a response trivially respects the induction hypotheses. We claim that if $(g_0^0, l_1^0, \dots, l_k^0)$ is covered on the $\LHS$ then $l_i^1$ is  a valid response. To see suppose note that if $g_0^0$ is covered on $\lstrucSBb$ any Duplicator response is valid, as $\{g_0^0\} \otimes S(B)^k \subset R_s^{\strucSB}$. Otherwise, $g_0^1$ is pebbled on $\lstrucSBb$ and so by (1) no element of the form $l_j^0$ is pebbled. Therefore, after Duplicator replies with $l_i^1$, $(g_0^1,  l_1^1, \dots, l_k^1) \in R_s^{\strucSB}$ is covered. The claim follows. So suppose $l_i^1$ is not valid, so that Duplicator plays $l_i^0$. Therefore, we must be an initial position, but then $l_i^0$ is a valid response, since $\{l_i^0\} \otimes (S(B) \setminus \{g_0^1\})^k \subset R_s^{\strucSB}$. Also, since we are in an initial position, the induction hypotheses are maintained. The case $x=r_i^0$ is similar.
\end{proof}
 
\end{document}